\DeclarePairedDelimiter\ceil{\lceil}{\rceil}
\DeclarePairedDelimiter\floor{\lfloor}{\rfloor}
\newtheorem{thm}{Theorem}
\newtheorem{lem}[thm]{Lemma}
\newtheorem{defn}[thm]{Definition}
\newtheorem{cor}[thm]{Corollary}
\newcommand{\norm}[1]{\left\lVert#1\right\rVert}
\newcommand{\lb}{\left(}
\newcommand{\rb}{\right)}
\newcommand{\ls}{\left[}
\newcommand{\rs}{\right]}
\newcommand{\lc}{\left\{}
\newcommand{\rc}{\right\}}
\newcommand{\bep}{\bm{\varepsilon_x}}
\newcommand{\br}{\textbf{r}}
\newcommand{\bp}{\textbf{p}}
\newcommand{\bx}{\textbf{x}}
\newcommand{\xt}{\widetilde{x}}
\newcommand{\Xt}{\widetilde{X}}
\newcommand{\bxt}{\widetilde{\bx}}
\newcommand{\bpt}{\widetilde{\bp}}
\newcommand{\Ht}{\widetilde{H}}
\newcommand{\Lt}{\widetilde{L}}
\newcommand{\rt}{\widetilde{r}}
\newcommand{\ket}[1]{|#1\rangle}
\newcommand{\bra}[1]{\langle#1|}
\newcommand{\braket}[2]{\langle #1 | #2 \rangle}
\newcommand{\ketbra}[2]{|#1\rangle\!\langle#2|}
\newcommand{\ev}[1]{\left\langle #1 \right\rangle}
\newcommand{\hamt}{\mathtt{HAM} \text{-} \mathtt{T}}
\newcommand{\qft}{\mathtt{QFT}}
\newcommand{\one}{\mathbb{1}}
\newcommand{\UofTP}{\affiliation{
Department of Physics, University of Toronto, Canada}}
\newcommand{\UofT}{\affiliation{
Department of Computer Science, University of Toronto, Canada}}
\newcommand{\PNNL}{\affiliation{Pacific Northwest National Laboratory, Richland WA, USA}}
\newcommand{\CIFAR}{\affiliation{Canadian Institute for Advanced Research, Toronto, Canada}}
\begin{document}

\title{Efficient Quantum Optimization via Dynamical Simulation}

\author{Ahmet Burak Catli}
\email{catli@cs.toronto.edu}
\UofTP

\author{Sophia Simon}
\email{sophia.simon@mail.utoronto.ca}
\UofTP

\author{Nathan Wiebe}
\email{nawiebe@cs.toronto.edu}
\UofT
\PNNL
\CIFAR

\begin{abstract}
    We provide several quantum algorithms for continuous optimization that do not require gradient estimation. Instead, we encode the optimization problem into the dynamics of a physical system and coherently simulate the time evolution. We focus on the setting where the objective function can \emph{only} be accessed via a phase oracle. 
    Our first two algorithms can find local optima of a differentiable function $f: \mathbb{R}^N \rightarrow \mathbb{R}$ by simulating either classical or quantum dynamics with friction via a time-dependent Hamiltonian. We show that for the benchmark problem of optimizing a locally quadratic objective function, these methods require a total of $O(N^2\kappa^2/h_x^2\epsilon)$ queries to a phase oracle to find an $\epsilon$-approximate local optimum, where $\kappa$ is the condition number of the Hessian matrix and $h_x$ is the discretization spacing. In contrast, we show that methods based on gradient descent require $O(N^{3/2}(1/\epsilon)^{\kappa \log(3)/4})$ queries. This corresponds to an exponential separation between the query upper bounds for the benchmark problem.
    Our third algorithm can find the global optimum of $f$ by preparing a classical low-temperature thermal state via simulation of the classical Liouvillian operator associated with the Nosé Hamiltonian. We use results from the quantum thermodynamics literature to bound the thermalization time for the discrete system. Additionally, we analyze barren plateau effects that commonly plague quantum optimization algorithms and observe that our approach is vastly less sensitive to this problem than standard gradient-based optimization.
    Our results suggests that these dynamical optimization approaches may be far more scalable for future quantum machine learning, optimization and variational experiments than was widely believed.
\end{abstract}

\maketitle

\newpage
\tableofcontents

\section{Introduction}

Optimization tasks have long been a target for quantum computers starting from early proposals such as the D\"urr-H\o yer optimization algorithm~\cite{durr1996quantum}, the quantum approximate optimization algorithm (QAOA) \cite{farhi2014quantum}, least squares fitting~\cite{wiebe2012quantum} and quantum algorithms for semi-definite programming~\cite{brandao2017quantum,van2017quantum}.  The most commonly used approach to solving continuous optimization problems on quantum computers involves the use of gradient descent to find a local optimum for a given objective function $f$. This approach has been widely used in QAOA as well as quantum machine learning and variational eigensolver results~\cite{farhi2014quantum,schuld2020circuit,peruzzo2014variational}. In these settings, the objective function is encoded in the amplitudes of a quantum state, meaning that obtaining an accurate bit string representation of even a single value of the objective function can be expensive. The standard approach then involves devising a quantum algorithm that estimates the gradient by sampling and then uses a classical computer to update the parameters of the function~\cite{peruzzo2014variational,farhi2014quantum,kieferova2017tomography,schuld2019evaluating}. Similar ideas are also explored in the context of simulating molecular dynamics on quantum computers, wherein the forces are computed under the Born-Oppenheimer approximation on a quantum computer and a classical computer is used to update the nuclear positions under Newton's equations of motion~\cite{o2022efficient}.

In cases where the objective function can only be accessed via the amplitudes or phases of a quantum state, such as in the examples given above, approaches based on gradient descent all possess a single bottleneck: the evaluation of the gradients on a quantum computer.  In particular, if we consider an optimization problem with $N$ parameters, the work of~\cite{gilyen2019optimizing,van2020convex} shows that the best performance that can be attainable for computing a gradient vector within error $\epsilon$ in the Euclidean norm, when having only amplitude or phase oracle access to the objective function, requires $O(\sqrt{N}/\epsilon)$ queries. This is highly problematic as it implies that even with the best possible gradient estimation procedure, achieving $9$ digits of accuracy will likely require billions of gate operations for even the simplest of optimization problems.  Further, for optimization problems with vanishing gradients~\cite{mcclean2018barren}, the cost of evaluating small gradients can be truly catastrophic as it can lead to exponential costs in navigating the optimization landscape. 

Alternatives to gradient based approaches have been considered in the past.  One popular approach is the Quantum Hamiltonian Descent (QHD) approach of~\cite{leng2023qhd, leng2023nonconvex}, which examines simulating dissipative dynamics to find a global optimum. Ref.~\cite{leng2023nonconvex} in particular provides evidence of quantum advantage for non-convex optimization problems.
Other approaches such as~\cite{chen2024langevin} use quantum Langevin dynamics for optimization.  The common thread between these approaches is that they eschew direct gradient evaluation and instead push the burden to simulating quantum dynamics. More specifically, references~\cite{leng2023qhd, leng2023nonconvex} introduce a time-dependent Hamiltonian to simulate damping and adiabatic evolution in order to prepare the global minimum of a given function while~\cite{chen2024langevin} simulates non-unitary dynamics directly.

Our work follows in a similar spirit to the above gradient-free strategies but focuses on different aspects. 
First, we mainly consider the setting where the objective function can \emph{only} be accessed via a phase oracle, while prior work focuses on the case where one is given bit oracle access to the objective function. For QHD~\cite{leng2023qhd, leng2023nonconvex}, the bit oracle is ultimately converted to a phase oracle but the key point to notice here is that allowing bit oracle access to the objective function makes strategies based on standard gradient descent much more viable. Indeed, our algorithms do not yield any asymptotic advantage over gradient descent for our local optimization benchmark problem if one has bit oracle access to the objective function.
Another difference compared to Refs.~\cite{leng2023qhd, leng2023nonconvex} is that we utilize a time-dependent Hamiltonian to find local rather than global optima which requires a different analysis. In particular, the spectral gap of the time-dependent Hamiltonian is not directly relevant in our setting.
Furthermore, in contrast to prior work, we tackle the problem of finding the global optimum of an arbitrary twice differentiable function by performing unitary quantum dynamics in a higher dimensional Hilbert space that will solve the optimization problem in the reduced space and is inspired in part by previous work that utilized this approach to address issues faced in gradient evaluation for simulating chemical dynamics~\cite{Simon2024Liouvillian}.

For both local and global optimization, we show that our algorithms have provable convergence guarantees under appropriate assumptions on the optimization landscape and further offer substantial computational advantages relative to existing gradient-based methods in scenarios where the objective function can only be access via the amplitudes of a quantum state. In particular, our time-dependent Hamiltonian approach allows us to prove an upper bound on the phase oracle query complexity for the benchmark problem of finding the optimum of a convex quadratic function that has exponentially better scaling w.r.t.~the condition number of the Hessian matrix than the corresponding upper bounds for standard gradient descent methods. We also provide evidence that this exponential separation between the upper bounds persists for more general smooth and strongly convex functions.

It should also be considered that specific hardware platforms might have different costs for different tasks and a method that does not require explicit estimates of gradients might provide advantages not captured by the asymptotic scalings. For example, for a neutral atom quantum computer a measurement might take three orders of magnitude longer than a gate operation~\cite{neutralatoms}. In that case, a fully coherent quantum algorithm may outperform a quantum algorithm which requires repeated measurements to estimate gradients.

The remainder of this paper is laid out as follows.  In Section~\ref{sec:cohOpt} we discuss the general setting of our optimization problems and the oracles that we assume for computing the objective function for both the setting where bit and phase oracles are considered.  
Our main results are summarized in Section~\ref{sec:main} and the assumptions that are needed in order to verify that the preconditions are met. Next, we discuss the local optimization algorithms which are based on simulating either quantum or classical dynamics in the presence of friction in Section~\ref{sec:local}. Then, in Section~\ref{sec:convex_quadratic}, we provide upper bounds on the number of queries needed by the local quantum optimization algorithms to find the optimal value of a convex quadratic function within error $\epsilon$. We also discuss extensions to generic smooth and strongly convex functions. Note here that while our analysis focuses on the case where the function is strongly convex, any sufficiently smooth optimization problem can be closely approximated by a strongly convex function for initializations that are sufficiently close to a local optimum. Thus, we choose to think of these results as pertaining to local optimization.
Section~\ref{sec:global} contains the discussion of our global approach, which uses ideas from quantum thermodynamics to assess the complexity of preparing the global optimum of an objective function by preparing an approximation to a near-zero temperature thermal distribution over the parameters of the model and shows efficiency of the method under specific assumptions made about the gap of the Liouvillian. Next, we compare our coherent quantum algorithms to gradient-based methods in Section~\ref{sec:barren}. In particular, we compare the query complexity of our local quantum optimization algorithm to the query complexity of standard gradient descent methods for ill-conditioned optimization problems. Further, we discuss our global optimization algorithm in the context of barren plateaus and vanishing gradients in variational models before concluding in Section~\ref{sec:conclusion}.

\section{Coherent Optimization}
\label{sec:cohOpt}

The central problem that we address here is that of optimizing an objective function on a quantum computer without the need to compute gradients.  We specifically investigate two strategies for achieving this optimization.  The first approach is based on simulations of dissipative dynamics in a coherent setting via time-dependent Hamiltonian simulation.  This approach drives the system into a local optimum. The second approach is a global optimization strategy wherein the simulated dynamics drives the system into a classical low-temperature thermal state.  This second approach will, upon success, find a state that is close to the global optimum rather than a local optimum but requires qualitatively different assumptions in order to reach such a state.  For this reason, we consider both strategies.

There are of course several approaches to quantum optimization that could be considered and have been considered in the past. Here we aim to perform a strong form of optimization wherein we perform a mapping that will, with high probability, transform a set of initial quantum states into a new set of quantum states that are in the space of optimal solutions up to some discretization. As we require the algorithm to output the answers as a bit string, the measurement process for this is shockingly simple: we simply measure our input register in the computational basis. This process cannot, however, be unitary for arbitrary input states because any such optimization process must, in cases with a single global optimum, map multiple input parameters to the same point. This prevents the function from maintaining inner products between inputs and thus the overall process cannot be unitary in general. We address this by restricting the set of input states in the local optimization approach and by performing a subsystem trace in the global optimization approach. 
Below we formalize the problem of finding a local or global optimum of a given objective function using as few function evaluations as possible.

\begin{defn}[Continuous Quantum Optimization Problem]
    Let $\bx \in \ls -x_{\max}, x_{\max} \rs^N \subseteq \mathbb{R}^N$ be a real valued vector and let ${\bf z} \in \mathbb{Z}_{2^n}^N$ be a corresponding $n$-bit encoding of these values such that $\bx={\bf c} + {\bf z} h_x$ for some constant vector ${\bf c}$ and grid spacing $h_x > 0$. Further, let $f: \ls -x_{\max}, x_{\max} \rs^N \rightarrow [0,f_{\max})$ be a twice differentiable objective function and let $S_{\mathrm{opt}}$ be a fixed set of optimal points of $f$ such that for any $\bx_S \in S_{\mathrm{opt}}$ it holds that $\nabla f (\bx_S) = 0$.
    The problem then is to find a $\mathbf{z}^*$ such that $|f({\bf c}+{\bf z}^* h_x) - f(\bx_S)| \leq \epsilon$ with probability at least $1-\delta$ for some $\bx_S \in S_{\mathrm{opt}}$ using a minimum number of queries to a set of oracles that compute $f$.
\end{defn}

In the following, for notational simplicity, we will often refer to quantum states using real valued arguments rather than the discrete values used in their encoding.  Specifically, for $\bx \in \mathbb{R}^N$ such that $\bx = {\bf c}+ [z_0 h_x,z_1 h_x,\ldots,z_{N-1} h_x]$, we define
\begin{equation}
    \ket{\bx} := \ket{[x_0 -c_0]/h_x}\cdots \ket{[x_{N-1}-c_{N-1}]/h_x}.
\end{equation}
Note that in this discrete setting the set of optimal points does not precisely correspond to the set of vectors such that $\nabla f(\bx) = 0$; however, for twice differentiable functions  it can easily be seen that the optima defined above will coincide with such points in the limit as $h_x \rightarrow 0$.

The aim of the Continuous Quantum Optimization Problem is to minimize the number of queries made to oracles that compute the value of $f$ but the definition of the oracles used in the problem is left purposely vague in the above definition.  This is because there are a host of different oracle settings that could be considered that substantially change the query complexity.  We consider three families of settings here: a bit oracle and two types of phase oracles.  We define these oracles as follows:

\begin{defn}[Bit oracle for the objective function]
    We say that $O_f^{(b)}$ is a bit oracle for a function $f: \ls -x_{\max}, x_{\max} \rs^N \rightarrow~[-f_{\max}, f_{\max}]$ if for any computational basis state $\ket{\bx} \in \mathbb{Z}_{2^n}^N$,
    \begin{equation}
        O_f^{(b)} \ket{\bx}\ket{y}_b = \ket{\bx}\ket{y \oplus \widetilde{f}(\bx)}_b,
    \end{equation}
    where $\widetilde{f}$ is a $b$-bit approximation of $f$ such that $\max_{\bx}|f(\bx) - \tilde{f}(\bx) | \le \max_{\bx}  |f(\bx)| /2^{b-1}$.
\label{def:bit_oracle}
\end{defn}

\begin{defn}[Phase oracle for the objective function]
    We say that $O_f^{(p)}$ is a phase oracle for a function $f: \ls -x_{\max}, x_{\max} \rs^N \rightarrow [-f_{\max}, f_{\max}]$ if for any computational basis state $\ket{\bx} \in \mathbb{Z}_{2^n}^N$,
    \begin{equation}
         O_f^{(p)} \ket{\bx} = e^{i \frac{f(\bx)}{2f_{\max}} }\ket{\bx}.
    \end{equation}
\label{def:phase_oracle}
\end{defn}

\begin{defn}[Phase oracles for the partial derivatives of the objective function]
    Let $f: \ls -x_{\max}, x_{\max} \rs^N \rightarrow \mathbb{R}$ and let it be promised that $\max_\bx \left| \frac{\partial f}{\partial x_k} \right| \leq f'_{\max}$ for all $k \in [N]$.
    We say that $O_{f',k}^{(p)}$ is a phase oracle for the $k$-th partial derivative of $f$ if for any computational basis state $\ket{\bx} \in \mathbb{Z}_{2^n}^N$,
    \begin{equation}
         O_{f',k}^{(p)} \ket{\bx} = e^{\frac{i}{2f'_{\max}} \frac{\partial f(\bx)}{\partial x_k}} \ket{\bx}.
    \end{equation}
\label{def:phase_oracle_derivatives}
\end{defn}

In principle, the phase oracles $O_{f',k}^{(p)}$ for the partial derivatives of $f$ can be constructed approximately via a finite difference scheme by using the phase oracle $O_f^{(p)}$ in Definition~\ref{def:phase_oracle}, see e.g.~\cite{Simon2024Liouvillian} for more details. For simplicity, however, we will use the $O_{f',k}^{(p)}$ oracles in cases where we require access to the partial derivatives of $f$.

Note that a single query to a bit oracle can implement a query to a phase oracle, but the converse is not true unless further assumptions are made about the range of $f(\bx)$~\cite{gilyen2019optimizing}. This means that a bit oracle is more powerful than a phase oracle as seen in the difference in query complexity discussed in~\cite{gilyen2019optimizing}. 
For applications in quantum machine learning, a third type of oracle known as a probability oracle is needed.  This oracle acts as follows:
\begin{equation}
    O_f \ket{\bx}\ket{0} = \ket{\bx} \left(\sqrt{\frac{f(\bx)}{f_{\max}}}\ket{0}\ket{\chi(\bx)} + \sqrt{1-\frac{f(\bx)}{f_{\max}}}\ket{1}\ket{\phi(\bx)}  \right),
\end{equation}
for arbitrary states $\ket{\chi(\bx)}$ and $\ket{\phi(\bx)}$.
This form of an oracle is used in quantum rejection sampling and in the Harrow Hassidim and Lloyd algorithm~\cite{Harrow2009linearsystems}.  Further, in cases where an expectation value of a quantum state is desired that is parameterized by $\bx$, the Hadamard-test circuit can be used to estimate the expectation value of a unitary $V$ against a particular state via the LCU Lemma~\cite{childs2012hamiltonian}:

\begin{align}
    &\ket{\bx}(H \otimes I)(\ketbra{0}{0}\otimes I + \ketbra{1}{1}\otimes V )(H \otimes I)\ket{0}\ket{\psi(\bx)} = \ket{\bx}(\ket{0}\frac{I+V}{2}\ket{\psi(\bx)} + \ket{1}\frac{I-V}{2}\ket{\psi(\bx)})\nonumber\\
    &\qquad\qquad\qquad:=\ket{\bx} \left(\sqrt{\frac{1+ {\rm Re}(\bra{\psi(\bx)} V \ket{\psi(\bx)})}{2}}\ket{0}\ket{\chi(\bx)} + \sqrt{\frac{1- {\rm Re}(\bra{\psi(\bx)} V \ket{\psi(\bx)})}{2}}\ket{1}\ket{\phi(\bx)}  \right).
\end{align}
This circuit can clearly be viewed as a probability oracle such that $f(x)$ encodes the real part of the expectation value of a unitary.  This is a fundamental task in quantum machine learning and variational algorithms as the training objective functions or estimates of the groundstate energy can be written as linear combinations of expectation values of unitaries.

We do not explicitly consider probability oracles in the following despite their obvious relevance within the field.  This is because an amplitude amplification unitary can be used to convert such an oracle to a phase oracle using poly-logarithmic overhead~\cite{gilyen2019optimizing} and our underlying algorithms directly use phase oracles.

Note that when comparing different algorithms, we assume that we only have access to a single type of oracle. In particular, in the phase oracle setting, the assumption is that the objective function can \emph{only} be accessed via a phase oracle.

\subsection{Example of Bit and Phase Oracles}
As an example of a phase oracle, we can consider the case in quantum machine learning of nearest neighbor classification~\cite{wiebe2015quantum}.  In this setting, we have a series of training vectors $\ket{\psi_j}$ and we wish to compute the inner products to find, for fixed $k$, ${\rm argmin}_j(|\braket{\psi_j}{\psi_k}|)$.  Using the swap test, a probability oracle $U_P$ can be constructed such that
\begin{equation}
    U_P\ket{j}\ket{k}\ket{0} = \sqrt{\frac{1}{2} + \frac{|\braket{\psi_j}{\psi_k}|}{2} } \ket{j}\ket{k}\ket{\phi_{\rm sym}}\ket{0} + \sqrt{\frac{1}{2} - \frac{|\braket{\psi_j}{\psi_k}|}{2} } \ket{j}\ket{k}\ket{\phi_{\rm asym}}\ket{1}
\end{equation}
where $\ket{\phi_{\rm sym}}$ is a state in the symmetric subspace of $\ket{\psi_k}$ and $\ket{\psi_j}$ and $\ket{\phi_{\rm asym}}$ is a vector in the anti-symmetric space.  Constructing the amplitude amplification unitary that marks $\ket{0}$ in the above state provides a unitary that has eigenvalues $\exp \lb \pm i \arcsin \lb\sqrt{\frac{1}{2}+\frac{|\braket{\psi_j}{\psi_k}|^2}{2}} \rb \rb$.  We can then use an external control on an ancillary qubit, along with qubitization~\cite{gilyen2019optimizing}, to generate an $\epsilon$-approximation to a rotation of the form (for ancillary qubit in computational basis state $\ket{c}$)
\begin{equation}
    \ket{c}\ket{j}\ket{k}\ket{0} \mapsto e^{ic\left( \frac{1}{2} + \frac{|\braket{\psi_j}{\psi_k}|^2}{2}\right)}\ket{c}\ket{j}\ket{k}\ket{0}
\end{equation}
which is clearly a phase oracle.  This approach allows a probability oracle to be converted into a phase oracle at low cost without requiring amplitude estimation and illustrates the core of the proposal in~\cite{gilyen2019optimizing}.

On the other hand, if we assume that there exists $q\in \mathbb{Z}_{2^m}$ such that $ (1+|\braket{\psi_j}{\psi_k}|^2) = q\pi/2^m$ for positive integer $m$, then we can learn $q$, which is a bit encoding of the phase, using quantum phase estimation which requires $2^m$ repetitions of the circuit.  This means that we can construct a bit oracle from the phase oracle at cost exponential in the number of bits.  This cost is further optimal because any improvements to the conversion would violate quantum lower bounds for phase estimation.

If we have a bit oracle representation of the probability, then we would have an oracle $U_b$ such that
\begin{equation}
    U_b \ket{j}\ket{k}\ket{0} \mapsto \ket{j}\ket{k}\Biggr|{\left[\frac{1}{2}  + \frac{|\braket{\psi_j}{\psi_k}|^2}{2}\right]}\Biggr\rangle,
\end{equation}
where $[\cdot]$ denotes an $m$-bit binary encoding.  Under these circumstances, note that if we apply an adder then
\begin{equation}
    {\rm ADD} \left(\Biggr|{\left[\frac{1}{2}  + \frac{|\braket{\psi_j}{\psi_k}|^2}{2}\right]}\Biggr\rangle {\rm QFT} \ket{1}\right) = e^{i2\pi \left[\frac{1}{2}  + \frac{|\braket{\psi_j}{\psi_k}|^2}{2}\right]/2^m } \left(\Biggr|{\left[\frac{1}{2}  + \frac{|\braket{\psi_j}{\psi_k}|^2}{2}\right]}\Biggr\rangle {\rm QFT} \ket{1}\right),
\end{equation}
which converts the result into a phase oracle at cost $O(m)$~\cite{gidney2018halving}.  This shows that bit oracles, if available, are a powerful resource but can be expensive to construct from a native phase or probability oracle.
For these reasons, we discuss both types of oracles here but focus our discussion on phase oracles owing to their ubiquity in algorithms like nearest neighbor classification.

\section{Main Results}
\label{sec:main}

In this section, we give a brief summary of the main results of our paper.  Specifically, we provide upper bounds on the query complexity for the following two tasks: (a) finding the optimum of a convex quadratic function via a local approach based on time-dependent Hamiltonian simulation of damping and (b) finding the global optimum of a general differentiable function via a global approach based on preparing a low-temperature Gibbs distribution over the optimization parameters. Unless stated otherwise, we use $\norm{\cdot}$ to refer to the 2-norm of a vector or the induced 2-norm (spectral norm) of a matrix, depending on the context.

The following informal theorem provides a bound on the complexity of the local approach for the task of optimizing a convex quadratic function. A more precise statement and a tighter bound are given in Theorem~\ref{thm:main}. We also provide a generalization that yields analogous results under the weaker promise of strong convexity only. Note that we choose convex quadratic optimization as a benchmark problem because it allows us to prove relatively tight convergence bounds. However, as discussed in Section~\ref{sec:local}, our local optimization algorithms can also be used to find local optima of non-convex functions.

\begin{thm}[Coherent convex quadratic optimization; informal version of Theorem~\ref{thm:main}]
    Let $\epsilon > 0$ be an error tolerance, let $A \in \mathbb{R}^{N \times N}$ be positive with eigenvalues $0 < \lambda_0 =: \lambda_{\min} \leq \lambda_1 \leq \dots \leq \lambda_{N-2} \leq \lambda_{N-1} =: \lambda_{\max}$ and let $f(\bx) = \lb \bx- \bx^* \rb^\top A \lb \bx- \bx^* \rb$.
    Assume having access to a quantum state $\ket{\widetilde{\psi}_0 \lb \bx \rb} \in \mathbb{C}^{2^{Nn}}$ with sufficiently smooth amplitudes and let the corresponding vector of position expectation values be given by $\bra{\widetilde{\psi}_0} \hat{\bx} \ket{\widetilde{\psi}_0} =: \bx_0 \in \mathbb{R}^N$.
    Further, assume that $f$ can only be accessed via a phase oracle $O_f^{(p)}$ as given in Definition~\ref{def:phase_oracle}.
    Then there exists a quantum algorithm that can solve the Continuous Quantum Optimization Problem by finding an $\bx'$ such that $\left| f\lb \bx' \rb - f\lb \bx^* \rb \right| \leq \epsilon$ with probability at least $1 - \delta$ using 
    \begin{equation}
        \widetilde{O} \lb \frac{N^2}{h_x^2 \epsilon} \frac{\lambda_{\max}^2}{\lambda_{\min}} \norm{\bx_0 - \bx^*}^2 x_{\max}^2  \log \lb 1/\delta \rb \rb
    \end{equation}
    queries to controlled-$O_f^{(p)}$ and its inverse.
    If instead $f$ can be accessed via a bit oracle $O_f^{(b)}$ as given in Definition~\ref{def:bit_oracle}, then the same problem can be solved using
    \begin{equation}
        \widetilde{O} \lb \frac{N}{h_x^2 \sqrt{\lambda_{\min}}} \log^2 \lb\frac{\lambda_{\max} \norm{\bx_0 - \bx^*} x_{\max}}{\epsilon} \rb \log \lb 1/\delta \rb \rb
    \end{equation}
    queries to $O_f^{(b)}$.
\label{thm:main_informal}
\end{thm}

In the case of bit oracle access to the objective function, our query upper bound in the above theorem seems to depend only logarithmically on $\lambda_{\max}$. However, in order to ensure that there exists a grid point sufficiently close to the true minimizer such that the error in the function value is at most $\epsilon$, we require the grid spacing $h_x$ to be at most $\sqrt{\epsilon/\lambda_{\max}}$. 
In fact, in Section~\ref{sec:convex_quadratic} we provide evidence that $h_x \sim \sqrt{\lambda_{\min} \epsilon}/\lambda_{\max}$.
This means that we generally do not expect any asymptotic advantage of our dynamical optimization algorithms over gradient descent based methods in this setting since, for example, conjugate gradient descent requires only $O \lb \sqrt{\kappa} \log (1/\epsilon) \rb$ queries to a bit oracle to find the optimum of a convex quadratic function~\cite{NesterovLectures}.
However, one might be able to utilize interpolation schemes such that in certain cases a larger grid spacing with milder dependence on $\epsilon$ and $\lambda_{\max}$ could be used. We leave a detailed analysis of this idea for future work.

The assumption that the information is passed to the system through a bit oracle is of course a strong one. Amplitude oracles, and in turn phase oracles, are much more prevalent in the case of QML or VQE applications.  In the event that 
the objective function can only be accessed via a phase oracle, we find that quantum algorithms that simulate frictional dynamics can lead to exponentially better scaling with the condition number than vanilla gradient descent which we show might scale exponentially with the condition number in the worst case. This makes the upper bounds yielded by our method for local optimization to be, to our knowledge, the best asymptotic scaling available for quantum algorithms that involve variational optimization such as VQE, QML and QAOA. 

Our global approach uses results from the classical open systems literature to simulate thermalization of a system of classical particles. Each optimization parameter gets mapped to a single classical particle. The objective function in this case acts as a potential for the classical particles.  If a low-temperature thermal state of the classical system can be prepared, then we can use this approach to find a global rather than a local optimum for our optimization problem. We consider the following classical Nosé Hamiltonian for an extended system consisting of the original system and a heat bath~\cite{Nose1984,Nose1984partition_function,Simon2024Liouvillian}:
\begin{equation}
    H_{N} = \frac{\mathbf{p}^2}{2m s^2} + V(\mathbf{x}) + \frac{p_s^2}{2Q} + g\beta^{-1} \ln(s),
\end{equation}
where $\mathbf{p}$ is the vector of momentum variables of the classical particles, $\mathbf{x}$ is the vector of position variables of the classical particles, $s\in (0,\infty]$ is a bath degree of freedom, $p_s$ is its canonical momentum, $\beta$ is the inverse temperature of the final distribution and $g$ and $Q$ are free parameters that can be chosen to impact the rate with which the reduced phase space distribution over $(\bp,\bx)$ approaches a thermal state with temperature $1/\beta$.  This approach can be thought of as an open systems analogue of~\cite{leng2023qhd} with the modification that the dynamics are classical rather than quantum. The dynamics of the phase space probability density are governed by a classical Liouvillian operator, $L_N$, which is essentially the analogue of a quantum Hamiltonian for classical dynamics. Using the Koopman-von Neumann formalism, the evolution under the Liouvillian can be written as unitary dynamics acting on a quantum state, i.e.~it can be transformed into a Schrödinger equation.
Once recast as a Schr\"{o}dinger equation, we use Hamiltonian simulation to simulate the dynamics. The value of the objective function is returned from a quantum phase oracle, which renders the resulting algorithm not classically simulatable, and in turn allows a weaker oracle to be used for optimization than previously considered.  

A challenge of this approach, akin to~\cite{leng2023qhd}, is that we need to evolve long enough to ensure that the distribution has equilibrated.  This time scales inversely with the spectral gap of the Liouvillian and can cause the algorithm to require exponential run time.  However, as efficient optimization would result in ${\rm NP}\subseteq {\rm BQP}$, it is highly unlikely that efficient algorithms for global optimization are possible without utilizing structure of the particular problem instance. We provide estimates of the equilibration time needed for the reduced system to approach a thermal distribution using ideas from quantum thermodynamics~\cite{Linden2009thermal}, suggesting that new algorithmic applications may be gleaned from foundational work in thermodynamics.

Below we provide an informal statement of our theorem for the number of oracle calls and Toffoli gates needed to sample from a low-temperature thermal state using a discretization of the Liouvillian for the Nos\'e process.

\begin{thm}[Global Optimization Theorem; informal version of Theorem~\ref{thm:mainGlobal}]
\label{thm:mainGlobalInformal}
    Let $\delta>0$ be an error tolerance and consider the problem of drawing a sample from the marginal distribution over $\bx$ from a distribution that is within total variational distance $\delta$ from the dilation of the thermal state $e^{-\beta(\bp^2/2m +V(\bx))}/\mathcal{Z}$. 
    There exists a quantum algorithm for solving this problem that uses a number of oracle queries and a number of Toffoli gates that scale as
    \begin{equation}
        N_{\rm query,MC}\in \tilde{O}\left( \frac{\alpha'' N}{\gamma \delta^2}\right),\qquad N_{\rm Toff, MC} \in \tilde{O}\left(\frac{
        \alpha''\left((N+d)\log(g) \right)}{\gamma \delta^2} \right),
    \end{equation}
    respectively, where $\gamma$ is the ($\beta$-dependent) spectral gap of the discretized Liouvillian $L_N$ and $\alpha''$ is a block encoding constant for the Liouvillian that depends on the discretization scale  as well as the volume of support of the microcanonical phase space volume which is assumed to be compact.
\end{thm}

We see from these results that we can find the global optimum as well as local optima by changing the dynamics to allow us to explore multiple local optima.  Specifically, under the assumption that the discretization error caused by choosing a discretization scale of $h_{\min}$ is sufficient and the spectral gap of the Liouvillian (analogous to a spectral gap of a Markov chain) is sufficiently large then we can in fact find a close approximation to a globally optimal point by leveraging the fact that the continuum Nos\'e Liouvillian equilibrates to the global optimum.  This result is further particularly interesting because it relies on a number of results from quantum thermodynamics which are needed to estimate the timescale required for the evolution to reach microcanonical equilibrium.

In Table~\ref{tab:compare} we consider the setting of having only phase oracle access to the objective function and compare the query complexities of our local and global optimization algorithms to hybrid quantum-classical approaches which only use the quantum computer to compute gradients.
Importantly, we compare general purpose quantum optimization algorithms with each other, meaning that none of these algorithms rely on any particular structure of the objective function. For the comparison of the local approaches, we choose convex quadratic optimization as a benchmark problem since it allows us to provide concrete convergence and query bounds. 
In particular, the query upper bounds that we are able to prove for this benchmark problem are exponentially better in the condition number, $\lambda_{\max}/\lambda_{\min}$, for our fully quantum approach than the hybrid approach.

For global optimization problems, the query upper bound for our fully quantum approach has exponentially better scaling with the inverse spectral gap $\gamma^{-1}$ and the inverse error tolerance $\delta^{-1}$ than the hybrid approach. Of course, global optimization is a hard problem in general, so any algorithm is expected to require a number of queries that scales super-polynomially with $N$, the number of optimization parameters, in the worst case. In particular, $\gamma$ can be exponentially small in $N$ in which case the query upper bound for our fully quantum algorithm scales exponentially in $N$. However, the query upper bound for the hybrid approach in that case is doubly exponential in $N$, so the upper bound for the fully quantum algorithm is still exponentially better than the upper bound for the hybrid approach.

As these are all upper bounds, we cannot say definitively that the fully quantum approaches are always asymptotically superior. Lower bounds would be needed to prove an asymptotic advantage of fully quantum approaches over hybrid approaches.  We further show that the quadratic assumption in the local case can be relaxed to allow strongly convex functions; however, the bounds are often worse and in a sufficiently small neighborhood about the optimal solution, for any strongly convex function, the convergence guarantees for the quadratic case will still apply.  For this reason, we focus on the quadratic case while emphasizing that further work is needed to provide tight bounds for more general functions.

\begin{table}[t]
    \renewcommand{\arraystretch}{2}
    \centering
    \begin{tabular}{|c|c|c|}
    \hline
         &Hybrid quantum classical with phase oracle & Fully quantum with phase oracle \\
         \hline
         Local optimization & $\widetilde{O} \lb N^{3/2}\left(\frac{2\lambda_{\max}\|\bx_0 - \bx^*\|^2}{\epsilon}\right)^{(\lambda_{\max}/\lambda_{\min}) \log(3)/4} \rb$ (Thm.~\ref{thm:quantumGradDesAlg}) & {$\widetilde{O}\left(\frac{N^2 \lambda_{\max}^2}{\epsilon \lambda_{\min}^2h_{\min}^2}\right)$} (Thm.~\ref{thm:main}) \\
         Global optimization & $\widetilde{O}\left(\frac{N^{3/2}e^{2\sqrt{N}/\gamma\delta^2}}{\gamma^3 \delta^6}\right)$ (Thm.~\ref{thm:hybridGlobal})& $\widetilde{O}\left( \frac{N^2}{\gamma \delta^2 h_{\min}^2}\right)$ (Thm.~\ref{thm:mainGlobal})\\
         \hline
    \end{tabular}
    \caption{Comparison of upper bounds on the number of oracle calls for local and global optimization problems using either hybrid quantum-classical optimization algorithms or our fully quantum optimization algorithms. 
    We assume that the objective function can only be accessed via a phase oracle (see Defs.~\ref{def:phase_oracle} and \ref{def:phase_oracle_derivatives}). For local optimization, we consider the benchmark problem of finding the optimum of a quadratic convex function within error $\epsilon$ since this setting allows us to prove concrete convergence and query bounds. For global optimization, we consider the problem of drawing a sample from a distribution within total variational distance~$\delta$ from a discrete approximation to the global optimum of a general differentiable function. In both cases, we consider $N$ optimization parameters and take the remaining parameters to be constant for simplicity except $\lambda_{\max},\lambda_{\min}$ (the largest/smallest eigenvalues of the Hessian matrix), the minimum grid spacing $h_{\min}$ (which is expected to scale at most polynomially with $\lambda_{\max}^{-1},\lambda_{\min}$) and $\gamma$ (the spectral gap of the discrete Nos\'e Liouvillian at near-zero temperature). We only include factors of the initial distance from the optimum, $\|\bx_0 - \bx^*\|$, if the associated exponent is not clearly constant.}
\label{tab:compare}
\end{table}

\section{Quantum Algorithms for Coherent Local Optimization}
\label{sec:local}

Consider a twice continuously differentiable nonnegative function $f: \mathbb{R}^N \rightarrow \mathbb{R}_{\geq 0}$ which satisfies
\begin{equation}
    f(\bx) \rightarrow + \infty \quad \text{as} \quad \norm{\bx} \rightarrow +\infty \quad \text{for all} \quad \bx \in \mathbb{R}^N.
\label{local_opt_assumption}
\end{equation}
The goal now is to find a local minimum of $f$. That is to say, we aim to find an $\bx^* \in \mathbb{R}^N$ that satisfies $(\nabla f)(\bx^*) = 0$ and $H_f (\bx^*) \succ 0$ where the latter condition states that the Hessian of $f$ evaluated at $\bx^*$ is positive definite.

A standard tool for attacking such a minimization problem is gradient descent which iteratively updates the proposed solution by evaluating the gradient of $f$ at the current point.
In this paper, we present alternative approaches inspired by classical dynamics with friction. While our main algorithm deals with the simulation of quantum-mechanical systems with friction, it will be helpful to consider the classical analogue first in order to develop some intuition for dynamics with friction. The basic idea is to treat $\bx$ as the position vector of a single classical particle moving in $N$ dimensions in a potential given by $f$. Crucially, the particle is subject to a velocity-dependent friction force, $\mathbf{F}_{\rm fric} := - \beta \dot{\bx}$, where $\beta \geq 0$ determines the strength of the applied friction and $\dot{\bx} := \frac{d \bx}{dt}$.
The equations of motion for the classical particle can then be written in the following vector form:
\begin{equation}
    m \ddot{\bx} = -\nabla f(\bx) - \beta \dot{\bx}. 
\label{eom_friction_general}
\end{equation}
Intuitively, due to the friction, the particle will slow down over time, eventually settling into a local minimum of $f$. More precisely, assume that $f$ is analytic and let $S_{\mathrm{stationary}} := \{\bx \in \mathbb{R}^N | \nabla f(\bx) = 0 \}$ denote the set of stationary points of $f$. Then it can be shown that the solution to Eq.~\eqref{eom_friction_general} converges to a stationary point of $f(\bx)$ in the limit as $t \rightarrow \infty$, see~\cite{Haraux1998diffeq_convergence} for a proof.
Thus, if we were able to efficiently simulate the dynamics associated with Eq.~\eqref{eom_friction_general} on a quantum computer, we could efficiently find a local minimum/stationary point of $f$ simply by measuring the position vector of the particle after sufficiently long time. The question is just how to map such a classical dissipative system onto a quantum computer. Based on the Liouvillian formalism of classical mechanics, Ref.~\cite{Simon2024Liouvillian} shows how to efficiently simulate the dynamics of a classical system assuming that the system can be described by a classical Hamiltonian. As it turns out, there does exists a time-dependent Hamiltonian whose associated equations of motion are given by Eq.~\eqref{eom_friction_general}. 

\begin{defn}[Hamiltonian with friction]
    Let $\beta, m, t > 0$ and let $f: \mathbb{R}^N \rightarrow \mathbb{R}$ be differentiable. Then we define the following time-dependent friction Hamiltonian:
    \begin{equation}
        H_{\mathrm{fric}}(t) := e^{- \beta t/m} \sum_{j=1}^N \frac{p_j^2}{2m} + e^{\beta t/m} f(\bx),
    \end{equation}
    where $p_j$ can be interpreted either as the classical momentum conjugate to $x_j$ or as a quantum operator satisfying the canonical commutation relations such that $[x_j, p_k]  = i \delta_{jk}$. 
\label{def:friction_ham}
\end{defn}

For our purposes, we will often set $m=1$ for simplicity, especially in query complexity bounds.

\begin{lem}[Equations of motion for the friction Hamiltonian]
    The equations of motion associated with the classical friction Hamiltonian $H_{\mathrm{fric}}(t)$ are given by
    \begin{equation}
        m \ddot{\bx} = -\nabla f(\bx) - \beta \dot{\bx}.
    \end{equation}
\end{lem}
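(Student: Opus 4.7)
The plan is to apply Hamilton's equations directly to $H_{\mathrm{fric}}(t)$ and then eliminate the canonical momentum $p_j$ in favor of the velocity $\dot{x}_j$ by differentiating once more in time. Since the Hamiltonian is separable in position and momentum with an explicit time dependence sitting only in the prefactors $e^{\pm \beta t/m}$, the computation is essentially a one-line exercise; the main task is to make sure the signs and time derivatives of the prefactors are handled correctly.

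Concretely, I would first compute
\begin{equation}
    \dot{x}_j = \frac{\partial H_{\mathrm{fric}}}{\partial p_j} = \frac{e^{-\beta t/m}}{m}\, p_j, \qquad \dot{p}_j = -\frac{\partial H_{\mathrm{fric}}}{\partial x_j} = - e^{\beta t/m}\, \frac{\partial f}{\partial x_j}(\bx).
\end{equation}
Solving the first relation for the momentum gives $p_j = m\, e^{\beta t/m}\, \dot{x}_j$, which makes it clear that $p_j$ is not the kinetic momentum $m\dot{x}_j$ but is rescaled by the time-dependent factor. Differentiating $\dot{x}_j$ with respect to $t$ and using the product rule on $e^{-\beta t/m}$ then yields
\begin{equation}
    \ddot{x}_j = -\frac{\beta}{m^2} e^{-\beta t/m} p_j + \frac{e^{-\beta t/m}}{m} \dot{p}_j = -\frac{\beta}{m}\, \dot{x}_j - \frac{1}{m}\, \frac{\partial f}{\partial x_j}(\bx),
\end{equation}
where in the last step I substituted both $p_j = m e^{\beta t/m} \dot{x}_j$ and the expression for $\dot{p}_j$ above, so the exponential prefactors cancel exactly. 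Multiplying through by $m$ and collecting the $N$ components into a vector identity then gives $m \ddot{\bx} = -\nabla f(\bx) - \beta \dot{\bx}$, as claimed.

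There is no genuine obstacle in this derivation; the only subtle point worth emphasizing in the write-up is that the cancellation of the $e^{\pm \beta t/m}$ factors is precisely what turns the Hamiltonian (which is conservative in the formal sense that Hamilton's equations apply) into dissipative Newtonian dynamics. I would also briefly note that the same manipulation goes through in the quantum case by replacing Poisson brackets with commutators via $\dot{O} = i[H_{\mathrm{fric}}, O] + \partial_t O$ and using $[x_j,p_k]=i\delta_{jk}$, so that the Heisenberg-picture operator $\bx$ obeys the same second-order equation; this makes the lemma applicable both to the classical Liouvillian simulation and to the quantum friction dynamics used elsewhere in the paper.
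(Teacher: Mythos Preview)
Your proposal is correct and follows essentially the same approach as the paper: both derive Hamilton's equations for $\dot{x}_j$ and $\dot{p}_j$, then differentiate $\dot{x}_j$ once more and substitute to obtain the damped Newtonian equation, with the $e^{\pm\beta t/m}$ factors cancelling exactly as you describe. Your additional remarks about the explicit inversion $p_j = m e^{\beta t/m}\dot{x}_j$ and the Heisenberg-picture analogue are not in the paper's proof but are harmless elaborations.
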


\begin{proof}
    Hamilton's equations of motion are as follows:
    \begin{equation}
    \begin{split}
        \dot{x}_j &= \frac{\partial H}{ \partial p_j} = e^{-\beta t/m}\frac{p_j}{m} \\
        \dot{p}_j &= -\frac{\partial H}{ \partial x_j} = - e^{\beta t/m} \frac{\partial f}{\partial x_j}.
    \end{split}
    \label{eom2}
    \end{equation}
    This implies that
    \begin{equation}
        m\ddot{x}_j = e^{-\beta t/m} \dot{p}_j - \frac{\beta}{m} e^{-\beta t/m} p_j = -\frac{\partial f}{\partial x_j} - \beta \dot{x}_j,
    \end{equation}
    which is the coordinate version of Eq.~\eqref{eom_friction_general}.
\end{proof}

Next, let us briefly explain the Liouvillian formulation of classical mechanics which allows us to easily simulate the dynamics of classical systems on a quantum computer. The Liouvillian formalism is centered around the phase space probability density $\rho (\bx, \bp, t)$ of the classical system which obeys the following equation of motion:
\begin{equation}
    \frac{\partial \rho}{\partial t} = -iL\rho,
\label{liouville_eq}
\end{equation}
where $L$ is the Liouvillian operator given by
\begin{equation}
    L := -i \sum_{j=1}^N \lb \frac{\partial H}{\partial p_j} \frac{\partial}{\partial x_j} - \frac{\partial H}{\partial x_j} \frac{\partial}{\partial p_j} \rb,
\label{liouvillian}
\end{equation}
with $H$ being the system's Hamiltonian. The key point is that $L$ is Hermitian, meaning that the time evolution operator generated by $L$ is unitary. This allows us to use existing Hamiltonian simulation methods to efficiently simulate the dynamics of $\rho$ on a quantum computer by effectively treating $L$ as just another type of Hamiltonian. For technical reasons, we actually end up simulating the dynamics of the Koopman-von Neumann wave function $\psi_{KvN} (\bx, \bp)$ associated with $\rho$ rather than the dynamics of $\rho$ itself. The Koopman-von Neumann wave function can be thought of as the square root of the probability density $\rho$ and, for the purpose of simulating dynamics, it can be treated like any other quantum wave function. Importantly, $\psi_{KvN}$ obeys the same equation of motion as $\rho$ shown in Eq.~\eqref{liouville_eq}, see Ref.~\cite{Simon2024Liouvillian} for more details.

Note that the classical Liouvillian operator associated with the friction Hamiltonian $H_{\mathrm{fric}}(t)$ is defined as follows:
\begin{defn}[Liouvillian with friction]
    Let $\beta, m, t > 0$, let $f: \mathbb{R}^N \rightarrow \mathbb{R}$ be differentiable and let $H_{\mathrm{fric}}(t)$ be the classical Hamiltonian defined in Definition~\ref{def:friction_ham}. Then we define the following time-dependent friction Liouvillian:
    \begin{equation}
        L_{\mathrm{fric}}(t) := -i \sum_{j=1}^N \lb \frac{\partial H_{\mathrm{fric}}(t)}{\partial p_j} \frac{\partial}{\partial x_j} - \frac{\partial H_{\mathrm{fric}}(t)}{\partial x_j} \frac{\partial}{\partial p_j} \rb
        = -i \sum_{j=1}^N \lb e^{-\beta t/m} \frac{p_j}{m} \frac{\partial}{\partial x_j} - e^{\beta t/m} \frac{\partial f}{\partial x_j} \frac{\partial}{\partial p_j} \rb.
    \end{equation}
\label{def:friction_liouvillian}
\end{defn}
The corresponding unitary time evolution operator for the phase space density is given by 
\begin{equation}
    U_L(t) := \mathcal{T} \ls e^{-i\int_0^t L_{\mathrm{fric}}(s) \mathrm{d}s} \rs,
\label{friction_liouvillian_ev}
\end{equation}
where $\mathcal{T}\ls \cdot \rs$ denotes the time-ordering operator.

So far, we have assumed that the friction Hamiltonian $H_{\mathrm{fric}}$ in Definition~\ref{def:friction_ham} is a classical Hamiltonian. However, in principle, $H_{\mathrm{fric}}$ can also be viewed as a quantum Hamiltonian.
In the quantum case, it can be shown that an initial wave function, when evolved under the unitary time evolution operator 
\begin{equation}
    U_H(t) := \mathcal{T} e^{-i\int_0^t H_{\mathrm{fric}}(s) \mathrm{d}s},
\label{friction_ham_ev}
\end{equation}
converges to a superposition of delta peaks centered at the stationary points of $f$, see Ref.~\cite{Smedt1986CK_asymptotics} for a proof in the 1-dimensional case. Thus, simply implementing $U_H(t)$ also allows us to find a local minimum/stationary point of the objective function $f$.
In the following, we will discuss how to implement discrete versions of both $U_L(t)$ and $U_H(t)$ on a quantum computer and how to use them to find a local optimum of $f$.
 
Our first algorithm simulates classical dynamics with friction whereas our second algorithm simulates quantum dynamics with friction. Note that our main result utilizes only the second algorithm for simulating quantum dynamics with friction as it leads to better bounds. Nonetheless, we include a discussion of the algorithm for simulating classical dynamics with friction for completeness since the analysis of our main algorithm leverages results from the analysis of the first algorithm.

Both algorithms utilize the interaction picture to reduce the impact of the exponential factor $e^{\beta t/m}$, appearing in the definition of $H_{\mathrm{fric}}$, on the running time.
The interaction picture is effectively just a unitary transformation on the Hilbert space under consideration. In the following, we will derive the interaction picture for a Hamiltonian $H(t)$ which is time-dependent in the Schrödinger picture. Specifically, we assume that $H(t)$ is of the form $H(t) = A(t) + B(t)$ where $A(t)$ and $B(t)$ are both Hermitian for all $t$.
Let $\ket{\psi(t)}_S$ be a quantum state in the Schrödinger picture and define 
\begin{equation}
    U_B(t) := \mathcal{T} \ls e^{- i \int_0^t B(s) \mathrm{d}s} \rs,
\end{equation}
where
\begin{equation}
    \mathcal{T} \ls e^{- i \int_0^t B(s) \mathrm{d}s} \rs := \lim_{r \rightarrow \infty} \prod_{j=1}^r e^{-iB\lb  \frac{j-1}{r} t \rb \frac{t}{r}} 
\end{equation}
is a time-ordered operator exponential.
Then the quantum state in the interaction picture is given by
\begin{equation}
    \ket{\psi(t)}_I := U_B^\dagger(t) \ket{\psi(t)}_S.
\end{equation}
Note that
\begin{equation}
    U_B^\dagger(t) = \lim_{r \rightarrow \infty} \prod_{j=r}^1 e^{iB\lb  \frac{j-1}{r} t \rb \frac{t}{r}} = \lim_{r \rightarrow \infty} \prod_{j=r}^1 e^{-iB\lb  \frac{j-1}{r} t \rb \lb - \frac{t}{r} \rb}  = \mathcal{T} \ls e^{- i \int_t^0 B(s) \mathrm{d}s} \rs,
\end{equation}
meaning that generically, $ U_B^\dagger(t) \neq \mathcal{T} e^{i \int_0^t B(s) \mathrm{d}s}$.

An operator $M(t)$ in the Schrödinger picture takes the following form in the interaction picture:
\begin{equation}
    M_{I}(t) = U_B^\dagger(t) M(t) U_B(t). 
\end{equation}
The differential equation governing the time evolution of $\ket{\psi(t)}_I$ is given by
\begin{equation}
\begin{split}
    i \partial_t \ket{\psi(t)}_I &= i \partial_t \lb U_B^\dagger(t) \ket{\psi(t)}_S \rb = i \lb \frac{\partial U_B^\dagger(t)}{\partial t} \ket{\psi(t)}_S + U_B^\dagger(t) \frac{\partial \ket{\psi(t)}_S}{\partial t} \rb \\
    &= - U_B^\dagger(t) B(t) \ket{\psi(t)}_S + U_B^\dagger(t) \lb A(t) + B(t) \rb \ket{\psi(t)}_S \\
    &=  U_B^\dagger(t) A(t) U_B(t) U_B^\dagger(t) \ket{\psi(t)}_S \\
    &= A_I(t) \ket{\psi(t)}_I,
\end{split}
\end{equation}
where we defined
\begin{equation}
    A_I(t) := U_B^\dagger(t) A(t) U_B(t),
\label{intpic_op}
\end{equation}
and used the fact that
\begin{equation}
    \partial_t \mathcal{T} \ls e^{-i \int_t^0 B(s) \mathrm{d}s} \rs = i \mathcal{T} \ls e^{-i \int_t^0 B(s) \mathrm{d}s} \rs B(t),
\end{equation}
which follows from the second fundamental theorem of calculus. 
Specifically,
\begin{equation}
    \partial_t \int_t^0 B(s) \mathrm{d}s = -B(t).
\end{equation}
This implies that
\begin{align}
    \ket{\psi(t)}_I &= \mathcal{T} \ls e^{- i \int_0^t A_I(s) \mathrm{d}s} \rs \ket{\psi(0)}_I \\
    \ket{\psi(t)}_S &= \mathcal{T} \left[ e^{- i \int_0^t B(s) \mathrm{d}s} \right] \mathcal{T} \left[ e^{- i \int_0^t A_I(s) \mathrm{d}s} \right] \ket{\psi(0)}_S.
    \label{intpic_to_schrodinger}
\end{align}
Note that in the special case where $\left[ B(t), B(t') \right] = 0$ for all $t,t' \geq 0$ we have that
\begin{align}
    U_B(t) &= \mathcal{T} \left[ e^{- i \int_0^t B(s) \mathrm{d}s} \right] = e^{- i \int_0^t B(s) \mathrm{d}s} \\
    U_B^\dagger(t) &= \mathcal{T} \left[ e^{- i \int_t^0 B(s) \mathrm{d}s} \right] = e^{- i \int_t^0 B(s) \mathrm{d}s} = e^{i \int_0^t B(s) \mathrm{d}s}.
\end{align}

In order to simulate the time evolution under a time-dependent Hamiltonian, we assume that we have access to a time-dependent matrix encoding of the Hamiltonian as defined below.
\begin{defn}[Time-dependent matrix encoding~\cite{Low2019InteractionPic}]
    Given a matrix $H(s) : [0,t] \rightarrow \mathbb{C}^{2^{n_s} \times 2^{n_s}}$, integer $M >0$, and a promise $\norm{H} \leq \alpha$, assume there exists a unitary oracle $\hamt \in \mathbb{C}^{M 2^{n_a + n_s} \times M 2^{n_a + n_s}}$ such that
    \begin{align}
        &\hamt = 
        \begin{pmatrix}
            H/\alpha & \cdot \\
            \cdot & \cdot 
        \end{pmatrix}, 
        \quad H = \mathrm{Diagonal} [H(0), H(t/M), \dots, H((M-1)t/M] \\
        & \implies \lb \bra{0}_a \otimes \mathbb{1} \rb \hamt \lb \ket{0}_a \otimes \mathbb{1} \rb = \sum_{m=0}^{M-1} \ketbra{m}{m} \otimes \frac{H(mt/M)}{\alpha}.
    \end{align}
\label{def:ham-t}
\end{defn}

We simulate the time-ordered operator exponential by implementing a truncated Dyson series as shown in~\cite{Low2019InteractionPic,kieferova2019simulating}. This approach has the best known scaling for simulating generic time-dependent Hamiltonians; however, the constant factors involved can make implementing it a challenge.  The central idea behind the approach is simple: we take the Dyson series for the time-evolution operator and discretize the integrals to reduce the result to a discrete sum of unitaries that can be implemented using LCU methods~\cite{childs2012hamiltonian}.
 For convenience, let us restate the main result of these works here.

\begin{lem}[Multi-segment Hamiltonian simulation by a truncated Dyson series~\cite{Low2019InteractionPic}]
    Let $H(s): [0,t] \rightarrow \mathbb{C}^{2^{n_s} \times 2^{n_s}}$, and let it be promised that $\max_s \norm{H(s)} \leq \alpha$, and let $\langle \| \dot{H} \| \rangle := \frac{1}{t} \int_0^t \norm{\frac{\mathrm{d} H(s)}{\mathrm{d}s}} \mathrm{d} s$. Further, let $\tau := t/\lceil 2 \alpha t \rceil$ and assume $H_j(s) := H((j-1) \tau + s) : s \in [0, \tau]$ is accessed by an oracle $\hamt_j$ of the form specified in Definition~\ref{def:ham-t} with
    \begin{equation}
        M \in O \lb  \frac{t}{\alpha \epsilon} \lb \langle \| \dot{H} \| \rangle + \max_s \norm{H(s)}^2 \rb \rb.
    \end{equation}
    For all $|t| \geq 0$ and $\epsilon > 0$, an operation $W$ can be implemented with failure probability at most $O (\epsilon)$ such that
    \begin{equation}
        \norm{W - \mathcal{T} \left[ e^{-i \int_0^t H(s) \mathrm{d}s} \right]} \leq \epsilon
    \end{equation}
    with the following cost:
    \begin{enumerate}
        \item Queries to all $\hamt_j$: $O \lb \alpha t \frac{\log \lb \alpha t/\epsilon \rb}{\log \log \lb \alpha t /\epsilon \rb} \rb$,
        \item Qubits: $n_s + O \lb n_a + \log \lb \frac{t}{\alpha \epsilon} \lb \langle \| \dot{H} \| \rangle + \max_s \norm{H(s)}^2 \rb \rb \rb$,
        \item Primitive gates: $O \lb \alpha t \lb n_a + \log \lb \frac{t}{\alpha \epsilon} \lb \langle \| \dot{H} \| \rangle + \max_s \norm{H(s)}^2 \rb \rb \rb \frac{\log \lb \alpha t/\epsilon \rb}{\log \log \lb \alpha t /\epsilon \rb} \rb$.
    \end{enumerate}
\label{lem:time-dep-sim}
\end{lem}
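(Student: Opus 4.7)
The plan is to follow the standard truncated Dyson series construction along the lines of \cite{Low2019InteractionPic,kieferova2019simulating}: partition $[0,t]$ into short segments, approximate each segment's time-ordered exponential by a finite Dyson series discretized on a time grid, and implement the resulting sum as a linear combination of unitaries (LCU) built from $\hamt_j$. I would first partition $[0,t]$ into $r = \lceil 2\alpha t\rceil$ equal segments of length $\tau = t/r$, chosen so that $\alpha\tau \leq 1/2$. On each segment one writes
\begin{equation}
U_j = \sum_{k=0}^\infty (-i)^k \int_0^\tau\!\! ds_1 \int_0^{s_1}\!\!ds_2\cdots\int_0^{s_{k-1}}\!\!ds_k\; H_j(s_1)\cdots H_j(s_k)
\end{equation}
and truncates at order $K$; because $\alpha\tau \leq 1/2$ the tail norm is bounded by $\sum_{k>K}(\alpha\tau)^k/k!$, so $K \in O\lb \log(\alpha t/\epsilon)/\log\log(\alpha t/\epsilon) \rb$ suffices to keep the per-segment truncation error below $O(\epsilon/r)$. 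This is the step that generates the $\log/\log\log$ factor in the final query bound.

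Next, I would replace each time integral by an $M$-point left Riemann sum on a uniform grid of spacing $\tau/M$. The discretization error of the $k$-th term is controlled by $\langle \norm{\dot H} \rangle$ together with $\max_s \norm{H(s)}^2$ via a telescoping fundamental-theorem-of-calculus estimate; summing the per-term bound over $k \leq K$ and over the $r$ segments yields exactly the stated scaling of $M$. The discretized Dyson sum is now literally an LCU whose terms are products of $k$ factors of the form $H_j(m\tau/M)\cdot(\tau/M)$. Preparing the $K$ time-index registers in the uniform superposition over $\{0,\dots,M-1\}$ and applying $K$ nested controlled copies of $\hamt_j$ implements the select+prepare primitive for this LCU.

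I would then assemble the segment circuit by interleaving the $K$ controlled $\hamt_j$ calls with an $O(K\log M)$-gate sorting network on the time-index registers that enforces the ordering $s_1 \geq s_2 \geq \cdots \geq s_k$. Because $\alpha\tau \leq 1/2$, the resulting block encoding of the truncated, discretized Dyson series has subnormalization bounded by a fixed constant, so a single round of oblivious amplitude amplification promotes it to an approximation of $U_j$ with error $O(\epsilon/r)$. Concatenating the $r$ segment circuits yields an operation $W$ with $\norm{W - \mathcal{T}e^{-i\int_0^t H(s)ds}} \leq \epsilon$ and failure probability $O(\epsilon)$ at cost $r\cdot O(K) = O\lb \alpha t \log(\alpha t/\epsilon)/\log\log(\alpha t/\epsilon) \rb$ queries to the $\hamt_j$; the qubit and gate counts follow from the ancilla size $n_a$ of the block encoding, the $O(\log M)$-qubit time registers, and the sorting network.

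The main obstacle is the oblivious amplitude amplification step on an \emph{approximate} block encoding: one must show that the combined truncation and discretization errors do not spoil the amplification, and that the per-segment error $O(\epsilon/r)$ composes subadditively rather than multiplicatively when the $r$ segment circuits are concatenated. Robust-OAA estimates together with the standard triangle inequality on propagator composition handle this, but it is the most delicate bookkeeping of the argument and is where the hidden constants in the choices of $K$ and $M$ must be set carefully so that the total truncation, discretization, and amplification errors all stay within $O(\epsilon)$.
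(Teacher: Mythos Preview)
Your proposal is correct and follows essentially the same approach as the original proof in \cite{Low2019InteractionPic}, which the paper simply cites and restates without reproving. The paper explicitly says ``For convenience, let us restate the main result of these works here,'' so there is no independent proof in the paper to compare against; your sketch of segmenting, truncating the Dyson series at order $K \in O(\log(\alpha t/\epsilon)/\log\log(\alpha t/\epsilon))$, discretizing on an $M$-point grid, and implementing via LCU with robust oblivious amplitude amplification is precisely the construction from the cited reference.
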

As the number of query operations made to the $\hamt_j$ oracles is independent of the derivative of the Hamiltonian, the work of~\cite{Low2019InteractionPic} recognizes that transforming into an interaction frame can provide a substantial advantage for simulating the dynamics by taking the term with the largest coefficient in the LCU expansion and transforming it into a rapidly varying time-dependent Hamiltonian.  In our work, however, we need to perform the interaction picture transformation for a time-dependent Hamiltonian owing to the native time-dependence of the dissipative Hamiltonian that we simulate.
The following lemma shows how to utilize the interaction picture for simulating time-dependent Hamiltonians.

\begin{restatable}[Interaction picture simulations for time-dependent Hamiltonians]{lem}{TdepIntpic}
    Let $A(s): [0,t] \mapsto \mathbb{C}^{2^{n_s} \times 2^{n_s}}$, $B(s): [0,t] \mapsto \mathbb{C}^{2^{n_s} \times 2^{n_s}}$ be Hermitian and let it be promised that $\norm{A(s)} \leq \alpha_A$, $\norm{\dot{A}(s)} \leq \alpha_A'$ and $\norm{B(s)} \leq \alpha_B$ for all $s \in [0,t]$. Assume access to the following unitary oracles $\hamt_j$:
    \begin{equation}
    \begin{split}
         &\lb \bra{0}_a \otimes \mathbb{1}_n \rb \hamt_j \lb \ket{0}_a \otimes \mathbb{1}_n \rb = \sum_{m=0}^{M-1} \ketbra{m}{m} \otimes \frac{A_I(t_j + \tau m/M)}{\alpha_A} \\
         &= \sum_{m=0}^{M-1} \ketbra{m}{m} \otimes \frac{U_B((j-1)\tau + \tau m/M)^\dagger A((j-1)\tau + \tau m/M) U_B((j-1)\tau + \tau m/M)}{\alpha_A},
    \end{split}
    \end{equation}
    where $U_B(s) := \mathcal{T} \ls e^{- i \int_0^s B(s') \mathrm{d}s'} \rs$, $A_I(s) := U_B^\dagger(s) A(s) U_B(s)$,
    $\tau \in O \lb \alpha_A^{-1} \rb$ and $M \in O \lb \frac{t}{\epsilon} \lb \alpha_A + \alpha_B + \frac{\alpha_A'}{\alpha_A} \rb \rb$.
    For all $|t| \geq 0$, the time-evolution operator $\mathcal{T} \ls e^{-i \int_0^t \lb A(s) + B(s) \rb \mathrm{d}s} \rs$ may be approximated to error $\epsilon$ with the following cost:
    \begin{enumerate}
        \item Simulations of $U_B(t)$: $1$,
        \item Queries to all $\hamt_j$: $O \lb \alpha_A t \frac{\log \lb \alpha_A t/\epsilon \rb}{\log \log \lb \alpha_A t/\epsilon \rb} \rb$,
        \item Qubits: $n_s + O \lb n_a + \log \lb \frac{t}{\epsilon} \lb \alpha_A + \alpha_B + \frac{\alpha_A'}{\alpha_A} \rb \rb \rb$,
        \item Primitive gates: $O \lb \alpha_A t \lb n_a + \log \lb \frac{t}{\epsilon} \lb \alpha_A + \alpha_B + \frac{\alpha_A'}{\alpha_A} \rb \rb \rb \frac{\log \lb \alpha_A t/\epsilon \rb}{\log \log \lb \alpha_A t/\epsilon \rb} \rb$.
    \end{enumerate}
\label{lem:intpic-t-dep-sim}
\end{restatable}
The proof of the above lemma is similar to the proof of Lemma~\ref{lem:time-dep-sim} in~\cite{Low2019InteractionPic}. The details can be found in Appendix~\ref{app:intpic-t-dep}.

In general, we might not be able to implement the $\hamt_j$ oracles exactly. 
The following lemma provides a bound on the allowable error in the $\hamt_j$ oracles.

\begin{restatable}[Robust interaction picture simulations]{lem}{Robustintpic}
    Let $A(s): [0,t] \mapsto \mathbb{C}^{2^{n_s} \times 2^{n_s}}$, $B(s): [0,t] \mapsto \mathbb{C}^{2^{n_s} \times 2^{n_s}}$ be differentiable Hermitian operator valued functions and let it be promised that $\norm{A(s)} \leq \alpha_A$, $\norm{\dot{A}(s)} \leq \alpha_A'$ and $\norm{B(s)} \leq \alpha_B$ for all $s \in [0,t]$. Let $\epsilon \in (0,1)$ be an error tolerance and let $S \in \Theta \lb \alpha_A t \rb$ be an integer. Assume having access to the following unitary oracles $\widetilde{\hamt_j}$:
    \begin{equation}
         \lb \bra{0}_a \otimes \mathbb{1}_n \rb \widetilde{\hamt}_j \lb \ket{0}_a \otimes \mathbb{1}_n \rb = \sum_{m=0}^{M-1} \ketbra{m}{m} \otimes \frac{\widetilde{A}_I(t_j + \tau m/M)}{\widetilde{\alpha}_A} 
    \end{equation}
    where $\norm{\widetilde{A}_I(s)} \leq \widetilde{\alpha}_A$ for all $s \in [0,t]$ and
    \begin{equation}
        \norm{\widetilde{A}_I(s) - A_I(s)} \leq \frac{\epsilon}{4 S}
    \end{equation}
    with $A_I \lb s \rb$ as given in Eq.~\eqref{intpic_op}.
    Further, let $\tau := \frac{t}{\lceil 2 e \alpha  t\rceil}$ where $\alpha := \max \left\{\alpha_A, \widetilde{\alpha}_A \right\} \leq \alpha_A + \epsilon$. Additionally, $t_j := (j-1) \tau$ and $M \in O \lb \frac{t}{\epsilon} \lb \alpha_A + \alpha_B + \frac{\alpha_A'}{\alpha_A} \rb \rb$.
    Then the time-evolution operator $\mathcal{T} \ls e^{-i \int_0^t \lb A(s) + B(s) \rb \mathrm{d}s} \rs$ can be approximated to error $\epsilon$ with the following cost:
    \begin{enumerate}
        \item Simulations of $U_B(t)$: $1$,
        \item Queries to all $\widetilde{\hamt}_j$: $O \lb \alpha_A t \frac{\log \lb \alpha_A t/\epsilon \rb}{\log \log \lb \alpha_A t/\epsilon \rb} \rb$,
        \item Qubits: $n_s + O \lb n_a + \log \lb \frac{t}{\epsilon} \lb \alpha_A + \alpha_B + \frac{\alpha_A'}{\alpha_A} \rb \rb \rb$,
        \item Primitive gates: $O \lb \alpha_A t \lb n_a + \log \lb \frac{t}{\epsilon} \lb \alpha_A + \alpha_B + \frac{\alpha_A'}{\alpha_A} \rb \rb \rb \frac{\log \lb \alpha_A t/\epsilon \rb}{\log \log \lb \alpha_A t/\epsilon \rb} \rb$.
    \end{enumerate}
\label{lem:robust_intpic_sim}
\end{restatable}

The proof of the above lemma is quite similar to the proof of Lemma~\ref{lem:intpic-t-dep-sim}. It can be found in Appendix~\ref{app:robust_intpic}.

\subsection{Liouvillian Quantum Optimization Algorithm}

The first method that we will consider is a local optimization algorithm involving classical dynamics with friction. Specifically, we encode the parameters of the optimization problem in the position variables of a classical system whose dynamics are governed by the friction Liouvillian given in Definition~\ref{def:friction_liouvillian}. As argued earlier, after sufficiently long time evolution, the position variables will approach a local optimum/stationary point of the objective function. Thus, the goal is to implement the unitary time evolution operator $U_L(t)$ given in Eq.~\eqref{friction_liouvillian_ev} as efficiently as possible. The central advantage of this method is that it allows us to leverage our classical insights about the dynamics of the system to guide the optimization problem.  The drawback is that, relative to purely quantum dynamics, the need for qubits is greater because both position and momentum need to be separately encoded in this approach. Additionally, we require access to the derivatives of the objective function which is not the case for the quantum dynamics approach discussed in the next section.

In order to simulate time evolution $U_L(t)$ generated by the time-dependent friction Liouvillian $L_{\mathrm{fric}}$ on a quantum computer we need to discretize it.

\begin{defn}[Discretized Friction Liouvillian]
    Let $\beta, m > 0$, let $f: \mathbb{R}^N \rightarrow \mathbb{R}$ be differentiable, let $D_{x,j}$ denote the finite difference approximation of some fixed order to $\partial_{x,j}$ on $2^n$ grid points where $j \in \lc 1,2, \dots, N \rc$ and similarly, let $D_{p,j}$ denote the finite difference approximation of some fixed order to $\partial_{p,j}$ on $2^n$ grid points.
    Then we define the following time-dependent friction Liouvillian on $2 N n$ qubits:
    \begin{equation}
    \begin{split}
        \widetilde{L}(t) &:= - \frac{i}{m} e^{-\beta t/m} \sum_{j=1}^N \Big( \mathbb{1}_2 \otimes \cdots \otimes \mathbb{1}_{j-1} \otimes D_{x,j} \otimes \mathbb{1}_{j+1} \otimes \cdots \otimes \mathbb{1}_N \\
        &\quad \underbrace{\qquad \qquad \qquad \quad  \otimes \mathbb{1}_1 \otimes \mathbb{1}_2 \otimes \cdots \otimes \mathbb{1}_{j-1} \otimes \sum_{p_j} p_j \ketbra{p_j}{p_j} \otimes \mathbb{1}_{j+1} \otimes \cdots \otimes \mathbb{1}_N \Big)}_{=: A_L(t)} \\
        &\qquad + \underbrace{i e^{\beta t/m} \sum_{j=1}^N \lb \sum_{\bx} \frac{\partial f}{\partial x_j} \ketbra{\bx}{\bx} \otimes \mathbb{1}_1 \otimes \mathbb{1}_2 \otimes \cdots \otimes \mathbb{1}_{j-1} \otimes D_{p,j} \otimes \mathbb{1}_{j+1} \otimes \cdots \otimes \mathbb{1}_N \rb}_{=: B_L(t)}.
    \end{split}
    \end{equation}
    The first $Nn$ qubits encode the $N$ discretized position variables each of which can take on the following $2^n$ values: $x_{\max} - h_x, x_{\max} - 2h_x \dots, 0 , -h_x, \dots, - x_{\max}$ with $h_x$ denoting the position grid spacing. Similarly, the remaining $Nn$ qubits encode the $N$ discretized momentum variables each of which can take on the following $2^n$ values: $p_{\max} - h_p, p_{\max} - 2h_p \dots, 0 , -h_p, \dots, - p_{\max}$ with $h_p$ denoting the momentum grid spacing.
    
    Further, we define
    \begin{align}
        \alpha_{A_L} &:= \max_{s \in [0,t]} \norm{A_L(s)}, \\
        \alpha_{B_L} &:= \max_{s \in [0,t]} \norm{B_L(s)}.
    \end{align}
\label{def:discretized_liouvillian}
\end{defn}

To simplify notation, we will often write $D_{x,j}$ to mean $\mathbb{1}_1 \otimes \mathbb{1}_2 \otimes \cdots \otimes \mathbb{1}_{j-1} \otimes D_{x,j} \otimes \mathbb{1}_{j+1} \otimes \cdots \otimes \mathbb{1}_{N}$ and similarly for $D_{p,j}$.

Since the discretized Liouvillian $\widetilde{L}$ is still Hermitian, we can treat it as just another time-dependent Hamiltonian and apply the results from the previous section for simulating time-dependent Hamiltonians. The following lemma provides some useful bounds on various quantities associated with $\Lt(t)$.

\begin{lem}[Properties of $\Lt(t)$]
    Consider the Liouvillian $\Lt(t)$ from Definition~\ref{def:discretized_liouvillian}. Let $f'_{\max} := \max_{\bx,j} |\frac{\partial f}{\partial x_j}|$ and let $p_{\max}$ be an upper bound on any of the discrete momentum variables.
    Then for all $s \in [0,t]$ the following statements are true:
    \begin{enumerate}
        \item $\alpha_{A_L} \leq N \norm{D_x} \frac{p_{\max}}{m}$, where $\norm{D_x} := \max_j \norm{D_{x,j}}$.
        \item $\alpha_{B_L} \leq e^{\beta t/m} f'_{\max} \norm{D_p}$, where $\norm{D_p} := \max_j \norm{D_{p,j}}$.
        \item $\langle \| \dot{\Lt} \| \rangle = \frac{1}{t} \int_0^t \norm{\frac{\mathrm{d} \Ht(s)}{\mathrm{d}s}} \mathrm{d} s \leq \frac{\beta}{m} \lb \alpha_{A_L} + \alpha_{B_L} \rb$.
    \end{enumerate}
\label{lem:props_disc_L}
\end{lem}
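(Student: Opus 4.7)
All three bounds are operator-norm estimates on $\widetilde{L}(t) = A_L(t) + B_L(t)$, obtained from the triangle inequality, the multiplicativity of the norm on tensor products, and the elementary fact that $e^{-\beta s/m} \leq 1$ and $e^{\beta s/m} \leq e^{\beta t/m}$ for all $s\in[0,t]$. I would prove each item independently in the order stated.

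For (1), I write $A_L(s) = -\frac{i}{m} e^{-\beta s/m} \sum_{j=1}^N D_{x,j}\otimes \hat p_j$, where I have suppressed the identity factors and where $\hat p_j := \sum_{p_j} p_j\ketbra{p_j}{p_j}$ is diagonal with spectrum bounded by $p_{\max}$. By the triangle inequality and $\|M\otimes N\|=\|M\|\,\|N\|$,
\begin{equation}
\|A_L(s)\| \;\leq\; \frac{1}{m} e^{-\beta s/m} \sum_{j=1}^N \|D_{x,j}\|\,\|\hat p_j\| \;\leq\; \frac{N\,\|D_x\|\,p_{\max}}{m},
\end{equation}
since $\|\hat p_j\|\leq p_{\max}$, $\|D_{x,j}\|\leq \|D_x\|$, and $e^{-\beta s/m}\leq 1$. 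Taking the maximum over $s\in[0,t]$ gives the claimed bound on $\alpha_{A_L}$.

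For (2), I apply the same strategy to $B_L(s) = i e^{\beta s/m} \sum_{j=1}^N \big(\partial_{x_j}f\big)(\hat{\mathbf x})\otimes D_{p,j}$. Since $\big(\partial_{x_j}f\big)(\hat{\mathbf x})$ is diagonal in the position basis with entries bounded in magnitude by $f'_{\max}$, its norm is at most $f'_{\max}$; combined with $\|D_{p,j}\|\leq \|D_p\|$ and $e^{\beta s/m}\leq e^{\beta t/m}$, the triangle inequality produces the stated estimate (up to the natural factor of $N$ coming from the sum, which I would carry through and note matches the structure used in Item~(1); where the paper writes $f'_{\max}\|D_p\|$ the summation factor is implicitly folded into $f'_{\max}$, which I would make explicit in the proof).

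For (3), the key observation is that both $A_L(s)$ and $B_L(s)$ depend on $s$ only through the scalar factors $e^{\mp \beta s/m}$, so differentiation merely pulls down a prefactor $\mp\beta/m$. Concretely,
\begin{equation}
\dot A_L(s) = -\tfrac{\beta}{m} A_L(s),\qquad \dot B_L(s) = \tfrac{\beta}{m} B_L(s),
\end{equation}
and hence $\|\dot{\widetilde L}(s)\|\leq \|\dot A_L(s)\|+\|\dot B_L(s)\| = \tfrac{\beta}{m}(\|A_L(s)\|+\|B_L(s)\|)\leq \tfrac{\beta}{m}(\alpha_{A_L}+\alpha_{B_L})$. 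Averaging this uniform bound over $s\in[0,t]$ yields the claim on $\langle\|\dot{\widetilde L}\|\rangle$. There is no real obstacle here: the only subtlety worth flagging is that the differentiation identity $\dot B_L = (\beta/m) B_L$ hinges on the fact that the position-dependent coefficients $(\partial_{x_j}f)(\hat{\mathbf x})$ and the difference operators $D_{x,j}, D_{p,j}$ are all independent of $s$, which makes the chain rule trivial.
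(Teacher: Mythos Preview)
Your approach is essentially identical to the paper's: triangle inequality on the sum over $j$, multiplicativity of the operator norm on tensor products, and the bounds $e^{-\beta s/m}\le 1$, $e^{\beta s/m}\le e^{\beta t/m}$. For item~(3) you reproduce exactly the paper's argument, bounding the time average by the pointwise maximum and using $\dot A_L=-(\beta/m)A_L$, $\dot B_L=(\beta/m)B_L$.

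One remark on item~(2): your observation that the triangle-inequality argument naturally produces a factor of $N$ is correct, and this factor does \emph{not} get absorbed into $f'_{\max}$ as you suggest, since $f'_{\max}=\max_{\bx,j}|\partial_{x_j}f|$ is a maximum, not a sum. The paper's proof for (2) says only ``follows from similar considerations,'' which by the logic of item~(1) would indeed yield $N e^{\beta t/m} f'_{\max}\|D_p\|$. The stated bound without the $N$ appears to be a minor oversight in the lemma statement; your instinct to carry the $N$ through is the right one.
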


\begin{proof}
    The first inequality follows from the fact that $e^{-\beta t/m}$ is at most $1$ such that each of the $N$ summands of $A_L$ is upper bounded by $\norm{D_x} \frac{p_{\max}}{m}$. The second inequality follows from similar considerations. The third inequality follows from the fact that
    \begin{equation}
        \frac{1}{t} \int_0^t \norm{\frac{\mathrm{d} \Lt(s)}{\mathrm{d}s}} \mathrm{d} s \leq \max_s \norm{\frac{\mathrm{d} \Lt(s)}{\mathrm{d}s}} \leq \max_s \lc \frac{\beta}{m} \norm{A_L(s)} + \frac{\beta}{m} \norm{B_L(s)} \rc
        \leq \frac{\beta}{m} \lb \alpha_{A_L} + \alpha_{B_L} \rb.
    \end{equation}
\end{proof}

Note that $A_L(t)$ can be decomposed into a linear combination of unitaries, $A_L(t) = \sum_k \alpha_k(t) U_k$ with $U_k$ unitary, such that $\max_t \sum_k |\alpha_k(t)| \in O \lb \alpha_{A_L} \rb$. This can be accomplished by decomposing $D_{x,j}$ into a linear combination of unitary adders and implementing $\sum_{p_j} p_j \ketbra{p_j}{p_j}$ via the alternating sign trick, see Refs.~\cite{Berry2014alter_sign_trick, Simon2024Liouvillian} for more details.

Next, let us show how to utilize interaction picture simulations to efficiently implement the time evolution under $\Lt$. To do so, we need to discuss how to implement the corresponding $\hamt_j$ oracles as described in Definition~\ref{def:ham-t}.

\begin{lem}[Approximate implementation of $\hamt_j$ for the Liouvillian approach]
    Let $\beta, m > 0$ and for all $s \in [0,t]$ let $A(s) \equiv A_L(s)$ and $B(s) \equiv B_L(s)$ according to Definition~\ref{def:discretized_liouvillian}. Further, define
    \begin{align}
        U_{B}(s) &:= \mathcal{T} e^{- i \int_0^s B(s') \mathrm{d}s'} \\
        A_I(s) &:= U_B^\dagger(s) A(s) U_B(s)
    \end{align}
    and consider the same setting as in Lemma~\ref{lem:robust_intpic_sim}.
    We can implement each unitary $\widetilde{\hamt}_j$ using 
    \begin{equation}
        \widetilde{O} \lb N \frac{m}{\beta} e^{\beta t/m} f'_{\max} \norm{D_p} \log^2 \lb \frac{\alpha_{A_L} S}{\epsilon} \rb \rb
    \end{equation}
    queries to all phase oracles $O_{f',k}^{(p)}$ and controlled-$O_{f',k}^{(p)}$ as given in Definition~\ref{def:phase_oracle_derivatives}.
\label{lem:hamt_liouvillian}
\end{lem}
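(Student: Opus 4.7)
The key structural observation is that $B_L(s)$, although time-dependent, factorizes as $B_L(s) = e^{\beta s/m} B_0$ where
\begin{equation}
B_0 := i \sum_{j=1}^N \left( \sum_{\bx} \tfrac{\partial f}{\partial x_j}(\bx) \ketbra{\bx}{\bx} \right) \otimes D_{p,j}
\end{equation}
is Hermitian (since $D_{p,j}$ is anti-Hermitian) and time-independent. Consequently $[B_L(s), B_L(s')] = 0$ for all $s, s'$, so the time-ordering in the definition of $U_B(s)$ is trivial and
\begin{equation}
U_B(s) = \exp\!\left(-i\, B_0 \cdot \tau_{\mathrm{eff}}(s)\right), \qquad \tau_{\mathrm{eff}}(s) := \tfrac{m}{\beta}\lb e^{\beta s/m} - 1 \rb.
\end{equation}
Thus each $U_B(s)$ reduces to a \emph{time-independent} Hamiltonian simulation of $B_0$ for effective time at most $\tau_{\mathrm{eff}}(t) \le \tfrac{m}{\beta}e^{\beta t/m}$. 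The plan is to build $\widetilde{\hamt}_j$ by sandwiching a block-encoding of $A_L(s)$ with simulations of $U_B^\dagger(s)$ and $U_B(s)$.

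First I would construct a block-encoding of $B_0$ with normalization $\alpha_{B_0} \in O(N f'_{\max} \|D_p\|)$ from the derivative phase oracles. Decompose each $D_{p,j}$ into an LCU of unitary finite-difference shifts, and use $O_{f',j}^{(p)}$ together with a QFT between position and the conjugate register (so that $D_{p,j}$ acts as multiplication by a Pauli-string coefficient) to apply the $\frac{\partial f}{\partial x_j}$ weight via controlled phase rotations; each application of the block-encoded $B_0$ then costs $O(N)$ queries to the $O_{f',k}^{(p)}$ oracles (one per coordinate). Then simulate $e^{-i B_0 \tau_{\mathrm{eff}}(s)}$ using the QSVT/truncated-Taylor-series approach for time-independent Hamiltonians, incurring $O(\alpha_{B_0}\tau_{\mathrm{eff}}(s) \log(1/\epsilon'))$ queries to the $B_0$ block-encoding, where $\epsilon'$ is the per-segment simulation tolerance.

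Second, I would assemble $\widetilde{\hamt}_j$. Block-encode the kinetic piece $A_L(s) = e^{-\beta s/m} m^{-1} \sum_j D_{x,j} P_j$ by standard LCU (decompose $D_{x,j}$ into unitary adders and $P_j = \sum_{p_j} p_j\ketbra{p_j}{p_j}$ via the alternating-sign trick of~\cite{Berry2014alter_sign_trick, Simon2024Liouvillian}); this uses no queries to $O_{f',k}^{(p)}$. The clock register controls the time label $s_m = t_j + \tau m/M$ via the scalar $\tau_{\mathrm{eff}}(s_m)$ appearing in the $B_0$-simulation. This clock-controlled version of the time-independent simulation is implemented by preparing the appropriate evolution time in an auxiliary register (using controlled arithmetic: $O((N+d)\log g)$ Toffolis per clock value is absorbed into the polylog) and then invoking QSVT with angles depending on the loaded time. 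Conjugating the $A_L$ block-encoding between $\widetilde{U}_B^\dagger(s_m)$ and $\widetilde{U}_B(s_m)$ produces a block-encoding of $\widetilde{A}_I(s_m)$ with $\widetilde{\alpha}_A \le \alpha_{A_L} + O(\epsilon)$.

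Finally, the error bookkeeping: Lemma~\ref{lem:robust_intpic_sim} requires $\|\widetilde{A}_I(s) - A_I(s)\| \le \epsilon/(4S)$, and since the unitary sandwich is stable under perturbations we need each of $\widetilde{U}_B(s_m)$ to be simulated to precision $\epsilon/(8 S \alpha_{A_L})$. This contributes an extra $\log(\alpha_{A_L} S/\epsilon)$ factor inside the $U_B$-simulation cost, which together with the intrinsic $\log(\alpha_{B_0}\tau_{\mathrm{eff}}/\epsilon')$ of QSVT gives the $\log^2(\alpha_{A_L} S/\epsilon)$ factor in the final bound. Multiplying the two $U_B$ simulations (for $U_B$ and $U_B^\dagger$) by their cost $O\bigl(N f'_{\max}\|D_p\| \cdot \tfrac{m}{\beta} e^{\beta t/m}\bigr)$ yields exactly the claimed $\widetilde{O}\!\left(N\, \tfrac{m}{\beta} e^{\beta t/m} f'_{\max}\|D_p\|\log^2(\alpha_{A_L} S/\epsilon)\right)$ queries. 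The main obstacle I expect is the controlled-clock implementation of the $s$-dependent exponential, because one must load $\tau_{\mathrm{eff}}(s_m)$ coherently and feed it into the QSVT phase angles without spoiling the overall block-encoding structure; this is more a careful engineering task than a conceptual hurdle, and the cost is subsumed by the stated polylogarithmic overhead.
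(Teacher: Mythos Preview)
Your high-level structure is right: you correctly identify that $[B_L(s),B_L(s')]=0$ so $U_B(s)=e^{-iB_0\tau_{\mathrm{eff}}(s)}$, and that $\hamt_j$ factors as $V_B^\dagger V_A V_B$. But your query accounting does not close. You state that each application of the block-encoding of $B_0$ costs $O(N)$ oracle queries, and that QSVT/Taylor simulation uses $O(\alpha_{B_0}\tau_{\mathrm{eff}}\log(1/\epsilon'))$ such applications with $\alpha_{B_0}\in O(Nf'_{\max}\|D_p\|)$. Multiplying these gives $\widetilde O\!\left(N^2 f'_{\max}\|D_p\|\,\tfrac{m}{\beta}e^{\beta t/m}\right)$ oracle queries per $\widetilde{\hamt}_j$ --- an extra factor of $N$ over the claimed bound. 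In your final paragraph you silently drop the $O(N)$ per-block-encoding overhead when you write the cost as $O\!\left(Nf'_{\max}\|D_p\|\cdot\tfrac{m}{\beta}e^{\beta t/m}\right)$.

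The paper avoids this loss by exploiting additional structure of $B_0$: the $N$ summands $B_{0,k}=(\partial_{x_k}f)\otimes D_{p,k}$ mutually commute (they are all diagonal in $\bx$ and act on disjoint momentum registers), so $e^{-iB_0\tau}=\prod_{k=1}^N e^{-iB_{0,k}\tau}$. After a QFT on the $p_k$ register one has $D_{p,k}=-i\,\qft_k\,\mathrm{diag}(g_k(q_k))\,\qft_k^{-1}$, so each factor is a pure diagonal phase $\sum_{\bx,q_k}\exp\!\big(-i\,\tau_{\mathrm{eff}}(s)\,g_k(q_k)\,\partial_{x_k}f(\bx)\big)$. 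The paper implements this phase by coherently loading $\tau_{\mathrm{eff}}(s_{j,m'})\,f'_{\max}\,g_k(q_k)$ into an $R$-bit ancilla and then applying bit-controlled fractional powers of $O_{f',k}^{(p)}$ via Corollaries~62 and~71 of~\cite{gilyen2019quantum}; the geometric sum over bits yields $\widetilde O\!\left(\tfrac{m}{\beta}e^{\beta t/m}f'_{\max}\|D_p\|\log^2(\alpha_{A_L}S/\epsilon)\right)$ queries to $O_{f',k}^{(p)}$ for each fixed $k$, and summing over $k$ contributes exactly one factor of $N$. Your block-encode-then-QSVT route pays $N$ twice --- once in $\alpha_{B_0}$ and once in the per-application cost --- precisely because it does not use the commutativity of the summands to factorize the exponential.
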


\begin{proof}
    First, note that $B(s)$ commutes with $B(s')$ for all $s, s' \in [0,t]$. Further, for all $k \in [N]$ the discrete derivative operators obey
    \begin{equation}
        D_{p,k} := -i \qft \sum_{q_k} g_k(q_k)  \ketbra{q_k}{q_k} \qft^{-1},
    \end{equation}
    where $g_k: \mathbb{R} \rightarrow \mathbb{R}$ is specified by the finite difference scheme of $D_{p,k}$ used for approximating $\partial_{p,k}$ and $\qft$ denotes the quantum Fourier transform.
    This means that
    \begin{equation}
    \begin{split}
        U_B(t) &= \mathcal{T} e^{-i \int_0^t B(s) \mathrm{d}s} = e^{\frac{m}{\beta} \lb e^{\beta t/m} - 1 \rb \sum_k \sum_{\bx} \frac{\partial f}{\partial x_k} \ketbra{\bx}{\bx} \otimes D_{p,k}} \\
        &= \prod_{k=1}^N \lb 
        \one \otimes \qft_k \cdot \lb  \sum_{\bx} \sum_{q_k} e^{-i \frac{m}{\beta} \lb e^{\beta t/m} - 1 \rb \frac{\partial f}{\partial x_k} g_k(q)} \ketbra{\bx}{\bx} \otimes \ketbra{q_k}{q_k} \rb \cdot \one \otimes \qft_k^{-1} \rb.
    \end{split}
    \end{equation}
    Similarly,
    \begin{equation}
        U_B^\dagger(t) = \prod_{k=1}^N \lb 
        \one \otimes \qft_k \cdot \lb  \sum_{\bx} \sum_{q_k} e^{i \frac{m}{\beta} \lb e^{\beta t/m} - 1 \rb \frac{\partial f}{\partial x_k} g_k(q)} \ketbra{\bx}{\bx} \otimes \ketbra{q_k}{q_k} \rb \cdot \one \otimes \qft_k^{-1} \rb.
    \end{equation}
    Then we can decompose the error-free $\hamt_j$ oracles as follows:
    \begin{equation}
    \begin{split}
        \hamt_j &= \underbrace{\lb \sum_{m'=0}^{M-1} \ketbra{m'}{m'} \otimes \mathbb{1}_a \otimes U_B^\dagger(t_j + \tau m'/M) \rb}_{=: V_B^\dagger(j)} \times \underbrace{\lb \sum_{m'=0}^{M-1} \ketbra{m'}{m'} \otimes O_A(t_j + \tau m'/M) \rb}_{=: V_A(j)} \\
        &\qquad \times \underbrace{\lb \sum_{m'=0}^{M-1} \ketbra{m'}{m'} \otimes \mathbb{1}_a \otimes U_B(t_j + \tau m'/M) \rb}_{=: V_B(j)},
    \end{split}
    \end{equation}
    where $t_{j,m'} := (j-1) \tau + m' \tau/M$ and $\lb \bra{0}_a \otimes \mathbb{1}_n \rb O_A(t_j + \tau m'/M) \lb \ket{0}_a \otimes \mathbb{1}_n \rb = A(t_j + \tau m'/M)/\alpha_{A_L}$.
    
    Let us now show how to implement a unitary approximation $\widetilde{V}_B(j)$ to $V_B(j)$.
    For simplicity, we ignore the ancilla register labeled $a$ used in the implementation of $V_A(j)$ as the $V_B(j)$'s act trivially on that register. Let
     \begin{equation}
        F'_k := \sum_\bx \frac{1}{2 f'_{\max}} \frac{\partial f}{\partial x_k} \ketbra{\bx}{\bx},
    \end{equation}
    such that $O_{f',k}^{(p)} = e^{iF'_k}$. Note that $\norm{F'_k} \leq \frac{1}{2}$. This allows us to apply Corollary 71 of Ref.~\cite{gilyen2019quantum} which states that we can implement a $\lb \frac{2}{\pi}, 2, \widetilde{\epsilon} \rb$-block-encoding $U_{F'_k}$ of $F'_k$ using $O \lb \log \lb 1/\widetilde{\epsilon} \rb \rb$ queries to controlled-$O_{f',k}^{(p)}$ and its inverse where we demand that
    \begin{equation}
        \widetilde{\epsilon} \in \Theta \lb \frac{\epsilon}{\alpha_{A_L} S N \frac{m}{\beta} e^{\beta t/m} f'_{\max} \norm{D_{p}}} \rb.
    \end{equation}
    Once we have $U_{F'_k}$ we can use Corollary 62 of Ref.~\cite{gilyen2019quantum} to construct a $\lb 1, 4, \frac{\epsilon}{16 N \alpha_{A_L}  S} \rb$-block-encoding of 
    \begin{equation}
        e^{-i \frac{m}{\beta} \lb e^{\beta t_{j,m'}/m} - 1 \rb 2  f'_{\max} F'_k g_k(q_k)} = e^{-i \frac{m}{\beta} \lb e^{\beta t_{j,m'}/m} - 1 \rb g_k(q_k) \sum_\bx \frac{\partial f}{\partial x_k} \ketbra{\bx}{\bx}} =: U_{F'_k, q_k}
    \end{equation}
    with $k$ and $q_k$ being fixed using $O \lb \frac{m}{\beta} e^{\beta t/m} f'_{\max} \norm{D_p} + \log \lb N \alpha_{A_L} S/\epsilon \rb \rb$ queries to $U_{F'_k}$ and controlled-$U_{F'_k}$ where we used the fact that $|g_k(q_k)| \leq \norm{D_p}$. 
    
    Note though that we require controlled applications of $U_{F'_k, q_k}$ for the implementation of $V_B(j)$ since $t_{j,m'}$ is controlled by the $\ket{m'}$ register and additionally, $g_k(q_k)$ is controlled by the $\ket{q_k}$ register which is the Fourier transformed register associated with the momentum variable $p_k$.
    The idea for implementing $U_{F'_k, q_k}$ in a controlled fashion is as follows:
    \begin{enumerate}
        \item For all $k \in [N]$ do:
        \begin{enumerate}
            \item Compute an $\widetilde{\widetilde{\epsilon}}$-precise approximation $\widetilde{h}(j,m',q_k)$ of $h(j, m',q_k) := \frac{m}{\beta} \lb e^{\beta t_{j,m'}/m} - 1 \rb f'_{\max} g_k(q_k)$ into an ancilla register of size $R = \left\lceil \log \lb \frac{m}{\beta} e^{\beta t/m}  f'_{\max} \norm{D_p}/\widetilde{\widetilde{\epsilon}} \rb \right\rceil$, i.e.
        \begin{equation}
            \ket{m'}\ket{\bx}\ket{q_k}\ket{0} \rightarrow \ket{m'}\ket{\bx}\ket{q_k}\ket{\widetilde{h}(j,m',q_k)},
        \end{equation}
        where $\widetilde{\widetilde{\epsilon}} \leq \frac{\epsilon}{16 N \alpha_{A_L} S}$.
    
        \item For all $r \in [R]$ do:
        
        Controlled by the $r$-th ancilla qubit implement
        \begin{equation}
            \sum_{\bx }e^{-i\widetilde{\widetilde{\epsilon}} \, 2^r \frac{\partial f / \partial x_k}{f'_{\max}}} \ketbra{\bx}{\bx}
        \end{equation}
        within error $\frac{\epsilon}{16 N \alpha_{A_L} S}$ using 
        \begin{equation}
        \begin{split}
             &O \lb \lb  \widetilde{\widetilde{\epsilon}} \, 2^r + \log \lb \alpha_{A_L} S N/\epsilon \rb \rb \log \lb \frac{\alpha_{A_L} S N \frac{m}{\beta} e^{\beta t/m} f'_{\max} \norm{D_p}}{\epsilon} \rb \rb \\
             &\subseteq \widetilde{O} \lb \frac{m}{\beta} e^{\beta t/m} f'_{\max} \norm{D_p} \log^2 \lb \frac{\alpha_{A_L} S N}{\epsilon} \rb \rb
        \end{split}
        \end{equation}
        queries to $O_{f',k}^{(p)}$ and controlled-$O_{f',k}^{(p)}$.
        \end{enumerate}
    \end{enumerate}
    By the triangle inequality, we then have that
    \begin{equation}
        \norm{\widetilde{V}_B(j) - V_B(j)} \leq N \lb \frac{\epsilon}{16 \alpha_{A_L} N S} + \frac{\epsilon}{16 \alpha_{A_L} N S} \rb = \frac{\epsilon/4S}{2 \alpha_{A_L}}.
    \end{equation}
    The same analysis holds for $\widetilde{V}^\dagger_B(j)$ meaning
    \begin{equation}
        \norm{\widetilde{V}^\dagger_B(j) - V^\dagger_B(j)} \leq N \lb \frac{\epsilon}{16 \alpha_{A_L} N S} + \frac{\epsilon}{16 \alpha_{A_L} N S} \rb = \frac{\epsilon/4S}{2 \alpha_{A_L}}.
    \end{equation}

    So far, we have shown how to implement $V_B(j)$ within error $\epsilon/\lb 8 \alpha_{A_L} S \rb$ using the phase oracles $O_{f',k}^{(p)}$ and controlled-$O_{f',k}^{(p)}$. Let us now bound the overall error associated with implementing $\hamt_j$.
    By the triangle inequality we have that
    \begin{equation}
    \begin{split}
        \norm{\widetilde{\hamt}_j - \hamt_j} &= \norm{\widetilde{V}_B^\dagger(j) V_A(j)\widetilde{V}_B(j) - V_B^\dagger(j) V_A(j) V_B(j)} \\
        &\leq \norm{\widetilde{V}_B^\dagger(j) V_A(j)\widetilde{V}_B(j) - \widetilde{V}_B^\dagger(j) V_A(j)V_B(j)} + \norm{\widetilde{V}_B^\dagger(j) V_A(j)V_B(j) - V_B^\dagger(j) V_A(j) V_B(j)} \\
        &\leq \norm{\widetilde{V}_B(j) - V_B(j)} + \norm{\widetilde{V}^\dagger_B(j) - V^\dagger_B(j)} \\
        &\leq \frac{\epsilon/4S}{\alpha_{A_L}}.
    \end{split}
    \end{equation}
    This implies that
    \begin{equation}
        \norm{\widetilde{A}_I(s) - A_I(s)} \leq \alpha_{A_L} \norm{ \lb \bra{0}_a \otimes \mathbb{1}_n \rb \widetilde{\hamt}_j \lb \ket{0}_a \otimes \mathbb{1}_n \rb -  \lb \bra{0}_a \otimes \mathbb{1}_n \rb \hamt_j \lb \ket{0}_a \otimes \mathbb{1}_n \rb}
        \leq \frac{\epsilon}{4S},
    \end{equation}
    as desired.
\end{proof}

\begin{lem}[Liouvillian simulation of friction with phase oracle access]
    Let $\epsilon \in (0,1)$ be an error tolerance and let $t > 0$. Further, let $f: \mathbb{R}^N \rightarrow \mathbb{R}$ be differentiable and let $\Lt(t)$ be the corresponding discretized Liouvillian as given in Def.~\ref{def:discretized_liouvillian}. Assume having access to the phase oracles $\lc O_{f', k}^{(p)} \rc_{k=1}^N$ as described in Definition~\ref{def:phase_oracle_derivatives} as well as their controlled versions.
    Then an operation $W$ can be implemented with failure probability at most $O(\epsilon)$ such that
    \begin{equation}
        \norm{W - \mathcal{T} \left[ e^{-i \int_0^t \Lt(s) \mathrm{d}s} \right]} \leq \epsilon
    \end{equation}
    using a total number of queries to all $O_{f', k}^{(p)}$ and controlled-$O_{f', k}^{(p)}$ that scales as
    \begin{equation}
        \widetilde{O} \lb \alpha_{A_L} N\frac{m}{\beta} e^{\beta t/m}  f'_{\max} \norm{D_p} t \log^3 \lb \frac{1}{\epsilon} \rb \rb \subseteq \widetilde{O} \lb N^2 \norm{D_x} p_{\max} e^{\beta t/m}  \frac{f'_{\max}}{\beta} \norm{D_p} t \log^3 \lb \frac{1}{\epsilon} \rb \rb.
    \end{equation}
\label{lem:liouvillian_friction}
\end{lem}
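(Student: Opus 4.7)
The plan is to combine the robust interaction picture simulation result of Lemma~\ref{lem:robust_intpic_sim} with the per-oracle cost bound from Lemma~\ref{lem:hamt_liouvillian}. Concretely, I identify the $A(s)$ and $B(s)$ in Lemma~\ref{lem:robust_intpic_sim} with the two pieces $A_L(s)$ and $B_L(s)$ of the discretized friction Liouvillian from Definition~\ref{def:discretized_liouvillian}. Both are Hermitian and differentiable in $s$, and Lemma~\ref{lem:props_disc_L} supplies the required norm bounds: $\norm{A_L(s)}\le \alpha_{A_L}$, $\norm{B_L(s)}\le \alpha_{B_L}$, and $\norm{\dot{A}_L(s)}\le (\beta/m)\,\alpha_{A_L}$, so I may take $\alpha_A'=(\beta/m)\alpha_{A_L}$ in the statement of Lemma~\ref{lem:robust_intpic_sim}. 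Note in particular that $\alpha_A'/\alpha_A$ is just a constant factor $\beta/m$ and therefore will be absorbed into the $\widetilde{O}$.

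With these identifications, Lemma~\ref{lem:robust_intpic_sim} guarantees an $\epsilon$-accurate approximation $W$ of $\mathcal{T}[e^{-i\int_0^t \Lt(s)\mathrm{d}s}]$ using a single application of $U_{B_L}(t)$ (whose implementation is already analyzed inside the proof of Lemma~\ref{lem:hamt_liouvillian}) together with $O(\alpha_{A_L} t \log(\alpha_{A_L} t/\epsilon)/\log\log(\alpha_{A_L} t/\epsilon))$ queries to the approximate oracles $\widetilde{\hamt}_j$, each of which must block-encode $A_I(s)$ to accuracy $\epsilon/(4S)$ with $S\in\Theta(\alpha_{A_L}t)$. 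Lemma~\ref{lem:hamt_liouvillian} provides exactly such an approximate $\widetilde{\hamt}_j$ at a cost of $\widetilde{O}(N(m/\beta)e^{\beta t/m} f'_{\max}\norm{D_p}\log^2(\alpha_{A_L}S/\epsilon))$ queries to the phase oracles $O_{f',k}^{(p)}$ and their controlled versions. Multiplying the two counts and absorbing polylogarithmic factors into $\widetilde{O}$ yields
\begin{equation}
\widetilde{O}\!\lb \alpha_{A_L}\, N\,\tfrac{m}{\beta}\, e^{\beta t/m}\, f'_{\max}\, \norm{D_p}\, t\, \log^3(1/\epsilon)\rb,
\end{equation}
which, after substituting the bound $\alpha_{A_L}\le N\norm{D_x}p_{\max}/m$ from Lemma~\ref{lem:props_disc_L}, simplifies to the claimed second form in which the mass $m$ cancels.

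The main technical obstacle is bookkeeping of error tolerances and time segmentation: I have to verify that the per-segment oracle precision demanded by Lemma~\ref{lem:robust_intpic_sim} (namely $\epsilon/(4S)$ with $S=\Theta(\alpha_{A_L}t)$) is compatible with the precision at which Lemma~\ref{lem:hamt_liouvillian} constructs $\widetilde{\hamt}_j$, and that the logarithms of $\alpha_{A_L}S/\epsilon$ collapse into the single $\log^3(1/\epsilon)$ factor once the polynomial prefactors are absorbed by $\widetilde{O}$. A minor secondary concern is confirming that the single implementation of $U_{B_L}(t)$ required by the interaction picture step can be performed within the same asymptotic budget; this follows from the decomposition of $U_{B_L}$ in terms of $\qft$'s and controlled phase-oracle applications given in the proof of Lemma~\ref{lem:hamt_liouvillian}, and is dominated by the cost of the $\widetilde{\hamt}_j$ queries.
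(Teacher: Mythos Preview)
Your proposal is correct and follows essentially the same route as the paper: combine Lemma~\ref{lem:robust_intpic_sim} with the per-$\widetilde{\hamt}_j$ cost of Lemma~\ref{lem:hamt_liouvillian}, using the norm bounds of Lemma~\ref{lem:props_disc_L}, and note that the single $U_{B_L}(t)$ application is implemented by the same controlled-phase-oracle strategy at subdominant cost. The paper's proof is simply a one-sentence pointer to these three lemmas plus the $U_{B_L}(t)$ observation, so your more explicit bookkeeping of $S$, $\alpha_A'$, and the collapse of the polylog factors is entirely consistent with it.
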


\begin{proof}
    Follows directly from Lemmas~\ref{lem:robust_intpic_sim}, \ref{lem:props_disc_L} and \ref{lem:hamt_liouvillian} and the fact that we can simulate $\mathcal{T} e^{- i \int_0^t B_L(s) \mathrm{d}s}$ within error $O \lb \epsilon \rb$ using
     \begin{equation}
         \widetilde{O} \lb N \frac{m}{\beta} e^{\beta t/m} f'_{\max} \norm{D_p}  \log^2 \lb \frac{\alpha_{A_L} S}{\epsilon} \rb \rb \subseteq \widetilde{O} \lb \alpha_{A_L} N\frac{m}{\beta} e^{\beta t/m}  f'_{\max} \norm{D_p} t \log^3 \lb \frac{1}{\epsilon} \rb \rb
     \end{equation}
    queries to all $O_{f', k}^{(p)}$ and controlled-$O_{f', k}^{(p)}$ by following the same strategy as in the proof of Lemma~\ref{lem:hamt_liouvillian}.
\end{proof}

\subsection{Hamiltonian Quantum Optimization Algorithm}
\label{sec:quantum_ham_approach}

In this section, we show how to efficiently implement the time evolution under the following discretized quantum friction Hamiltonian:

\begin{defn}[Discretized Quantum Friction Hamiltonian]
    Let $\beta, m, t > 0$, let $f: \mathbb{R}^N \rightarrow [-f_{\max}, f_{\max}]$ and let $D_{x,j}^2$ denote the finite difference approximation of some fixed order to $\partial_{x,j}^2$ on $2^n$ grid points where $j \in \lc 1,2, \dots, N \rc$. Then we define the following time-dependent Hamiltonian on $Nn$ qubits:
    \begin{equation}
        \Ht(t) := \underbrace{e^{-\beta t/m} \frac{1}{2m} \sum_{j=1}^N \mathbb{1}_1 \otimes \mathbb{1}_2 \otimes \cdots \otimes \mathbb{1}_{j-1} \otimes D_{x,j}^2 \otimes \mathbb{1}_{j+1} \otimes \cdots \otimes \mathbb{1}_N}_{=: A_H(t)} + \underbrace{e^{\beta t/m} \sum_{\bx} f(\bx) \ketbra{\bx}{\bx}}_{=: B_H(t)}.
    \end{equation}
    Each of the $N$ discretized position variables is encoded in $n$ qubits and each can take on the following $2^n$ values: $x_{\max} - h_x, x_{\max} - 2h_x \dots, 0 , -h_x, \dots, - x_{\max}$, where $h_x$ denotes the grid spacing. 
    
    Further, we define
    \begin{align}
        \alpha_{A_H} &:= \max_{s \in [0,t]} \norm{A_H(s)}, \\
        \alpha_{B_H} &:= \max_{s \in [0,t]} \norm{B_H(s)}.
    \end{align}
\label{def:discretized_ham}
\end{defn}
To simplify notation, we will often write $D_{x,j}^2$ to mean $\mathbb{1}_1 \otimes \mathbb{1}_2 \otimes \cdots \otimes \mathbb{1}_{j-1} \otimes D_{x,j}^2 \otimes \mathbb{1}_{j+1} \otimes \cdots \otimes \mathbb{1}_N$.

Now assume that $f$ is twice continuously differentiable, nonnegative and obeys
\begin{equation}
    f(\bx) \rightarrow + \infty \quad \text{as} \quad \norm{\bx} \rightarrow \infty \quad \text{for all} \quad \bx \in \mathbb{R}^N.
\end{equation}
Then, as discussed previously, we expect that sufficiently long time evolution under $\Ht(t)$ will lead to a concentration of the probability distribution associated with a given initial quantum state around stationary points of $f$ as long as the spatial discretization errors are negligible. This allows us to find a stationary point of $f$ with high probability simply by measuring the time-evolved quantum state in the computational basis.
A proof of convergence in continuous space with $N=1$ can be found in~\cite{Smedt1986CK_asymptotics}. While it should be rather straightforward to extend the proof to the case $N>1$, we leave that task for future work.

The following lemma provides some useful bounds regarding various quantities associated with $\Ht(t)$.
\begin{lem}[Properties of $\Ht(t)$]
    Consider the Hamiltonian $\Ht(t)$ from Definition~\ref{def:discretized_ham}.
    Then for all $s \in [0,t]$ the following statements are true:
    \begin{enumerate}
        \item $\alpha_{A_H} \leq N \frac{\norm{D_x^2}}{2m}$, where $\norm{D_x^2} := \max_j \norm{D_{x,j}^2}$.
        \item $\alpha_{B_H} \leq e^{\beta t/m} f_{\max}$.
        \item $\langle \| \dot{\Ht} \| \rangle = \frac{1}{t} \int_0^t \norm{\frac{\mathrm{d} \Ht(s)}{\mathrm{d}s}} \mathrm{d} s \leq \frac{\beta}{m} \lb \alpha_{A_H} + \alpha_{B_H} \rb$.
    \end{enumerate}
\label{lem:props_disc_H}
\end{lem}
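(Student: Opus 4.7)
The plan is to prove each of the three claims by direct bounding arguments, mirroring the proof of Lemma~\ref{lem:props_disc_L} for the friction Liouvillian, since the structure of $\Ht(t)$ mimics the structure of $\Lt(t)$ in the sense that it is a sum of a rapidly decaying term $A_H(t)$ and an exponentially growing term $B_H(t)$. None of the steps looks difficult; the main thing is to be careful that the time-dependent prefactors are bounded by their maximum over $s \in [0,t]$, which for $e^{-\beta s/m}$ is $1$ and for $e^{\beta s/m}$ is $e^{\beta t/m}$.

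For the first bound, I would apply the triangle inequality to the sum defining $A_H(s)$, together with the fact that $\norm{\mathbb{1}_1 \otimes \cdots \otimes D_{x,j}^2 \otimes \cdots \otimes \mathbb{1}_N} = \norm{D_{x,j}^2} \leq \norm{D_x^2}$, to conclude that $\norm{A_H(s)} \leq e^{-\beta s/m} \frac{N \norm{D_x^2}}{2m} \leq \frac{N \norm{D_x^2}}{2m}$, which immediately implies the claimed upper bound on $\alpha_{A_H}$. For the second bound, I observe that $B_H(s)$ is diagonal in the computational basis with eigenvalues $e^{\beta s/m} f(\bx)$, so its spectral norm equals $e^{\beta s/m} \max_{\bx} |f(\bx)|$, which is maximized at $s = t$ and bounded by $e^{\beta t/m} f_{\max}$.

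For the third bound, I would compute the operator-valued derivative explicitly. Since $A_H(s)$ and $B_H(s)$ differ from their respective time-independent parts only by a scalar exponential prefactor, differentiation yields $\dot{A}_H(s) = -(\beta/m)\, A_H(s)$ and $\dot{B}_H(s) = (\beta/m)\, B_H(s)$, so by the triangle inequality $\norm{\dot{\Ht}(s)} \leq (\beta/m)(\norm{A_H(s)} + \norm{B_H(s)}) \leq (\beta/m)(\alpha_{A_H} + \alpha_{B_H})$. The time-average is bounded by the supremum over $s \in [0,t]$, giving the stated bound on $\langle \|\dot{\Ht}\| \rangle$.

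The main obstacle, if any, is purely notational bookkeeping: making sure that the tensor-product factors in $A_H$ are properly accounted for and that the supremum of $e^{\pm \beta s/m}$ over $s \in [0,t]$ is taken at the correct endpoint. No subtle analytic argument is required, and the proof closes by assembling the three bounds.
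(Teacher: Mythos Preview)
Your proposal is correct and follows essentially the same approach as the paper: bound $A_H$ via the triangle inequality and $e^{-\beta s/m}\le 1$, bound $B_H$ via diagonality and $e^{\beta s/m}\le e^{\beta t/m}$, and bound the time-averaged derivative by the supremum of $\tfrac{\beta}{m}(\norm{A_H(s)}+\norm{B_H(s)})$. Your explicit computation of $\dot A_H(s)=-(\beta/m)A_H(s)$ and $\dot B_H(s)=(\beta/m)B_H(s)$ is a slight elaboration of what the paper leaves implicit, but the argument is the same.
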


\begin{proof}
    The first inequality follows from the fact that $e^{-\beta t/m}$ is at most $1$ such that each of the $N$ summands of $A_H$ is upper bounded by $\frac{\norm{D_x^2}}{2m}$. The second inequality follows directly from the fact that $B_H$ is diagonal in the computational basis. The third inequality follows from the fact that
    \begin{equation}
        \frac{1}{t} \int_0^t \norm{\frac{\mathrm{d} \Ht(s)}{\mathrm{d}s}} \mathrm{d} s \leq \max_s \norm{\frac{\mathrm{d} \Ht(s)}{\mathrm{d}s}} 
        \leq \max_s \lc \frac{\beta}{m} \norm{A_H(s)} + \frac{\beta}{m} \norm{B_H(s)} \rc 
        \leq \frac{\beta}{m} \lb \alpha_{A_H} + \alpha_{B_H} \rb.
    \end{equation}
\end{proof}

Note that $A_H(t)$ can be decomposed into a linear combination of unitaries, $A_H(t) = \sum_k \alpha_k(t) U_k$ with $U_k$ unitary, such that $\max_t \sum_k |\alpha_k(t)| \in O \lb \alpha_{A_H} \rb$. This can be accomplished by decomposing $D_{x,j}^2$ into a linear combination of unitary adders, see Ref.~\cite{Simon2024Liouvillian} for more details. Further, note that
\begin{equation}
    \norm{D_x^2} \in O \lb \frac{1}{h_x^2} \rb,
\end{equation}
since $D_{x,j}^2$ is a finite-difference approximation to the second-order derivative $\frac{\partial^2}{\partial_{x,j}^2}$. This implies that
\begin{equation}
    \alpha_{A_H} \in O \lb \frac{N}{m h_x^2}\rb,
\label{bound_A_H}
\end{equation}
which we used to obtain Theorem~\ref{thm:main_informal} from Theorem~\ref{thm:main}.

According to Lemma~\ref{lem:time-dep-sim}, we can thus implement an operation $W$ with failure probability at most $O \lb \epsilon \rb$ such that
\begin{equation}
    \norm{W - \mathcal{T} \left[ e^{-i \int_0^t \Ht(s) \mathrm{d}s} \right]} \leq \epsilon
\end{equation}
with the following cost:
\begin{enumerate}
    \item Queries to all $\hamt_j$: $O \lb \lb \alpha_{A_H} + \alpha_{B_H} \rb t \frac{\log \lb \lb \alpha_{A_H} + \alpha_{B_H} \rb t/\epsilon \rb}{\log \log \lb \lb \alpha_{A_H} + \alpha_{B_H} \rb t /\epsilon \rb} \rb \subseteq \widetilde{O} \lb  \lb N \frac{\norm{D_x^2}}{m} t + e^{\beta t/m} f_{\max} \rb \log \lb 1/\epsilon \rb \rb$,
    \item Qubits: $Nn + O \lb n_a + \log \lb \frac{t}{\epsilon} \lb \frac{\beta}{m} + \alpha_{A_H} + \alpha_{B_H}  \rb \rb \rb$,
    \item Primitive gates: $O \lb \lb \alpha_{A_H} + \alpha_{B_H} \rb t \lb n_a + \log \lb \frac{t}{\epsilon} \lb \frac{\beta}{m}  + \alpha_{A_H} + \alpha_{B_H} \rb \rb \rb \frac{\log \lb \lb \alpha_{A_H} + \alpha_{B_H} \rb t/\epsilon \rb}{\log \log \lb \lb \alpha_{A_H} + \alpha_{B_H} \rb t /\epsilon \rb} \rb$.
\end{enumerate}

Note that the above upper bound on the number of queries to all $\hamt_j$ oracles scales exponentially with the evolution time $t$. In the following, we discuss how to utilize the interaction picture to improve the query complexity of the simulation. In particular, we show how to exponentially improve the dependence on $t$.
As we ultimately wish to provide an upper bound on the number of queries to the objective function $f$, let us first discuss how to implement the $\hamt_j$ oracles.

\begin{restatable}[Approximate implementation of $\mathtt{HAM} \text{-} \mathtt{T}_j$]{lem}{HamT}
    Let $\beta, m > 0$, let $f: \mathbb{R}^N \rightarrow [-f_{\max}, f_{\max}]$ and for all $s \in [0,t]$ let
    \begin{align}
        A_H(s) &:= e^{- \beta s/m} \sum_{j=1}^N \frac{D_{x,j}^2}{2m} \\
        B_H(s) &:= e^{\beta s/m}\sum_{\bx} f(\bx) \ketbra{\bx}{\bx} \\
        U_B(s) &:= \mathcal{T} e^{- i \int_0^s B_H(s') \mathrm{d}s'} \\
        A_I(s) &:= U_B^\dagger(s) A_H(s) U_B(s)
    \end{align}
    and consider the same setting as in Lemma~\ref{lem:robust_intpic_sim}. 
    We can implement each unitary $\widetilde{\hamt}_j$ using either 4 queries to an $\epsilon'$-precise bit oracle $O_f^{(b)}$ of the objective function $f$ with
    \begin{equation}
        \epsilon' \leq \frac{\epsilon/4S}{4\alpha_{A_H} \frac{m}{\beta} \lb e^{\beta t/m} - 1  \rb},
    \end{equation}
    or 
    \begin{equation}
        \widetilde{O} \lb \frac{m}{\beta} e^{\beta t/m} f_{\max}  \log^2 \lb \frac{\alpha_{A_H} S}{\epsilon} \rb \rb
    \end{equation}
    queries to a controlled phase oracle $O_f^{(p)}$ of the objective function $f$. 
\label{lem:hamt}
\end{restatable}

The proof of the above lemma is quite similar to the proof of Lemma~\ref{lem:hamt_liouvillian}. It can be found in Appendix~\ref{app:hamt}.

\begin{lem}[Quantum simulation of friction with bit oracle access]
    Let $\epsilon \in (0,1)$ be an error tolerance and let $t > 0$. Further, let $f: \mathbb{R}^N \rightarrow \mathbb{R}$ be some function and let $\Ht(t)$ be the corresponding discretized Hamiltonian as given in Def.~\ref{def:discretized_ham}. Assume having access to an $\epsilon'$-precise bit oracle $O_f^{(b)}$ of $f$ as described in Definition~\ref{def:bit_oracle} with $1/\epsilon' \in O \lb \frac{\alpha_{A_H}^2 t \frac{m}{\beta} e^{\beta t/m}}{\epsilon} \rb$.
    Then an operation $W$ can be implemented with failure probability at most $O (\epsilon)$ such that
    \begin{equation}
        \norm{W - \mathcal{T} \left[ e^{-i \int_0^t \Ht(s) \mathrm{d}s} \right]} \leq \epsilon
    \end{equation}
    using a number of queries to $O_f^{(b)}$ that scales as
    \begin{equation}
         O \lb \alpha_{A_H} t \frac{\log \lb \alpha_{A_H} t/\epsilon \rb}{\log \log \lb \alpha_{A_H} t/\epsilon \rb} \rb 
         \subseteq \widetilde{O} \lb \frac{N \norm{D_x^2} t}{m} \log \lb 1/\epsilon \rb \rb.
    \end{equation}
\label{lem:friction}
\end{lem}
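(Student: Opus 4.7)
The plan is to assemble the result by chaining together the three previously established lemmas: the robust interaction picture simulation (Lemma~\ref{lem:robust_intpic_sim}), the operator-norm bounds for $\widetilde{H}(t)$ (Lemma~\ref{lem:props_disc_H}), and the bit-oracle implementation of $\widetilde{\hamt}_j$ (Lemma~\ref{lem:hamt}). The logical flow is entirely mechanical once the right choices of $A$, $B$, and the precision parameters are made, so the proof amounts to a careful bookkeeping of the error and query budgets.

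First I would invoke Lemma~\ref{lem:robust_intpic_sim} with $A(s) := A_H(s)$ and $B(s) := B_H(s)$. This immediately gives a total simulation in the form
\begin{equation}
    \mathcal{T}\!\ls e^{-i\int_0^t \Ht(s) \mathrm{d}s}\rs \;=\; U_B(t)\,\mathcal{T}\!\ls e^{-i\int_0^t A_I(s) \mathrm{d}s}\rs,
\end{equation}
which splits the cost into (i) a single implementation of $U_B(t)$ to switch out of the interaction picture, and (ii) $O\!\lb \alpha_{A_H} t\, \log(\alpha_{A_H} t/\epsilon)/\log\log(\alpha_{A_H} t/\epsilon)\rb$ queries to $\widetilde{\hamt}_j$ oracles, each of which must implement $A_I$ up to error $\epsilon/(4S)$ with $S \in \Theta(\alpha_{A_H} t)$.

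Next I would plug in Lemma~\ref{lem:hamt} for each $\widetilde{\hamt}_j$. In the bit-oracle branch of that lemma, each $\widetilde{\hamt}_j$ costs exactly $4$ queries to $O_f^{(b)}$ provided the oracle precision satisfies $\epsilon' \le \epsilon/(16 S\,\alpha_{A_H}\,(m/\beta)(e^{\beta t/m}-1))$, which with $S \in \Theta(\alpha_{A_H} t)$ is precisely the hypothesis $1/\epsilon' \in O\!\lb \alpha_{A_H}^2 t (m/\beta) e^{\beta t/m}/\epsilon\rb$. For the remaining single application of $U_B(t)$ needed to leave the interaction picture, I would use the same bit-oracle strategy from the proof of Lemma~\ref{lem:hamt}: since $B_H(s)$ is diagonal in the computational basis and commutes at all times, $U_B(t) = \sum_\bx e^{-i(m/\beta)(e^{\beta t/m}-1)f(\bx)}\ketbra{\bx}{\bx}$, which is implementable using a constant number of queries to $O_f^{(b)}$ at a precision that is subsumed by the bound on $\epsilon'$ already required. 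Summing all contributions yields the stated total query count, and substituting the bound $\alpha_{A_H} \le N\|D_x^2\|/(2m)$ from Lemma~\ref{lem:props_disc_H} produces the advertised $\widetilde{O}(N \|D_x^2\| t / m \, \log(1/\epsilon))$ scaling.

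The only nontrivial bookkeeping concerns verifying that the preconditions of Lemma~\ref{lem:robust_intpic_sim} are indeed met, namely the availability of $\widetilde{\hamt}_j$ whose extracted block approximates $A_I(s)$ to error at most $\epsilon/(4S)$; this is exactly what Lemma~\ref{lem:hamt} guarantees under the stated precision hypothesis on $\epsilon'$. A second minor step is checking that the $\langle \|\dot{A}_I\|\rangle$ and $\max_s\|A_I(s)\|^2$ contributions in the $M$-scaling of Lemma~\ref{lem:robust_intpic_sim} only affect gate and qubit counts, not the oracle query count, which is already what the statement asserts. I expect no substantive obstacle here since all analytical work has been done upstream; the main care needed is in propagating the factor $e^{\beta t/m}$ through $\alpha_{B_H}$ and $\alpha_{A_H}'$ (via Lemma~\ref{lem:props_disc_H}) into the oracle precision $\epsilon'$, rather than into the query count itself, so that the final count retains its mild logarithmic dependence on $t$.
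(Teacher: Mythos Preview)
Your proposal is correct and follows exactly the same approach as the paper: the paper's proof is a one-line invocation of Lemmas~\ref{lem:robust_intpic_sim}, \ref{lem:props_disc_H}, and \ref{lem:hamt}, together with the observation that $U_B(t)$ costs $O(1)$ bit-oracle queries via the strategy in the proof of Lemma~\ref{lem:hamt}. Your write-up simply unpacks that invocation with explicit bookkeeping of the precision parameter $\epsilon'$ and the query count, which matches the paper's logic in every respect.
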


\begin{proof}
    Follows directly from Lemmas~\ref{lem:robust_intpic_sim}, \ref{lem:props_disc_H} and \ref{lem:hamt} and the fact that we can simulate $U_B(t) = \mathcal{T} e^{- i \int_0^t B(s) \mathrm{d}s}$ within error $O \lb \epsilon \rb$ using $O(1)$ queries to $O_f^{(b)}$ by following the same strategy as in the proof of Lemma~\ref{lem:hamt}.
\end{proof}

\begin{lem}[Quantum simulation of friction with phase oracle access]
    Let $\epsilon \in (0,1)$ be an error tolerance and let $t > 0$. Further, let $f: \mathbb{R}^N \rightarrow \mathbb{R}$ be some function and let $\Ht(t)$ be the corresponding discretized Hamiltonian as given in Def.~\ref{def:discretized_ham}. Assume having access to a phase oracle $O_f^{(p)}$ of $f$ as described in Definition~\ref{def:phase_oracle}.
    Then an operation $W$ can be implemented with failure probability at most $O(\epsilon)$ such that
    \begin{equation}
        \norm{W - \mathcal{T} \left[ e^{-i \int_0^t \Ht(s) \mathrm{d}s} \right]} \leq \epsilon
    \end{equation}
    using a number of queries to controlled-$O_f^{(p)}$ and its inverse that scales as
    \begin{equation}
        \widetilde{O} \lb \frac{m}{\beta} e^{\beta t/m} f_{\max} \alpha_{A_H} t \log^3 \lb \frac{1}{\epsilon} \rb \rb \subseteq \widetilde{O} \lb \frac{f_{\max}}{\beta} e^{\beta t/m} N \norm{D_x^2} t \log^3 \lb \frac{1}{\epsilon} \rb \rb
    \end{equation}
\label{lem:friction_phase}
\end{lem}

\begin{proof}
     Follows directly from Lemmas~\ref{lem:robust_intpic_sim}, \ref{lem:props_disc_H} and \ref{lem:hamt} and the fact that we can simulate $\mathcal{T} e^{- i \int_0^t B_H(s) \mathrm{d}s}$ within error $O \lb \epsilon \rb$ using 
     \begin{equation}
         \widetilde{O} \lb \frac{m}{\beta} e^{\beta t/m} f_{\max}  \log^2 \lb \frac{\alpha_{A_H} S}{\epsilon} \rb \rb \subseteq \widetilde{O} \lb \frac{f_{\max}}{\beta} e^{\beta t/m} N \norm{D_x^2} t \log^3 \lb \frac{1}{\epsilon} \rb \rb
     \end{equation}
    queries to controlled-$O_f^{(p)}$ and its inverse by following the same strategy as in the proof of Lemma~\ref{lem:hamt}.
\end{proof}

The above lemma effectively provides a query upper bound for finding a local optimum of any twice continuously differentiable nonnegative function $f$ that satisfies the assumption given in Eq.~\eqref{local_opt_assumption}. Note though that the query complexity depends on the evolution time $t$ which is left as a user-specified input parameter. Providing a tight bound on the required $t$ for a given objective function is difficult unless we make further assumptions on the objective function. In the next section, we consider the benchmark problem of optimizing a convex quadratic function for which we can prove relatively tight convergence and query bounds.

\section{Convex Quadratic Optimization}
\label{sec:convex_quadratic}

The previous section provided estimates of the complexity of simulating a dynamical system for a fixed evolution time that will, if $t$ is chosen appropriately, yield a local optimum. In this section, we prove upper bounds on $t$ for the task of finding the optimum of a convex quadratic function $f: \mathbb{R}^N \rightarrow \mathbb{R}$ of the following form:
\begin{equation}
    f(\bx) = \frac{1}{2}\lb \bx - \bx^* \rb^\top A \lb \bx - \bx^* \rb + c,
\end{equation}
where $A \in \mathbb{R}^{N \times N}$ is positive definite, $\bx^*$ is the vector corresponding to the minimum of $f$ and $c \in \mathbb{R}$ is a constant.
The motivation behind considering such a convex quadratic function stems from the observation that any twice continuously differentiable function $g: \mathbb{R}^N \rightarrow \mathbb{R}$ which has at least one local minimum can be approximated by a convex quadratic function around the local minimum as long as its second derivatives at the local minimum are positive. To be more specific, assume that $\bx^*$ is a local minimum of $g$ meaning $\lb \nabla g \rb \Big|_{\bx = \bx^*} = 0$ and $H_g (\bx^*) \succ 0$ where the latter condition states that the Hessian of $g$ evaluated at $\bx^*$ is positive definite.
Taylor expanding $g$ around $\bx^*$ up to second order yields
\begin{equation}
\begin{split}
       g(\bx) &= g(\bx^*) + \lb \bx - \bx^* \rb \cdot \lb \nabla g\rb \Big|_{(\bx = \bx^*)} + \frac{1}{2} \lb \bx - \bx^* \rb^\top H_g(\bx^*) \lb \bx - \bx^* \rb + O \lb \norm{\bx - \bx^*}^3 \rb \\
       &= g(\bx^*) + \frac{1}{2} \lb \bx - \bx^* \rb^\top H_g(\bx^*) \lb \bx - \bx^* \rb + O \lb \norm{\bx - \bx^*}^3 \rb.
\end{split}
\end{equation}
This shows that locally, close to $\bx^*$, $g(\bx)$ is well approximated by a convex quadratic function. Bounding the running time of our quantum algorithms for convex quadratic functions thus allows us to understand the convergence behavior of our algorithms close to the local optima of a generic differentiable non-convex function.

In the following, we will first discuss the main aspects of the classical Liouvillian approach before focusing on the quantum Hamiltonian approach.
A central concept in the proofs is the $\epsilon$-equilibration time of the underlying dynamical system as defined below. 

\begin{defn}[Equilibration time]
    Let $f: \mathbb{R}^N \rightarrow \mathbb{R}$ be differentiable and let $S_{\mathrm{stationary}} := \{\bx \in \mathbb{R}^N | \nabla f(\bx) = 0 \}$ denote the set of stationary points of $f$. For a probability density function over $\bx$ that evolves under a given time-evolution operator, denoted $\rho(\bx,t)$, and for $\rho(\bx,0)=\rho_{\mathrm{init}}$, we call $t^*$ the $\epsilon$-equilibration time of $f$ w.r.t.~$\rho_{\mathrm{init}}$ if $t^*$ is the smallest time such that for all $t \geq t^*$ a sample position vector $\bx'(t) \sim \rho(t)$ from the time evolved probability distribution satisfies the following condition with probability at least $2/3$:
    \begin{equation}
        \left| f \lb \bx'(t) \rb - f(\bx_S) \right| \leq \epsilon,
    \end{equation}
    for some $\bx_S \in S$.
\label{def:eq_times}
\end{defn}
We will consider either the dynamics generated by the classical friction Liouvillian given in Definition~\ref{def:friction_liouvillian} or the quantum friction Hamiltonian given in Definition~\ref{def:friction_ham}. Note that the above definition of equilibration time is not restricted to convex quadratic functions. It can be applied to virtually any function. However, generically, it will be difficult to provide tight bounds.

\subsection{Classical Liouvillian Approach}
\label{sec:liouvillian_approach}

We first analyze the classical approach to give a basis of comparison for the quantum approach which we will discuss afterwards. Let us begin by discussing the Liouvillian framework in a bit more detail. Consider a classical system with initial positions $\bx_0 \in \mathbb{R}^N$ and initial momenta $\bp_0 \in \mathbb{R}^N$. Let $\mathbf{X}_{\bx_0, \bp_0}(t)$ and $\mathbf{P}_{\bx_0, \bp_0}(t)$ denote the solutions to Hamilton's equations of motion such that $\mathbf{X}_{\bx_0, \bp_0}(0) = \bx_0$, $\mathbf{P}_{\bx_0, \bp_0}(0) = \bp_0$ and
\begin{align}
    \frac{dX_j}{dt} = \frac{\partial H(\bx, \bp)}{\partial p_j}\Big|_{\bx = \mathbf{X}(t), \bp = \mathbf{P}(t)} \\
    \frac{dP_j}{dt} = -\frac{\partial H (\bx, \bp)}{\partial x_j}\Big|_{\bx = \mathbf{X}(t), \bp = \mathbf{P}(t)}
\end{align}
where $H$ is the classical Hamiltonian of the system and we dropped the subscripts for notational simplicity. It is well known that the following probability distribution is a solution to the Liouville equation shown in Eq.~\eqref{liouville_eq}:
\begin{equation}
    \widetilde{\rho}(\bx,\bp,t) := \prod_{j=1}^N \lb \delta \lb x_j - X_{j}(t) \rb \delta \lb p_j - P_{j}(t) \rb \rb,
\end{equation}
where $\delta \lb \cdot \rb$ denotes the Dirac delta distribution. This fact can be verified explicitly by plugging the proposed solution back into Eq.~\eqref{liouville_eq}. In the following lemma, we show that having a probability distribution over initial positions and momenta also results in a valid solution to Liouville's equation. 

\begin{lem}[Solutions to Liouville's equation]
    Let $\mathbf{X}_{\bx_0, \bp_0}(s): [0,t] \mapsto \mathbb{R}^N$ and $\mathbf{P}_{\bx_0, \bp_0}(s): [0,t] \mapsto \mathbb{R}^N$ be solutions to Hamilton's equations of motion with Hamiltonian $H$ and initial conditions $\mathbf{X}(0) = \bx_0$ and $\mathbf{P}(0) = \bp_0$. Further, let $\rho(\bx_0,\bp_0)$ be any probability distribution over initial positions and momenta, meaning $\rho(\bx_0,\bp_0) \geq 0$ for all $\bx_0 \in \mathbb{R}^N$ and $\bp_0 \in \mathbb{R}^N$ and $\int_{\mathbb{R}^{2N}} \rho(\bx_0,\bp_0) d\bx_0 d\bp_0 = 1$. Then
    \begin{equation}
        R(\bx,\bp,t) := \int_{\mathbb{R}^{2N}} \rho(\bx_0,\bp_0) \prod_{j=1}^N \lb \delta \lb x_j - X_{\bx_0, \bp_0, j}(t) \rb \delta \lb p_j - P_{\bx_0, \bp_0, j}(t) \rb \rb d\bx_0 d\bp_0,
    \end{equation}
    is the unique probability distribution associated with the initial distribution $\rho(\bx_0,\bp_0)$ that satisfies Liouville's equation after time $t$.
\label{lem:solutions}
\end{lem}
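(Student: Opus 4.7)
The plan is to prove this in three parts: (i) verify that $R(\bx,\bp,t)$ satisfies Liouville's equation, (ii) verify that $R(\bx,\bp,0) = \rho(\bx,\bp)$, and (iii) establish uniqueness. Along the way I will also confirm that $R$ is a bona fide probability distribution (non-negative and normalized) since the statement asserts this.

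For part (i), I would exploit the linearity of Liouville's equation. The text just before the lemma establishes that for each fixed initial condition $(\bx_0,\bp_0)$, the delta-concentrated density $\widetilde{\rho}_{\bx_0,\bp_0}(\bx,\bp,t) := \prod_{j=1}^N \delta(x_j - X_{\bx_0,\bp_0,j}(t)) \delta(p_j - P_{\bx_0,\bp_0,j}(t))$ solves Eq.~\eqref{liouville_eq}. Since $R$ is defined as a weighted superposition $R(\bx,\bp,t) = \int \rho(\bx_0,\bp_0)\, \widetilde{\rho}_{\bx_0,\bp_0}(\bx,\bp,t) \, d\bx_0 d\bp_0$, and since $L$ acts only on the $(\bx,\bp)$ variables, one can pull $\partial_t$ and $L$ through the integral to conclude $\partial_t R = -iLR$. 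This is strictly a distributional computation but it is entirely routine once the single-trajectory case is granted.

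For part (ii), evaluating at $t=0$ gives $X_{\bx_0,\bp_0,j}(0) = x_{0,j}$ and $P_{\bx_0,\bp_0,j}(0) = p_{0,j}$, so the delta functions in the integrand enforce $\bx_0 = \bx$, $\bp_0 = \bp$, yielding $R(\bx,\bp,0) = \rho(\bx,\bp)$ as required. Non-negativity of $R$ is immediate from $\rho \ge 0$ and the non-negativity (in the distributional sense) of the product of delta functions. Normalization $\int R(\bx,\bp,t)\, d\bx\, d\bp = 1$ then follows by Fubini, performing the $\bx,\bp$ integration first to collapse the delta functions, leaving $\int \rho(\bx_0,\bp_0)\, d\bx_0 d\bp_0 = 1$.

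For part (iii), uniqueness follows from the method of characteristics. Liouville's equation $\partial_t R + iLR = 0$ is a first-order linear transport PDE whose characteristic curves in $(\bx,\bp)$-space are exactly the solutions $(\mathbf{X}_{\bx_0,\bp_0}(t), \mathbf{P}_{\bx_0,\bp_0}(t))$ of Hamilton's equations. Under mild regularity of $H$ (e.g., $C^1$ with globally Lipschitz Hamiltonian vector field, guaranteeing well-posedness of the flow), any distributional solution with initial data $\rho$ is uniquely determined by transporting $\rho$ along these characteristics, which is precisely what the formula for $R$ does. The main subtlety, and the only place requiring real care, is ensuring the distributional manipulations are justified: one should verify that the flow map $(\bx_0,\bp_0) \mapsto (\mathbf{X}(t), \mathbf{P}(t))$ is a well-defined diffeomorphism for all $t \ge 0$ and invoke Liouville's theorem (volume preservation) to justify changing variables cleanly. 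For Hamiltonians encountered in our setting the regularity hypotheses are met, and uniqueness is standard.
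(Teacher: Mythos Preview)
Your proposal is correct and follows essentially the same approach as the paper: verify non-negativity and normalization via Fubini, then exploit linearity by pulling $\partial_t$ and $L$ through the $(\bx_0,\bp_0)$-integral and invoking the already-established fact that each delta-concentrated $\widetilde{\rho}_{\bx_0,\bp_0}$ solves Liouville's equation. Your treatment is slightly more thorough than the paper's---you explicitly check the initial condition $R(\cdot,\cdot,0)=\rho$ and spell out uniqueness via the method of characteristics with regularity caveats, whereas the paper simply asserts that uniqueness follows from the uniqueness of solutions to Hamilton's equations---but the substance is the same.
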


\begin{proof}
    To verify that $R(\bx,\bp,t)$ is indeed a valid probability distribution for all $t$, note that $R(\bx,\bp,t) \geq 0$ for all $\bx$, $\bp$ and $t$ because $\rho(\bx_0,\bp_0)$ is non-negative. Further,
    \begin{equation}
    \begin{split}
        \int_{\mathbb{R}^{2N}} R(\bx,\bp,t) d\bx d\bp &= \int_{\mathbb{R}^{2N}} \lb  \int_{\mathbb{R}^{2N}} \rho(\bx_0,\bp_0) \prod_{j=1}^N \lb \delta \lb x_j - X_{\bx_0, \bp_0, j}(t) \rb \delta \lb p_j - P_{\bx_0, \bp_0, j}(t) \rb \rb d\bx_0 d\bp_0 \rb d\bx d\bp \\
        &= \int_{\mathbb{R}^{2N}} \rho(\bx_0,\bp_0)  \lb \int_{\mathbb{R}^{2N}} \prod_{j=1}^N \lb \delta \lb x_j - X_{\bx_0, \bp_0, j}(t) \rb \delta \lb p_j - P_{\bx_0, \bp_0, j}(t) \rb \rb d\bx d\bp \rb d\bx_0 d\bp_0 \\
        &= \int_{\mathbb{R}^{2N}} \rho(\bx_0,\bp_0) d\bx_0 d\bp_0 = 1.
    \end{split}
    \end{equation}
    The following calculation shows that $R(\bx,\bp,t)$ satisfies Liouville's equation:
    \begin{equation}
    \begin{split}
        \frac{\partial R}{\partial t} &= \int_{\mathbb{R}^{2N}} \rho(\bx_0,\bp_0) \frac{\partial}{\partial t} \prod_{j=1}^N \lb \delta \lb x_j - X_{\bx_0, \bp_0, j}(t) \rb \delta \lb p_j - P_{\bx_0, \bp_0, j}(t) \rb \rb d\bx_0 d\bp_0 \\
        &= \int_{\mathbb{R}^{2N}} \rho(\bx_0,\bp_0) \lb -i L  \rb \prod_{j=1}^N \lb \delta \lb x_j - X_{\bx_0, \bp_0, j}(t) \rb \delta \lb p_j - P_{\bx_0, \bp_0, j}(t) \rb \rb d\bx_0 d\bp_0 \\
        &= -iL \int_{\mathbb{R}^{2N}} \rho(\bx_0,\bp_0) \prod_{j=1}^N \lb \delta \lb x_j - X_{\bx_0, \bp_0, j}(t) \rb \delta \lb p_j - P_{\bx_0, \bp_0, j}(t) \rb \rb d\bx_0 d\bp_0 \\
        &= -iL R.
    \end{split}
    \end{equation}
    Uniqueness follows from the uniqueness assumptions of Hamilton's equations of motion.
\end{proof}

The next lemma shows how to compute expectation values as a function of time w.r.t.~the time-evolved phase space density assuming that we know the solutions to Hamilton's equations of motion.

\begin{lem}[Expectation values from particle trajectories]
    Let $A(\bx,\bp): \mathbb{R}^{2N} \rightarrow \mathbb{R}$ be a function of position and momentum and let $\mathbf{X}_{\bx_0, \bp_0}(t)$ and $\mathbf{P}_{\bx_0, \bp_0}(t)$ be solutions to Hamilton's equations of motion with Hamiltonian $H$ and initial conditions $\mathbf{X}(0) = \bx_0$ and $\mathbf{P}(0) = \bp_0$. Furthermore, let $\rho(\bx_0,\bp_0)$ be a probability distribution over initial positions and momenta. Then the expectation value of $A$ w.r.t.~the time-evolved probability density at time $t$ is given by
    \begin{equation}
        \ev{A}_t = \int_{\mathbb{R}^{2N}} A \lb \mathbf{X}_{\bx_0, \bp_0}(t), \mathbf{P}_{\bx_0, \bp_0}(t) \rb \rho(\bx_0,\bp_0) d\bx_0 d\bp_0. 
    \end{equation}
\label{lem:expectation}
\end{lem}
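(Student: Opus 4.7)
The plan is to start from the definition of an expectation value against the time-evolved phase space density and then substitute the explicit formula for the time-evolved density provided by Lemma~\ref{lem:solutions}. Concretely, I would write
\begin{equation}
    \ev{A}_t := \int_{\mathbb{R}^{2N}} A(\bx,\bp)\, R(\bx,\bp,t)\, d\bx\, d\bp,
\end{equation}
where $R(\bx,\bp,t)$ is the unique solution to Liouville's equation with initial data $\rho(\bx_0,\bp_0)$. By the previous lemma, $R(\bx,\bp,t)$ is given by the integral of $\rho(\bx_0,\bp_0)$ against a product of Dirac deltas tracking the Hamiltonian flow $(\mathbf{X}_{\bx_0,\bp_0}(t),\mathbf{P}_{\bx_0,\bp_0}(t))$.

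Next, I would exchange the order of integration (Fubini, since $\rho$ is a probability density and we can work with test functions $A$ for which the double integral is absolutely convergent; alternatively one justifies this at the level of distributions), bringing the integral over $(\bx,\bp)$ inside. The inner integral then becomes
\begin{equation}
    \int_{\mathbb{R}^{2N}} A(\bx,\bp) \prod_{j=1}^N \delta\!\lb x_j - X_{\bx_0,\bp_0,j}(t)\rb \delta\!\lb p_j - P_{\bx_0,\bp_0,j}(t)\rb d\bx\, d\bp = A\!\lb \mathbf{X}_{\bx_0,\bp_0}(t),\mathbf{P}_{\bx_0,\bp_0}(t)\rb,
\end{equation}
by the sifting property of the Dirac delta applied coordinate-by-coordinate. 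Substituting this back into the outer integral over $(\bx_0,\bp_0)$ yields the claimed formula.

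The only subtle point, and therefore the main obstacle, is the interchange of integrations and the rigorous handling of the delta distributions. For general $A$ this requires an appropriate integrability assumption (e.g.\ $A \in L^1$ with respect to the pushforward of $\rho$ under the flow, which is automatic when $A$ is continuous and the flow preserves compact support, as is the case in our discretized/compactly supported setting), but otherwise the argument is a direct computation and there are no further obstacles.
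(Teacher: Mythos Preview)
Your proposal is correct and follows essentially the same approach as the paper: write $\ev{A}_t$ as an integral of $A$ against $R(\bx,\bp,t)$, substitute the delta-function representation from Lemma~\ref{lem:solutions}, swap the order of integration, and collapse the inner integral via the sifting property. The paper does not dwell on the Fubini/distributional justification you mention, but otherwise the arguments are identical.
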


\begin{proof}
    Lemma~\ref{lem:solutions} shows that the solution to Liouville's equation can be written as follows:
    \begin{equation}
        R(\bx,\bp,t) := \int_{\mathbb{R}^{2N}} \rho(\bx_0,\bp_0) \prod_{j=1}^N \lb \delta \lb x_j - X_{\bx_0, \bp_0, j}(t) \rb \delta \lb p_j - P_{\bx_0, \bp_0, j}(t) \rb \rb d\bx_0 d\bp_0.
    \end{equation}
    Thus,
    \begin{equation}
    \begin{split}
        \ev{A}_t &= \int_{\mathbb{R}^{2N}} A(\bx,\bp) R(\bx,\bp,t) d\bx d\bp \\
        &= \int_{\mathbb{R}^{2N}} A(\bx,\bp) \lb \int_{\mathbb{R}^{2N}} \rho(\bx_0,\bp_0) \prod_{j=1}^N \lb \delta \lb x_j - X_{\bx_0, \bp_0, j}(t) \rb \delta \lb p_j - P_{\bx_0, \bp_0, j}(t) \rb \rb d\bx_0 d\bp_0 \rb d\bx d\bp \\
        &= \int_{\mathbb{R}^{2N}} \rho(\bx_0,\bp_0) \lb  \int_{\mathbb{R}^{2N}} A(\bx,\bp) \prod_{j=1}^N \lb \delta \lb x_j - X_{\bx_0, \bp_0, j}(t) \rb \delta \lb p_j - P_{\bx_0, \bp_0, j}(t) \rb \rb d\bx d\bp \rb d\bx_0 d\bp_0 \\
        &= \int_{\mathbb{R}^{2N}} A \lb \mathbf{X}_{\bx_0, \bp_0}(t), \mathbf{P}_{\bx_0, \bp_0}(t) \rb \rho(\bx_0,\bp_0) d\bx_0 d\bp_0.
    \end{split}
    \end{equation}
\end{proof}
In the following, we will often talk about the expectation value of an observable $A$ w.r.t.~some initial probability distribution $\rho_0(\bx, \bp)$. We denote that expectation value as follows:
\begin{equation}
    \ev{A}_0 := \int_{\mathbb{R}^{2N}} A(\bx,\bp) \rho_0(\bx, \bp) d\bx d\bp.
\end{equation}

Before providing bounds on the $\epsilon$-equilibration time of a general multivariate convex quadratic function, let us consider the problem of optimizing a convex quadratic function of a single variable. Specifically, we consider
\begin{equation}
    f(x) = \frac{a}{2} (x-x^*)^2 + c \quad \text{ with } a>0 \text{ and } c \in \mathbb{R}.
\end{equation}
This can be viewed as the potential function of a harmonic oscillator in one dimension with spring constant $a$ and equilibrium position at $x=x^*$. The constant $c$ translates to an energy offset which does not affect the dynamics in any way. 
For ease of notation, define $\xt := x - x^*$.
Then the equation of motion of a damped harmonic oscillator in terms of $\xt$ reads
\begin{equation}
    m \ddot{\xt} = - a \xt - \beta \dot{\xt},
\label{damped_SHO}
\end{equation}
where $m > 0$ is a tunable mass parameter, $\dot{\xt} := \frac{d\xt}{dt}$, $\ddot{\xt} := \frac{d^2\xt}{dt^2}$ and $\beta > 0$ is the friction coefficient.
There are three different types of solutions to Eq.~\eqref{damped_SHO}.
\begin{enumerate}
    \item Critically damped: $\beta^2 - 4ma = 0$. The solution decays exponentially fast to the equilibrium without any oscillations. Fastest approach to equilibrium.
    \item Underdamped: $\beta^2 - 4ma < 0$. The solution is oscillatory but is enveloped by an exponentially decaying function.
    \item Overdamped: $\beta^2 - 4ma > 0$. The solution decays exponentially fast to the equilibrium without any oscillations but slower than in the critically damped regime.
\end{enumerate}
In the following, we will focus on the critically damped and underdamped harmonic oscillators as they provide the best convergence behavior. 

The definition below allows us to express our results more compactly.
\begin{defn}[Modified Lambert $W$ function]
    Let $x < 0$ and let $W_{-1}$ denote the $-1$ branch of the Lambert $W$ function. Then we define the modified Lambert $W$ function as follows:
    \begin{equation}
        \widetilde{W}(x):=
        \begin{cases}
            W_{-1}(x) & \mathrm{if} \quad -\frac{1}{e} \leq x < 0 \\
            0 & \mathrm{if} \quad x < -\frac{1}{e}.
        \end{cases}
    \end{equation}
\end{defn}

Let us first discuss the equilibration time of the critically damped harmonic oscillator.

\begin{lem}[Equilibration time of the critically damped harmonic oscillator]
    Let $f(x) = \frac{a}{2} (x-x^*)^2 + c$ with $a~>~0$ and $c \in \mathbb{R}$ constant and let $\beta^2 = 4ma$. Given an initial phase space density $\rho_0(x, p)$, the $\epsilon$-equilibration time $t^*$ can be upper bounded as follows:
    \begin{equation}
        t^* \leq \max \left\{ t^{\ev{x}}_1, t^{\ev{x}}_2,  t^{\sigma}_1, t^{\sigma}_2, t^{\sigma}_3 \right\},
    \end{equation}
    where 
    \begin{align}
        t^{\ev{x}}_1 &:= \frac{1}{\gamma} \log \lb \sqrt{\frac{8a}{\epsilon}} |\ev{x}_0 - x^*| \rb \\
        t^{\ev{x}}_2 &:= -\frac{1}{\gamma} \widetilde{W} \lb - \sqrt{\frac{\epsilon}{8a}} \frac{\gamma}{|\ev{r}_0|} \rb \\
        t^{\sigma}_1 &:= \frac{1}{2 \gamma} \log \lb \frac{18 a \sigma^2_x}{\epsilon} \rb \\
        t^{\sigma}_2 &:= - \frac{1}{2 \gamma} \widetilde{W} \lb - \frac{\gamma \epsilon}{18 a | \mathrm{cov}_0 \lb x, r \rb|} \rb \\
        t^{\sigma}_3 &:= - \frac{1}{\gamma} \widetilde{W} \lb - \gamma \sqrt{\frac{\epsilon}{18 a \sigma^2_{r}}} \rb,
    \end{align}
    with $\gamma := \frac{\beta}{2m}$, $r := \frac{p}{m} + \gamma (x - x^*)$, $\sigma_x^2 := \ev{x^2}_0 - \ev{x}_0^2$, $\mathrm{cov}_0 \lb x, r \rb := \ev{x  r}_0 - \ev{x}_0\ev{r}_0$ and $\sigma^2_{r} := \ev{r^2}_0 - \ev{r}_0^2$.
\label{lem:critical_time_bound}
\end{lem}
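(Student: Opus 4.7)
My plan is to reduce the probabilistic statement to a Markov-inequality bound on $\mathbb{E}[f(x(t)) - f(x^*)]$, exploit Lemma~\ref{lem:expectation} to push the expectation through to an expectation over the initial phase space density $\rho_0$, and then solve for $t$ in a handful of inequalities that fall out of an exact solution to the critically damped equation of motion.

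\textbf{Step 1: Solve the ODE.} The equation $m\ddot{\tilde{x}} + \beta \dot{\tilde{x}} + a \tilde{x} = 0$ with $\tilde{x}:=x-x^*$ and $\beta^2 = 4 m a$ has a double root at $-\gamma$ where $\gamma=\beta/(2m)$, giving the classical solution $\tilde{x}(t) = \big((x_0 - x^*) + r_0 t\big) e^{-\gamma t}$ with $r_0 = p_0/m + \gamma(x_0 - x^*)$, which matches the definition of $r$ in the lemma statement. Then $f(x(t)) - f(x^*) = \tfrac{a}{2}\tilde{x}(t)^2$ is a nonnegative random variable whose expectation can be computed via Lemma~\ref{lem:expectation}.

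\textbf{Step 2: Expand the expectation into five nonnegative pieces.} Write $Y := (x_0 - x^*) + r_0 t$ and use $\mathbb{E}[Y^2] = (\mathbb{E} Y)^2 + \mathrm{Var}(Y)$ to get
\begin{equation*}
\mathbb{E}\big[f(x(t)) - f(x^*)\big] = \tfrac{a}{2} e^{-2\gamma t}\Big[ \big((\langle x\rangle_0 - x^*) + \langle r\rangle_0 t\big)^2 + \sigma_x^2 + 2 t\, \mathrm{cov}_0(x,r) + t^2 \sigma_r^2 \Big].
\end{equation*}
Applying $(u+v)^2 \le 2u^2 + 2v^2$ to the first square and $|2t\,\mathrm{cov}_0(x,r)| \le 2t|\mathrm{cov}_0(x,r)|$ produces the five nonnegative summands
$a(\langle x\rangle_0 - x^*)^2 e^{-2\gamma t}$, $a \langle r\rangle_0^2 t^2 e^{-2\gamma t}$, $\tfrac{a}{2}\sigma_x^2 e^{-2\gamma t}$, $a t |\mathrm{cov}_0(x,r)| e^{-2\gamma t}$, and $\tfrac{a}{2}\sigma_r^2 t^2 e^{-2\gamma t}$.

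\textbf{Step 3: Markov's inequality and a weighted split.} By Markov's inequality, it suffices to force $\mathbb{E}[f(x(t)) - f(x^*)] \le \epsilon/3$ in order to conclude $\Pr[|f(x(t)) - f(x^*)| \le \epsilon] \ge 2/3$. I will partition $\epsilon/3$ as $\tfrac{\epsilon}{8} + \tfrac{\epsilon}{8} + \tfrac{\epsilon}{36} + \tfrac{\epsilon}{36} + \tfrac{\epsilon}{36}$ (which sums to $\tfrac14 + \tfrac{1}{12} = \tfrac13$) and demand that each of the five summands be bounded by the corresponding piece. The two pure-exponential inequalities (summands $1$ and $3$) are inverted by taking logs and yield $t_1^{\langle x\rangle}$ and $t_1^\sigma$ with the stated constants $8$ and $18$. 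The three remaining inequalities all have the form $u e^{-u} \le c$ for $u = \gamma t$ or $u = 2\gamma t$; using the identity $u e^{-u} = c \Leftrightarrow u = -W_{-1}(-c)$ on the lower Lambert branch, valid when $c \in (0, 1/e]$, produces $t_2^{\langle x\rangle}$, $t_2^\sigma$ and $t_3^\sigma$. When $c > 1/e$ the inequality is already satisfied for every $u \ge 0$ and no waiting time is needed, which is exactly the regime where the modified function $\widetilde W$ is defined to equal $0$, so the single expression $-\tfrac{1}{\gamma}\widetilde W(\cdots)$ covers both cases uniformly. Taking the maximum of the five resulting times guarantees all summands are simultaneously controlled.

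\textbf{Main obstacle.} None of the five inversions is individually hard; the only subtle point is choosing the weighted split $\tfrac18+\tfrac18+\tfrac{1}{36}+\tfrac{1}{36}+\tfrac{1}{36} = \tfrac13$ so that the constants $\sqrt{8a/\epsilon}$, $\sqrt{\epsilon/(8a)}$, $18$ and $1/18$ emerge exactly as stated — an asymmetric allocation is required because the two ``mean'' terms and the three ``variance-like'' terms have different prefactors ($a$ versus $a/2$), and because the coefficient of $t$ outside the exponential in the Lambert-$W$ cases is absorbed into the argument differently depending on whether one uses $\gamma t\, e^{-\gamma t}$ or $2\gamma t\, e^{-2\gamma t}$.
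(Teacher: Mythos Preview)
Your proof is correct and recovers the exact constants, but it takes a genuinely different route from the paper. The paper separates the argument into two pieces: it first bounds the mean $|\ev{x}_t - x^*|$ by $\epsilon'$ using the triangle inequality on $e^{-\gamma t}|\ev{\tilde x}_0| + t e^{-\gamma t}|\ev{r}_0|$, and then controls the deviation of a sample from the mean via Chebyshev's inequality on the time-evolved variance $\sigma^2(t)=e^{-2\gamma t}(\sigma_x^2 + 2t\,\mathrm{cov}_0(x,r)+t^2\sigma_r^2)$, requiring $\sigma^2(t)\le \epsilon'^2/3$. Only at the very end is the position error $|x'_t - x^*|\le 2\epsilon'$ converted to a function-value error, fixing $\epsilon'=\sqrt{\epsilon/(2a)}$; the constants $8$ and $18$ then emerge from $2/\epsilon'$ and $9/\epsilon'^2$. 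By contrast, you apply Markov's inequality directly to the nonnegative random variable $f(x(t))-f(x^*)=\tfrac{a}{2}\tilde x(t)^2$, which fuses mean and variance into a single second-moment computation and then relies on the weighted split $\tfrac18+\tfrac18+\tfrac1{36}+\tfrac1{36}+\tfrac1{36}=\tfrac13$ to reproduce the same constants. Your route is slicker for this one-dimensional lemma; the paper's decomposition has the advantage that the intermediate position bound $|x'_t - x^*|\le 2\epsilon'$ is reused verbatim in the multivariate and quantum extensions (Lemmas~\ref{lem:multivariate_time} and~\ref{lem:multivariate_quantum}) and in the main theorems, where position-level control is what is actually propagated.
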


\begin{proof}
    Let us consider the coordinate system where all positions are shifted by $x^*$ such that the minimum is at $\xt_{\min} = 0$. Then the general solution for a critically damped harmonic oscillator with initial position $\xt_0 := x_0 - x^*$ and initial momentum $p_0$ is of the form
    \begin{equation}
    \begin{split}
        \Xt_{\xt_0,p_0}(t) &= e^{-\frac{\beta t}{2m}} \lb \xt_0 + \lb \frac{p_0}{m} + \frac{\beta}{2m} \xt_0 \rb t \rb \\
        &= e^{-\gamma t} \lb \xt_0 + \lb \frac{p_0}{m} + \gamma \xt_0 \rb t \rb \\
        &= e^{-\gamma t} \lb \xt_0 + r_0 t \rb \\
        &= X_{x_0, p_0}(t) - x^*.
    \end{split}
    \end{equation}
    According to Lemma~\ref{lem:expectation} we thus have that
    \begin{equation}
    \begin{split}
        \ev{\xt}_t &= \int_{\mathbb{R}^{2}} e^{-\gamma t} \lb \xt_0 + r_0 t \rb \rho_0(\xt_0, p_0) d\xt_0 dp_0 \\
        &= e^{-\gamma t} \lb \ev{\xt}_0 + \ev{r}_0 t \rb.
    \end{split}
    \end{equation}
    Therefore,
    \begin{equation}
        |\langle \xt \rangle_t - \xt_{\min}| = |\ev{\xt}_t| \leq e^{-\gamma t} |\ev{\xt}_0| + t e^{-\gamma t} |\ev{r}_0|,
    \label{critical_upper_bound}
    \end{equation}
    where we used the fact that $t \geq 0$.
    To get $\epsilon'$-close to the minimum at $\xt_{\min} = 0$ such that
    \begin{equation}
        |\langle \xt \rangle_t - \xt_{\min}| = |\langle x \rangle_t - x^*| \leq \epsilon' \quad \forall t \geq t^*,
    \end{equation}
    it then suffices to ensure that
    \begin{align}
        e^{-\gamma t} |\ev{\xt}_0| &\leq \frac{\epsilon'}{2} \; \forall t \geq t^* \implies t^* \geq \frac{1}{\gamma} \log \lb \frac{2 |\ev{\xt}_0 |}{\epsilon'} \rb = \frac{1}{\gamma} \log \lb \frac{2 |\ev{x}_0 - x^*|}{\epsilon'} \rb \label{critical_t1_x}\\
        e^{-\gamma t} |\ev{r}_0| t &\leq \frac{\epsilon'}{2}  \; \forall t \geq t^* \implies t^* \geq - \frac{1}{\gamma} \widetilde{W} \lb -\frac{\gamma \epsilon'}{2 |\ev{r}_0|} \rb. \label{critical_t2_x}
    \end{align}
   
    Showing that the average position is close to the minimum is not sufficient for our purposes as we might have to obtain a lot of samples to compute the average position. Ideally, we only need a small number of high-quality samples. This can be guaranteed by picking a large enough $t$ such that the position variance at time $t$, $\sigma^2(t) := \ev{x^2}_t - \ev{x}_t^2$, is sufficiently small. Specifically, Chebyshev's inequality states that a position sample $x'_t \sim R(x, p, t)$ from the time evolved phase space density satisfies
    \begin{equation}
         P\lb |x'_t - \ev{x}_t| \geq \epsilon' \rb \leq \frac{\sigma^2(t)}{\epsilon'^2}.
    \end{equation}
    We want this failure probability to be at most $1/3$ for all $t \geq t^*$ such that we can boost the success probability close to at least $1 - \delta$ using only $\log \lb 1/\delta \rb$ samples and taking the sample that yields the smallest value of $f$. Therefore, it suffices to ensure that
    \begin{equation}
        \sigma^2(t) = \ev{x^2}_t - \ev{x}_t^2 \leq \frac{\epsilon'^2}{3}. 
    \label{var_inequality}
    \end{equation}
    Note that $\ev{\xt^2}_t - \ev{\xt}_t^2 = \ev{(x - b)^2}_t - \ev{x - b}_t^2 = \ev{x^2}_t - \ev{x}_t^2 = \sigma^2(t)$.
    Direct computation reveals that 
    \begin{equation}
    \begin{split}
        \ev{\xt^2}_t - \ev{\xt}_t^2 &= \int_{\mathbb{R}^{2}} e^{- 2 \gamma t} \lb \xt_0 + r_0 t \rb^2 \rho(\xt_0, p_0) d\xt_0 dp_0 \\
        &\quad - \lb \int_{\mathbb{R}^{2}} e^{-\gamma t} \lb \xt_0 + r_0 t \rb \rho(\xt_0, p_0) d\xt_0 dp_0 \rb^2 \\
        &= e^{-2 \gamma t} \lb \ev{\xt^2}_0 + 2t \ev{\xt r}_0 + t^2 \ev{r^2}_0 \rb \\
        &\quad - e^{-2 \gamma t} \lb \ev{\xt}_0^2 + 2t \ev{\xt}_0\ev{r}_0 + t^2 \ev{r}_0^2 \rb \\
        &= e^{-2\gamma t} \lb \sigma^2_x + 2 t \, \mathrm{cov}_0 \lb \xt, r \rb + t^2 \sigma^2_r \rb,
    \end{split}
    \end{equation}
    where $\sigma_x^2 = \sigma^2(0)$, $\mathrm{cov}_0 \lb \xt, r \rb = \ev{\xt r}_0 - \ev{\xt}_0\ev{r}_0$ and $\sigma^2_r = \ev{r^2}_0 - \ev{r}_0^2$. Note that $\mathrm{cov}_0 \lb \xt, r \rb = \mathrm{cov}_0 \lb x, r \rb$.
    Thus,
    \begin{equation}
         \sigma^2(t) \leq e^{-2 \gamma t} \sigma^2_x + 2 t e^{-2 \gamma t}  | \mathrm{cov}_0 \lb x, r \rb| + t^2 e^{-2 \gamma t} \sigma^2_r.
    \label{var_bound}
    \end{equation}
    The inequality in Eq.~\eqref{var_inequality} is satisfied for all $t \geq t^*$ if
    \begin{align}
        e^{-2 \gamma t} \sigma^2_x &\leq \frac{\epsilon'^2}{9} \implies t^* \geq \frac{1}{2 \gamma} \log \lb \frac{9 \sigma^2_x}{\epsilon'^2} \rb \label{critical_t1} \\
        2 t e^{-2 \gamma t}  | \mathrm{cov}_0 \lb x, r \rb| &\leq \frac{\epsilon'^2}{9} \implies t^* \geq - \frac{1}{2\gamma} \widetilde{W} \lb - \frac{\gamma \epsilon'^2}{9 | \mathrm{cov}_0 \lb x, r \rb|} \rb \label{critical_t2}\\
        t^2 e^{-2 \gamma t} \sigma^2_r &\leq \frac{\epsilon'^2}{9} \implies t^* \geq - \frac{1}{\gamma} \widetilde{W} \lb - \gamma \sqrt{\frac{\epsilon'^2}{9 \sigma^2_r}} \rb. \label{critical_t3}
    \end{align}
    
    Lastly, let us bound $\epsilon'$ in terms of $\epsilon$. Note that with probability at least $2/3$
    \begin{equation}
        |x'_t - x^*| \leq |x'_t - \ev{x}_t| + |\ev{x}_t - x^*| \leq 2 \epsilon'.
    \end{equation}
    In case of success we have that
    \begin{equation}
    \begin{split}
        |f(x_t') - f\lb x^* \rb| &= \left| \frac{a}{2} \lb x'_t - x^* \rb^2 + c - c \right| \\
        &= \frac{a}{2} \left| \lb x_t' - x^* \rb^2  \right| \\
        &= \frac{a}{2} \left|x'_t - x^* \right|^2 \\
        &\leq 2 a \epsilon'^2.
    \end{split}
    \end{equation}
    To ensure that $|f(x'_t) - f\lb x^* \rb| \leq \epsilon$ with probability at least $2/3$, it then suffices to choose $\epsilon' = \sqrt{\frac{\epsilon}{2 a}}$. Plugging this back into the expressions in Eqs.~\eqref{critical_t1_x}, \eqref{critical_t2_x}, \eqref{critical_t1}, \eqref{critical_t2} and \eqref{critical_t3} and taking the maximum yields the final bound on the $\epsilon$-equilibration time $t^*$.
\end{proof}

A similar bound can be obtained for the $\epsilon$-equilibration time of an underdamped harmonic oscillator. The details can be found in Appendix~\ref{app:underdamped}.

Let us now move to the multivariate setting.
The following lemma provides a bound on the $\epsilon$-equilibration time of a set of damped coupled harmonic oscillators all of which are critically damped or underdamped.

\begin{restatable}[Equilibration time of damped coupled harmonic oscillators]{lem}{Multivariate}
    Let $f(\bx) = \frac{1}{2}\lb \bx - \bx^* \rb^\top A \lb \bx - \bx^* \rb + c$ with $c \in \mathbb{R}$ constant and $A \in \mathbb{R}^{N \times N}$ positive definite with eigenvalues $0 < \lambda_{\min} :=  \lambda_0 \leq \lambda_1 \leq \cdots \leq \lambda_{N-1} =: \lambda_{\max}$. Further, let $\beta, m > 0$ be constants such that $\beta^2 - 4m\lambda_j \leq 0$ for all eigenvalues $\lambda_j$ of $A$. Given an initial phase space density $\rho_0(\bx, \bp)$, the $\epsilon$-equilibration time $t^*$ can be upper bounded as follows:
    \begin{equation}
        t^* \leq \left\{ t^{\ev{x}}_{1}, t^{\ev{x}}_{2},  t^{\sigma}_{1}, t^{\sigma}_{2}, t^{\sigma}_{3} \right\},
    \end{equation}
    where
    \begin{align}
        t^{\ev{x}}_{1} &:= \frac{1}{\gamma} \log \lb \sqrt{\frac{8\lambda_{\max}}{\epsilon}} \norm{\ev{\bx}_0 - \bx^*} \rb \\
        t^{\ev{x}}_{2} &:= - \frac{1}{\gamma} \widetilde{W} \lb - \sqrt{\frac{\epsilon}{8\lambda_{\max}}} \frac{\gamma}{ \norm{\ev{\br}_0}} \rb \\
        t^{\sigma}_{1} &:= \frac{1}{2 \gamma} \log \lb \frac{18 \lambda_{\max} \sum_{j=1}^N \sigma^2_{x,j}}{\epsilon} \rb \\
        t^{\sigma}_{2} &:= - \frac{1}{2 \gamma} \widetilde{W} \lb - \frac{\gamma \epsilon}{18 \lambda_{\max}  \left| \sum_{j=1}^N \mathrm{cov}\lb x_j, r_j \rb_0 \right|} \rb \\
        t^{\sigma}_{3} &:= - \frac{1}{\gamma} \widetilde{W} \lb - \gamma \sqrt{\frac{\epsilon}{18 \lambda_{\max} \sum_{j=1}^N \sigma^2_{r,j}}} \rb,
    \end{align}
    with $\gamma := \frac{\beta}{2m}$, $\br := \frac{\bp}{m} + \gamma \lb \bx - \bx^* \rb$, $\sigma^2_{x,j} :=  \ev{x_j^2}_0 - \ev{x_j}_0^2$, $\mathrm{cov} \lb x_j, r_j \rb_0 := \ev{x_j r_j}_0 - \ev{x_j}_0\ev{r_j}_0$ and $\sigma^2_{r,j} := \ev{r_j^2}_0 - \ev{r_j}_0^2$.
\label{lem:multivariate_time}
\end{restatable}
We prove this lemma by reducing the $N$-variable optimization problem to $N$ independent single-variable optimization problems. The details can be found in Appendix~\ref{app:multivariate_time}.

While the above bound on the $\epsilon$-equilibration time is quite tight, it is somewhat difficult to parse. The following definition will be useful for stating looser but easier to understand asymptotic bounds on the equilibration time.

\begin{defn}[Upper bound on initial parameters in the Liouvillian setting]
    Let $\gamma = \beta/2m$, let $\bx^* \in \mathbb{R}^N$ be the minimum of $f$ as in Lemma~\ref{lem:multivariate_time} and let $\rho_0(\bx, \bp)$ be an initial phase space probability density. Further, for all $j \in [N]$, let $\sigma^2_{x,j} = \ev{x_j^2}_0 - \ev{x_j}^2_0$, $\sigma^2_{p,j} = \ev{p_j^2}_0 - \ev{p_j}^2_0$ and $\mathrm{cov}_0\lb x_j, p_j \rb = \ev{x_j p_j}_0 - \ev{x_j}_0 \ev{p_j}_0$. Then we define
    \begin{equation}
        \widetilde{\chi}_0 := \max \lc \gamma^2 \norm{\ev{\bx}_0 - \bx^*}^2, \frac{\norm{\ev{\bp}_0}^2}{m^2}, \gamma^2 \sum_{j=1}^N \sigma^2_{x,j},  \frac{\gamma}{m} \left| \sum_{j=1}^N \mathrm{cov}\lb x_j, p_j \rb_0 \right|, \frac{1}{m^2}\sum_{j=1}^N \sigma^2_{p,j} \rc.
    \end{equation}
\label{def:initial_conditions_liouvillian}
\end{defn}

\begin{restatable}[Asymptotic bound on the $\epsilon$-equilibration time in the Liouvillian setting]{cor}{EquilibrationTime}
    Consider the same setting as in Lemma~\ref{lem:multivariate_time} and assume that $\gamma \in \Theta \lb \sqrt{\lambda_{\min}} \rb$.
    Then the $\epsilon$-equilibration time $t^*$ in Lemma~\ref{lem:multivariate_time} satisfies the following asymptotic bound:
    \begin{equation}
        t^* \in  O \lb \frac{1}{\sqrt{\lambda_{\min}}} \log \lb \frac{\lambda_{\max}}{\lambda_{\min}} \frac{ \widetilde{\chi}_0}{\epsilon} \rb \rb,
    \end{equation}
    where $ \widetilde{\chi}_0$ is as in Definition~\ref{def:initial_conditions_liouvillian}.
\label{cor:equilibration_time_liouvillian}
\end{restatable}
The proof of the above corollary can be found in Appendix~\ref{app:equilibration_time}.

Before stating the main theorem of this section, let us briefly fix some new notation and also recall some old notation.

\begin{defn}[Expectation values in the continuum and discrete setting; Liouvillian approach]
    Let $U_L(t) = \mathcal{T} e^{-i\int_0^t L_{\mathrm{fric}}(s) \mathrm{d}s}$ be the time evolution operator as defined in Eq.~\eqref{friction_liouvillian_ev}. Further, let $\Lt(t)$ be the discretized friction Hamiltonian associated with $L_{\mathrm{fric}}(t)$ according to Definition~\ref{def:discretized_liouvillian}.
    Then we define the following time evolution operator $\widetilde{U}_L(s): [0,t] \rightarrow \mathbb{C}^{2^{2Nn} \times 2^{2Nn}}$:
    \begin{equation}
        \widetilde{U}_L(s) := \mathcal{T} e^{-i \int_0^s \Lt(s') \mathrm{d}s'}.
    \end{equation}
    Next, let $\hat{x}_j$ denote the continuum position operator acting on the $j$-th position variable and let $\hat{\xt}_j$ denote the corresponding discretized and truncated position operator which is diagonal in the computational basis with diagonal values $x_{\max} - h_x, x_{\max} - 2h_x \dots, 0 , -h_x, \dots, - x_{\max}$. Each position (and momentum) variable is discretized over $2^n$ grid points implying that $\hat{\xt}_j \in \mathbb{C}^{2^n \times 2^n}$ for all $j \in [N]$. When clear from context, we simply write $\hat{\xt}_j$ to mean $\mathbb{1}_1 \otimes \mathbb{1}_2 \otimes \cdots \otimes \mathbb{1}_{j-1} \otimes \hat{\xt}_j \otimes \mathbb{1}_{j+1} \otimes \cdots \otimes \mathbb{1}_{N} \bigotimes_{k=1}^N \one_k$. 
    Now, let $\ket{\psi_0 \lb \bx,\bp \rb}$ be an initial Koopman-von Neumann wave function in the continuum and let $\ket{\widetilde{\psi}_0 \lb \bx,\bp \rb}$ denote the corresponding discretized KvN wave function obtained by evaluating $\psi_0 \lb \bx,\bp \rb$ on the $2^{2Nn}$ grid points associated with $\Lt(t)$ and normalizing the resulting state.
    Then we define the following quantities:
    \begin{equation}
        \ev{x_j}_t := \bra{\psi_0} U_L^\dagger(t) \hat{x}_j U_L(t)\ket{\psi_0}
    \end{equation}
    denotes the time evolved expectation value of the continuum position operator w.r.t.~the exact evolution in the continuum, 
    \begin{equation}
        \ev{\xt_j}_t := \bra{\widetilde{\psi}_0} \widetilde{U}_L^\dagger(t) \hat{\xt}_j \widetilde{U}_L(t)\ket{\widetilde{\psi}_0}
    \end{equation}
    denotes the time evolved expectation value of the discretized position operator w.r.t.~the exact discretized evolution operator,
    \begin{equation}
        \sigma_{j}^2(t) := \ev{\hat{x}_j^2}_t - \ev{\hat{x}_j}^2_t
    \end{equation}
    denotes the time evolved variance of the $j$-th continuum position variable and
    \begin{equation}
        \widetilde{\sigma}_{j}^2 (t) := \ev{\hat{\xt}_j^2}_t - \ev{\hat{\xt}_j}^2_t
    \end{equation}
    denotes the time evolved variance of the $j$-th position variable w.r.t.~the exact discretized evolution operator $\widetilde{U}_L(t)$.
\label{def:expectation_values_liouvillian}
\end{defn}

Now we are ready to state the main theorem of this section which provides an upper bound on the running time of the dissipative Liouvillian simulation algorithm for the task of finding the optimum of a convex quadratic function.

\begin{thm}[Coherent convex quadratic optimization via classical dynamics]
    Let $\epsilon > 0$ be an error tolerance. Let $A \in \mathbb{R}^{N \times N}$ be positive with eigenvalues $0 < \lambda_0 =: \lambda_{\min} \leq \lambda_1 \leq \dots \leq \lambda_{N-2} \leq \lambda_{N-1} =: \lambda_{\max}$ and assume that we know $\lambda_{\min}$ and $\lambda_{\max}$ each within some constant factor.
    Further, let $f(\bx) = \lb \bx- \bx^* \rb^\top A \lb \bx- \bx^* \rb + c$ for some constant $c \in \mathbb{R}$ and consider the corresponding discrete friction Liouvillian $\widetilde{L}$ as given in Definition~\ref{def:discretized_liouvillian}.
    Let $\ket{\psi_0 (\bx, \bp)}$ be an initial Koopman-von Neumann wave function and assume having access to a quantum state $\ket{\widetilde{\psi}_0 \lb \bx, \bp \rb} \in \mathbb{C}^{2^{2Nn}}$ encoding the discretized initial KvN wave function such that for all $0 \leq t \leq t^*$ with $t^*$ being the equilibration time from Lemma~\ref{lem:multivariate_time} it holds that
    \begin{align}
        \norm{\ev{\bxt}_t - \ev{\bx}_t} &\leq \sqrt{\frac{\epsilon}{18\lambda_{\max}}} \\
        \left| \sum_{j=1}^N \widetilde{\sigma}_{j}^2 (t) - \sum_{j=1}^N \sigma_{j}^2(t) \right| &\leq \frac{\epsilon}{24\lambda_{\max}}.
    \end{align}
    Further, assume having access to the partial derivatives of $f$ via the phase oracles $O_{f',k}^{(p)}$ given in Definition~\ref{def:phase_oracle_derivatives}. 
    Then there exists a quantum algorithm that can solve the Continuous Quantum Optimization Problem by finding an $\bx'$ such that $\left| f\lb \bx' \rb - f\lb \bx^* \rb \right| \leq \epsilon$ with probability at least $1 - \delta$ using 
    \begin{equation}
        \widetilde{O} \lb \frac{N^{3/2}}{\epsilon} \frac{\lambda_{\max}^2}{\lambda_{\min}^2} \alpha_{A_L} x_{\max} \norm{D_p}  \widetilde{\chi}_0 \log \lb 1/\delta \rb \rb.
    \end{equation}
    queries to $O_{f',k}^{(p)}$ and controlled-$O_{f',k}^{(p)}$ where $\alpha_{A_L}$ is given in Definition~\ref{def:discretized_liouvillian} and $\widetilde{\chi}_0$ is given in Definition~\ref{def:initial_conditions_liouvillian}.
\label{thm:optimization_liouvillian}
\end{thm}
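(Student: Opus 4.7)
The plan is to combine the analytic equilibration bound of Lemma~\ref{lem:multivariate_time}, refined in Corollary~\ref{cor:equilibration_time_liouvillian}, with the simulation cost of Lemma~\ref{lem:liouvillian_friction}, transferring the guarantees from the continuum dynamics to the discrete simulated dynamics via the two promised discretization bounds, and finally amplifying success probability by independent repetition.

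First I would choose the friction coefficient so that $\gamma := \beta/(2m) = \Theta \lb \sqrt{\lambda_{\min}} \rb$; this is possible because we know $\lambda_{\min}$ within a constant factor, and it places every normal mode in the critically damped or underdamped regime so that Lemma~\ref{lem:multivariate_time} applies. By Corollary~\ref{cor:equilibration_time_liouvillian}, after evolving the continuum dynamics for time
\begin{equation}
   t^* \in O \lb \frac{1}{\sqrt{\lambda_{\min}}} \log \lb \frac{\lambda_{\max}}{\lambda_{\min}} \frac{\widetilde{\chi}_0}{\epsilon} \rb \rb,
\end{equation}
a position sample $\bx'(t^*) \sim R(\bx,\bp,t^*)$ obeys $|f(\bx'(t^*)) - f(\bx^*)| \leq \epsilon$ with probability at least $2/3$. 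The argument in the proof of Lemma~\ref{lem:multivariate_time} relies on two ingredients: (i) $\norm{\ev{\bx}_{t^*} - \bx^*} \leq \tfrac{1}{2}\sqrt{2\epsilon/\lambda_{\max}}$ and (ii) a Chebyshev bound $\sum_j \sigma_j^2(t^*) \leq O(\epsilon/\lambda_{\max})$.

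Next I would promote these continuum guarantees to the discrete dynamics generated by $\Lt$ using the two promised bounds on $\ket{\widetilde{\psi}_0}$. By the triangle inequality $\norm{\ev{\bxt}_{t^*} - \bx^*} \leq \norm{\ev{\bxt}_{t^*} - \ev{\bx}_{t^*}} + \norm{\ev{\bx}_{t^*} - \bx^*}$, so the hypothesis $\norm{\ev{\bxt}_{t^*} - \ev{\bx}_{t^*}} \leq \sqrt{\epsilon/(18\lambda_{\max})}$ at most inflates the admissible $\epsilon'$ in that proof by a constant factor; the analogous hypothesis on $\sum_j \widetilde{\sigma}_j^2$ preserves the Chebyshev bound in the discrete setting. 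Consequently, a computational-basis measurement of the state $\widetilde{U}_L(t^*) \ket{\widetilde{\psi}_0}$ returns a position vector $\bxt'$ with $|f(\bxt') - f(\bx^*)| \leq \epsilon$ with probability at least, say, $2/3$. Choosing the simulation error in Lemma~\ref{lem:liouvillian_friction} to be a sufficiently small constant (say $1/12$) contributes at most a constant to the total variation distance of the output distribution and so does not spoil this success probability.

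Finally I would plug $t = t^*$ into Lemma~\ref{lem:liouvillian_friction}. Two substitutions drive the final cost: since $\beta t^*/m = 2 \gamma t^* = O \lb \log \lb \lambda_{\max} \widetilde{\chi}_0 / (\lambda_{\min} \epsilon) \rb \rb$, the exponential $e^{\beta t^*/m}$ is polynomial in $\lambda_{\max}\widetilde{\chi}_0 / (\lambda_{\min} \epsilon)$, specifically of order $\lambda_{\max} \widetilde{\chi}_0 / (\lambda_{\min}\epsilon)$; and for the quadratic objective the partial derivatives are bounded by $f'_{\max} = \max_{\bx,k} |\partial f/\partial x_k| \leq 2\norm{A e_k} \norm{\bx - \bx^*} \leq O \lb \lambda_{\max} \sqrt{N} x_{\max} \rb$. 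Combining with $m/\beta = O(1/\sqrt{\lambda_{\min}})$, $t^* = O(1/\sqrt{\lambda_{\min}})$ up to logarithms, and the $N$ factor already present in Lemma~\ref{lem:liouvillian_friction}, one segment yields
\begin{equation}
    \widetilde{O} \lb \frac{N^{3/2}}{\epsilon} \frac{\lambda_{\max}^2}{\lambda_{\min}^2} \alpha_{A_L} x_{\max} \norm{D_p} \widetilde{\chi}_0 \rb
\end{equation}
queries to $O_{f',k}^{(p)}$ and controlled-$O_{f',k}^{(p)}$. Running $O(\log(1/\delta))$ independent copies, measuring each, and returning the sample minimizing $f$ (which, since $f$ is evaluated at a point obtained from discrete bits, requires no additional oracle calls beyond the classical evaluation of the quadratic) amplifies the per-run success probability from $2/3$ to $1 - \delta$, introducing the final $\log(1/\delta)$ factor.

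The main obstacle is the bookkeeping of the exponential factor $e^{\beta t^*/m}$: because $t^*$ itself depends logarithmically on $1/\epsilon$ and on $\widetilde{\chi}_0$, the choice $\gamma = \Theta(\sqrt{\lambda_{\min}})$ must be made so that the exponential collapses into the advertised polynomial in $\lambda_{\max}/\lambda_{\min}$ and $\widetilde{\chi}_0/\epsilon$; an overly aggressive or overly weak friction breaks this cancellation. A secondary subtlety is ensuring the simulation error in Lemma~\ref{lem:liouvillian_friction} and the discretization error promised in the hypothesis compose additively without disturbing the Chebyshev bound, which is handled by keeping each individually below a small constant fraction of the admissible $\epsilon' = \sqrt{\epsilon/(2\lambda_{\max})}$ used in Lemma~\ref{lem:multivariate_time}.
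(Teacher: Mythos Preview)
Your proposal is correct and follows essentially the same approach as the paper: choose $\gamma = \Theta(\sqrt{\lambda_{\min}})$, invoke Corollary~\ref{cor:equilibration_time_liouvillian} for $t^*$, transfer the continuum mean/variance guarantees to the discrete dynamics via the two promised discretization bounds, plug into Lemma~\ref{lem:liouvillian_friction} using $e^{\beta t^*/m}\in\widetilde{O}(\lambda_{\max}\widetilde{\chi}_0/(\lambda_{\min}\epsilon))$ and $f'_{\max}\in O(\sqrt{N}\lambda_{\max}x_{\max})$, and amplify by $O(\log(1/\delta))$ repetitions. The one minor deviation is that you control the simulation error via total-variation distance with a constant $\widetilde{\epsilon}$, whereas the paper instead bounds the induced perturbations of $\ev{\bxt}_t$ and $\sum_j\widetilde{\sigma}_j^2(t)$, forcing $\widetilde{\epsilon}\le O(\epsilon/(N\lambda_{\max}x_{\max}^2))$; since Lemma~\ref{lem:liouvillian_friction} depends on $\widetilde{\epsilon}$ only through $\log^3(1/\widetilde{\epsilon})$, both choices yield the same $\widetilde{O}$ bound.
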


\begin{proof}
    Lemma~\ref{lem:multivariate_time} provides an upper bound on the equilibration time $t^*$ of damped coupled classical harmonic oscillators in continuous space whose potential is given by $f(\bx)$ and whose equilibrium configuration corresponds to the minimum $\bx^*$ of $f(\bx)$. The main idea is to simulate the dynamics of such damped coupled harmonic oscillators, which are governed by $U_L(t) = \mathcal{T} e^{-i \int_0^t L_{\mathrm{fric}}(s) \mathrm{d}s}$, on a quantum computer and then use the fact that the probability distribution after time $t^*$ is strongly localized around the minimum at $\bx^*$ such that we only need to draw a small number of samples in order to determine $f(\bx^*)$ within error $\epsilon$.

    In order to simulate the dynamics of such a continuous system on a quantum computer, we need to discretize it. Let us consider the discrete Liouvillian $\widetilde{L}(t)$ as given in Definition~\ref{def:discretized_liouvillian} and let $\widetilde{U}_L(t) = \mathcal{T} e^{-i \int_0^t \Lt(s) \mathrm{d}s}$ denote the corresponding time evolution operator. Lemma~\ref{lem:liouvillian_friction} shows how to implement an $\widetilde{\epsilon}$-precise approximation $W(t)$ to $\widetilde{U}_L(t)$ using a total number of queries to all $O_{f', k}^{(p)}$ and controlled-$O_{f',k}^{(p)}$ that scales as
    \begin{equation}
        \widetilde{O} \lb \alpha_{A_L} N \frac{m}{\beta} e^{\beta t/m} f'_{\max} \norm{D_p} t \log^3 \lb \frac{1}{\widetilde{\epsilon}} \rb \rb.
    \end{equation}
    
    Now, let us consider the expectation value of the position operator in the continuum and in the discrete setting according to Definition~\ref{def:expectation_values_liouvillian} and let
    \begin{equation}
        \ev{\widetilde{\xt}_j}_t := \bra{\widetilde{\psi}_0} W^\dagger(t) \hat{\xt}_j W(t)\ket{\widetilde{\psi}_0}
    \end{equation}
    denote the time evolved expectation value of the discretized position operator for the $j$-th variable w.r.t.~the approximate discretized evolution operator $W(t)$.
    By the triangle inequality, we then have that 
    \begin{equation}
        \norm{\ev{\widetilde{\bxt}}_{t^*} - \bx^*} \leq \underbrace{\norm{ \ev{\widetilde{\bxt}}_{t^*} - \ev{\bxt}_{t^*}}}_{=: \epsilon_{\mathrm{sim}}} + \underbrace{\norm{\ev{\bxt}_{t^*} - \ev{\bx}_{t^*}}}_{=: \epsilon_{\mathrm{dis}}} + \underbrace{\norm{\ev{\bx}_{t^*} - \bx^*}}_{=: \epsilon_{\mathrm{eq}}}.
    \end{equation}
    Let $\epsilon_x$ be an error tolerance to be bounded later.
    In order for the above error to be at most $\epsilon_x$ such that $\norm{\ev{\widetilde{\bxt}}_{t^*} - \bx^*} \leq \epsilon_x$, it suffices to ensure that
    \begin{align}
        \epsilon_{\mathrm{sim}} &\leq \epsilon_x/3 \\
        \epsilon_{\mathrm{dis}} &\leq \epsilon_x/3 \\
        \epsilon_{\mathrm{eq}} &\leq \epsilon_x/3.
    \end{align}
    We also need to ensure that the concentration bounds used in Lemma~\ref{lem:multivariate_time} are still valid in the discrete setting since otherwise we might have to draw a lot of samples to obtain a good estimate of $\bx^*$.
    Let $\widetilde{\bxt}'_t \sim |\widetilde{\widetilde{\psi}}_t(\bx)|^2$ be a sample position vector from the time-evolved discrete phase space density associated with $\ket{\widetilde{\widetilde{\psi}}_t} := W(t) \ket{\widetilde{\psi}_0}$. According to the multivariate Chebyshev inequality the probability that $\widetilde{\bxt}'_t$ is far from the mean vector $\ev{\widetilde{\bxt}}_{t}$ is upper bounded as follows:
    \begin{equation}
         P\lb \norm{\widetilde{\bxt}'_t - \ev{\widetilde{\bxt}}_{t}} \geq \epsilon_x \rb \leq \frac{\sum_j \widetilde{\widetilde{\sigma}}_{j}^2 (t)}{\epsilon_x^2},
    \end{equation}
    where
    \begin{equation}
        \widetilde{\widetilde{\sigma}}_{j}^2 (t) := \ev{\widetilde{\widetilde{x}}_j^2}_t - \ev{\widetilde{\widetilde{x}}_j}^2_t
    \end{equation}
    denotes the time-evolved variance of the $j$-th position variable w.r.t.~the approximate discretized evolution operator $W(t)$.
    By the triangle inequality we then have that
    \begin{equation}
        \left| \sum_j \widetilde{\widetilde{\sigma}}_{j}^2 (t) - \sum_j \sigma_{j}^2(t) \right| \leq \underbrace{\left| \sum_j \widetilde{\widetilde{\sigma}}_{j}^2 (t) - \sum_j \widetilde{\sigma}_{j}^2(t) \right|}_{=: \epsilon'_{\mathrm{sim}}} + \underbrace{\left| \sum_j \widetilde{\sigma}_{j}^2 (t) - \sum_j \sigma_{j}^2(t) \right|}_{=: \epsilon'_{\mathrm{dis}}}.
    \end{equation}
    Now, as long as
    \begin{align}
        \sum_j \sigma_j^2(t) &\leq \frac{\epsilon_x^2}{6}, \\
        \epsilon'_{\mathrm{sim}} &\leq \frac{\epsilon_x^2}{12}, \\
        \epsilon'_{\mathrm{dis}} &\leq \frac{\epsilon_x^2}{12},
    \end{align}
    it holds that
    \begin{equation}
    \begin{split}
        P\lb \norm{\widetilde{\bxt}'_t - \ev{\widetilde{\bxt}}_{t}} \geq \epsilon_x \rb &\leq \frac{\sum_j \widetilde{\widetilde{\sigma}}_{j}^2 (t)}{\epsilon_x^2} \\
        &\leq \frac{1}{\epsilon_x^2} \lb \sum_j \sigma_j^2(t) + \epsilon'_{\mathrm{sim}} + \epsilon'_{\mathrm{dis}} \rb \\
        &\leq \frac{1}{3}.
    \end{split}
    \end{equation}
    This means that with probability at least $2/3$ we have that
    \begin{equation}
        \norm{\widetilde{\bxt}'_t - \bx^*} \leq \norm{\widetilde{\bxt}'_t - \ev{\widetilde{\bxt}}_{t}} + \norm{\ev{\widetilde{\bxt}}_{t} - \bx^*} \leq 2\epsilon_x.
    \end{equation}
    Drawing $O \lb \log \lb 1/\delta \rb \rb$ samples and taking the sample that leads to the smallest value of $f$ allows us to boost the success probability to at least $1 - \delta$. Specifically, the probability that not a single sample out of $v$ many samples is $\epsilon_x$-close to $\ev{\widetilde{\bxt}}_{t}$ is at most $\lb 1/3 \rb^v$. Thus, $1/{3^v} \leq \delta$ if $v \geq \log \lb 1/\delta \rb/\log \lb 3\rb$.

    By the Cauchy-Schwarz inequality we then have that
    \begin{equation}
    \begin{split}
        \left| f \lb \widetilde{\bxt}'_{t} \rb - f(\bx^*) \right| &= \left| \frac{1}{2}\lb \widetilde{\bxt}'_{t} - \bx^* \rb^\top A \lb \widetilde{\bxt}'_{t} - \bx^* \rb \right| \\
        &\leq \frac{1}{2} \norm{\widetilde{\bxt}'_{t} - \bx^*} \norm{A \lb \widetilde{\bxt}'_{t} - \bx^* \rb} \\
        &\leq \frac{1}{2} \norm{A} \norm{\widetilde{\bxt}'_{t} - \bx^*}^2 \\ 
        &\leq 2 \norm{A} {\epsilon_x}^2 = 2 \lambda_{\max} {\epsilon_x}^2.
    \end{split}
    \label{function_error_bound_liouville}
    \end{equation}
    Hence, in order to ensure that $\left| f \lb \widetilde{\bxt}'_{t} \rb - f(\bx^*) \right| \leq \epsilon$, it suffices to choose $\epsilon_x = \sqrt{\frac{\epsilon}{2\lambda_{\max}}}$.

    Let us now discuss how to achieve the various error bounds.
    First, according to Corollary~\ref{cor:equilibration_time_liouvillian}, we require
    \begin{equation}
        t^* \in  O \lb \frac{1}{\sqrt{\lambda_{\min}}} \log \lb \frac{\lambda_{\max}}{\lambda_{\min}} \frac{ \widetilde{\chi}_0}{\epsilon_x} \rb \rb \subseteq O \lb \frac{1}{\sqrt{\lambda_{\min}}} \log \lb \frac{\lambda_{\max}}{\lambda_{\min}} \frac{ \widetilde{\chi}_0}{\epsilon} \rb \rb,
    \end{equation}
    in order for $\epsilon_{\mathrm{eq}} \leq \epsilon_x/3$ and $\sum_j \sigma_j^2(t) \leq \epsilon_x^2/6$. Note that in the above bound we picked the friction coefficient $\beta$ such that $\beta \in \Theta \lb \sqrt{\lambda_{\min}} \rb$. 

    The conditions on the discretization errors, $\epsilon_{\mathrm{dis}} \leq \epsilon_x/3 \leq \sqrt{\frac{\epsilon}{18\lambda_{\max}}}$ and $\epsilon_{\mathrm{dis}}' \leq \epsilon_x^2/12 \leq \frac{\epsilon}{24\lambda_{\max}}$ are true by assumption. Implicitly, this requires us to choose sufficiently small grid spacings $h_x$ and $h_p$ for the finite difference approximations of the discretized partial derivatives $\partial_{x,j}$ and $\partial_{p,j}$.
   
    Next, let us discuss how to ensure that $\epsilon_{\mathrm{sim}} \leq \epsilon_x/3$ and $\epsilon'_{\mathrm{sim}} \leq \epsilon_x^2/12$. Let $\ket{\widetilde{\psi}_t} := \widetilde{U}_L(t)\ket{\widetilde{\psi}_0}$ denote the time evolved quantum state w.r.t.~the exact discretized evolution operator and let $\ket{\widetilde{\widetilde{\psi}}_t} := W(t)\ket{\widetilde{\psi}_0}$ denote the time evolved quantum state w.r.t.~the $\widetilde{\epsilon}$-precise approximate discretized evolution operator. Then we have that
    \begin{equation}
        \norm{\ket{\widetilde{\widetilde{\psi}}_{t}} - \ket{\widetilde{\psi}_t}} \leq \widetilde{\epsilon}.
    \end{equation}
    This implies that
    \begin{equation}
    \begin{split}
        \norm{\ev{\widetilde{\bxt}}_{t} - \ev{\bxt}_{t}}^2 &= \sum_{j=1}^N \left| \bra{\widetilde{\widetilde{\psi}}_t}\hat{\xt}_j\ket{\widetilde{\widetilde{\psi}}_t} - \bra{\widetilde{\psi}_t}\hat{\xt}_j\ket{\widetilde{\psi}_t} \right|^2 \\
        &\leq \sum_{j=1}^N \left| \bra{\widetilde{\widetilde{\psi}}_t}\hat{\xt}_j\ket{\widetilde{\widetilde{\psi}}_t} - \bra{\widetilde{\widetilde{\psi}}_t}\hat{\xt}_j\ket{\widetilde{\psi}_t} \right|^2 + \sum_{j=1}^N \left| \bra{\widetilde{\widetilde{\psi}}_t}\hat{\xt}_j\ket{\widetilde{\psi}_t} - \bra{\widetilde{\psi}_t}\hat{\xt}_j\ket{\widetilde{\psi}_t} \right|^2 \\
        &\leq 2N x_{\max} \norm{\ket{\widetilde{\widetilde{\psi}}_{t}} - \ket{\widetilde{\psi}_t}}^2 \\
        &\leq 2N x_{\max} \widetilde{\epsilon}^2,
    \end{split}
    \end{equation}
    where we used the Cauchy-Schwarz inequality in going from the second to the third line. 
    Therefore, in order for $\epsilon_{\mathrm{sim}}$ to be at most $\epsilon_x/3$, it suffices to ensure that
    \begin{equation}
        \widetilde{\epsilon} \leq \frac{\epsilon_x/3}{\sqrt{2N x_{\max}}} \leq \frac{1}{6}\sqrt{\frac{\epsilon}{N x_{\max} \lambda_{\max}}}.
    \end{equation}
    Furthermore,
    \begin{equation}
    \begin{split}
        \left| \sum_j \lb \ev{\widetilde{\xt}_j^2}_t - \ev{\xt_j^2}_t \rb \right| &= \left| \sum_j \lb \bra{\widetilde{\widetilde{\psi}}_t}\hat{\xt}^2_j\ket{\widetilde{\widetilde{\psi}}_t} - \bra{\widetilde{\psi}_t}\hat{\xt}^2_j\ket{\widetilde{\psi}_t} \rb \right| \\
        &\leq  \left| \sum_j \lb \bra{\widetilde{\widetilde{\psi}}_t}\hat{\xt}^2_j\ket{\widetilde{\widetilde{\psi}}_t} - \bra{\widetilde{\widetilde{\psi}}_t}\hat{\xt}^2_j\ket{\widetilde{\psi}_t} \rb \right| + \left| \sum_j \lb \bra{\widetilde{\widetilde{\psi}}_t}\hat{\xt}^2_j\ket{\widetilde{\psi}_t} - \bra{\widetilde{\psi}_t}\hat{\xt}^2_j\ket{\widetilde{\psi}_t} \rb \right| \\
        &\leq \left|\bra{\widetilde{\widetilde{\psi}}_t} \sum_j \hat{\xt}^2_j\lb \ket{\widetilde{\widetilde{\psi}}_t} - \ket{\widetilde{\psi}_t} \rb \right| + \left|\lb \bra{\widetilde{\widetilde{\psi}}_t} - \bra{\widetilde{\psi}_t} \rb \sum_j \hat{\xt}^2_j \ket{\widetilde{\psi}_t} \right| \\
        &\leq 2 N x_{\max}^2 \norm{\ket{\widetilde{\widetilde{\psi}}_t} - \ket{\widetilde{\psi}_t}} \\
        &\leq 2 N x_{\max}^2 \widetilde{\epsilon}.
    \end{split}
    \end{equation}
    Similarly,
    \begin{equation}
    \begin{split}
        \left| \sum_j \lb \ev{\widetilde{\xt}_j}^2_t - \ev{\xt_j}^2_t \rb \right| &= \left| \sum_j \lb \ev{\widetilde{\xt}_j}_t - \ev{\xt_j}_t \rb \lb \ev{\widetilde{\xt}_j}_t + \ev{\xt_j}_t \rb  \right| \\
        &\leq 2 x_{\max} \left| \sum_j \lb \ev{\widetilde{\xt}_j}_t - \ev{\xt_j}_t \rb \right| \\
        &\leq 4 N x_{\max}^2 \widetilde{\epsilon}.
    \end{split}
    \end{equation}
    Therefore, 
    \begin{equation}
    \begin{split}
        \left| \sum_j \widetilde{\widetilde{\sigma}}_{j}^2 (t) - \sum_j \widetilde{\sigma}_{j}^2(t) \right| &= \left| \sum_j \lb \ev{\widetilde{\xt}_j^2}_t - \ev{\widetilde{\xt}_j}^2_t \rb - \sum_j \lb \ev{\xt_j^2}_t - \ev{\xt_j}^2_t \rb \right| \\
        &\leq \left| \sum_j \lb \ev{\widetilde{\xt}_j^2}_t - \ev{\xt_j^2}_t \rb \right| + \left| \sum_j \lb \ev{\widetilde{\xt}_j}^2_t - \ev{\xt_j}^2_t \rb \right| \\
        &\leq 6 N x_{\max}^2 \widetilde{\epsilon}.
    \end{split}
    \end{equation}
    This shows that in order for $\epsilon_{\mathrm{sim}'}$ to be at most $\epsilon_x^2/12$
    it suffices to ensure that
    \begin{equation}
        \widetilde{\epsilon} \leq \frac{\epsilon_x^2/12}{6Nx_{\max}^2} \leq \frac{\epsilon}{144 \lambda_{\max} N x_{\max}^2}.
    \end{equation}

    Putting everything together, we thus require a total of
    \begin{equation}
        \widetilde{O} \lb \alpha_{A_L} N \frac{m}{\beta} e^{\beta t^*/m} f'_{\max} \norm{D_p} t^* \log^3 \lb \frac{1}{\widetilde{\epsilon}} \rb \log \lb \frac{1}{\delta} \rb \rb 
    \label{queries_phase_oracles}
    \end{equation}
    queries to the phase oracles $O_{f',k}^{(p)}$ and their controlled versions in order to find an $\bx'$ such that $\left| f\lb \bx' \rb - f\lb \bx^* \rb \right| \leq \epsilon$ with probability at least $1 - \delta$. The above expression can be simplified by using the following identity and upper bound on the $-1$ branch of the Lambert $W$ function:
    \begin{equation}
        e^{-W_{-1}(y)} = \frac{W_{-1}(y)}{y} \in O \lb \frac{\log(-y)}{y} \rb.
    \end{equation}
    Specifically, we have that
    \begin{equation}
        e^{\beta t^*/m} \in \widetilde{O} \lb \frac{\lambda_{\max} \widetilde{\chi}_0}{ \lambda_{\min} \epsilon} \rb.
    \end{equation}
    Additionally, note that the upper bound on the partial derivatives of $f$ obeys
    \begin{equation}
        f'_{\max} \in O \lb \sqrt{N} \lambda_{\max} x_{\max} \rb,
    \end{equation}
    which follows from the following matrix norm inequality: $\norm{A}_1 := \max_k \sum_{l = 1}^N |A_{kl}| \leq \sqrt{N} \norm{A}_2 = \sqrt{N} \lambda_{\max}$.
    Setting $m=1$ and recalling that $\beta \in \Theta \lb \sqrt{\lambda_{\min}} \rb$ and $t^* \in \widetilde{O} \lb 1/\sqrt{\lambda_{\min}} \rb$ then allows us to upper bound the query complexity in Eq.~\eqref{queries_phase_oracles} as follows:
    \begin{equation}
        \widetilde{O} \lb \frac{N^{3/2}}{\epsilon} \frac{\lambda_{\max}^2}{\lambda_{\min}^2} \alpha_{A_L} x_{\max} \norm{D_p}  \widetilde{\chi}_0 \log \lb 1/\delta \rb \rb.
    \end{equation}
    This completes the proof.
\end{proof}

\subsection{Quantum Hamiltonian Approach}
\label{sec:quantum_ham_opt}

Let us now discuss how simulating the dynamics governed by the quantum friction Hamiltonian given in Definition~\ref{def:friction_ham} allows us to efficiently find the optimum of a convex quadratic function.
Before presenting our results, let us briefly fix some notation. Given an initial wave function $\psi_0(x)$ in real space, we write $\ev{B}_t$ to denote the expectation value of some operator $B$ w.r.t.~the time-evolved wave function $\psi_t(x)$ after time $t$. In particular, $\ev{B}_0 = \bra{\psi_0}B\ket{\psi_0}$.

The first result below provides a bound on the $\epsilon$-equilibration time of a single critically or underdamped quantum harmonic oscillator.

\begin{lem}[Equilibration time of a damped quantum harmonic oscillator]
    Consider the following time-dependent quantum Hamiltonian in continuous space:
    \begin{equation}
        H(t) := e^{- \beta t/m} \frac{p^2}{2m} + e^{\beta t/m} \frac{a}{2}(x-x^*)^2 + c,
    \end{equation}
    where $a, m > 0$ and $x^*, c \in \mathbb{R}$ are constants and assume that $\beta^2 \leq 4ma$. Given an initial wave function $\psi_0(x)$ the $\epsilon$-equilibration time $t^*$ can be upper bounded as follows:
    \begin{equation}
        t^* \leq \max \left\{ t^{\ev{x}}_1, t^{\ev{x}}_2,  t^{\sigma}_1, t^{\sigma}_2, t^{\sigma}_3 \right\},
    \end{equation}
    where 
    \begin{align}
        t^{\ev{x}}_1 &:= \frac{1}{\gamma} \log \lb \sqrt{\frac{8a}{\epsilon}} |\ev{x}_0 - x^*| \rb \\
        t^{\ev{x}}_2 &:= -\frac{1}{\gamma} \widetilde{W} \lb - \sqrt{\frac{\epsilon}{8a}} \frac{\gamma}{|\ev{r}_0|} \rb \\
        t^{\sigma}_1 &:= \frac{1}{2 \gamma} \log \lb \frac{18 a \sigma^2_x}{\epsilon} \rb \\
        t^{\sigma}_2 &:= - \frac{1}{2 \gamma} \widetilde{W} \lb - \frac{\gamma \epsilon}{18 a | \mathrm{cov}_0 \lb x, r \rb|} \rb \\
        t^{\sigma}_3 &:= - \frac{1}{\gamma} \widetilde{W} \lb - \gamma \sqrt{\frac{\epsilon}{18 a \sigma^2_{r}}} \rb,
    \end{align}
    with $\gamma := \frac{\beta}{2m}$, $r := \frac{p}{m} + \gamma (x - x^*)$, $\sigma_x^2 := \ev{x^2}_0 - \ev{x}_0^2$, $\widetilde{\mathrm{cov}} \lb x, r \rb_0 := \frac{1}{2} \lb \ev{x  r}_0 + \ev{r x}_0 - 2 \ev{x}_0\ev{r}_0 \rb$ and $\sigma^2_{r} := \ev{r^2}_0 - \ev{r}_0^2$.
\label{lem:damped_QHO}
\end{lem}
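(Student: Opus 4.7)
The plan is to port the two classical proofs (Lemmas~\ref{lem:critical_time_bound} and~\ref{lem:underdamped_time_bound}) to the quantum setting by working in the Heisenberg picture. The crucial observation is that the Heisenberg equations of motion for $x$ and $p$ under $H(t)$ are linear in $x,p$ and therefore coincide formally with the classical Hamilton equations. Concretely, I would compute $\dot{x}_H = i[H(t), x_H] = e^{-\beta t/m} p_H/m$ and $\dot{p}_H = -i[H(t), p_H]$ times $a\,e^{\beta t/m}(x_H - x^*)$, which combine to give $m \ddot{x}_H = -a(x_H - x^*) - \beta \dot{x}_H$, i.e.\ exactly the classical damped-oscillator ODE. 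Linearity implies a closed-form operator solution of the shape
\begin{equation}
x_H(t) - x^* = A(t)\,(x_H(0) - x^*) + B(t)\, p_H(0),
\end{equation}
where $A(t), B(t)$ are the same scalar coefficients appearing in the classical trajectory (for the underdamped branch, $A(t) = e^{-\gamma t}(\cos\omega t + \gamma \sin\omega t/\omega)$, $B(t) = e^{-\gamma t}\sin(\omega t)/(m\omega)$; the critical case is the $\omega\to 0$ limit).

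Next I would take expectations in $\ket{\psi_0}$. Because $A,B$ are scalars, $\ev{x}_t - x^* = A(t)(\ev{x}_0 - x^*) + B(t) \ev{p}_0$, which yields precisely the bound in Eq.~\eqref{underdamped_upper_bound} after rewriting in terms of $r = p/m + \gamma(x-x^*)$. Hence the two conditions producing $t^{\ev{x}}_1$ and $t^{\ev{x}}_2$ carry over verbatim. For the variance, I would expand $\ev{x_H(t)^2} - \ev{x_H(t)}^2$ by squaring the Heisenberg solution; the only subtlety is the cross term $A(t)B(t)\,\ev{x_H(0) p_H(0) + p_H(0) x_H(0)}$, which forces the symmetrized covariance $\widetilde{\mathrm{cov}}_0(x,p) := \tfrac{1}{2}(\ev{xp}_0 + \ev{px}_0) - \ev{x}_0\ev{p}_0$ to appear (rather than the asymmetric classical version). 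The same re-parameterization in terms of $r$ then reproduces the classical variance bound in Eq.~\eqref{var_bound_underdamped} with $\widetilde{\mathrm{cov}}_0(x,r)$ in place of $\mathrm{cov}_0(x,r)$, and gives $\sigma^2(t) \le e^{-2\gamma t}\,\sigma_x^2 + 2t e^{-2\gamma t}|\widetilde{\mathrm{cov}}_0(x,r)| + t^2 e^{-2\gamma t}\sigma_r^2$.

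With this bound in hand, the rest is identical to the classical proof. Chebyshev's inequality applies to Born-rule position measurements just as it did to classical samples, so demanding $\sigma^2(t) \le \epsilon'^2/3$ yields the three conditions giving $t^{\sigma}_1, t^{\sigma}_2, t^{\sigma}_3$. The conversion from position error $\epsilon'$ to objective-function error via $|f(x') - f(x^*)| \le \tfrac{a}{2}|x'-x^*|^2 \le 2a\,\epsilon'^2$ forces $\epsilon' = \sqrt{\epsilon/(2a)}$, and substituting this into the five conditions and taking the maximum delivers the stated $t^*$.

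The main obstacle is bookkeeping around non-commutativity of $x$ and $p$: one must check that the apparent cross term in $\ev{x_H(t)^2}$ lands on the Hermitian combination $\tfrac{1}{2}\{x,p\}$ (so that $\widetilde{\mathrm{cov}}$ is real) and that no additional commutator terms sneak into either $\ev{x}_t$ or $\sigma^2(t)$ when expressed through $r$. Once this is verified, the remaining steps are a direct mirror of the classical argument, with no new analytic work required.
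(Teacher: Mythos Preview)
Your proposal is correct and follows essentially the same route as the paper. The paper uses Ehrenfest's theorem to derive ODEs for $\ev{x}_t$, $\ev{p}_t$, $\ev{x^2}_t$, $\ev{p^2}_t$, and $\ev{\{x,p\}}_t$, observes that these coincide with the classical moment equations (up to the symmetrized covariance), and then imports the classical solutions from Lemmas~\ref{lem:critical_time_bound} and~\ref{lem:underdamped_time_bound}. Your Heisenberg-picture organization is a slight streamlining: by solving the linear operator ODE once and then squaring, you obtain the second moments directly by algebra, whereas the paper first writes down the coupled second-order ODEs for $\ev{x^2}_t$ and $\ev{p^2}_t$ and then appeals to the classical solution found indirectly in Section~\ref{sec:liouvillian_approach}. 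Both routes land on the same expressions, and both correctly identify that non-commutativity forces $\widetilde{\mathrm{cov}}_0(x,r)$ in place of the classical $\mathrm{cov}_0(x,r)$; the remaining Chebyshev and $\epsilon' = \sqrt{\epsilon/(2a)}$ steps are identical.
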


\begin{proof}
    The main idea is to show that the equations of motion for $\ev{x}_t$ and $\sigma^2_x(t) = \ev{x^2}_t - \ev{x}^2_t$ are essentially equivalent to the equations of motion of a damped classical harmonic oscillator. This then allows us to use the results on the equilibration time from Lemma~\ref{lem:underdamped_time_bound}.
    In order to simplify the notation, let us consider a coordinate system where all positions are shifted by $x^*$, i.e. $\xt := x - x^*$, such that the minimum is at $\xt_{\min} = 0$. Note that the momentum operator $p$ remains unaffected since $\partial_x = \partial_{\xt}$.
    The Hamiltonian in this shifted coordinate system thus reads
    \begin{equation}
        H(t) = e^{- \beta t/m} \frac{p^2}{2m} + e^{\beta t/m} \frac{a}{2} \xt^2 + c.
    \end{equation}
    Now, recall that for any observable $B$ it holds that
    \begin{equation}
        \frac{d \ev{B}}{dt} = -i \ev{[B,H]} + \ev{\frac{\partial B}{\partial t}}.
    \end{equation}
    Therefore,
    \begin{equation}
        \frac{d \ev{\xt}}{dt} = -i \ev{[\xt,H]} = -i \ev{\left[ \xt, \frac{ e^{-\beta t/m} p^2}{2m} \right]} = - \frac{i}{2m}  e^{-\beta t/m} \ev{\left[\xt, p^2\right]}
        = e^{-\beta t/m} \frac{\ev{p}}{m},
        \label{dxdt}
    \end{equation}
    where we used the fact that $\ev{\left[\xt, p^2 \right]} = 2i \ev{p}$.
    Additionally,
    \begin{equation}
        \frac{d \ev{p}}{dt} = -i \ev{[p,H]} = -i \ev{\left[ p, e^{\beta t/m} \frac{a}{2} \xt^2 \right]} = - a e^{\beta t/m} \ev{\xt}.
    \end{equation}
    This implies that
    \begin{equation}
         \frac{d^2 \ev{\xt}}{dt^2} = \frac{d}{dt} \lb e^{-\beta t/m} \frac{\ev{p}}{m} \rb = 
         -\frac{\beta}{m} \frac{d \ev{\xt}}{dt} - \frac{a}{m} \ev{\xt},
    \label{eom_ev_xt}
    \end{equation}
    which is exactly the same equation of motion as that of a damped harmonic oscillator. The initial conditions can be determined from Eq.~\eqref{dxdt} and Eq.~\eqref{eom_ev_xt} and are identical to the classical case. Hence,
    \begin{equation}
        \ev{\xt}_t = e^{-\gamma t} \lb \ev{\xt}_0  \cos \lb \omega t \rb + \ev{r}_0 \frac{\sin \lb \omega t \rb}{\omega} \rb,
    \end{equation}
    where $\omega := \frac{\sqrt{4ma - \beta^2}}{2m}$. This shows that the position expectation value approaches the minimum exponentially fast.
    As in the classical case, it is not sufficient for our purposes to show that the average position is close to the minimum as we might have to obtain a lot of samples to compute the average position. Ideally, we only need a small number of high-quality samples. This can be guaranteed by picking a large enough $t$ such that the position variance at time $t$, $\sigma^2_x(t) = \ev{x^2}_t - \ev{x}_t^2$, is sufficiently small.
    In order to show that the position variance decreases exponentially with time as well, let us consider the equations of motion for $\ev{\xt^2}$ and $\ev{p^2}$.
    Direct computation reveals that
    \begin{align}
        \frac{d \ev{\xt^2}}{dt} &= -i \ev{[\xt^2,H]} = - \frac{i}{2m} e^{-\beta t/m} \ev{\left[\xt^2, p^2\right]} = \frac{1}{m} e^{-\beta t/m} \ev{\left\{ \xt, p \right\}} \\
        \frac{d \ev{p^2}}{dt} &= -i \ev{[p^2,H]} = - \frac{i a}{2} e^{\beta t/m} \ev{\left[ p^2, \xt^2 \right]} = -a e^{\beta t/m} \ev{\left\{ \xt, p \right\}}
    \end{align}
    where we used the fact that $\left[ \xt^2, p^2 \right] = 2i \left\{ \xt, p \right\}$.
    Then
    \begin{equation}
    \begin{split}
         \frac{d^2 \ev{\xt^2}}{dt^2} &= \frac{d}{dt} \lb \frac{d \ev{\xt^2}}{dt} \rb \\
         &= -\frac{\beta}{m} \frac{d \ev{\xt^2}}{dt} + \frac{1}{m} e^{-\beta t/m} \frac{d \ev{\left\{ \xt, p \right\}}}{dt} \\
         &= -\frac{\beta}{m} \frac{d \ev{\xt^2}}{dt} - \frac{i}{m} e^{-\beta t/m}  \ev{\left[\left\{ \xt, p \right\}, H \right]} \\
        \frac{d^2 \ev{\xt^2}}{dt^2} &= -\frac{\beta}{m} \frac{d \ev{\xt^2}}{dt} - \frac{2a}{m} \ev{\xt^2} + \frac{2}{m^2} e^{-2\beta t/m} \ev{p^2}
    \end{split}
    \label{d2x2dt2}
    \end{equation}
    where we used the fact that
    \begin{equation}
    \begin{split}
        \ls \lc \xt, p \rc, H \rs &= \ls \xt p, H \rs + \ls p \xt, H \rs \\
        &= \xt \ls p, H \rs + \ls \xt, H \rs p + p \ls \xt, H \rs + \ls p, H \rs \xt \\
        &= -2i a e^{\beta t/m} \xt^2 + 2i e^{-\beta t/m} \frac{p^2}{m}.
    \end{split}
    \end{equation}
    Similarly, 
    \begin{equation}
    \begin{split}
         \frac{d^2 \ev{p^2}}{dt^2} &= \frac{d}{dt} \lb \frac{d \ev{p^2}}{dt} \rb \\
         &= \frac{\beta}{m} \frac{d \ev{p^2}}{dt} - a e^{\beta t/m} \frac{d \ev{\left\{ \xt, p \right\}}}{dt} \\
         &= \frac{\beta}{m} \frac{d \ev{p^2}}{dt} + i a e^{\beta t/m} \ev{\left[\left\{ \xt, p \right\}, H \right]} \\
          \frac{d^2 \ev{p^2}}{dt^2}&= \frac{\beta}{m} \frac{d \ev{p^2}}{dt} + 2 a^2 e^{2 \beta t/m} \ev{\xt^2} - \frac{2a}{m}\ev{p^2}.p
    \end{split}
    \label{d2p2dt2}
    \end{equation}
    
    The set of coupled ODEs given by Eq.~\eqref{d2x2dt2} and Eq.~\eqref{d2p2dt2} is exactly the same as in the classical Liouvillian setting (not explicitly shown in this paper). The only difference appears in the initial conditions due to the noncommutativity of $\xt$ and $p$ in the quantum setting.
    While we weren't able to solve the coupled ODEs in the classical Liouvillian setting directly, we found an indirect approach to obtain an exact expression for $\ev{x^2}_t$ based on $\delta$-distributions, see Section~\ref{sec:liouvillian_approach} for details. The same logic can be applied to find an exact expression for $\ev{p^2}_t$. Specifically, we have that
    \begin{align}
        \ev{\xt^2}_t &= e^{-2 \gamma t} \lb \ev{\xt^2}_0 \cos^2 \lb \omega t \rb + \lb \ev{\xt r}_0 + \ev{r \xt}_0 \rb \cos \lb \omega t \rb  \frac{\sin \lb \omega t \rb}{\omega} + \ev{r^2}_0 \frac{\sin^2 \lb \omega t \rb}{\omega^2}  \rb  \\
        \ev{p^2}_t &= e^{2 \gamma t} \lb \ev{p^2}_0 \cos^2 \lb \omega t \rb - \lb \ev{p s}_0 + \ev{s p}_0 \rb \cos \lb \omega t \rb  \frac{\sin \lb \omega t \rb}{\omega} + \ev{s^2}_0 \frac{\sin^2 \lb \omega t \rb}{\omega^2}  \rb,
    \end{align}
    where
    \begin{align}
        r &= \frac{p}{m} + \gamma \xt, \\
        s &= a \xt + \gamma p.
    \end{align}
    By plugging the above solutions back into Eq.~\eqref{d2x2dt2} and Eq.~\eqref{d2p2dt2} it can be verified that they are indeed solutions to the coupled ODEs. If we treat $x$ and $p$ as classical variables such that $[x,p] = 0$, we recover exactly the solutions obtained in the Liouvillian setting. This allows us to use the bounds on the equilibration time from Lemma~\ref{lem:critical_time_bound} and Lemma~\ref{lem:underdamped_time_bound} as long as we slightly adjust the definition of $t_2^\sigma$ to account for the noncommutativity of $x$ and $p$. Specifically, $t_2^\sigma$ depends on $\widetilde{\mathrm{cov}} \lb x, r \rb_0 = \frac{1}{2} \lb \ev{x  r}_0 + \ev{r x}_0 - 2 \ev{x}_0\ev{r}_0 \rb$ in both the classical and the quantum case but only in the classical case do we have that $\widetilde{\mathrm{cov}} \lb x, r \rb_0 = \mathrm{cov} \lb x, r \rb_0 = \ev{x  r}_0 - \ev{x}_0\ev{r}_0$.

\end{proof}

Let us now move to the multivariate setting. Below we provide a bound on the $\epsilon$-equilibration time of damped coupled quantum harmonic oscillators which is essentially the same bound as for classical harmonic oscillators.

\begin{lem}[Equilibration time of damped coupled quantum harmonic oscillators]
    Consider the following time-dependent quantum Hamiltonian in continuous space:
    \begin{equation}
        H(t) := e^{- \beta t/m} \sum_{j=1}^N \frac{p_j^2}{2m} + e^{\beta t/m} \lb  \lb \bx- \bx^* \rb^\top A \lb \bx- \bx^* \rb + c \rb,
    \end{equation}
    where $A \in \mathbb{R}^{N \times N}$ is positive definite with eigenvalues $0 < \lambda_0 =: \lambda_{\min} \leq \lambda_1 \leq \dots \leq \lambda_{N-2} \leq \lambda_{N-1} =: \lambda_{\max}$ and $c \in \mathbb{R}$ is some constant. Assume that $\beta^2 \leq 4m \lambda_{\min}$. Given an initial wave function $\psi_0 \lb \bx \rb$ the $\epsilon$-equilibration time $t^*$ can be upper bounded as follows:
     \begin{equation}
        t^* \leq \max \left\{ t^{\ev{x}}_{1}, t^{\ev{x}}_{2},  t^{\sigma}_{1}, t^{\sigma}_{2}, t^{\sigma}_{3} \right\},
    \end{equation}
    where
    \begin{align}
        t^{\ev{x}}_{1} &:= \frac{1}{\gamma} \log \lb \sqrt{\frac{8\lambda_{\max}}{\epsilon}} \norm{\ev{\bx}_0 - \bx^*} \rb \\
        t^{\ev{x}}_{2} &:= - \frac{1}{\gamma} \widetilde{W} \lb - \sqrt{\frac{\epsilon}{8\lambda_{\max}}} \frac{\gamma}{ \norm{\ev{\br}_0}} \rb \\
        t^{\sigma}_{1} &:= \frac{1}{2 \gamma} \log \lb \frac{18 \lambda_{\max} \sum_{j=1}^N \sigma^2_{x,j}}{\epsilon} \rb \\
        t^{\sigma}_{2} &:= - \frac{1}{2 \gamma} \widetilde{W} \lb - \frac{\gamma \epsilon}{18 \lambda_{\max}  \left| \sum_{j=1}^N \widetilde{\mathrm{cov}}\lb x_j, r_j \rb_0 \right|} \rb \\
        t^{\sigma}_{3} &:= - \frac{1}{\gamma} \widetilde{W} \lb - \gamma \sqrt{\frac{\epsilon}{18 \lambda_{\max} \sum_{j=1}^N \sigma^2_{r,j}}} \rb,
    \end{align}
    with $\gamma := \frac{\beta}{2m}$, $\br := \frac{\bp}{m} + \gamma \lb \bx - \bx^* \rb$, $\sigma^2_{x,j} :=  \ev{x_j^2}_0 - \ev{x_j}_0^2$,  $\widetilde{\mathrm{cov}} \lb x_j, r_j \rb_0 := \frac{1}{2} \lb \ev{x  r}_0 + \ev{r x}_0 - 2 \ev{x}_0\ev{r}_0 \rb$ and $\sigma^2_{r,j} := \ev{r_j^2}_0 - \ev{r_j}_0^2$.
\label{lem:multivariate_quantum}
\end{lem}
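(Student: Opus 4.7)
The plan is to mirror the proof of Lemma~\ref{lem:multivariate_time} by reducing the $N$-variable problem to $N$ independent single-variable damped quantum harmonic oscillators, and then invoke Lemma~\ref{lem:damped_QHO} on each decoupled mode. First, because $A$ is real symmetric, we can write $Q A Q^\top = \Lambda = \operatorname{diag}(\lambda_1, \dots, \lambda_N)$ for some orthogonal $Q \in \mathbb{R}^{N\times N}$. Define the new operators $\bxt := Q(\bx - \bx^*)$ and $\bpt := Q \bp$. Since $Q$ is orthogonal, the canonical commutation relations are preserved, $[\widetilde{x}_i, \widetilde{p}_j] = \sum_{k,l} Q_{ik} Q_{jl} [x_k, p_l] = i \delta_{ij}$, so $\bxt, \bpt$ are a valid pair of conjugate quantum operators. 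Moreover, the kinetic term $\sum_j p_j^2 = \bp^\top \bp = \bpt^\top \bpt = \sum_j \widetilde{p}_j^2$ is invariant, and the potential rewrites as $(\bx-\bx^*)^\top A (\bx-\bx^*) = \bxt^\top \Lambda \bxt = \sum_j \lambda_j \widetilde{x}_j^2$, so $H(t)$ decouples into a direct sum of $N$ one-dimensional damped quantum harmonic oscillators with spring constants $\lambda_j$.

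Next, the assumption $\beta^2 \le 4 m \lambda_{\min}$ ensures each decoupled mode is in the critically or underdamped regime, so Lemma~\ref{lem:damped_QHO} applies to each $\widetilde{x}_j$. As in the classical multivariate proof, I would demand that for all $j$,
\begin{equation}
    |\ev{\widetilde{x}_j}_t - \widetilde{x}_{\min,j}| \le \frac{\epsilon'}{4}\lb \frac{|\ev{\widetilde{x}_j}_0|}{\|\ev{\bxt}_0\|} + \frac{|\ev{\widetilde{r}_j}_0|}{\|\ev{\widetilde{\mathbf{r}}}_0\|} \rb,
\end{equation}
which via Cauchy--Schwarz implies $\|\ev{\bx}_t - \bx^*\| = \|\ev{\bxt}_t\| \le \epsilon'/2$. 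Solving the two corresponding inequalities using the asymptotic solution of the damped harmonic oscillator gives the bounds $t^{\ev{x}}_1$ and $t^{\ev{x}}_2$ upon noting $\|Q \ev{\bx - \bx^*}_0\| = \|\ev{\bx}_0 - \bx^*\|$ and similarly for $\br$.

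For the variance bound, I would apply the multivariate Chebyshev inequality to a position sample $\widetilde{\bxt}'_t$ drawn from $|\psi_t|^2$, yielding $P(\|\widetilde{\bxt}'_t - \ev{\widetilde{\bxt}}_t\| \ge \epsilon'/2) \le 4 \sum_j \widetilde{\sigma}^2_j(t)/\epsilon'^2$. Using the per-mode variance formula derived in Lemma~\ref{lem:damped_QHO} together with $|\sin(\omega t)/(\omega t)| \le 1$ gives
\begin{equation}
    \sum_j \widetilde{\sigma}^2_j(t) \le e^{-2\gamma t}\sum_j \widetilde{\sigma}^2_{x,j} + 2 t e^{-2\gamma t} \left|\sum_j \widetilde{\mathrm{cov}}(\widetilde{x}_j,\widetilde{r}_j)_0\right| + t^2 e^{-2\gamma t} \sum_j \widetilde{\sigma}^2_{r,j},
\end{equation}
which yields the three bounds $t^{\sigma}_1, t^{\sigma}_2, t^{\sigma}_3$. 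Demanding this sum be at most $\epsilon'^2/12$ makes the failure probability at most $1/3$, which can be boosted to $1-\delta$ with $O(\log(1/\delta))$ samples. Finally, orthogonal invariance transfers the sums of variances and symmetrized covariances back to the original coordinates, exactly as in the classical case: $\sum_j \widetilde{\sigma}^2_{x,j} = \sum_j \sigma^2_{x,j}$ and $\sum_j \widetilde{\mathrm{cov}}(\widetilde{x}_j, \widetilde{r}_j)_0 = \sum_j \widetilde{\mathrm{cov}}(x_j, r_j)_0$, using $\sum_j Q_{kj}^\top Q_{jl} = \delta_{kl}$. To convert position error to function error, the Cauchy--Schwarz bound $|f(\bx') - f(\bx^*)| \le \frac{1}{2}\lambda_{\max} \|\bx' - \bx^*\|^2$ together with the choice $\epsilon' = \sqrt{2\epsilon/\lambda_{\max}}$ completes the bound.

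The main obstacle is handling the quantum noncommutativity under the orthogonal transformation: in the classical case $\mathrm{cov}_0(\widetilde{x}_j, \widetilde{r}_j) = \mathrm{cov}_0(x_j, r_j)$ falls out of a bilinear computation, but quantum mechanically we must work with the symmetrized covariance $\widetilde{\mathrm{cov}}$. The key observation that makes this go through is that since $Q$ has real entries, symmetrization commutes with the coordinate change: $\ev{\widetilde{x}_j \widetilde{r}_j}_0 + \ev{\widetilde{r}_j \widetilde{x}_j}_0 = \sum_{k,l} Q_{jk} Q_{jl} (\ev{x_k r_l}_0 + \ev{r_l x_k}_0)$, so summing over $j$ and using orthogonality collapses to $\sum_k(\ev{x_k r_k}_0 + \ev{r_k x_k}_0)$, preserving the quantum analogue of the classical identity and yielding the stated form of $t^\sigma_2$ in terms of the original symmetrized covariances.
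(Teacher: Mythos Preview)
Your proposal is correct and takes essentially the same approach as the paper, which dispatches the lemma in a single line (``Follows directly from Lemma~\ref{lem:damped_QHO} and Lemma~\ref{lem:multivariate_time}''); you have simply written out the details of that reduction. In particular, your explicit verification that the symmetrized covariance $\widetilde{\mathrm{cov}}$ transforms correctly under the orthogonal change of variables---the one genuinely new ingredient beyond the classical multivariate argument---is exactly the point that justifies porting the bounds from Lemma~\ref{lem:multivariate_time} to the quantum setting.
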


\begin{proof}
    Follows directly from Lemma~\ref{lem:damped_QHO} and Lemma~\ref{lem:multivariate_time}.
\end{proof}

The above bound on the $\epsilon$-equilibration time depends on the expectation values of various combinations of the initial positions and momenta. In order to simplify the complexity bounds on the running time of our algorithms, we define the following quantity:

\begin{defn}[Upper bound on initial parameters in the quantum setting]
    Let $\gamma = \beta/2m$, let $\bx^* \in \mathbb{R}^N$ be the minimum of $f$ as in Lemma~\ref{lem:multivariate_quantum} and let $\psi_0(\bx)$ be an initial wave function. Further, for all $j \in [N]$, let $\sigma^2_{x,j} = \ev{x_j^2}_0 - \ev{x_j}^2_0$, $\sigma^2_{p,j} = \ev{p_j^2}_0 - \ev{p_j}^2_0$ and $\widetilde{\mathrm{cov}}_0\lb x_j, p_j \rb = \frac{1}{2} \lb \ev{x_j p_j}_0 + \ev{p_j x_j}_0 - 2\ev{x_j}_0 \ev{p_j}_0 \rb$ as in Lemma~\ref{lem:multivariate_quantum}. Then we define
    \begin{equation}
        \chi_0 := \max \lc \gamma^2 \norm{\ev{\bx}_0 - \bx^*}^2, \frac{\norm{\ev{\bp}_0}^2}{m^2}, \gamma^2 \sum_{j=1}^N \sigma^2_{x,j},  \frac{\gamma}{m} \left| \sum_{j=1}^N \widetilde{\mathrm{cov}}\lb x_j, p_j \rb_0 \right|, \frac{1}{m^2}\sum_{j=1}^N \sigma^2_{p,j} \rc.
    \end{equation}
\label{def:initial_conditions}
\end{defn}

The corollary below provides a simpler asymptotic bound on the $\epsilon$-equilibration time of damped coupled quantum harmonic oscillators.

\begin{cor}[Asymptotic bound on the $\epsilon$-equilibration time in the quantum setting]
    Consider the same setting as in Lemma~\ref{lem:multivariate_quantum} and assume that $\gamma \in \Theta \lb \sqrt{\lambda_{\min}} \rb$.
    Then the $\epsilon$-equilibration time $t^*$ in Lemma~\ref{lem:multivariate_quantum} satisfies the following asymptotic bound:
    \begin{equation}
        t^* \in  O \lb \frac{1}{\sqrt{\lambda_{\min}}} \log \lb \frac{\lambda_{\max}}{\lambda_{\min}} \frac{\chi_0}{\epsilon} \rb \rb,
    \end{equation}
    where $\chi_0$ is as in Definition~\ref{def:initial_conditions}.
\label{cor:equilibration_time}
\end{cor}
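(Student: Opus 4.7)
The plan is to mirror the proof of Corollary~\ref{cor:equilibration_time_liouvillian} almost verbatim, since the quantum statement differs from the classical one only in that the asymmetric covariance $\mathrm{cov}_0(x_j,r_j)$ is replaced by the symmetrized version $\widetilde{\mathrm{cov}}_0(x_j,r_j)$, and the definition of $\chi_0$ in Definition~\ref{def:initial_conditions} is the symmetrized counterpart of $\widetilde{\chi}_0$ in Definition~\ref{def:initial_conditions_liouvillian}. Since both statements are asymptotic bounds with the same structure, the bookkeeping carries over with no change to the exponents.

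I would first dispense with the two purely logarithmic terms $t^{\ev{x}}_1$ and $t^{\sigma}_1$ by direct inspection: using $\gamma \in \Theta(\sqrt{\lambda_{\min}})$ and the fact that $\gamma^2 \norm{\ev{\bx}_0 - \bx^*}^2$ and $\gamma^2 \sum_j \sigma^2_{x,j}$ are both dominated by $\chi_0$, each of these contributes at most $O\bigl(\tfrac{1}{\sqrt{\lambda_{\min}}} \log(\tfrac{\lambda_{\max}}{\lambda_{\min}} \tfrac{\chi_0}{\epsilon})\bigr)$. Next, for each of the three Lambert-$W$ terms $t^{\ev{x}}_2$, $t^{\sigma}_2$, $t^{\sigma}_3$ I would apply the standard bound $-W_{-1}(e^{-u-1}) \leq 1 + \sqrt{2u} + u$ used in the proof of Corollary~\ref{cor:equilibration_time_liouvillian}, giving an overall scaling of order $\tfrac{1}{\gamma}$ times a logarithm of the argument.

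The only nontrivial bookkeeping is to express the quantities $\norm{\ev{\br}_0}$, $\widetilde{\mathrm{cov}}_0(x_j,r_j)$ and $\sigma^2_{r,j}$ in terms of the basic moments that make up $\chi_0$. By the triangle inequality, $\norm{\ev{\br}_0} \le \norm{\ev{\bp}_0}/m + \gamma \norm{\ev{\bx}_0 - \bx^*}$, which is $O(\sqrt{\chi_0})$. Using $r_j = p_j/m + \gamma(x_j - x^*_j)$ and the bilinearity of the symmetrized covariance, one gets
\begin{equation}
    \widetilde{\mathrm{cov}}_0(x_j,r_j) = \gamma \sigma^2_{x,j} + \tfrac{1}{m} \widetilde{\mathrm{cov}}_0(x_j,p_j),
\end{equation}
and a direct expansion gives $\sigma^2_{r,j} = \gamma^2 \sigma^2_{x,j} + \tfrac{1}{m^2}\sigma^2_{p,j} + \tfrac{2\gamma}{m}\widetilde{\mathrm{cov}}_0(x_j,p_j)$, with the cross term bounded via Cauchy–Schwarz (or the operator identity $\tfrac{1}{2}(xp+px)$ being Hermitian). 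Hence $|\sum_j \widetilde{\mathrm{cov}}_0(x_j,r_j)|/\gamma$ and $\sum_j \sigma^2_{r,j}/\gamma^2$ are each $O(\chi_0/\gamma^2) = O(\chi_0/\lambda_{\min})$, which is exactly the factor appearing inside the logarithms after applying the Lambert-$W$ bound.

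I do not expect any genuine obstacle: the symmetrization of the covariance is the only point where the quantum and classical calculations diverge, and it is absorbed into the definition of $\chi_0$. Taking the maximum of all five asymptotic bounds then yields the claimed $O\bigl(\tfrac{1}{\sqrt{\lambda_{\min}}} \log(\tfrac{\lambda_{\max}}{\lambda_{\min}} \tfrac{\chi_0}{\epsilon})\bigr)$ scaling, and the proof concludes.
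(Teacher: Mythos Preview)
Your proposal is correct and takes essentially the same approach as the paper, which simply states that the proof is ``essentially the same as the proof of Corollary~\ref{cor:equilibration_time_liouvillian}.'' You have in fact spelled out the one place where the quantum and classical arguments differ---the symmetrized covariance $\widetilde{\mathrm{cov}}_0$ replacing $\mathrm{cov}_0$---and correctly verified that this is absorbed into the definition of $\chi_0$, so nothing further is needed.
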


\begin{proof}
    Essentially the same as the proof of Corollary~\ref{cor:equilibration_time_liouvillian}.
\end{proof}

Before stating the main theorem of this section, let us briefly fix some new notation and also recall some old notation.

\begin{defn}[Expectation values in the continuum and discrete setting]
    Let $U_H(t) = \mathcal{T} e^{-i\int_0^t H_{\mathrm{fric}}(s) \mathrm{d}s}$ be the time evolution operator as defined in Eq.~\eqref{friction_ham_ev}. Further, let $\Ht(t)$ be the discretized friction Hamiltonian associated with $H_{\mathrm{fric}}(t)$ according to Definition~\ref{def:discretized_ham}.
    Then we define the following discretized time evolution operator $\widetilde{U}_H(s): [0,t] \rightarrow \mathbb{C}^{2^{Nn} \times 2^{Nn}}$:
    \begin{equation}
        \widetilde{U}_H(s) := \mathcal{T} e^{-i \int_0^s \Ht(s') \mathrm{d}s'}.
    \end{equation}
    Next, let $\hat{x}_j$ denote the continuum position operator acting on the $j$-th variable and let $\hat{\xt}_j$ denote the corresponding discretized and truncated position operator which is diagonal in the computational basis with diagonal values $x_{\max} - h_x, x_{\max} - 2h_x \dots, 0 , -h_x, \dots, - x_{\max}$. Each position variable is discretized over $2^n$ grid points implying that $\hat{\xt}_j \in \mathbb{C}^{2^n \times 2^n}$ for all $j \in [N]$. When clear from context, we simply write $\hat{\xt}_j$ to mean $\mathbb{1}_1 \otimes \mathbb{1}_2 \otimes \cdots \otimes \mathbb{1}_{j-1} \otimes \hat{\xt}_j \otimes \mathbb{1}_{j+1} \otimes \cdots \otimes \mathbb{1}_{N}$. 
    Now, let $\ket{\psi_0 (\bx)}$ be a quantum state on the Hilbert space associated with $H_{\mathrm{fric}}(t)$ and let $\ket{\widetilde{\psi}_0(\bx)}$ denote the corresponding quantum state on the Hilbert space of $\Ht(t)$ obtained by evaluating $\ket{\psi_0(\bx)}$ on the $2^{Nn}$ grid points associated with $\Ht(t)$.
    Then we define the following quantities:
    \begin{equation}
        \ev{x_j}_t := \bra{\psi_0} U_H^\dagger(t) \hat{x}_j U_H(t)\ket{\psi_0}
    \end{equation}
    denotes the time evolved expectation value of the continuum position operator w.r.t.~the exact evolution in the continuum, 
    \begin{equation}
        \ev{\xt_j}_t := \bra{\widetilde{\psi}_0} \widetilde{U}_H^\dagger(t) \hat{\xt}_j \widetilde{U}_H(t)\ket{\widetilde{\psi}_0}
    \end{equation}
    denotes the time evolved expectation value of the discretized position operator w.r.t.~the exact discretized evolution operator,
    \begin{equation}
        \sigma_{j}^2(t) := \ev{\hat{x}_j^2}_t - \ev{\hat{x}_j}^2_t
    \end{equation}
    denotes the time evolved variance of the $j$-th continuum position variable and
    \begin{equation}
        \widetilde{\sigma}_{j}^2 (t) := \ev{\hat{\xt}_j^2}_t - \ev{\hat{\xt}_j}^2_t
    \end{equation}
    denotes the time evolved variance of the $j$-th position variable w.r.t.~the exact discretized evolution operator $\widetilde{U}_H(t)$.
\label{def:expectation_values}
\end{defn}

Now we are ready to state the main theorem of this section which provides an upper bound on the running time of the dissipative quantum Hamiltonian simulation algorithm for the task of finding the optimum of a convex quadratic function.

\begin{restatable}[Coherent convex quadratic optimization via quantum dynamics]{thm}{CoherentConvex}
    Let $\epsilon > 0$ be an error tolerance. Let $A \in \mathbb{R}^{N \times N}$ be positive with eigenvalues $0 < \lambda_0 =: \lambda_{\min} \leq \lambda_1 \leq \dots \leq \lambda_{N-2} \leq \lambda_{N-1} =: \lambda_{\max}$ and assume that we know $\lambda_{\min}$ and $\lambda_{\max}$ each within some constant factor.
    Further, let $f(\bx) = \lb \bx- \bx^* \rb^\top A \lb \bx- \bx^* \rb + c$ for some constant $c \in \mathbb{R}$ and consider the corresponding discrete quantum friction Hamiltonian $\Ht$ as given in Definition~\ref{def:discretized_ham}.
    Let $\ket{\psi_0 (\bx)}$ be an initial wave function and assume having access to a quantum state $\ket{\widetilde{\psi}_0 \lb \bx \rb} \in \mathbb{C}^{2^{Nn}}$ encoding the discretized initial wave function such that for all $0 \leq t \leq t^*$ with $t^*$ being the equilibration time from Lemma~\ref{lem:multivariate_quantum} it holds that
    \begin{align}
        \norm{\ev{\bxt}_t - \ev{\bx}_t} &\leq \sqrt{\frac{\epsilon}{18\lambda_{\max}}} \\
        \left| \sum_{j=1}^N \widetilde{\sigma}_{j}^2 (t) - \sum_{j=1}^N \sigma_{j}^2(t) \right| &\leq \frac{\epsilon}{24\lambda_{\max}}.
    \end{align}
    Further, assume that we can access $f$ either via an $\epsilon'$-precise bit oracle $O_f^{(b)}$ as given in Definition~\ref{def:bit_oracle} with $1/\epsilon' \in \widetilde{O} \lb  \frac{\alpha_{A_H}^2 N \lambda_{\max}^{2} x_{\max}^2 \chi_0}{\epsilon^{2} \lambda_{\min}^2} \rb$, or via a phase oracle $O_f^{(p)}$ as given in Definition~\ref{def:phase_oracle}.
 
    Under the above assumptions, there exists a quantum algorithm that can solve the Continuous Quantum Optimization Problem by finding an $\bx'$ such that $\left| f\lb \bx' \rb - f\lb \bx^* \rb \right| \leq \epsilon$ with probability at least $1 - \delta$ using 
    using either
    \begin{equation}
        \widetilde{O} \lb \frac{\alpha_{A_H}}{\sqrt{\lambda_{\min}}} \log^2 \lb\frac{N \lambda_{\max} x_{\max} \chi_0}{\epsilon} \rb \log \lb 1/\delta \rb \rb
    \end{equation}
    queries to $O_f^{(b)}$, or
    \begin{equation}
        \widetilde{O} \lb \frac{N}{\epsilon} \frac{\lambda_{\max}^2}{\lambda_{\min}^2} x_{\max}^2 \alpha_{A_H} \chi_0 \log \lb 1/\delta \rb \rb
    \end{equation}
    queries to controlled-$O_f^{(p)}$ and its inverse, where $\alpha_{A_H}$ is given in Definition~\ref{def:discretized_ham} and $\chi_0$ is given in Definition~\ref{def:initial_conditions}.
\label{thm:main}
\end{restatable}
The proof of the above theorem is quite similar to the proof of Theorem~\ref{thm:optimization_liouvillian}. It can be found in Appendix~\ref{app:main_convex_opt}.
To obtain the complexity results claimed in the informal statement of Theorem~\ref{thm:main} in Section~\ref{sec:main}, we use the fact that
\begin{equation}
    \alpha_{A_H} \in O \lb \frac{N}{h_x^2} \rb,
\end{equation}
as discussed in Section~\ref{sec:quantum_ham_approach}, and drop subdominant logarithmic factors in $N$, $\lambda_{\max}$, $\chi_0$ and $1/\epsilon$. Furthermore, we assume $\chi_0 = \lambda_{\min} \norm{\bx_0 - \bx^*}^2$ for simplicity.

Theorem~\ref{thm:main} shows that we can find the optimum of a quadratic convex functions using $\mathrm{poly} \lb \kappa \rb$ queries to the phase oracle $O_f^{(p)}$, where $\kappa = \lambda_{\max}/\lambda_{\min}$, as long as $\alpha_{A_H}$ scales at most polynomially with $\kappa$. 
Below we present a brief argument supporting this claim while a more thorough worst-case analysis is presented in Section~\ref{sec:ill_conditioned}.

In order to get an idea about the scaling of $\alpha_{A_H}$ w.r.t.~$\lambda_{\max}$, $\lambda_{\min}$ and $\epsilon$, let us consider the expectation value of the continuum momentum operator $p$ in the case of a single damped quantum harmonic oscillator with the following time-dependent Hamiltonian:
\begin{equation}
    H(t) = e^{- \beta t /m} \frac{p^2}{2m} + e^{\beta t /m} \frac{a}{2} x^2 .
\end{equation}
Then we have that
\begin{equation}
    \frac{d\ev{p}}{dt} = -e^{\beta t/m} a \ev{x}.
\end{equation}
Furthermore,
\begin{equation}
    \frac{d^2\ev{p}}{dt^2}  = \frac{\beta}{m}  \frac{d\ev{p}}{dt} - \frac{a}{m} \ev{p}.
\end{equation}
Let us assume that we are in the underdamped regime, i.e. $\beta^2 < 4ma$. Then the solution to the above equation of motion is given by
\begin{equation}
    \ev{p}_t = e^{\beta t/2m} \lb \ev{p}_0 \cos \lb \omega t \rb - \frac{a \ev{x}_0 + \frac{\beta}{2m} \ev{p}_0}{\omega} \sin \lb \omega t \rb \rb,
\end{equation}
where $\omega = \frac{\sqrt{4ma - \beta^2}}{2m}$. Now, consider the case where $\ev{p}_0 = 0$ and set $m = 1$. Then
\begin{equation}
\begin{split}
    \ev{p}_t &= -e^{\beta t/2} \frac{a \ev{x}_0}{\omega} \sin \lb \omega t \rb \\
    &= -e^{\beta t/2} \frac{2a \ev{x}_0}{\sqrt{4a - \beta^2}} \sin \lb \omega t \rb \\
    &= -e^{\beta t/2} \sin \lb \omega t \rb \ev{x}_0 \sqrt{a} \lb 1 + \frac{\beta^2}{8a} + O \lb  \lb \frac{\beta^2}{4a} \rb^2 \rb \rb\label{eq:pMean}.
\end{split}
\end{equation}

Assuming that the expectation value of the discretized derivative operator $\ev{D_x}$ obeys approximately the same equations of motion as in the continuum, we then expect that
\begin{equation}
    \norm{D_x} \in O \lb e^{\beta t^*/2} \sqrt{a} \rb,
\end{equation}
since we are only interested in simulating dynamics up to $t=t^*$ and we expect that the asymptotic scaling of $\norm{D_x}$ is similar to that of $\ev{D_x}$. By the same argument, we expect that the norm of the discretized second derivative operator obeys
\begin{equation}
    \norm{D_x^2} \in O \lb e^{\beta t^*} a \rb.
\end{equation}

Next, let us discuss the multivariate setting. Specifically, let us consider the coordinate system in the eigenbasis of the Hessian (here equivalent to the inverse covariance matrix) $A \in \mathbb{R}^{N \times N}$ such that
\begin{equation}
    H(t) = \sum_{j} \lb e^{-\beta t/m}  \frac{p_j^2}{2m} + e^{\beta t/m} \lambda_j x_j^2 \rb,
\end{equation}
where $0 < \lambda_0 =: \lambda_{\min} \leq \lambda_1 \leq \dots \leq \lambda_{N-2} \leq \lambda_{N-1} =: \lambda_{\max}$ are the eigenvalues of $A$ as before.
Assume that $\beta \in \Theta \lb \sqrt{\lambda_{\min}} \rb$ such that $\beta^2 - 4m\lambda_j < 0$ for all $j \in [N]$.
Then the momentum expectation values satisfy
\begin{equation}
    \ev{p_j}_t = e^{\beta t/2m} \lb \ev{p_j}_0 \cos \lb \omega t \rb - \frac{\lambda_j \ev{x_j}_0 + \frac{\beta}{2m} \ev{p_j}_0}{\omega} \sin \lb \omega t \rb \rb.
\end{equation}
Assuming that the expectation values of the discretized derivative operators $\ev{D_{x,j}}$ obey approximately the same equations of motion as in the continuum and going through the same argument as in the 1-dim.~case, we expect that
\begin{equation}
    \norm{D_{x,j}} \in O \lb e^{\beta t^*/2} \sqrt{\lambda_{\max}} \rb
\end{equation}
and 
\begin{equation}
    \norm{D_{x,j}^2} \in O \lb e^{\beta t^*} \lambda_{\max} \rb \subseteq \widetilde{O} \lb \frac{\lambda_{\max}^2 \chi_0}{ \lambda_{\min} \epsilon} \rb,
\end{equation}
where $\chi_0$ is given in Definition~\ref{def:initial_conditions}.
Therefore, we expect that
\begin{equation}
    \alpha_{A_H} \in \widetilde{O} \lb \frac{N \lambda_{\max}^2 \chi_0}{ \lambda_{\min} \epsilon} \rb.
\label{bound_norm_kinetic_term}
\end{equation}
Recall that $\alpha_{A_H} \in O \lb \frac{N}{h_x^2} \rb$. The above bound then suggests the following scaling for the grid spacing $h_x$:
\begin{equation}
    h_x \in O \lb \frac{1}{\lambda_{\max}} \sqrt{\frac{\lambda_{\min} \epsilon}{\chi_0}} \rb.
\end{equation}

Note that the bound on $\alpha_{A_H}$ in Eq.~\eqref{bound_norm_kinetic_term} is rather pessimistic given that $A_H(t)$ contains a factor of $e^{-\beta t^*}$ which should ensure that $\norm{A_H(t)}$ does not grow exponentially with time despite the fact that $\norm{D_{x,j}^2}$ seems to scale exponentially with time. However, since $\alpha_{A_H}$ is technically the block-encoding normalization constant of $A_H$, it is less obvious how to improve the bound since it would require a modification of the block-encoding circuit.

\subsection{Extension to Smooth and Strongly Convex Functions}
\label{sec:smooth_strongly_convex}

In the discussion above we assumed that the function that we aim to optimize is quadratic. This is a good approximation to the behavior of any twice continuously differentiable objective function if we are sufficiently close to a local optimum; however, it is a stronger condition than the one needed for the quantum gradient descent analysis which only requires the objective function to be strongly convex. This is further relevant as in cases where the optimization algorithm is told that the objective function is quadratic then methods such as Newton's method can be used to find the optimum in a single update.  In such cases, the preceding analysis can be seen as applying to either cases where we are promised to have an initial point that is close to the local optimum or is strictly a quadratic optimization problem, but this information is either unknown to the user or intentionally withheld from the optimization algorithm.

In this section, we discuss an extension of the previous results on finding the optimum of a convex quadratic function to smooth and strongly convex functions.  These changes will necessitate migration to a slightly different frictional Hamiltonian for ease of analysis, but the results will be qualitatively the same, although we won't provide rigorous proofs.  Before introducing the new Hamiltonian, let us review the definition of smoothness and strong-convexity for an objective function.

\begin{defn}[Smooth and strongly convex function]
    Let $f:\mathbb{R}^N \rightarrow \mathbb{R}$ be continuously differentiable and let $\ell \geq \mu >0$ be constants. We say that $f$ is $\ell$-smooth and $\mu$-strongly convex if for all $\bx, \mathbf{y}$ in the domain of $f$ it holds that
    \begin{equation}
        f(\bx) + \lb \mathbf{y} -\bx \rb \cdot \nabla f(\bx) + \frac{\mu}{2} \norm{\mathbf{y} - \bx}^2 \leq f(\mathbf{y}) \leq f(\bx) + \lb \mathbf{y} -\bx \rb \cdot \nabla f(\bx) + \frac{\ell}{2} \norm{\mathbf{y} - \bx}^2.
    \end{equation}
    Further, we define $\kappa := \frac{\ell}{\mu}$ to be the condition number of $f$.
\label{def:strongly_convex}
\end{defn}

In the previous section, we showed that in the case of a convex quadratic function, we can solve the resulting equations of motions analytically which then allowed us to prove relatively tight convergence bounds, see e.g.~Lemma~\ref{lem:multivariate_quantum}.
However, solving the equations of motion associated with a general smooth and strongly convex functions in order to prove convergence bounds is difficult. Instead, we follow Refs.~\cite{Wibisono2016continuous_gradient_descent, leng2023qhd} and utilize Lyapunov functions for our convergence analysis. The Hamiltonian studied in these papers is of the following form:
\begin{equation}
    H(t) = e^{\alpha_t - \gamma_t} \sum_{j=1}^N \frac{p_j^2}{2} + e^{\alpha_t + \beta_t + \gamma_t} f(\bx),
\label{qhd_ham}
\end{equation}
where $\alpha_t$, $\beta_t$ and $\gamma_t$ are continuously differentiable functions of time satisfying the following ideal scaling conditions:
\begin{align}
    \dot{\beta}_t &\leq e^{\alpha_t}, \label{ideal_scaling_beta} \\
    \dot{\gamma}_t &= e^{\alpha_t}.
\label{ideal_scaling_gamma}
\end{align}
Without loss of generality, we can assume that $f(\bx) \geq 0$ for all $\bx \in \mathbb{R}^N$ and $f(\mathbf{0}) = 0$, i.e. the optimum of $f$ is at $\bx = \mathbf{0}$. This can always be accomplished by a simple shift of the coordinate system.
Now, consider the following quantum Lyapunov function/operator proposed in~\cite{leng2023qhd}:
\begin{equation}
    \mathcal{W}(t) := \frac{1}{2} \sum_{j=1}^N J_j^2 + e^{\beta_t} f\lb \bx \rb,
\end{equation}
where $J_j := e^{-\gamma_t} p_j + x_j$.
As shown in~\cite{leng2023qhd}, if $f(\bx)$ is continuously differentiable and convex then it holds that 
\begin{equation}
    \frac{d}{dt} \ev{\mathcal{W}} \leq 0, \quad \forall \, t \geq 0.
\end{equation}
Additionally, 
\begin{equation}
    \ev{\frac{1}{2} \sum_{j=1}^N J_j^2}_t = \frac{1}{2} \sum_{j=1}^N \ev{J_j^2}_t \geq 0, \quad \forall t \geq 0.
\end{equation}
This implies that
\begin{equation}
    \ev{f}_t \leq \ev{\mathcal{W}}_t e^{-\beta_t} \leq \ev{\mathcal{W}}_0 e^{-\beta_t}.
\label{lyapunov_bound}
\end{equation}
As pointed out in Refs.~\cite{Wibisono2016continuous_gradient_descent, leng2023qhd}, the above bound seems to indicate that arbitrarily fast convergence rates are possible by choosing $\beta_t$ appropriately. However, the bound in Eq.~\eqref{lyapunov_bound} is derived in continuous space and time. Once we discretize the system in order to simulate its dynamics on a digital computer we cannot achieve arbitrary convergence rates because of numerical instabilities caused by discretization errors. Proving tight bounds on these discretization errors and obtaining provable convergence rates via Lyapunov functions is difficult although recent work~\cite{chakrabarti2025optimization, leng2025subexponential} seems to have mostly resolved this issue.
Regardless, the Lyapunov function approach is certainly useful for getting an idea about the convergence rate for more general convex functions even if we won't prove any tight bounds here.
Let us now discuss how to choose $\alpha_t$, $\beta_t$ and $\gamma_t$ such that the resulting convergence rate should also be attainable in a discretized setting. Specifically, the following choice of parameters leads to a Hamiltonian that closely resembles the friction Hamiltonian from Definition~\ref{def:friction_ham} used previously in the convergence analysis for convex quadratic functions:
\begin{align}
    \alpha_t &= \frac{1}{2}\ln(\mu), \\
    \beta_t &= \sqrt{\mu} t + \ln (\mu) , \\
    \gamma_t &= \sqrt{\mu} t - \frac{3}{2} \ln (\mu),
\end{align}
where $\mu$ is given in Definition~\ref{def:strongly_convex}. The Hamiltonian in Eq.~\eqref{qhd_ham} then becomes
\begin{equation}
    H(t) = \underbrace{\mu^2 e^{-\sqrt{\mu} t} \sum_j \frac{p_j^2}{2}}_{=: A_H'(t)} + \underbrace{e^{2 \sqrt{\mu} t} f(\bx)}_{=: B_H'(t)}.
\label{new_friction_ham}
\end{equation}
Now, let $\mathbf{\bx}'_t \sim |\psi_t(\bx)|^2$ be a position vector sampled from the time-evolved quantum probability distribution. By Markov's inequality we then have that 
\begin{equation}
    P \lb f(\bx'_t) \geq \epsilon \rb \leq \frac{\ev{f}_t}{\epsilon} \leq \frac{\ev{\mathcal{W}}_0  e^{- \beta_t}}{\epsilon} \leq \frac{\ev{\mathcal{W}}_0  e^{- \sqrt{\mu} t}}{ \mu \epsilon}.
\end{equation}
As before, it suffices to ensure that this failure probability is at most $1/3$ since this allows us to boost the success probability to at least $1 - \delta$ using only $O \lb \log \lb 1/\delta \rb \rb$ samples.
We have that $P \lb f(\bx'_t) \geq \epsilon \rb \leq 1/3$ if
\begin{equation}
    t = t^*_{\mu} := \frac{1}{\sqrt{\mu}} \log \lb \frac{3 \ev{\mathcal{W}}_0}{\mu \epsilon} \rb.
\end{equation}

Note that
\begin{equation}
\begin{split}
    \ev{\mathcal{W}}_0 &= \sum_j \lb e^{-2\gamma_0} \ev{p_j^2}_0 + e^{-\gamma_0} \lb \ev{x_j p_j}_0 + \ev{p_j x_j}_0 \rb + \ev{x_j^2}_0  \rb + e^{\beta_0} \ev{f}_0 \\
    &= \sum_j \lb \mu^3 \ev{p_j^2}_0 + \mu^{3/2} \lb \ev{x_j p_j}_0 + \ev{p_j x_j}_0 \rb + \ev{x_j^2}_0 \rb + \mu \ev{f}_0,
\end{split}
\end{equation}
which implies that
\begin{equation}
    e^{2\sqrt{\mu} t^*_{\mu}} = \lb \frac{3 \ev{\mathcal{W}}_0}{\mu \epsilon} \rb^2 = \lb \frac{3 \sum_j \lb \mu^3 \ev{p_j^2}_0 + \mu^{3/2} \lb \ev{x_j p_j}_0 + \ev{p_j x_j}_0 \rb + \ev{x_j^2}_0 + \mu \ev{f}_0\rb}{\mu \epsilon} \rb^2.
\end{equation}

Let $\mu^2 \alpha_{A_H'}$ be the block-encoding constant of the discretized version of $A_H'$ as defined in Eq.~\eqref{new_friction_ham}. Given that the Hamiltonian in Eq.~\eqref{new_friction_ham} has essentially the same form as the friction Hamiltonian in Def.~\ref{def:friction_ham}, we can port over the complexity results from Theorem~\ref{thm:main}, assuming that the discretization errors are sufficiently small. Following the same analysis as in the proof of Theorem~\ref{thm:main}, we thus conclude that we should be able to approximate the optimal value of $f$ within error $\epsilon$ with probability at least $1-\delta$ using
\begin{equation}
    \widetilde{O} \lb \frac{1}{\sqrt{\mu}} e^{2\sqrt{\mu} t^*_\mu} f_{\max} \mu^2 \alpha_{A_H'} t^*_{\mu} \log^3 \lb \frac{1}{\epsilon} \rb \log \lb 1/\delta \rb \rb \subseteq \widetilde{O} \lb \frac{N}{\epsilon^2} \frac{\ell}{\mu} x_{\max}^2 \alpha_{A_H'} \ev{\mathcal{W}}_0^2 \log \lb 1/\delta \rb \rb
\label{query_complexity_strongly_convex}
\end{equation}
queries to the phase oracle $O_f^{(p)}$ where we used the fact that
\begin{equation}
    f_{\max} \in O \lb \ell \norm{\bx}^2 \rb \subseteq O \lb \ell N x_{\max}^2\rb.
\end{equation}

The query complexity bound in Eq.~\eqref{query_complexity_strongly_convex} indicates that we should be able to achieve $\mathrm{poly} \lb \kappa, 1/\epsilon, N \rb$ scaling even for smooth strongly convex functions which are not necessarily quadratic. A more thorough analysis is left for future work.

\section{Quantum Algorithm for Coherent Global Optimization}
\label{sec:global}

The above methods focused on the task of finding a local optimum for a strongly convex function.  While almost all functions can be closely approximated by a convex quadratic function within a neighborhood of the local optima, the task of finding an initial point that is close enough to allow this approximation to hold is non-trivial. Further, vanishing gradient problems in the trajectory can lead to long evolution times (as seen by $\lambda_{\min} \rightarrow 0$ in the above asymptotics).  In this section, we take a different approach that follows the techniques of~\cite{Simon2024Liouvillian}.  Specifically, we consider the time evolution of a classical phase space probability density subject to a carefully chosen Liouvillian operator that equilibrates the probability density to a classical Boltzmann distribution in the limit of sufficiently long evolution.

The basic idea behind our global approach stems from thermal physics.  Imagine for the moment that we have an objective function $f(\mathbf{x})$ where $\mathbf{x}$ can be interpreted as the position of the system and we will interpret $f(\mathbf{x})$ as an energy function.  We can then, analogous to thermalization, find a globally optimal configuration by preparing a thermal distribution for inverse temperature $\beta>0$:
\begin{equation}
    \rho(\mathbf{x}) = \frac{e^{-\beta f(\mathbf{x})}}{\int e^{-\beta f(\mathbf{s})} {d}\mathbf{s}} .
\end{equation}
As $\beta\rightarrow \infty$, $\rho(\mathbf{x})$ will be supported only over the global minima.  In this sense, we can use thermalization to find a global optimum for the function.  This intuition is enshrined in Hamiltonian-based approaches to quantum optimization as well as the Boltzmann model of neural networks.

Our aim here is to provide dynamical methods for preparing these thermal distributions by reducing the thermalization process to a microcanonical simulation in a higher dimensional space.  In this sense, our work is similar to the local method considered above; however, here we will not need to assume convexity but will need to make assumptions about the spectrum to guarantee convergence to the global optimum.

\begin{figure}
    \centering
   \begin{tikzpicture}[>=stealth]

% --- Left: Extended phase space ---
\draw[thick] (0,0) ellipse (4 and 2);
\node at (0,2.6) {\large Extended Phase Space $(q,p,s,p_s)$};

% Inner microcanonical distribution
\fill[red!70] (0,0) ellipse (2.6 and 1.1);
\node[text=white, align=center] at (0,0)
{Microcanonical Distribution\\for Nos\'e Hamiltonian};

% --- Right: Original phase space ---
\draw[thick] (10,0) ellipse (2.2 and 2.2);
\node at (10,2.8) {\large Original Phase Space $(q,p)$};

% Inner thermal distribution
\fill[blue!65] (10,0) ellipse (1.2 and 0.8);
\node[text=white, align=center] at (10,0)
{Thermal\\Distribution};

% Projection arrow
\draw[->, thick] (4.2,0) -- (7.8,0);
\node at (6,0.4) {Projection};

\end{tikzpicture}
    \caption{Illustration of the Nosé Hamiltonian's thermal state in the extended coordinates $(x,p,s,p_s)$ and the corresponding thermal distribution  in the original phase space coordinates $(x,p)$.  The idea behind our global optimization method is to equilibrate over the extended microcanonical distribution using Koopman-von Neumann and then trace over $s,p_s$ to construct a low temperature thermal distribution over the parameters $(x,p)$.}
    \label{fig:Nose}
\end{figure}

Recall that the Liouvillian operator as shown in Eq.~\eqref{liouvillian} can be derived from the classical Hamiltonian of the classical system under consideration.
The Hamiltonian that we use in this setting is the Nos\'e Hamiltonian with the external potential chosen to be the objective function: $V(\bx)=f(\bx)$.  Specifically, the system is described by the following classical system Hamiltonian:
\begin{equation}
    H_{\mathrm{sys}} = \frac{\bp^2}{2m} + f(\bx) = \frac{1}{2m} \sum_{j=1}^N p_j^2 + f(\bx).
\end{equation}
If a system is ergodic then we expect that, for any initial phase space density with fixed energy, the dynamics given by the Liouville equation under $H_{\rm sys}$ will lead to a distribution that is uniformly distributed over the accessible phase space.  In this sense, this approach to optimization is analogous to Hamiltonian Monte-Carlo methods~\cite{betancourt2017conceptual} which aim to draw samples from a classical thermal distribution and only became practical after automatic differentiation became a possibility.

We thermalize the system not by emulating friction via a time-dependent Hamiltonian/Liouvillian, as done in the previous sections, but rather by introducing a new bath variable $s$ with mass $Q$ and momentum $p_s$.
For inverse temperature $\beta$, the Nos\'e Hamiltonian takes the following form:
\begin{equation}
    H_{N} = \frac{\mathbf{p}^2}{2m s^2} + V(\mathbf{x}) + \frac{p_s^2}{2Q} + g\beta^{-1} \ln(s).
\end{equation}
Here, $\mathbf{x} \in \mathbb{R}^N$ and $\mathbf{p} \in \mathbb{R}^N$ are the position and momentum of the system and $g$ is a free parameter which we will later choose to scale linearly with $N$.
Now, let $p'_j := sp_j$ be the momentum conjugate to $x_j$ in the extended system, meaning system + bath.
In the continuum, it can be shown that the microcanonical partition function of the extended system reduces to a canonical partition function when restricted to the system variables $\bx$ and $\bp$~\cite{Nose1984partition_function, Huenenberger2005thermostat}:
\begin{equation}
\begin{split}
    \mathcal{Z} &\propto \int d\{x_n\} \int d\{p'_n\} \int ds \int dp_s \, \delta \lb H_N \lb \{x_n\}, \{p'_n\}, s, p_s \rb - E_{\text{ext}}\rb \\
    &\propto \int d\{x_n\} \int d\{p_n\} \int ds \int dp_s \, s^N\delta \lb H_N \lb \{x_n\}, \{p_n\}, s, p_s \rb - E_{\text{ext}}\rb \\
    &\propto  \int d\{x_n\} \int d\{p_n\} e^{-\beta H_{\text{sys}} \lb \{x_n\}, \{ p_n \} \rb},
\end{split}
\label{eq:partition_func}
\end{equation}
where $E_{\text{ext}}$ is the conserved energy of the extended system, $\beta := 1/k_BT$ and in the second line we have done the variable change $p_j=p'_j/s$.  Thus, if a microcanonical state is prepared for the Nos\'e Hamiltonian then the resulting expectation values match those from the canonical ensemble.  In the limit as $T\rightarrow 0$, this corresponds to a Gibbs distribution with support only on the minima of $V(\bx) = f(\bx)$ as we approach the continuum. Such an optimization process is not generally expected to be efficient. However, we would like to characterize the resources needed for convergence, in the worst case setting, in order to understand when this process will converge.  The conclusion that we come to is that under standard assumptions from quantum thermodynamics, we will approach the microcanonical distribution for the Nos\'e Hamiltonian on a timescale that is inverse in the spectral gap of the Hamiltonian constrained to an energy window of the Nos\'e Hamiltonian.

\subsection{Nos\'e Hamiltonian in a Discrete Setting}

It is trivial to show that the microcanonical state for the Nos\'e Hamiltonian in the extended system produces the canonical ensemble for the system after the bath is integrated out. However, the infinite integrals shown in \eqref{eq:partition_func} can not be evaluated directly on a digital computer, classical or quantum. We will show that after a series of approximations that put the formula in a form suitable for machine evaluation, the expectation value of $\bx$, which is the quantity we are trying to optimize, we obtain is close to the expectation value under the original distribution. 
The microcanonical density function of fixed total energy, $E_{\rm ext}$, is given by 
\begin{equation}
     \rho(\bx,\bp,s,p_s) = C\, \delta \lb H_N \lb \{x_n\}, \{p_n\}, s, p_s \rb - E_{\text{ext}}\rb.
\label{eq:global_rho}
\end{equation}

By integrating \eqref{eq:global_rho} over all coordinates, we can recover the normalization constant $C$ 
as:
\begin{equation}
     C^{-1} = g^{-1}\beta \lb\frac{2\pi m}{g^{-1}(N+1)\beta}\rb^{\frac{N}{2}}\lb\frac{2\pi Q}{g^{-1}(N+1)\beta}\rb e^{g^{-1}(N+1)\beta E_e}\int\,\{dx\}\,e^{-g^{-1}(N+1)\beta f(x)}.
\end{equation}

The expectation value of $\bx$ under this distribution is
\begin{equation}
\begin{split}
    \langle \bx \rangle_0 &:= C\, \int d\{x_n\} \int d\{p'_n\} \int ds \int dp_s s^N \bx \; \delta \lb H_N \lb \{x_n\}, \{p'_n\}, s, p_s \rb - E_{\text{ext}}\rb\\
    &= \frac{\int d\{x_n\}\,\bx\, e^{-g^{-1}\beta(N+1)f(\bx)}}{\int d\{x_n\}\, e^{-g^{-1}\beta(N+1)f(\bx)}}\\
    &=\frac{\int d\{x_n\}\,\bx\, e^{-\beta f(\bx)}}{\int d\{x_n\}\, e^{-\beta f(\bx)}},
\end{split}
\label{eq:ch7_nose_xavg}
\end{equation}
where in the last line we set $g$, which is a free parameter, to be $N+1$. The results of this section can be encapsulated in the following theorem:

\begin{thm}[Discretization of the Nos\'e Hamiltonian average]\label{thm:ch7_disc}
    The position average given by \eqref{eq:ch7_nose_xavg} can be approximated to $O(\epsilon)$ error in the Euclidean norm with the following expression:
    \begin{equation}
    \begin{split}
        \langle \bx \rangle_3 &:= C\,\sum_{n_{x,1}=0}^{x_{max}/h_x}h_x\sum_{n_{x,2}}^{x_{max}/h_x}h_x\cdots\sum_{n_{x,N}}^{x_{max}/h_x}h_x (h_x\bm{n_x})\,
        \sum_{n_{p,1}=-P_1/h_p}^{P_1/h_p}h_p\sum_{n_{p,2}=-P_1/h_p}^{P_1/h_p}h_p\cdots\sum_{n_{p,N}=-P_1/h_p}^{P_1/h_p}h_p \\ 
        &\qquad \sum_{n_{p_s}=-P_2/h_{p_s}}^{P_2/h_{p_s}} h_{p_s} \sum_{n=0}^{N_s} h_s (n h_s)^N \text{rect}_\Delta \lb H \lb h_x\bm{n_x}, h_p\bm{n_p}\rb + \frac{(n_{p_s}h_{p_s})^2}{2Q} + (N+1)\beta^{-1}\ln{n h_s} - E_{\text{ext}}\rb,
    \end{split}
    \end{equation}
    where the rectangle function is defined as 
    \begin{align}
    \text{rect}_\Delta(x) := 
    \begin{cases} 
      \frac{1}{\Delta} & |x|\leq \frac{\Delta}{2}, \\
      0 & \text{otherwise} .
        \end{cases}
    \end{align}
    Further, $\bm{n_x}:= n_{x,1} \bm{\hat{e}_1}+ n_{x,2} \bm{\hat{e}_2}+\cdots+ n_{x,N} \bm{\hat{e}_N}$ and $\bm{n_p}:= n_{p,1} \bm{\hat{e}_1}+ n_{p,2} \bm{\hat{e}_2}+\cdots+ n_{p,N} \bm{\hat{e}_N}$ are vectors on discrete lattices, and the tunable parameters $\Delta, P_1, P_2, h_x, h_p, h_{p_s}, h_s$ have the following asymptotic magnitudes:
    \begin{align}
        \Delta &\in O\lb\frac{1}{\beta N^{\frac{1}{4}}} \sqrt{\frac{\epsilon}{x_{\max}}}\rb,\\
        P_1 &\in \tilde{O}\lb\sqrt{\frac{m}{\beta}}\rb,\\
        P_2 &\in\tilde{O}\lb\sqrt{\frac{Q}{\beta}}\rb,\\
        h_s &\in O\lb\frac{\epsilon}{N^\frac{11}{4}x^{\frac{3}{2}}_{\max}}\rb,\\
        h_p &\in \tilde{O}\lb\frac{1}{N^{\frac{5}{4}}}\sqrt{\frac{m}{\beta}}\frac{\sqrt{\epsilon}}{\sqrt{x_{\max}}}\rb,\\
        h_{p_s} &\in \tilde{O}\lb\frac{1}{N^{\frac{1}{4}}}\sqrt{\frac{Q}{\beta}}\frac{\sqrt{\epsilon}}{\sqrt{x_{\max}}}\rb,\\
        h_x &\in O\lb \frac{1}{\beta N |\partial f(\bx)|_{\max}}\rb,
    \end{align}
    where $|\partial f(\bx)|_{\max}:=\max_i\max_{\bx}\left|\frac{\partial f(\bx)}{\partial x_i}\right|$.
\end{thm}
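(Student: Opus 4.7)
The plan is to introduce three intermediate approximations $\langle\bx\rangle_1$, $\langle\bx\rangle_2$, $\langle\bx\rangle_3$ of the exact expectation $\langle\bx\rangle_0$ and bound the error at each step by $O(\epsilon/3)$ in the Euclidean norm by choosing the tunable parameters appropriately. Specifically, $\langle\bx\rangle_1$ replaces the Dirac $\delta$ by the rectangle $\text{rect}_\Delta$; $\langle\bx\rangle_2$ truncates the integration domains of $\bp$, $p_s$, and $s$ to bounded boxes; and $\langle\bx\rangle_3$ replaces the truncated integrals by Riemann sums on the stated grids. Throughout I will work with the normalised density $\rho$ so that the ratio form of $\langle\bx\rangle_0$ lets me bound the error via $\|\langle\bx\rangle_k - \langle\bx\rangle_0\| \le \tfrac{1}{Z}\int\|\bx\|\,|\rho_k-\rho_0|\,d\mu + \|\langle\bx\rangle_0\|\,|Z_k-Z|/Z$, so that each error reduces to an $L^1$ perturbation of the unnormalised density, times $x_{\max}$ via $\|\bx\|\le\sqrt{N}x_{\max}$.

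First, the $\delta$-to-$\text{rect}_\Delta$ step. In the continuum, integrating the rectangle against the Hamiltonian foliation yields an energy-averaged phase space volume in a shell of width $\Delta$ about $E_{\rm ext}$, so the error is controlled by the derivative of the foliation volume with respect to energy. Differentiating once in $E$ and using Gaussian concentration in $\bp,p_s$ (widths $\sqrt{m/\beta}$ and $\sqrt{Q/\beta}$), one obtains a bound $\propto \beta^2\Delta\cdot\sqrt{N}x_{\max}$ after accounting for the $s^N$ weight; the stated choice $\Delta\in O(\beta^{-1}N^{-1/4}\sqrt{\epsilon/x_{\max}})$ controls this by $\epsilon/3$.

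Second, the truncation step. The integrand of $\langle\bx\rangle_1$ (after integrating out the $\text{rect}_\Delta$ constraint, or keeping it as a thin shell) is, in the $(\bp,p_s)$ directions, a Gaussian of variance $m/\beta$ and $Q/\beta$ respectively, so truncating at $P_1\in \tilde O(\sqrt{m/\beta})$ and $P_2\in \tilde O(\sqrt{Q/\beta})$ throws away a polylogarithmically small tail. For $s$, the effective weight is $s^N e^{-(N+1)\beta\cdot(N+1)^{-1}\beta^{-1}(N+1)\ln s}\cdot$(shell factor), which concentrates around $s^*\sim 1$ with width $O(1/\sqrt{N})$; truncating to a compact window $[s_{\min},s_{\max}]$ excluding a small ball around $s=0$ (to keep $\ln s$ bounded) again costs only a polylog factor. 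These tails can be absorbed into the hidden $\tilde O$ constants of $P_1,P_2$.

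Third, the quadrature step. On the truncated box, each integrand is smooth away from $s=0$; I apply the composite midpoint (or Riemann rectangle) rule and use the standard error bound $|I - I_h|\le \tfrac{(b-a)h^2}{24}\max|F''|$ per dimension. The key calculation is to bound $\|\partial^2 F\|_\infty$ in each variable: differentiating $\bx\, s^N \text{rect}_\Delta(H_N-E_{\rm ext})$ in $x_j$ introduces factors of $|\partial f|_{\max}\beta$ through the shell derivative, in $p_j$ factors of $p_{\max}/(ms^2)\cdot\beta$, in $p_s$ factors of $p_{s,\max}\beta/Q$, and in $s$ factors involving $N/s$ and $s^{N-2}$ together with $\beta^{-1}/s$ from $\ln s$. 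Summing the $N$-fold contributions and using $\sqrt{N}x_{\max}$ for the $\bx$ prefactor gives the stated grid spacings $h_x,h_p,h_{p_s},h_s$, each chosen so its contribution is $\epsilon/(3\cdot\text{number of dimensions})$. Finally, one returns to the ratio form: the same $L^1$-type bounds control the error in the normalising constant (replacing $\bx$ by $1$ in the integrand, with an extra $x_{\max}$ factor when reinserting into the ratio).

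\textbf{Main obstacle.} The delicate step is the simultaneous bound on the quadrature error, the $s\to 0$ singularity of $\ln s$, and the narrowness of the rectangular energy shell: the integrand has second derivatives that scale polynomially in $\beta,N,1/s$ and inversely in $\Delta$, so the grid spacings must be chosen coherently with $\Delta$ and with the lower truncation $s_{\min}$. Getting the exponents on $N$ and $\epsilon$ to match the stated rates requires carefully tracking how the $s^N$ weight concentrates (so that effectively only an $O(1/\sqrt{N})$-wide window in $s$ contributes) rather than using the pessimistic pointwise bound; otherwise the $h_s$ requirement would be much stronger than stated. I expect this concentration argument in $s$, combined with a Laplace-type expansion of the effective weight $s^N e^{-(N+1)\beta^{-1}(N+1)\ln s\cdot\beta}$, to be the technical heart of the proof.
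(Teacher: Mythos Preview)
Your three-step decomposition has the right shape, but the quadrature step has a genuine gap. You propose to apply the midpoint rule directly to the integrand $\bx\,s^N\,\mathrm{rect}_\Delta(H_N-E_{\rm ext})$ and bound the error via $\max|F''|$. But $\mathrm{rect}_\Delta$ is a step function; the integrand is not even continuous in any of the variables $x_j,p_j,p_s,s$, so no pointwise second-derivative bound exists. The ``shell derivatives'' you invoke are Dirac deltas on the boundary of the energy shell, and they cannot be absorbed into a midpoint-rule constant. This is exactly the obstacle you flag at the end, and a Laplace-type concentration argument in $s$ does not by itself remove the discontinuity.

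The paper sidesteps this by integrating the $s$ variable against the rectangle \emph{analytically} before any discretization. Since $H_N$ depends on $s$ only through $g\beta^{-1}\ln s$, the condition $|H_N-E_{\rm ext}|\le \Delta/2$ is equivalent to $s\in[s_1,s_2]$ with explicit endpoints $s_{1,2}=s_0 e^{\mp g^{-1}\beta\Delta/2}$, and $\int_{s_1}^{s_2} s^N\,ds/\Delta$ has the closed form $s_0^{N+1}\cdot 2\sinh(g^{-1}\beta(N+1)\Delta/2)/[\Delta(N+1)]$. The $s$-quadrature is then handled not by midpoint but by a left Riemann sum with the two boundary cells treated as pure error. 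The crucial payoff is that $s_0^{N+1}$ \emph{factorises} as $e^{-\beta f(\bx)}\cdot e^{-\beta\bp^2/2m}\cdot e^{-\beta p_s^2/2Q}\cdot(\text{const})$ once $g=N+1$, so the remaining $(\bx,\bp,p_s)$ integrals decouple into a product of smooth one-dimensional Gaussians (for $p,p_s$) and a smooth weighted $\bx$-integral. These can now be discretised separately with standard truncation-plus-midpoint bounds for the Gaussians and a multivariate mean-value bound for $\bx$, yielding the stated grid spacings.

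A smaller point: in your first step the error from replacing $\delta$ by $\mathrm{rect}_\Delta$ should be $O(\beta^2\Delta^2\|\langle\bx\rangle_0\|)$, not linear in $\Delta$; the paper gets this exactly from $\langle\bx\rangle_1=\langle\bx\rangle_0\,\mathrm{sinhc}(\beta\Delta/2)$ and Taylor expansion, which is what produces the square root in the stated $\Delta\in O(\beta^{-1}N^{-1/4}\sqrt{\epsilon/x_{\max}})$.
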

The proof of the above theorem can be found in Appendix~\ref{appendix:discproof}.

\subsection{Approach to Microcanonical Equilibrium}

In order for us to use the Nosé Hamiltonian to reach a low-temperature thermal state for our reduced system, we consider the approach of the extended, closed system to microcanonical equilibrium.  This makes the problem of approaching the microcanonical distribution equivalent to the well studied problem of thermalization for closed quantum systems.

Using the result that in the continuum limit the Nosé Hamiltonian leads to a thermal distribution over the original system when we trace over the bath, we wish to prepare a microcanonical distribution over the positions of the model.  Recall that the momenta here correspond to the dynamical velocities of the parameters that we wish to optimize over and therefore do not play a meaningful role in the optimization.  We wish to reach this state by preparing the following state, which we find by evolving the system for a random amount of time between $0$ and some maximum duration $t$ for the discrete phase space Koopman-von Neumann density operator $\tilde{\rho}$:
\begin{equation}
     \langle \tilde{\rho} \rangle_{t} := \frac{1}{t} \int_0^t e^{-iLt'} \tilde{\rho}_0 e^{iLt'} dt'.
\end{equation}
Here the Liouvillian is given from the Hamiltonian via the operator
\begin{equation}
\label{eq:Ldefinition}
    L = -i \sum_{j=1}^N \lb \frac{\partial H}{\partial p_j} \frac{\partial}{\partial x_j} - \frac{\partial H}{\partial x_j} \frac{\partial}{\partial p_j} \rb -i \lb \frac{\partial H}{\partial p_s} \frac{\partial}{\partial s} - \frac{\partial H}{\partial s} \frac{\partial}{\partial p_s} \rb,
\end{equation}
where in practical applications of these formulas the (partial) derivatives are replaced with finite difference operators.

We anticipate that the correct microcanonical distribution will emerge as $t \rightarrow \infty$ which we denote as $\langle \tilde{\rho}\rangle_\infty$.
Specifically, we assume ergodicity, meaning that the infinite time average of the phase space density is equal to the phase space average over a narrow energy shell $\Delta$.  We further will assume that all eigenstates of the Liouvillian supported within the energy window have position / momentum distributions corresponding to the microcanonical average. 
This is known as the eigenstate thermalization hypothesis~\cite{Linden2009thermal}, applied in this case to the Liouvillian which plays the role of a Hamiltonian for our quantum dynamical system.

In order to discuss the microcanonical state, we need to formally specify an energy band that the initial distribution is prepared inside.  This energy band $\Delta$ is defined to be  the subspace spanned by the set of eigenvectors of $H$ $\{\ket{\lambda_{H,j}}\}$, where $H\ket{\lambda_{H,j}} = \lambda_{H,j} \ket{\lambda_{H,j}}$ such that the eigenvalues of the Hamiltonian fall inside a fixed range $[\Delta_0,\Delta_1]$.  More compactly,
\begin{equation}
    \Delta = {\rm span}(\{\ket{\lambda_{H,j}}: \lambda_{H,j}\in [\Delta_0,\Delta_1]\}).
\end{equation}
As we approach the continuum, the size of this band can be allowed to shrink and microcanonical equilibrium can be thought of as occurring in this limit when the probability of any microstate of position and momentum that is supported in the span of these vectors is equal to all others in the support.

In order to analyze the time needed to equilibrate, we need in effect to argue about the form of the eigenvectors of the Liouvillian, $L$, and the value of $t$ needed to ensure that transient coherences in the Koopman-von Neumann wave function are small.
The question remaining involves how to select $t$ in this evolution.  This generically depends on the structure of the eigenvalues and eigenvectors of the Nosé Hamiltonian $H_N$, which are difficult to analytically compute.  However, we can provide analytic estimates of the time under the assumptions that the eigenvectors of the Hamiltonian are typical of Haar random vectors.  Specifically though, the assumption of Haar randomness is actually too strong and all that we require is that the eigenvectors are typical of those drawn from a unitary $2$ design as such sets of matrices capture the mean and variance of expectation values over the random matrices without requiring exponential computational overhead to implement.  These assumptions are strong, but are needed to make concrete predictions about the equilibration time without invoking assumptions about the observables considered or the structure of the eigenstates; however, despite this we include the following argument to qualitatively provide insight about how the equilibrium time scales in the discrete approximation to the continuous Nos\`e Hamiltonian.

\begin{lem}
\label{lem:equilibrium}
Let $\tilde{\rho}_0 = \sum_{j,k} \alpha_{j,k} \ketbra{\lambda_j}{\lambda_k} \in \mathbb{C}^{D_x^2D_s^2 \times D_x^2D_s^2}$, where $D_x$ is the dimension of the position/momentum spaces and $D_s$ is the dimension of the position/momentum for the bath, be the initial density matrix of the (extended) classical system where $\{\lambda_j \}$ are the eigenvalues of the Liouvillian $L\in \mathbb{C}^{D_x^2D_s^2 \times D_x^2D_s^2}$ constrained to be in the energy window $\Delta$.   
Let $D_s \in O(1/\delta)$ for $\delta>0$. Then there exists $t \in O(1 /\gamma \delta^2)$ such that if $\tilde{\rho}_{mc}$ is the  microcanonical distribution for the discrete Liouvillian $L$ over the system and bath for the energy window $\Delta$ and if $B(\Delta)$ is a set of projectors onto microstates, which are in $\mathbb{C}^{D_x^2D_s^2 \times D_x^2D_s^2}$, such that ${\rm Tr}(H_N \ketbra{\phi}{\phi}) $, then the maximum error in the probability of sampling a given position coordinate of the extended system is 
$$
\sup_{\phi \in B(\Delta)}\left|{\rm Tr}\left( {\rm Tr}_{s,p,p_s}(\ketbra{\phi}{\phi})\left(\mathbb{E}_{{\tilde{\rho}_0}}({\rm Tr}_{s,p,p_s}(\langle\tilde{\rho}\rangle_{t})) - {\rm Tr}_{s,p,p_s}(\tilde{\rho}_{mc})\right)\right)\right|\le \delta,
$$
where we assume the following:
\begin{enumerate}
    \item Let $\tilde{\rho}_0$ be a random pure state drawn uniformly from states supported only on an energy window which is a convex set $\Delta\subset \mathbb{R}$ such that if $\ket{E_j}$ is an eigenvector of $H$ with eigenvalue $E_j$ then $\bra{E_j}\tilde{\rho}_0\ket{E_j}=0$ if $\lambda_j \not\in \Delta$.  We denote this for brevity as $\tilde{\rho}_0 \in \Delta$. 
    
    \item There exists a minimum spectral gap $\gamma>0$ such that for all $\ketbra{\lambda_j}{\lambda_j},\ketbra{\lambda_k}{\lambda_k}$ such that $\lambda_j,\lambda_k \in \Delta$, 
    $${\substack{\min\\j,k,\lambda_j\ne \lambda_k}}|\lambda_j - \lambda_k|\ge \gamma .$$
    \item The eigenvectors of $L$ are typical of the columns of random unitaries drawn from a unitary $2$-design.
\end{enumerate}
\end{lem}
\begin{proof}
 The time averaged density matrix is given by
\begin{equation}
\begin{split}
    \langle \tilde{\rho} \rangle_{t} &= \frac{1}{t} \int_0^t e^{-iLt'} \tilde{\rho}_0 e^{iLt'} dt' \\
    &= \frac{1}{t} \sum_{j,k} c_{j,k} \int_0^t e^{-i\lambda_jt'} \ketbra{\lambda_j}{\lambda_k} e^{i\lambda_kt'} dt' \\
    &= \sum_{\substack{j,k \\ \lambda_j = \lambda_k}} c_{j,k} \ketbra{\lambda_j}{\lambda_k} + \underbrace{\frac{i}{t} \sum_{\substack{j,k \\ \lambda_j \neq \lambda_k}} c_{j,k} \frac{ e^{-i(\lambda_j - \lambda_k) t} - 1}{\lambda_j - \lambda_k} \ketbra{\lambda_j}{\lambda_k}}_{=: R_t}.
\end{split}
\label{avg_rho_T}
\end{equation}
In the limit $t \rightarrow \infty$, we have that
\begin{equation}
    \tilde{\rho}_{\infty} := \lim_{t \rightarrow \infty} \langle \tilde{\rho} \rangle_{t} = \sum_{\substack{j,k \\ \lambda_j = \lambda_k}} c_{j,k} \ketbra{\lambda_j}{\lambda_k}.
\end{equation}

Next, let us assume that our initial state is a pure state.  This corresponds to assuming that
\begin{equation}
    c_{jk} = \sqrt{c_j c_k^*},
\end{equation}
where $\sum_j |c_j| =1$.  Now, let us assume that we are interested in the measurement of a probability of the system being in a position state $\ket{\bx}$.  We do not actually care about the momentum that we have at a given position, since the positions correspond to the parameters for the optimization whereas the momentum does not have a clear interpretation.  With this in mind, let us then define a notion of variance for the expectation values of the state in a total dimension $D=D_x^2 D_s^2$ considering the position, momentum and bath.  Further, let us consider a microstate $\ketbra{\phi}{\phi}$ inside the energy window $\Delta$ to be a projector onto a combination of positions and momenta we are interested in. Then the expectation value of any such projector onto the position and momentum of the system is
\begin{equation}
    V(\Delta):=\max_{\phi \in B(\Delta)}\left(\frac{1}{D} \left(\sum_{j} \bra{\lambda_j} \ketbra{\phi}{\phi}\ket{\lambda_j} - (\bra{\lambda_j} \ketbra{\phi}{\phi} \ket{\lambda_j})^2 \right)\right).
\end{equation}

We can now bound the error in any probability calculation using this information.  In particular, let us assume that we are interested in the probability of measuring the system in a state  $\ket{\phi}$ in the physical subspace.  We then have that (after truncating all states from consideration that have zero amplitude) the error operator is
\begin{align}
    |{\rm Tr}(R_t \ketbra{\phi}{\phi})|&=\frac{1}{t}  \left|\sum_{\substack{j,k \\ \lambda_j \neq \lambda_k}} c_{j,k} \frac{\left(e^{-i(\lambda_j-\lambda_k)t}-1\right) \bra{\lambda_k} \ketbra{\phi}{\phi}\ket{\lambda_j} }{\lambda_j-\lambda_k}\right|\nonumber\\
    &\le \frac{2}{\gamma t}\sqrt{\sum_{j,k} |c_{j,k}|^2} \sqrt{\sum_{\substack{j,k \\ \lambda_j \neq \lambda_k}} \bra{\lambda_j} \ketbra{\phi}{\phi}\ket{\lambda_k}\bra{\lambda_k} \ketbra{\phi}{\phi}\ket{\lambda_j} }\nonumber\\
    &\le \frac{2}{\gamma t} \sqrt{\sum_{j} \bra{\lambda_j} \ketbra{\phi}{\phi}\ket{\lambda_j} - (\bra{\lambda_j} \ketbra{\phi}{\phi}\ket{\lambda_j})^2}\nonumber\\
    &\le \frac{2\sqrt{V(\Delta) D}}{\gamma t}.
\end{align}

We then can ensure that $|{\rm Tr}(R_t \ketbra{\phi}{\phi})|\le \delta$ if
\begin{equation}
    t \ge \frac{2\sqrt{V(\Delta) D}}{\gamma\delta}.
\label{eq:Tbd}
\end{equation}

Further, if the $\lambda_j$ contain degeneracies we can note that any linear combination of the eigenvectors is also an eigenvector.  In this case, we can write for constants $c'_j$ and new eigenvectors $\ket{\lambda_j'}$
\begin{equation}
    \langle \tilde{\rho}\rangle_\infty = \sum_{j} c'_j \ketbra{\lambda_j'}{\lambda_j'}.
\end{equation}
Then, using our assumption that each eigenvector of the Liouvillian, $\ket{\lambda_j'}$, is typical of a column vector from a random unitary chosen from a unitary 2-design,
it follows from Theorem 3 from~\cite{Linden2009thermal} and the fact that the momentum is being traced over that
\begin{equation}
 \frac{1}{2}\|\mathbb{E}_{{\tilde{\rho}_0}}({\rm Tr}_{s,p,p_s}(\langle \tilde{\rho}\rangle_\infty)) - {\rm Tr}_{s,p,p_s}(\tilde{\rho}_{mc})\|_1 \le \frac{1}{2}\sqrt{\frac{D_x}{D_s^2 D_p}}.
\end{equation}
The above result and the assumption that the dimension of the momentum register is the same as the dimension of the position register shows that the trace distance between the reduced density matrices obeys 
\begin{equation}
    \frac{1}{2}\|\mathbb{E}_{{\tilde{\rho}_0}}({\rm Tr}_{s,p,p_s}(\langle \tilde{\rho}\rangle_\infty)) - {\rm Tr}_{s,p,p_s}(\tilde{\rho}_{mc})\|_1 \le \frac{1}{2}\frac{1}{\sqrt{D_s^2}}.
\end{equation}
This implies that it suffices to pick 
\begin{equation}
    D_s = O(1/\delta)
\end{equation}
to ensure that the error is at most $\delta$.
This justifies the restriction made in the lemma statement that $D_s\in O(1/\delta)$.
We then have using the sub-additive property of the trace distance~\cite{watrous2018theory}
\begin{align}
    \frac{1}{2}\|\mathbb{E}_{{\tilde{\rho}_0}}({\rm Tr}_{s,p,p_s}(\langle\tilde{\rho}\rangle_{t})) - {\rm Tr}_{s,p,p_s}(\tilde{\rho}_{mc})\|_1 \le &\frac{1}{2}\|\mathbb{E}_{{\tilde{\rho}_0}}({\rm Tr}_{s,p,p_s}(\langle\tilde{\rho}\rangle_{t})) - \mathbb{E}_{{\tilde{\rho}_0}}{\rm Tr}_{s,p,p_s}(\langle\tilde{\rho}\rangle_{\infty})\|_1 \nonumber\\
    &\quad+ \frac{1}{2}\|\mathbb{E}_{{\tilde{\rho}_0}}({\rm Tr}_{s,p,p_s}(\langle\tilde{\rho}\rangle_\infty)) - {\rm Tr}_{s,p,p_s}(\tilde{\rho}_{mc})\|_1\in O(\delta).
\end{align}
Hence we have that for any state $\phi\in B(\Delta)$ that the expectation value of the marginal distribution over the positions of the state obeys 
\begin{equation}
    \sup_{\phi \in B(\Delta)}\left|{\rm Tr}\left( {\rm Tr}_{s,p,p_s}(\ketbra{\phi}{\phi})\left(\mathbb{E}_{{\tilde{\rho}_0}}({\rm Tr}_{s,p,p_s}(\langle\tilde{\rho}\rangle_{t})) - {\rm Tr}_{s,p,p_s}(\tilde{\rho}_{mc})\right)\right)\right|]\le \|\mathbb{E}_{{\tilde{\rho}_0}}({\rm Tr}_{s,p,p_s}(\langle \tilde{\rho}\rangle_t)) - {\rm Tr}_{s,p,p_s}(\tilde{\rho}_{mc})\|_1\le \frac{1}{D_s}.
\end{equation}

Thus it suffices to choose a value of $t$ such that if $D_x$ is the dimension of the position register and we assume that this is the same as the momentum dimension, then~\eqref{eq:Tbd} and the fact that $D=D_s^2 D_x^2$ lead to
\begin{equation}
    t \in O\left( \frac{D_x \sqrt{V(\Delta)}}{\gamma \delta^2} \right) .
\end{equation}

The Haar measure is unitarily invariant and thus the average inner product squared between $\ket{\lambda_j}$ and $\ket{0}\ket{k}$ must be on the order of $1/D$.  Since a unitary $2$-design matches the first two  the expectation value will match the Haar expectation and in turn can be bounded above by 
\begin{equation}
    \mathbb{E}(V(\Delta)) \le \mathbb{E}(D^{-1} \sum_{j,k} |\bra{\lambda_j}\ket{0}\ket{k}|^2)=O(D_s^2/D)=O(1/D_x^2).
\end{equation}

Next we have that 
\begin{align}
    \mathbb{E}(V(\Delta)^2) &= \sum_{j,j',k,k'} D^{-2}\mathbb{E}(|\bra{\lambda_j}\ket{0}\ket{k}|^2|\bra{\lambda_{j'}}\ket{0}\ket{k'}|^2)\nonumber\\
    &= \sum_{(j,k)\ne (j',k')} D^{-2}\mathbb{E}(|\bra{\lambda_j}\ket{0}\ket{k}|^2|\bra{\lambda_{j'}}\ket{0}\ket{k'}|^2)+\sum_{j,k} D^{-2}\mathbb{E}(|\bra{\lambda_j}\ket{0}\ket{k}|^4)
\end{align}
Next, using the well known formulas for the $(2,2)$ moments of the distribution as given in~\cite{Linden2009thermal}, we have that
\begin{equation}
    \mathbb{E}(V(\Delta)^2) \in O\left(D_s^4D^{-2}\right)=O\left(1/D_x^4\right)
\end{equation}
Chebyshev's inequality then shows us that the probability that $|D_x^2 V(\Delta) - D_x^2 \mathbb{E}(V(\Delta))|\ge k$ is at most $O(1/k^2D_x^4)$. Thus for all but a set of zero measure the claimed result holds.
This gives us our final result.
\end{proof}

The value of $t$ yielded by Lemma~\ref{lem:equilibrium} can provide us with resource estimates for the number of queries needed to sample from the position distribution for the approximate Gibbs state over the position and momentum for the system.  Specifically, we follow a strategy similar to the local analysis in that we use quantum simulation algorithms to prepare the distribution.  Unlike the previous discussion, we will be able to simply use qubitization rather than the more complicated truncated Dyson series method because the approach used here only requires a time-independent Liouvillian.

A major limitation of the prior argument is that in the continuum limit that we expect $\gamma\rightarrow 0$ and thus increasing care needs to be made about the definition of equilibration.  Specifically, stronger assumptions of the form that for any observable of interest, $A$, we have that for all $x$ in an set $\Sigma$ with $\lambda_x$ with corresponding eigenvectors $\ket{\lambda_x}$ we have that for all $x,y\in \Sigma$ $|\bra{\lambda_x} A \ket{\lambda_x} - \bra{\lambda_y} A \ket{\lambda_y}|\le \delta$ if $|\lambda_x - \lambda_y|\le \gamma$.  Under such continuity assumptions, we can treat all states within the energy band as behaving essentially the same with respect to our observables of interest and allow the prior arguments to be extended to sufficiently well behaved continuum systems.

\subsection{Block Encoding of Nos\'e Liouvillian}
Our approach to finding the global optimum using the above approach boils down to performing a Hamiltonian simulation of the Liouvillian for long enough for the system to reach microcanonical equilibrium over the extended space.  In order to use quantum simulation methods such as qubitization to simulate the Liouvillian, we need first to construct a block encoding of it.  We will approach the block encoding in two phases.  First we will discuss the block encoding of the part of the Liouvillian that does not include the objective function (that is, the potential in the Hamiltonian) and then later discuss block encoding the potential term and add the results to achieve the total block encoding.  The first of these results is given below.

\begin{lem}[Block-encoding of the discretized Nos\'e Liouvillian for global optimization]
     Assume that the derivative operators in the Koopman-von Neumann Hamiltonian are given by degree $2d_x, 2d_{p'}, 2d_s,2d_{p_s}$ order finite-difference formulas for positive integers $d_x,d_{p'},d_s,d_{p_s}$.  
     There exists an $(\alpha_{NVT}, a_{NVT}, \epsilon)$-block-encoding of the discretized classical Liouvillian $L^{(NVT)}$ with {normalization constant}
    \begin{equation*}
         \alpha_{NVT} \in O \lb N \frac{p'_{\text{max}}}{m s_{\text{min}}^2} \frac{\ln d_x}{h_x} +   \frac{p_{s, \text{max}}}{Q} \frac{\ln d_s}{h_s} + \lb N \frac{{p'}_{\text{max}}^2}{m s_{\text{min}}^3} + \frac{N k_B T}{s_{\text{min}}} \rb \frac{\ln d_{p_s}}{h_{p_s}} \rb
    \end{equation*}
    and {a number of ancilla qubits}
    \begin{equation*}
        a_{NVT} \in O \lb \log \lb \frac{\alpha_{NVT}}{\epsilon} \rb + \log d \rb
    \end{equation*}
    where $d := \max \{ d_x, d_{p'}, d_s, d_{p_s} \}$.
    This block-encoding can be implemented using
    \begin{equation*}
        \widetilde{O} \lb N \log \lb \frac{g \alpha_{NVT}}{\epsilon} \rb + \log^{\log 3}{\lb \frac{\alpha_{NVT}}{\epsilon} \rb}  + d \log g \rb
    \end{equation*}
    Toffoli gates, where $g := \max \{ x_{\max}/h_x, p_{\max}/h_p, s_{\max}/h_s, {p_s}_{\max}/h_{p_s} \}$ is the maximum number of grid points along any of the cardinal directions for the simulation.
\label{lem:bounds_L_class_NVT}
\end{lem}
\begin{proof}
Proof immediately follows by substituting in zero total charge into the result of Lemma 2 of~\cite{Simon2024Liouvillian} to remove the unnecessary Coulomb Hamiltonian from that work.
\end{proof}

\begin{lem}\label{lem:derivBE}
Assume for constant $c\in \mathbb{Z}$ that we are given block encoding oracles for the potential operator $\mathtt{PREP}_V\in \mathbb{C}^{2^c\times 2^c}$ and $\mathtt{SEL}\in \mathbb{C}^{2^cD_x\times 2^cD_x}$ such that $O_V = \sum_{x} \alpha_j U_j$ such that $U_j$ is a diagonal operator and $(\bra{0}\otimes I) (\mathtt{PREP}_V^\dagger \otimes I) \mathtt{SEL}_V (\mathtt{PREP}_V) (\ket{0}\otimes I) = \sum_\bx f(\bx) \ketbra{\bx}{\bx}/\alpha $.  We then have that we can construct an $(\alpha',O(c+\log d) ,\epsilon)$-block encoding of $i\sum_{i=1}^N (\partial_{p_i} \otimes \partial_{x_i} f(\bx))$ where $N=2^m$ for integer $m>0$ and $\partial_{p_i}$  is a degree $2d_p$ approximation to the discrete derivative operator in the $i^{\rm th}$ momentum direction where 
$$
\alpha' \in O\left( N\alpha \ln(d_p)\ln(d_x)/h_ph_x \right),
$$
using a number of queries to the above oracles that are in $O(1)$ and a number of Toffoli gates that are in $\widetilde{O}(N\log(g) +d\log( \alpha'/\epsilon))$.
\end{lem}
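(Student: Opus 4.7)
The plan is to assemble the target operator $i\sum_{i=1}^N \lb \partial_{p_i} \otimes \partial_{x_i} f(\bx) \rb$ from three standard block-encoding primitives: (i) a block encoding of each diagonal operator $\partial_{x_i} f(\bx)$ built from the given $\mathtt{PREP}_V,\mathtt{SEL}_V$ via a finite-difference LCU, (ii) a block encoding of each momentum derivative $\partial_{p_i}$ from a finite-difference LCU of shift operators, and (iii) a tensor product followed by an outer LCU over the index $i$.

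First I would construct, for each fixed $i$, a block encoding of the diagonal operator $\partial_{x_i} f(\bx)$ using a degree-$2 d_x$ central finite-difference formula
\begin{equation}
    \partial_{x_i} f(\bx) \approx \frac{1}{h_x}\sum_{k=-d_x}^{d_x} c_k^{(x)}\, f\lb \bx + k h_x \mathbf{e}_i\rb,
\end{equation}
whose coefficients are known to satisfy $\sum_k |c_k^{(x)}| \in O(\ln d_x)$. Let $T_{k,i}$ denote the unitary shift by $k h_x \mathbf{e}_i$ on the $i$-th position register, implementable with $O(\log g)$ Toffolis per shift using controlled adders. Conjugating the given potential block encoding by $T_{k,i}$ yields a block encoding of the diagonal operator $\sum_\bx f(\bx + k h_x \mathbf{e}_i)\ketbra{\bx}{\bx}$ with the same normalization $\alpha$. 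Combining these via PREP/SEL for the finite-difference coefficients then gives an $(\alpha \ln(d_x)/h_x, c + O(\log d_x), \epsilon_1)$-block encoding of $\partial_{x_i} f(\bx)$, where $\epsilon_1$ can be taken polynomially small in $\epsilon/\alpha'$ by preparing the coefficient state to sufficient precision, which costs $O(d_x \log(\alpha'/\epsilon))$ Toffolis via standard state preparation.

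Next I would handle the momentum derivative $\partial_{p_i}$ in exactly the same way, but without the potential call: a degree-$2 d_p$ central difference gives an LCU of shift operators on the $i$-th momentum register with coefficient $1$-norm $O(\ln d_p / h_p)$, yielding a $(\ln(d_p)/h_p, O(\log d_p), \epsilon_2)$-block encoding using $\widetilde{O}(d_p + \log g)$ Toffolis. Tensoring the two block encodings produces a block encoding of $\partial_{p_i} \otimes \partial_{x_i} f(\bx)$ with normalization $\alpha \ln(d_p) \ln(d_x)/(h_p h_x)$ and ancilla count $c + O(\log d)$ where $d := \max\{d_x, d_p\}$. Finally I would implement the sum over $i$ by an outer LCU: the coefficient state is the uniform superposition over $m = \log_2 N$ qubits (plus a global phase $i$ absorbed into PREP), and SEL applies the tensored block encoding controlled on the register $\ket{i}$, which re-targets the shift adders to the $i$-th position and momentum subregisters. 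This multiplies the normalization by $N$, giving the claimed $\alpha' \in O(N\alpha \ln(d_p)\ln(d_x)/(h_p h_x))$, and costs $O(N \log g)$ Toffolis for the controlled-$i$ addressing of the shift operators.

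The main technical obstacle I anticipate is accounting correctly for the propagated errors: each of the three nested LCUs introduces an approximation error that must be propagated through the subsequent block encodings, and to satisfy the final $\epsilon$ guarantee each intermediate coefficient-state preparation must be accurate to an amount scaling inversely with $\alpha'$. This forces the $\log(\alpha'/\epsilon)$ factor in the Toffoli count for coefficient preparation and the $O(\log(\alpha'/\epsilon))$ contribution to the ancilla size, absorbed into the $O(c + \log d)$ bound up to standard constants. A secondary subtlety is handling the controlled-on-$i$ shift: rather than storing $N$ distinct adder circuits, one performs a single adder conditioned on a position selected from the $\ket{i}$ register, which keeps the Toffoli overhead at $O(N \log g)$ rather than $O(N^2)$. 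Queries to $\mathtt{PREP}_V$ and $\mathtt{SEL}_V$ remain $O(1)$ throughout since the potential oracle is invoked only inside each shifted diagonal block encoding and the outer structure treats it as a black box.
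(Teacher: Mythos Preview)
Your proposal is correct and follows essentially the same approach as the paper: both build $\partial_{x_i}f$ by conjugating the potential block encoding by position shifts in a finite-difference LCU, build $\partial_{p_i}$ as an LCU of momentum shifts, and then sum over $i$ via an outer LCU with a uniform superposition over the dimension index. The only cosmetic difference is that the paper writes the construction as a single monolithic circuit (one combined $\mathtt{PREP}$ over all three index registers, a controlled-SWAP network to bring the $i$-th position/momentum registers into a fixed slot, apply the shifts and potential, then undo), whereas you present it hierarchically as three nested block encodings; the resulting circuit, normalization, and cost bounds coincide.
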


\begin{proof}
First let us define some notation.  Let the finite difference approximation to the momentum derivative of order $2d_p$ be \begin{equation}\partial_p = \sum_{j=0}^{2d_p} \beta_j U_+^{j-d_p}\end{equation} and the corresponding position derivative of order $2d_x$ is 
\begin{equation}
    \partial_x = \sum_{j=0}^{2d_x} \beta'_j U_+^{j-d_x}
\end{equation} and unitary $U_+$ is the modular incrementer.  We use the following algorithm for block encoding the operator.  
\begin{enumerate}
    \item Prepare state $\frac{1}{N} \sum_j \ket{j}_A \otimes \sum_k \sqrt{\beta_k} \ket{k}_B \otimes \sum_{\ell} \sqrt{\beta'_\ell}\ket{\ell}_C\otimes \ket{0}_D\otimes  \ket{\psi}_s$, 
    \item Construct a Toffoli network that performs for any $\ket{j}_A \ket{\psi}_s$ the permutation $ \ket{j}_A {\rm SWAP}_{x,j0} {\rm SWAP}_{p,j 0}\ket{\psi}$ where ${\rm SWAP}_{x,j0}$ swaps position registers $j$ and $0$ and ${\rm SWAP}_{p,j 0}$ swaps momentum registers $j$ and $0$ for the state $\ket{\psi}_s$.
    \item Controlled on register $B$ apply $\sum_k \ketbra{k}{k} \otimes U_+^{k-d_p}$ to register $s$.
    \item Controlled on register $C$ apply $\sum_\ell \ketbra{\ell}{\ell} \otimes U_+^{\ell-d_x}$ to register $s$.
    \item Apply the transformation $(\mathtt{PREP}_V^\dagger \otimes I) \mathtt{SEL}_V (\mathtt{PREP}_V\otimes I)$ to registers $D$ and $s$.
    \item Apply the inverse of the unitaries in steps $4,3,2$ and $1$.
\end{enumerate}

Let us first consider the one-dimensional case which corresponds to $N=1$.  In this case the SWAP steps in step $2$ are unnecessary because the registers are already in the canonical position.  We have from known results that a $2d_p^{\rm th}$ order divided difference formula~\cite{Simon2024Liouvillian} can be constructed in the form of
\begin{equation}
    \partial_p = \sum_{j=0}^{2d_p} \beta_j U_+^{j-d_p},
\end{equation}
where $U_+$ is an adder circuit and from Lemma 9 of~\cite{Simon2024Liouvillian}
\begin{equation}
    \sum_j |\beta_j| = O(\log(d_p)/h_p).
\end{equation}
Similarly we have that
\begin{equation}
    \sum_\ell |\beta_\ell'| = O(\log(d_x)/h_x).
\end{equation}
Similarly we can construct a $2d_x^{\rm th}$-order approximation to the derivative of the potential, as an operator, in the following fashion
\begin{equation}
    \partial_x f(x) = \sum_j \beta'_j  \sum_x V(x) U_+^{j}\ketbra{x}{x} U_+^{-j} 
\end{equation}
where $\sum_j |\beta'_j| = O(\log(d_x)/h_x)$.  We therefore have that in the one-dimensional case that the Liouvillian for such a term can be written in the form
\begin{equation}
    \partial_p \otimes (\partial_x V(x)) = \sum_{jj'} \beta_j \beta'_{j'} U_+^{j'} \otimes  \sum_x f(x) U_+^{j}\ketbra{x}{x} U_+^{-j}
\end{equation}
Under the assumption that $V(x)$ has a block encoding with constant $\alpha$ we see from the LCU lemma~\cite{childs2012hamiltonian} that we can block encode the above expression with constant $\alpha'_1$ where
\begin{equation}
    \alpha'_1 \in O(\alpha \sum_j |\beta_j| \sum_{j'} |\beta_{j'}'|) \subseteq O\left(\alpha \ln(d_x) \ln(d_p)/h_xh_p \right).
\end{equation}
Thus for the case where $N=1$ our algorithm is correct.

Let us now argue correctness for $N>1$.  Assume that out of all $N$ registers, our algorithm correctly differentiates each component up to component $P\ge 0$.  Let us now consider the action of the algorithm on the case of $P+1$.  Let us now swap registers $P+1$ and $0$ for both position and momentum and then swap back.  By assumption the circuit will correctly transform $P+1$ in this case because it correctly transforms $0$.  Thus the circuit works correctly for all $N$ by induction.

The block-encoding constant for the LCU circuit, $\alpha'$, is straight forward to bound.  It is simply given by the LCU lemma~\cite{childs2012hamiltonian} to be
\begin{equation}
    \alpha' = N \alpha_1' \in O\left(\alpha N \ln(d_x) \ln(d_p)/h_xh_p \right).
\end{equation}

The estimation of the Toffoli gates needed to implement the block encoding is more involved.  We will go through the various stages of the above algorithm to see the ultimate scaling.
Step $1$ of the algorithm requires the preparation of three states on the $A,B,C$ registers.  By assumption, $N=2^m$ and so the state in $A$ requires only Hadamard gates to prepare and thus does not require Toffoli gates.  The states in $B,C$ are generic states and we need to synthesize them within error $\epsilon/(d\alpha')$ and require a number of $T$ gates on the order of~\cite{kliuchnikov2013synthesis} 
\begin{equation}
    \tilde{O}((d_x + d_p)\log((d_x +d_p)\alpha'/\epsilon))=\tilde{O}(d\log(\alpha'/\epsilon))
\end{equation}
A $T$ gate can be implemented using $O(1)$ Toffoli gates using a three qubit resource state of the form ${\rm QFT} \ket{001}$.  Thus the number of Toffoli gates needed to implement step $1$ is given by
\begin{equation}
    {N_{\rm toff,1}} = \tilde{O}\left( d\log\left(\frac{\alpha'}{\epsilon} \right) \right) \label{eq:N1}
\end{equation}

Step $2$ of the algorithm requires us to implement a sequence of $N$, $\log(N)$-controlled Fredkin gates.  Defining $g=\max \{ x_{\max}/h_x, p_{\max}/h_p, s_{\max}/h_s, {p_s}_{\max}/h_{p_s}$, each Fredkin gate swaps registers consisting of at most $\log(g)$ qubits and thus requires $O(\log(g))$ Toffoli gates using the standard controlled SWAP implementation~\cite{nielsen2010quantum}.  An $O(\log(N))$-controlled not gate requires $O(\log(N))$ Toffoli gates to implement it~\cite{nielsen2010quantum} and thus the total number of operations needed to perform all the controlled swap operations is 
\begin{equation}
    N_{\rm toff, 2} = \tilde{O}\left( N \log(N) \log(g) \right) \subseteq \tilde{O}\left( N \log(g) \right).\label{eq:N2}
\end{equation}

Step $3$ requires the application of an in-place adder circuit with input $B$ and output $s$.  The cost of implementing such an adder is $O(d_p \log(g))$~\cite{gidney2018halving} as at most $g$ bits of precision are used for the momentum registers.  This implies that
\begin{equation}
    N_{\rm toff, 3} = O(d_p \log(g)).\label{eq:N3}
\end{equation}
The same argument also implies
\begin{equation}
    N_{\rm toff, 4} = O(d_x \log(g)).\label{eq:N4}
\end{equation}

Step $5$ requires no Toffoli gates and Step $6$ requires precisely as many Toffoli gates as steps $1,2,3,4$.  Thus we have from~\eqref{eq:N1},\eqref{eq:N2}, \eqref{eq:N3} and~\eqref{eq:N4} that the total number of Toffoli gates needed is
\begin{equation}
    N_{\rm toff} = O\left(\sum_i N_{\rm toff,i}\right) \subseteq \tilde O\left( N\log(g) + d \log(\alpha'g/\epsilon)\right) .
\end{equation}
\end{proof}

The above result gives us the cost of performing a block-encoding of the Liouvillian for the Nos\'e Liouvillian.  Our next step, following Koopman-von Neumann, is to simulate the exponential of the operator for the equilibration time proven above for a discrete Hamiltonian in Lemma~\ref{lem:equilibrium}.  This result, given below, is a formal statement of one of our main results which provides asymptotic bounds on the cost of finding a global optimum using our approach.

\begin{thm}[Global Optimization Theorem]\label{thm:mainGlobal}
    Let us assume that the assumptions of Lemmas~\ref{lem:equilibrium} and~\ref{lem:derivBE} hold for $T$ negligibly small and let $t>0$ be an evolution time. Further, let $\delta>0$ be an error tolerance and $\Delta$ be an energy window with the set of positions, $\bx$, within this energy window denoted $B(\Delta)$.  Additionally, let us assume that for evolution under a $2d$-order approximation to the derivative operators of the form given in Lemma~\ref{lem:derivBE} is used to approximate the Liouvillian operator and we have that the position average for any microstate $\ketbra{\phi}{\phi}$ for the discrete microcanonical state, $\tilde{\rho}_{mc}$, and the corresponding average over the same domain in the continuum microcanonical distribution, $\rho_{mc}$, satisfies 
    $$\sup_{\phi \in B(\Delta)}|{\rm Tr}(\ketbra{\phi}{\phi}({\rm Tr}_{s,p,p_s} (\tilde{\rho}_{mc}  - \rho_{mc} )|\le \delta/3.$$ 
    Then there exists a quantum algorithm that can prepare a time averaged distribution, $\langle\tilde{\sigma}\rangle_{t}$, over the discrete position variables that is a $\delta$-close approximation to the microcanonical distribution: 
    $$\sup_{\phi \in B(\Delta)}\left|{\rm Tr}\left( \ketbra{\phi}{\phi}\left(\mathbb{E}_{{\tilde{\rho}_0}}({\rm Tr}_{s,p,p_s}(\langle\tilde{\sigma}\rangle_{t})) - {\rm Tr}_{s,p,p_s}({\rho}_{mc})\right)\right)\right|\le \delta$$,
     that uses a number of queries to the $O_{f',k}^{(p)}$ oracles (and  their inverses) as well as a number of Toffoli gates that are in
    \begin{equation}
        N_{\rm query,MC}\in \tilde{O}\left( \frac{\alpha'' N}{\gamma \delta^2}\right),\qquad N_{\rm Toff, MC} \in \tilde{O}\left(\frac{
        \alpha''\left((N+d)\log(g) \right)}{\gamma \delta^2} \right),
    \end{equation}
    respectively, where $\alpha''$ is defined via
    $$
    \alpha'' := \frac{\ln d}{h_{\min}}\left(  N\left(\frac{p'_{\text{max}}}{m s_{\text{min}}^2} +{\frac{\alpha \ln(d)}{h_{\min}}} +\frac{{p'}_{\text{max}}^2}{m s_{\text{min}}^3} \right) +   \frac{p_{s, \text{max}}}{Q}  \right), 
    $$
    where we assume that $V = \sum_j \alpha_j U_j$ for unitary $U_j$ with $\sum_j |\alpha_j| = \alpha$.
\end{thm}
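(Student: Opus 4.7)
The plan is to reduce the global optimization task to a quantum simulation of the discrete Nos\'e Liouvillian for the equilibration time prescribed by Lemma~\ref{lem:equilibrium}, and then sample from the time-averaged position marginal. First, I would build a block-encoding of the full discrete Liouvillian $L$ of~\eqref{eq:Ldefinition} by decomposing it as $L = L_0 + L_V$, where $L_0$ is the Nos\'e Liouvillian with $V=0$ (block-encoded via Lemma~\ref{lem:bounds_L_class_NVT} with normalization $\alpha_{NVT}$) and $L_V = i\sum_j \partial_{p_j}\otimes(\partial_{x_j} f(\bx))$ (block-encoded via Lemma~\ref{lem:derivBE} with normalization in $\tilde{O}(N\alpha/h_{\min}^2)$, with $\tilde{O}(N)$ calls to the derivative phase oracles to cover the LCU decomposition of $f$ across all $N$ Cartesian directions). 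One further LCU step combines the two block-encodings into an $(\alpha'', O(\log(\alpha''/\delta)+\log d), \delta_{\mathrm{BE}})$-block-encoding of $L$ with the $\alpha''$ advertised in the theorem, inheriting the per-block-encoding Toffoli count from Lemmas~\ref{lem:bounds_L_class_NVT} and~\ref{lem:derivBE}.

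Given this block-encoding, I would simulate $e^{-iLt^*}$ by qubitization for $t^* \in O(1/(\gamma\delta^2))$ as prescribed by Lemma~\ref{lem:equilibrium} (together with the choice $D_s \in O(1/\delta)$). Qubitization uses $\tilde{O}(\alpha'' t^*)$ block-encoding queries for inverse-polynomial operator-norm error, which translates to the stated $\tilde{O}(\alpha'' N/(\gamma\delta^2))$ oracle queries and $\tilde{O}(\alpha''(N+d)\log g/(\gamma\delta^2))$ Toffoli gates after multiplying through by the per-query oracle and gate counts from the two block-encoding lemmas. To realize the classical time-averaging defining $\langle\tilde{\rho}\rangle_{t^*}$, I would classically draw $t' \sim \mathrm{Uniform}[0,t^*]$ and apply $e^{-iLt'}$ to an initial pure state drawn from an approximate unitary $2$-design restricted to the energy window $\Delta$ (justified by the $2$-design assumption in Lemma~\ref{lem:equilibrium}), then measure the position register, which automatically traces out $s,p,p_s$. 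This produces a single sample from a distribution whose total variation distance to the marginal of the time-averaged state is controlled by the Hamiltonian-simulation error alone.

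The final accuracy bound will follow from a three-term triangle inequality: the equilibration error from Lemma~\ref{lem:equilibrium} (set to $\delta/3$), the Hamiltonian-simulation error (chosen $\delta_{\mathrm{sim}} = \delta/3$, which affects the query count only through a logarithm), and the continuum-to-discrete microcanonical discrepancy from Theorem~\ref{thm:ch7_disc} (bounded by $\delta/3$ directly from the theorem's hypothesis). The main obstacle is the bookkeeping that ties these pieces together consistently: one must verify that the grid spacings $(h_x,h_p,h_s,h_{p_s})$ required by Theorem~\ref{thm:ch7_disc} (in particular $h_x \in O(1/(\beta N |\partial f|_{\max}))$) are compatible with the $h_{\min}$ driving $\alpha''$, and that the assumed limits of small $T$, narrow $\Delta$, and large $D_s$ used in Lemma~\ref{lem:equilibrium} do not secretly blow up any of the parameters entering $\alpha''$ beyond the advertised asymptotic scaling. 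Once this consistency check is in hand, the two cost bounds follow immediately by multiplying the per-query oracle and Toffoli costs by the $\tilde{O}(\alpha''/(\gamma\delta^2))$ block-encoding queries demanded by qubitization.
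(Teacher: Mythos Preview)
Your proposal is correct and follows essentially the same route as the paper: decompose $L=L_{NVT}+L_V$, block-encode each piece via Lemmas~\ref{lem:bounds_L_class_NVT} and~\ref{lem:derivBE}, combine via LCU to get normalization $\alpha''$, simulate via qubitization for a random time in $[0,t^*]$ with $t^*\in O(1/\gamma\delta^2)$ from Lemma~\ref{lem:equilibrium}, and close with the same three-way triangle inequality (equilibration, simulation, discrete-to-continuum). The only minor point is that your consistency check against Theorem~\ref{thm:ch7_disc} is unnecessary here, since the $\delta/3$ discretization bound is assumed directly in the hypothesis of Theorem~\ref{thm:mainGlobal} rather than derived.
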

\begin{proof}

Lemma~\ref{lem:equilibrium} implies that we can achieve a total variational distance in the state of $\delta/3$ by randomizing the evolution time $t'$ to be in $[0,t]$ and simulate the dynamics of
\begin{equation}
    \tilde{\rho}_0 \rightarrow e^{-i L_{TOT} t'} \tilde{\rho}_0 e^{i L_{TOT} t'},
\end{equation}
where $L_{TOT}$ is the total Liouvillian operator acting on the Koopman-von Neumann wave function and
\begin{equation}
    t'\le t\in O(D_x \sqrt{V(\Delta)} /\gamma \delta^2).
\label{eq:tbound}
\end{equation}
We perform the simulation using qubitization~\cite{low2019hamiltonian}.  The dominant contribution to the cost is given by the number of queries made to the $\mathtt{PREPARE}$ and $\mathtt{SELECT}$ which block encode the Liouvillian.  Specifically, these operations are combined to construct the following walk operator~\cite{low2019hamiltonian,gilyen2019quantum}
\begin{equation}
    W = \mathtt{SELECT} (I - 2 \mathtt{PREPARE}\ketbra{0}{0} \mathtt{PREPARE}^\dagger\otimes I)) 
\end{equation}
The action that $W$ takes on eigenvectors $\ket{\lambda_j}$ augmented by the state $\ket{\alpha} :=\mathtt{PREPARE}\ket{0}$ lies in a two-dimensional subspace spanned by~\cite{low2019hamiltonian,babbush2018encoding}
\begin{equation}
    W: {\rm span}(\ket{\alpha}\ket{\lambda_j}, W\ket{\alpha}\ket{\lambda_j}) \mapsto {\rm span}(\ket{\alpha}\ket{\lambda_j}, W\ket{\alpha}\ket{\lambda_j})
\end{equation}
The eigenvalues can be found using an analysis that is nearly identical to that of Grover's algorithm to see that they are of the form $\exp(\pm i\phi_j)$ where
\begin{equation}
    \phi_j = \arccos(\lambda_j/\alpha''),
\end{equation}
where $\alpha''$ is the overall block-encoding constant of the operator.
Generalized quantum signal processing can be used to transform the eigenvalues $\exp(\pm i \phi_j)$ to $\exp(-i \lambda_j t' )$ through the Jacobi-Anger expansion~\cite{low2019hamiltonian,motlagh2024generalized}. This process requires $K$ calls to $W$ and $O(K)$ single qubit rotations where
\begin{equation}
    K\in O(\alpha'' t' + \log(1/\epsilon)/\log\log(1/\epsilon)).
\label{eq:Kbd}
\end{equation}

If we let $L_{NVT}$ refer to the Liouvillian for the case where the external potential is zero that is studied in~Lemma~\ref{lem:bounds_L_class_NVT} then the total Liouvillian is of the form
\begin{equation}
    L_{TOT}= L_{NVT} + i\sum_{i=1}^N (\partial_{p_i} \otimes (\partial_{x_i} V(x))
\end{equation}
Recall here that because position and momentum commute the ordering of the derivative is irrelevant in the above expression.
From Lemma~\ref{lem:derivBE} we have that if we take $\partial_{p_i}$ and $\partial_{x_i}$ to refer to a $2p$-order finite difference approximation to the derivative then
\begin{equation}
    \|\alpha' (\bra{0}\otimes I) (\mathtt{PREP}_V^\dagger \otimes I) \mathtt{SEL}_V (\mathtt{PREP}_V) (\ket{0}\otimes I) - (\partial_{p_i} \otimes (\partial_{x_i} V(x))\|\le \epsilon.
\end{equation}
Thus if $U_{NVT}$ is the unitary that block-encodes $L_{NVT}$ given in Lemma~\ref{lem:bounds_L_class_NVT} it then follows that
\begin{equation}
    \|(\bra{0}\bra{0}\otimes I)\alpha_{NVT} U_{NVT} + \alpha' I\otimes (\mathtt{PREP}_V^\dagger \otimes I) \mathtt{SEL}_V (\mathtt{PREP}_V\otimes I) (\ket{0}\otimes I) (\ket{0}\ket{0}\otimes I)-L_{NVT}\|\le 2\epsilon.
\end{equation}
Thus using the LCU Lemma~\cite{childs2012hamiltonian} we can build a unitary that provides a block encoding of $L_{NVT}$ with constant $\alpha' + \alpha_{NVT}$ and uses $O(1)$ queries to $\mathtt{PREP}_V$ and controlled $\mathtt{SEL}_V$ along with a single qubit rotation that can be implemented using $O(\log(1/\epsilon))$ Toffoli gates which is sub-dominant to the other costs involved in the block-encoding.

Before going into the number of Toffoli gates and queries to the phase oracle needed for the algorithm, let us digress and discuss the scaling of the block-encoding constant which we denote $\alpha''$ for the total Liouvillian $L_{TOT}$.  We have from Lemmas \ref{lem:bounds_L_class_NVT} and~\ref{lem:derivBE} that at temperature $0$ for the heat bath taking $d_s=d_p=d_x=d$
\begin{align}
    \alpha'' &= \alpha' + \alpha_{NVT}\nonumber\\
    &= O \lb N \frac{p'_{\text{max}}}{m s_{\text{min}}^2} \frac{\ln d_x}{h_x} +   \frac{p_{s, \text{max}}}{Q} \frac{\ln d_s}{h_s} + \lb N \frac{{p'}_{\text{max}}^2}{m s_{\text{min}}^3}  \rb \frac{\ln d_{p_s}}{h_{p_s}} \rb +O\left( N\alpha \ln(d_p)\ln(d_x)/h_ph_x \right)\nonumber\\
    &\subseteq O\left(\frac{\ln(d)}{h_{\min}}\left(N\left(\frac{p_{\max}'^2}{m_{\min} s_{\min}^3} +\frac{p'_{\text{max}}}{m s_{\text{min}}^2}+\frac{\alpha\ln(d)}{h_{\min}}\right) \right) + \frac{p_{s,\max}}{Q} \right)
\end{align}
which conforms to our claims about the value of $\alpha''$.  From this we immediately see that the number of queries to controlled $\mathtt{SEL}_V$ and $\mathtt{PREP}_V$ and their inverses needed to perform a simulation within $\delta$ error in the operator norm scales as~\cite{low2019hamiltonian}
\begin{align}
    O(\alpha''t + \log(1/\epsilon))&= \tilde{O}\left( \alpha''\left(\frac{D_x \sqrt{V(\Delta)}}{\gamma \delta^2}\right) +\log(1/\delta) \right)\nonumber\\
    &= \tilde{O}\left( \alpha''\left(\frac{D_x \sqrt{V(\Delta)}}{\gamma \delta^2}\right)\right)
\end{align}
Next, using the fact that 
\begin{equation}
    \|e^{-i Lt} - e^{-i\tilde{L}t}\| \le \|L-\tilde{L}\|t,
\end{equation}
we have that if the error in the block-encoded Liouvillian is $\epsilon$ then we need to provide a block-encoding with error $\delta/3$ in aggregate for the simulation.  This can be achieved by choosing
\begin{equation}
    \epsilon = O(\delta/t)= O\left(\frac{\delta^3 \gamma}{D_x \sqrt{V(\Delta)}} \right)= O\left({\delta^3 \gamma}{} \right)\label{eq:simerrbd}
\end{equation} to achieve our desired final error of $\delta/3$.

Using the logarithmic block-encoding discussed in Lemma~\ref{lem:hamt_liouvillian}, we have that the number of queries to $O_{f'}$ needed to simulate a query our block encoding unitary $(\mathtt{PREP}_V^\dagger \otimes I) \mathtt{SEL}_V (\mathtt{PREP}_V\otimes I)$  $\mathtt{SEL}_V$ is in $O(N \log(1/\epsilon))$.  This implies that the number of queries needed to the phase oracles, $O_{f',k}^{(p)}$, is in
\begin{equation}
    \widetilde{O}\left(\frac{\alpha'' N}{\delta^2 \gamma} \right).
\end{equation}

The number of Toffoli gates needed to block-encode the potential term from the Liouvillian is given by Lemma~\ref{lem:derivBE} to be
\begin{equation}
    \tilde{O}\left( \left(\frac{\alpha''D_x \sqrt{V(\Delta)}}{\delta^2 \gamma} \right)\left(N \log(g) +d \log \lb \frac{NT\alpha \log^2d}{\delta h_{\min}^2} \rb \right) \right)=\tilde{O}\left( \left(\frac{\alpha''D_x \sqrt{V(\Delta)}}{\delta^2 \gamma} \right)(N+d) \log\left({g}\right)  \right).
\end{equation}
The use of the block-encoding requires a constant number of additional controls on each Toffoli gate, which leads to a constant factor extra number of Toffoli gates~\cite{barenco1995elementary}.

By applying the exact same reasoning, we can see that the number of Toffoli gates needed to implement the NVT Liouvillian is
given by Lemma~\ref{lem:bounds_L_class_NVT} to be
\begin{align}
     &\tilde{O}\left( \left(\frac{\alpha''D_x \sqrt{V(\Delta)}}{\delta^2 \gamma} \right)\left(N \log \lb \frac{g \alpha_{NVT}}{\epsilon} \rb + \log^{\log 3}{\lb \frac{\alpha_{NVT}}{\epsilon} \rb}  + d \log g \right)\right)\nonumber\\
     &\quad=\tilde{O}\left( \left(\frac{\alpha''D_x \sqrt{V(\Delta)}}{\delta^2 \gamma} \right)(N+d) \log\left({g}\right)  \right)=\tilde{O}\left( \left(\frac{\alpha''}{\delta^2 \gamma} \right)(N+d) \log\left({g}\right)  \right).
\end{align}
Summing both of these block encodings requires a single qubit rotation using the LCU lemma~\cite{childs2012hamiltonian}.  This additional rotation requires a number of $T$ gates that varies poly-logarithmically with $\alpha''/(\gamma\delta)$ and thus does not contribute to the overall $\tilde{O}$ scaling.

The final cost that we need to consider in qubitization is the rotations needed to transform the eigenvalues of the walk operator.  The total number of these rotations are from~\eqref{eq:Kbd} in~\cite{low2019hamiltonian} 
\begin{equation}
    K\in O(\alpha'' t + \log(\alpha'' t/\delta)),
\end{equation} which is subdominant to the cost of the oracle queries.

Next let us consider the error tolerances used here.  Let $\tilde{\sigma}$ be the approximation to the evolved state that we have in the discretized space.  We then have from the triangle inequality that
\begin{align}
    \sup_{\phi \in B(\Delta)}|{\rm Tr}(\ketbra{\phi}{\phi}({\rm Tr}_{s,p,p_s} (\langle\tilde{\sigma}\rangle_{t}  - \rho_{mc} )|&\le \sup_{\phi \in B(\Delta)}|{\rm Tr}(\ketbra{\phi}{\phi}({\rm Tr}_{s,p,p_s} (\mathbb{E}_{\rho_0}(\langle\tilde{\sigma}\rangle_{t}  - \langle\tilde{\rho}\rangle_{t}) )|\nonumber\\
    &\qquad+ \sup_{\phi \in B(\Delta)}|{\rm Tr}(\ketbra{\phi}{\phi}({\rm Tr}_{s,p,p_s} (\mathbb{E}_{\rho_0}(\langle\tilde{\rho}\rangle_{t}  - \tilde{\rho}_{mc}) )|\nonumber\\ &\qquad + \sup_{\phi \in B(\Delta)}|{\rm Tr}(\ketbra{\phi}{\phi}({\rm Tr}_{s,p,p_s} (\tilde{\rho}_{mc}  - \rho_{mc} )|.
\end{align}
Note the last term ignores the average over the initial state because microcanonical distribution is independent of the initial state chosen for the average over $\Delta$.
From our assumptions on the discretization error in the microcanonical distribution we then have that 
\begin{align}
    \sup_{\phi \in B(\Delta)}|{\rm Tr}(\ketbra{\phi}{\phi}({\rm Tr}_{s,p,p_s} \mathbb{E}_{\rho_0}(\langle\tilde{\sigma}\rangle_{t}  - \rho_{mc} )|&\le \sup_{\phi \in B(\Delta)}|{\rm Tr}(\ketbra{\phi}{\phi}({\rm Tr}_{s,p,p_s} \mathbb{E}_{\rho_0}(\langle\tilde{\sigma}\rangle_{t}  - \langle\tilde{\rho}\rangle_{t} )|\nonumber\\
    &\qquad+\sup_{\phi \in B(\Delta)}|{\rm Tr}(\ketbra{\phi}{\phi}({\rm Tr}_{s,p,p_s} \mathbb{E}_{\rho_0}(\langle\tilde{\rho}\rangle_{t}  - \tilde{\rho}_{mc} )|+\delta/3.
\end{align}
Then from~\eqref{eq:tbound} we have that the value of $t$ is chosen such that the error in the discrete microcanonical average is $\delta/3$ as well implying that
\begin{align}
    \sup_{\phi \in B(\Delta)}|{\rm Tr}(\ketbra{\phi}{\phi}({\rm Tr}_{s,p,p_s} \mathbb{E}_{\rho_0}(\langle\tilde{\sigma}\rangle_{t}  - \rho_{mc} )|&\le \sup_{\phi \in B(\Delta)}|{\rm Tr}(\ketbra{\phi}{\phi}({\rm Tr}_{s,p,p_s} \mathbb{E}_{\rho_0}(\langle\tilde{\sigma}\rangle_{t}  - \langle\tilde{\rho}\rangle_{t} )|+2\delta/3.
\end{align}
Then from~\eqref{eq:simerrbd} we have that
\begin{align}
    \sup_{\phi \in B(\Delta)}|{\rm Tr}(\ketbra{\phi}{\phi}({\rm Tr}_{s,p,p_s} \mathbb{E}_{\rho_0}(\langle\tilde{\sigma}\rangle_{t}  - \rho_{mc} ))|&\le \delta,
\end{align}
as required.
\end{proof}

This shows that if we assume that the eigenvectors within the $\Delta$-window are distributed according to a unitary $2$-design then we can achieve the microcanonical distribution for the discretized dynamics.  This does not directly imply though that the algorithm is capable of finding the minima as the continuum dynamics is promised to be close to the $T\approx 0$ canonical distribution.  

This result has a number of advantages as well as disadvantages relative to the local approach which uses dissipation to prepare the target state.  Our global optimization algorithms requires the Hamiltonian of the extended system to be ergodic, which is a consequence of the eigenstate thermalization hypothesis taken above. In the event that the discretization error in the simulation is negligible we see that the quantum algorithm will reach the microcanonical distribution provided that the Liouvillian is gapped.  If the Liouvillian is not gapped, then this approach will not necessarily equilibrate, but the expectation values of observables may still approach their microcanonical expectations depending on the structure of the eigenvectors.  Regardless, this dependence on the gap is a major weakness of the global rather than the local approach and so there is not necessarily a clear victor when we compare this approach to the prior approaches involving frictional Hamiltonians.

\section{Comparison to Gradient-Based Methods}
\label{sec:barren}

Perhaps the most vexing problem facing gradient descent optimization is the barren plateau effect.  The barren plateau effect refers the observation that a randomly chosen variational model with sufficiently complicated dependence on its parameters will, with high probability, have derivatives that are exponentially small~\cite{mcclean2018barren}.  There are a number of related effects that can lead to exponentially small gradients such as entanglement between the visible and latent spaces in quantum machine learning models~\cite{ortiz2021entanglement} or the presence of noise in the system~\cite{wang2021noise}.  This is problematic in part because the optimal complexity for computing a gradient scales as $\Theta(1/\epsilon)$~\cite{gilyen2019optimizing} and so learning even the sign of a gradient requires $O(2^n)$ queries to an oracle that yields the objective function as a phase.

Barren plateau effects reveal that the cost of gradient computation can be exponentially large for problems that exhibit barren plateaus.  As our algorithms do not explicitly compute the gradient of the objective function there is a potential for substantial computational advantages in cases with vanishing gradients. 

We consider two settings in the following.  First, we apply both our general purpose dynamical simulation algorithms and hybrid quantum classical gradient descent to the benchmark problem of a quadratic objective function.  This benchmark is chosen because convergence guarantees can be made here, despite the fact that special purpose algorithms can use this structure to accelerate convergence~\cite{Jordan2005gradient}. Recall that almost all twice differentiable functions can be closely approximated by a quadratic convex function in a neighborhood about a local optimum so this assumption is more generic than it first seems.  In this case, the barren plateau effect is not appropriate as the interesting barren plateau cases do not exhibit strongly convex landscapes and assuming that we start near the local optimum violates the typicality assumptions.  Instead, here we will consider the effects of what happens when the condition number of the Hessian matrix is large.  This corresponds to settings where the function changes much more rapidly in certain directions than others.  We will then provide evidence of substantial advantages for cases where the Hessian matrix is typical of an ensemble of random positive-semi definite Gaussian matrices known as Wishart matrices.  In this sense, the analysis carried out here is analogous to the analysis of the barren plateau effect in~\cite{mcclean2018barren} but now with a random matrix model that is more appropriate for our setting of convex optimization.

Second, we will consider the impact of barren plateaus through the lens of our global approach to coherent quantum optimization.  We focus on the global approach here because the optimization landscape implicit in the barren plateau problem is highly non-convex.  As a result, it is much more appropriate for us to consider such a method to try to find an optimum.  We will see that, while the method of Theorem~\ref{thm:mainGlobal} still suffers from barren plateau effects, the upper bounds that we can prove for the cost are significantly better for the dynamical approach of Theorem~\ref{thm:mainGlobal} than its hybrid quantum-classical analogue developed below.

\subsection{Performance for Ill-Conditioned Convex Optimization}
\label{sec:ill_conditioned}

Let us now consider the case of solving the continuous quantum optimization problem under the assumptions of a strongly convex objective function $f$ with an ill-conditioned Hessian matrix.  In such cases where one of the directions of descent is much more rapid than the other directions, the learning rate needed for convergence of gradient descent can be quite small which leads to a prohibitive number of steps.  
 
Below, we provide a result that upper bounds the complexity of gradient descent optimization using the gradient calculation method of~\cite{gilyen2019optimizing}, which is known to be the optimal quantum gradient estimation method for a wide class of functions~\cite{gilyen2019optimizing,van2020convex}.

\begin{thm}[Cost of Gradient Optimization Using Quantum Algorithms]
\label{thm:quantumGradDesAlg}
    Let $f: \mathbb{R}^N \rightarrow \mathbb{R}$ be a twice differentiable strongly convex function with a global minimum located at $\bx = \bx^*$ and let it be promised that $\sup_{\bx} \|\nabla f(\bx)\|_\infty \le f'_{\max}$. Further, let the eigenvalues of the Hessian matrix of $f$ lie in the interval $[\lambda_{\min},\lambda_{\max}]$ for all $\bx$ and let $\bx_0$ be a vector of initial coordinates. If the gradient calculation method of~\cite{gilyen2019optimizing} is used to compute the gradients of $f$ and gradient descent is used with a learning rate of $\eta = 2/(\lambda_{\min}+\lambda_{\max})$ then the number of queries needed to the phase oracle $O_f^{(p)}$ to solve the Continuous Quantum Optimization Problem with high probability of success scales in the worst case as
    $$
         N_{\rm queries, hyb}\in O\lb\frac{N f'_{\max}}{\lambda_{\max} } \left( \frac{2L\|\bx_0 - \bx^*\|}{\epsilon}\right)^{(\lambda_{\max}/\lambda_{\min}) \log(3)/4}\rb,
    $$
    where $L$ is the Lipschitz constant for $f$.
    Alternatively, the above result can be upper bounded by
    $$
        N_{\rm queries, hyb} \in O\lb{N \|\bx_0 - \bx^*\| } \left( \frac{2\lambda_{\max}\|\bx_0 - \bx^*\|^2}{\epsilon}\right)^{(\lambda_{\max}/\lambda_{\min}) \log(3)/4}\rb.
    $$
\end{thm}
\begin{proof}
Under the assumption that we have a strongly convex objective function, we have from~\cite{gilyen2019optimizing} that the number of queries to $O^{(p)}_{f}$ needed to estimate the value of the gradient vector within error $\epsilon_0$ in Euclidean norm with high probability is in
\begin{equation}
    O\left(\frac{\sqrt{N}}{\epsilon_0} \right).
\label{eq:fstGradBd}
\end{equation}
Assuming that we have a strongly convex optimization problem, the number of queries needed to such a gradient oracle to find an $\epsilon_x$-approximation to the optimal value $f(\bx^*)$ using a learning rate of $2/(\lambda_{\max}+\lambda_{\min})$ can be found using the following argument.  Standard bounds on the error in gradient descent given in terms of the maximum and minimum eigenvalues of the Hessian matrix of $f$ for all $\bx$ yield~\cite{ryu2016primer}
\begin{equation}
    \epsilon_x\le \left(\frac{\lambda_{\max}-\lambda_{\min}}{\lambda_{\max}+\lambda_{\min}} \right)^{2K} |f(\bx_0) - f(\bx^*)|\le \left(\frac{\lambda_{\max}-\lambda_{\min}}{\lambda_{\max}+\lambda_{\min}} \right)^{2K} L\|\bx_0 - \bx^*\|,
\end{equation}
where $L$ is the Lipschitz constant for the objective function $f$ and $K$ is the number of iterations.  Solving for $K$ yields
\begin{align}
    K \le \frac{1}{2}\frac{\log\lb\frac{L \|\bx_0-\bx^*\|}{\epsilon_x} \rb}{\log\lb\frac{\lambda_{\max}+\lambda_{\min}}{\lambda_{\max}-\lambda_{\min}}\rb}= \frac{1}{4}\frac{\log\lb\frac{L \|\bx_0-\bx^*\|}{\epsilon_x} \rb}{\frac{\lambda_{\min}}{\lambda_{\max}}+ \frac{1}{3}\lb\frac{\lambda_{\min}}{\lambda_{\max}} \rb^3 + \frac{1}{5}\lb\frac{\lambda_{\min}}{\lambda_{\max}} \rb^5+\cdots}\le \frac{1}{4} \frac{\lambda_{\max}}{\lambda_{\min}}\log\lb\frac{L \|\bx_0-\bx^*\|}{\epsilon_x} \rb.
\end{align}

The update rule in gradient descent for learning rate $\eta$ and objective function $f$ is 
\begin{equation}
    \bx \mapsto \bx - \eta \nabla f(\bx) .
\end{equation}
As the norm of the Hessian is at most $\lambda_{\max}$ we have that if $\gamma$ is an error, then
\begin{equation}
    \|\nabla f(\bx) - \nabla f(\bx + \gamma)\| \le \lambda_{\max}\|\gamma\|.
\end{equation}
Thus, if instead we compute an approximate gradient $\tilde{\nabla}f(\bx)$ that is $\epsilon_0$-close to the original gradient we have that
\begin{equation}
    \|{\nabla f(\bx)} - \tilde{\nabla} f(\bx + \gamma)\|\le \| {\nabla f(\bx)} - \nabla f(\bx+\gamma)\| + \|{\nabla} f(\bx+\gamma) -\tilde{\nabla} f(\bx + \gamma) \| \le  \lambda_{\max}\|\gamma\| + \epsilon_0 .
\end{equation}

Now, if $\bx_j$ is the ideal position we have at step $j$ of the gradient descent in absentia of errors and $\bx_j + \gamma_j$ is the actual position that we see at step $j$, the error at the next step is
\begin{align}
    \|\bx_{j+1} - (\bx_{j+1} + \gamma_{j+1})\| = \|\gamma_{j+1}\| =\|\gamma_j - \eta (\tilde{\nabla} f(\bx_j + \gamma_j) - \nabla f(\bx_j))\|\le \|\gamma_j\|(1+\eta \lambda_{\max}) + \eta \epsilon_0.
\end{align}
Thus, summing the formula, we arrive at the following expression for the error in the value of the optimized parameters after $K$ steps of the gradient descent algorithm:
\begin{equation}
    \|\gamma_K\| \le  \eta \epsilon_0 \sum_{j=0}^{K-1} (1+ \eta \lambda_{\max})^j \leq \frac{\epsilon_0(3^{K}-1)}{\lambda_{\max}} \le \frac{\epsilon_0}{\lambda_{\max} \lb \epsilon_x/L\|\bx_0 - \bx^*\| \rb^{(\lambda_{\max}/\lambda_{\min}) \log(3)/4}},
\end{equation}
where we used the fact that $\eta = 2/(\lambda_{\max}+\lambda_{\min})$.
We then have that the error in the final value of the function, under the assumption that each of the partial derivatives are upper bounded by $f'_{\max}$ and using the fact that $\norm{\nabla f} \leq \sqrt{N} f'_{\max}$, obeys 
\begin{equation}
    | f(\bx_K + \gamma_K) - f(\bx_K)| \le \sqrt{N} f'_{\max} \|\gamma_K\|\le \frac{2\sqrt{N} f'_{\max}\epsilon_0}{\lambda_{\max}(\epsilon_x/L\|\bx_0 - \bx^*\|)^{(\lambda_{\max}/\lambda_{\min})\log(3)/4}}.
\end{equation}

If we demand that the error is $ | f(\bx_K + \gamma_K) - f(\bx_K) | = \epsilon/2$ and that $\epsilon_x=\epsilon/2$ we then need to take
\begin{equation}
    \epsilon_0 \in \Theta\left(\frac{\epsilon \lambda_{\max} (\epsilon/[2L\|\bx_0 - \bx^*\|])^{(\lambda_{\max}/\lambda_{\min})\log(3)/4}}{\sqrt{N} f'_{\max}} \right) .
\end{equation}
Substituting this value into~\eqref{eq:fstGradBd} gives us that the number of queries to $O_f^{(p)}$ needed to solve the continuous quantum optimization problem within error $\epsilon$ is in 
\begin{equation}
    O\left(\frac{N f'_{\max}}{\lambda_{\max} } \left( \frac{2L\|\bx_0 - \bx^*\|}{\epsilon}\right)^{(\lambda_{\max}/\lambda_{\min})\log(3)/4}\right)\subseteq O\left(\frac{N f'_{\max}}{\lambda_{\max} } \left( \frac{2\lambda_{\max}\|\bx_0 - \bx^*\|^2}{\epsilon}\right)^{(\lambda_{\max}/\lambda_{\min})\log(3)/4}\right)
\end{equation}
This immediately shows the first of our two results.  The second result then follows from Taylor's theorem since
\begin{equation}
    f'_{\max} \le f'(\bx^*) + \lambda_{\max} \max_\bx\|\bx-\bx^*\|\in O\lb \lambda_{\max} \|\bx_0 -\bx^*\| \rb.
\end{equation}
\end{proof}
This result shows that gradient descent optimization, even when advanced quantum algorithms for gradient evaluation are used, leads to horrendous scaling when dealing with ill-conditioned Hessian matrices for the objective function.  Such ill-conditioned Hessian matrices are expected to emerge in situations where the condition numbers are large.  
By equating the upper bounds given by gradient descent and that of Theorem~\ref{thm:main} we expect to see that the results of Theorem~\ref{thm:main} will provide a tighter upper bound on the cost of finding the optimum in the limit of small $\epsilon$ (keeping all other variables constant and assuming $h_x \sim \sqrt{\epsilon}$) if
\begin{align}
    \frac{\lambda_{\max}}{\lambda_{\min}} \ge \frac{8}{\log(3)} \approx 5.
\end{align}

This suggests that generically, as $\epsilon$ tends to zero, the quantum gradient method of Theorem~\ref{thm:main} will provide an advantage so long as the condition number is not bounded below by a constant.  Similarly, if we consider the limit of large $\|\bx_0 -\bx^*\|$, we note that $\chi_0 \in O({\rm poly}(\|\bx_0-\bx^*\|)$.  Hence, in this limit we also expect to see a substantial advantage.  Similar conclusions can be drawn from appropriate limits of the remaining variables. This suggests that even in the relatively easy case of strongly convex functions, the approaches in Section~\ref{sec:local} provide far lower upper bounds in the query complexity, relative to gradient descent, if the condition number of the Hessian matrix is large and the ratio of the distance to the optimum and the error tolerance are appropriately large.
But when do we expect the condition numbers to be large?

\subsubsection{Typicality Argument for Ill-Conditioning for Wishart Ensembles}

The answer to the question of the scaling behavior of the condition number for a typical matrix depends on the random matrix ensemble considered.  The ensemble we choose for this benchmark comparison are the Wishart matrices which are formed by choosing the matrix elements to be standard normal, $B_{i,j} = \mathcal{N}(0,1)$, and then defining the matrix $A=BB^\dagger \in \mathbb{R}^{N \times N}$.  The simplicity of these matrices is such that we can not only compute the joint distribution of the eigenvalues, as is possible for the celebrated Gaussian Unitary Ensemble, but the order statistics such as the largest and smallest eigenvalues can also be studied~\cite{edelman1988eigenvalues}. Furthermore, Ref.~\cite{anschuetz2025wishart} shows that Wishart matrices describe randomly initialized quantum neural networks.
If we therefore assume that the matrix is well approximated by a Wishart matrix~\cite{edelman1988eigenvalues} then the probability distribution of the Demmel condition number $\kappa_D(A):= \|A\|_F \|A^{-1}\| \ge \lambda_{\max}/\lambda_{\min}$ is
\begin{align}
P(\kappa_D\ge x) =\frac{N(N^2-1)}{x^2}  +o(1/x^2). 
\end{align}
This means that with high probability the condition number for such random matrices will be $\lambda_{\max}/\lambda_{\min} \le \kappa_D \le O(N^{3/2})$.  
Under these circumstances, we anticipate that the number of queries made to $O_f^{(p)}$ scales as
\begin{equation}
    N_{\rm queries, hyb} \in N \left(\frac{2 \lambda_{\max} \|\bx_0 -\bx^*\|^2}{\epsilon} \right)^{O(N^{3/2})}.
\label{eq:gradientEstimate}
\end{equation}
Thus, under these assumptions existing gradient descent methods are expected to be inefficient even if optimal quantum methods for gradient evaluation~\cite{gilyen2019optimizing} are used.  While the Demmel condition number is used as an upper bound in place of the actual condition number, other work shows qualitatively similar bounds (at the price of increased complexity) on the condition number without going through the Demmel condition number~\cite{chen2005condition}. Further, the data in Figure~\ref{fig:wishart} also supports this scaling and so we conjecture that the scaling of the condition number as $O(N^{3/2})$ is tight up to sub-polynomial factors.

\begin{figure}
    \centering
    \includegraphics[width=0.66\linewidth]{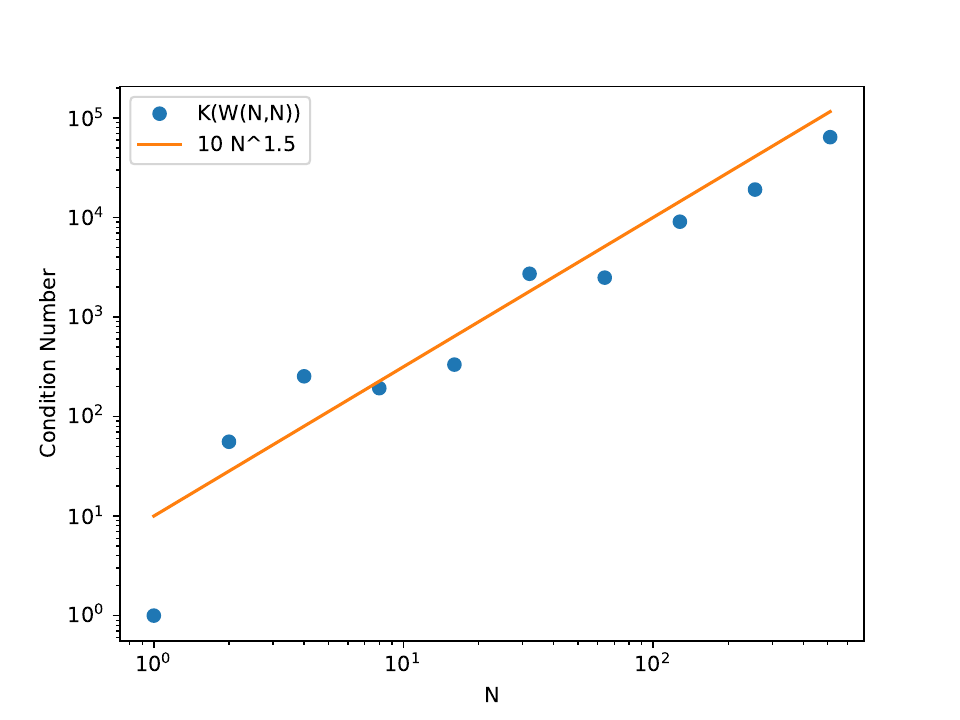}
    \caption{Mean condition numbers for $1000$ Wishart Matrices of size $N\times N$ for $N$ varying exponentially from $2^0$ to $2^{9}$.  Data is consistent with an $N^{3/2}$ powerlaw dependence on the dimension $N$.  Note that the condition number for $N=1$ is always $1$ for any non-zero matrix.}
    \label{fig:wishart}
\end{figure}

Now let us consider the complexity of solving the analogous problem using the result of Theorem~\ref{thm:main}.  Treating $x_{\max}$ as a constant, the cost of performing this optimization using quantum dynamics with high probability of success is
\begin{align}
    N_{\rm queries, qdyn} &\in  \widetilde{O} \lb \frac{N}{\epsilon} \frac{\lambda_{\max}^2}{\lambda_{\min}^2} \alpha_{A_H} \chi_0  \rb\nonumber\\
    &\subseteq \widetilde{O} \lb \frac{N^2}{\epsilon} \frac{\lambda_{\max}^2}{\lambda_{\min}^2} {\norm{D_x^2}}  \chi_0  \rb\nonumber\\
    &\subseteq \widetilde{O} \lb \frac{N^2}{\epsilon} \frac{\lambda_{\max}^2}{h_x^2\lambda_{\min}^2}   \chi_0  \rb
\end{align}
where $h_x$ is the spatial grid spacing in one dimension for the simulation.

Making the assumption that the distance scales for the initial dynamical distribution for position and momentum are dominated by the Euclidean distance between the initial point $\bx_0$ and the optimum $\bx^*$, we have that
\begin{align}
     N_{\rm queries, qdyn} \in \widetilde{O} \lb \frac{N^2}{\epsilon} \frac{\|\bx_0 - \bx^*\|^2 \lambda_{\max}^2}{h_x^2\lambda_{\min}}  \rb.
\end{align}
Let us now assume that the Hessian is a Wishart random matrix. This leads to the probability distribution of the largest eigenvalue satisfying
\begin{equation}
    P(\lambda_{\max} \ge x) \le \frac{\sqrt{\pi} 2^{(1-2N)/2}x^{N-3/2}e^{-x/2}}{\Gamma(2^n/2)^2},
\end{equation}
which in turn implies with high probability~\cite{edelman1988eigenvalues} that $\lambda_{\max} \in O(N).$  Combining this with the same bound on $\lambda_{\max}/\lambda_{\min} \le \kappa_D \in O(N^{3/2})$ leads to the following asymptotic scaling:
\begin{align}
     N_{\rm queries, qdyn}\in \widetilde{O} \lb \frac{N^2}{\epsilon} \frac{\|\bx_0 - \bx^*\|^2 N^{5/2}}{h_x^2}    \rb.
\end{align}

The final question that we need to address involves determining the scaling of the spatial discretization $h_x$.  Intuitively, we expect that the momentum distribution will typically vary over to scale as $1/\max_{\bx}f'(\bx)$ from dimensional analysis.  Thus we expect that for typical distributions,
\begin{equation}
    N_{\rm queries, qdyn} \in \widetilde{O} \lb \frac{N^2}{\epsilon} (\max_{\bx} |f'_{\max}(\bx)|^2\|\bx_0 - \bx^*\|^2 N^{5/2})\rb,
\end{equation}
which is quadratically worse in the derivative of $f$ than the hybrid approach.  This scaling, however, is not rigorous as the evolved probability density could in principle have significantly finer scale variations than that expected from dimensional analysis.  A rigorous bound can be established using the discretization bounds of~\cite{kivlichan2017bounding}. We see that, if we promise that the maximum wave number for the simulation is bounded by $k_{\max}$, and the size of the box for the simulation is $O(\|\bx_0 - \bx^*\|)$, the grid spacing $h_x$ should be chosen such that
\begin{align}
    h_x \le \frac{2\epsilon_d}{3N(k_{\max} + f'_{\max}t^*)} \left(\frac{k_{\max} \|\bx_0 - \bx^*\|}{\pi} \right)^{-N}.
\end{align}
This bound assumes the absolute worst case configurations of the quantum particles given the constraint on $k_{\max}$ for the initial distribution. As the dynamics are unitary, we do not have the exponential propagation of error that we can have in the worst case analysis of the gradient descent algorithm. Thus we can take $\epsilon_d = \Theta(\epsilon)$. Substituting this bound in and using the expression for $t^*$ from Corollary~\ref{cor:equilibration_time} we obtain
\begin{align}
    N_{\rm queries, qdyn}&\in \widetilde{O} \lb  \frac{N^{13/2}\|\bx_0 - \bx^*\|^{2(N+1)} k_{\max}^{2N}(k_{\max}^2 + {f'}_{\max}^2{t^*}^2) }{\epsilon^3}    \rb\nonumber\\
    &\subseteq \widetilde{O} \lb  \frac{N^{13/2}\|\bx_0 - \bx^*\|^{2(N+1)} k_{\max}^{2N}(k_{\max}^2 + {f'}^2_{\max}\lambda_{\min})}{\epsilon^3}\rb\nonumber\\
    &\subseteq \widetilde{O} \lb  \frac{N^{13/2}\|\bx_0 - \bx^*\|^{2(N+1)} k_{\max}^{2N}(k_{\max}^2 + {f'}^2_{\max}N)}{\epsilon^3}\rb.
\label{eq:qdnestimate}
\end{align}

Even under the worst case assumptions made in this discussion, we see that the quantum dynamical optimization can provide substantial advantages.  If we neglect sub-dominant multiplicative factors and assume that $k_{\max} \in O(N/\epsilon)$, as suggested by the scaling in~\eqref{eq:pMean}, we then see that the number of queries in the quantum dynamical approach is in 
\begin{equation}
    \widetilde{O}(\|\bx_0 - \bx^*\|^{2+2N} N^{2N}/\epsilon^{5+2N}),
\end{equation}
whereas the optimization method given in~\eqref{eq:gradientEstimate} scales as
\begin{equation}
    \lb \|\bx_0 - \bx^*\|/\epsilon \rb^{O(N^{3/2})}.
\end{equation}
This constitutes a super-polynomial separation in the upper bounds which means that our fully coherent quantum method provides a substantial improvement in accuracy scaling over gradient descent for the ill-conditioned matrices expected from the Wishart ensemble under the above worst case bounds.
Thus, if this conjecture holds, we could expect a large separation to hold generically for large families of random matrices, but lower bounds are needed to definitively show a separation between quantum and classical methods here.

\subsection{Comparison of Global Quantum Optimization to Hybrid Optimization and Barren Plateau Effects}

We will now use the result of the last section to provide a hybrid quantum/classical algorithm that can approximately solve the global optimization problem using a strategy that is analogous to that of Theorem~\ref{thm:mainGlobal}.  Such algorithms are the natural analogues of existing gradient-based methods that are presently used in variational approaches, except here we use a dynamical system to perform the optimization rather than gradient descent to make closer ties to the global optimizer of Theorem~\ref{thm:mainGlobal}.  The key idea behind this algorithm is to use Newton's equations of motion rather than the Liouville equation to solve for the dynamics of the Nos\'e Hamiltonian.  We choose this path, rather than solving the Liouville equation, because storing the probability density explicitly on a classical computer would require exponential memory in the number of variables.  Despite the intractability of representing the probability distribution as a bit string, a perfect solution to Newton's equations of motion would yield a sample drawn from the same probability distribution as Theorem~\ref{thm:mainGlobal}.  For this reason, these two algorithms are natural analogues and the equilibration time needed to reach the microcanonical distribution will be the same for either.  For this reason, we can directly take the desired simulation time to be $t=1/\delta^2 \gamma$. 

There are many families of integrators that we could consider. Most broadly, the family of linear-multistep methods allow high-order approximations to the solution of the differential equation that allow us to achieve with a step size of $h$ error that scales as $O(h^p)$ using a $p$-step formula if the differential equation is sufficiently smooth.  The analysis of the specific dependence of the error on the properties of the non-linear differential equation is complicated (See for example Theorem 8.10 of~\cite{wanner1996solving} and so, for simplicity, we focus our discussion on the forward Euler integrator despite its dramatically inferior scaling with simulation time and error tolerance.  We then discuss the aspects of the strategy that can, and those that cannot, be improved by using high-order multistep methods.

The forward Euler method takes the following form for a differential equation $\dot{z} = F(z)$ for a non-linear function $F$ with Lipschitz constant $L$ and timestep $h$:
\begin{equation}
    z(t+h) -z(t) = hF(z(t)) +O(h^2)
\end{equation}
Standard bounds on the Euler method give us that if we define $\tilde{z}(t)$ with initial condition $z(0)=\tilde{z}(0)$ to be the solution yielded by iterating the Euler recursion relation~\cite{wanner1996solving,suli2003introduction}
\begin{equation}
|z(t)-\tilde{z}(t)| \le \frac{{\rm dim}(\mathbf{z})h\max_{\mathbf{z},k,j}(| \partial_{z_k} F_j(\mathbf{z})|)(e^{Lt}-1)}{2L}\label{eq:eulerBd}
\end{equation}
This bound is an explicit example of the linear-multistep scaling for $p=1$.  

The bound in~\eqref{eq:eulerBd} does not, however, bound the impact that errors in the derivative operator have on the evolution.
Error propagation in the forward Euler method is similarly straight forward as shown in the following lemma.
\begin{lem}\label{lem:Lipschitz}
    Let $\partial_t \mathbf{z}(t) = F(\mathbf{z}(t))$ for Lipschitz-continuous $F:\mathbb{R}^{2(N+1)}\rightarrow \mathbb{R}^{2(N+1)}$ such that $\|F(\mathbf{z}) - F(\mathbf{z}^*)\| \le L \|\mathbf{z}-\mathbf{z}^*\|$ next let $\tilde{\mathbf{z}}(t)$ satisfy $\partial_t \tilde{\mathbf{z}}(t) = \tilde{F}(\tilde{\mathbf{z}}(t))$ such that $\|\tilde{F}(\mathbf{z}) - F(\mathbf{z})\|\le \epsilon_F$ and let $\tilde{\mathbf{z}}(0) = \mathbf{z}(0)$.   Let $\mathbf{z}'$ be the forward Euler approximation to the solution with the exact function $F(\mathbf{z})$  and let and $\tilde{\mathbf{z}}'$ be approximate solutions to the differential equation yielded by the forward Euler method with stepsize $h$ and differential operator $\tilde{F}$.  We then have that for any integer $q\ge 0$
    $$
\| \mathbf{z}'(qh) - \tilde{\mathbf{z}}'(qh) \| \le \frac{\epsilon_F(e^{Lqh}-1)}{L}
    $$
\end{lem}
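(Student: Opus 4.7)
The plan is to set up a discrete Gr\"onwall-type recursion for the one-step error between the two Euler iterations and then solve it in closed form. Let $e_q := \|\mathbf{z}'(qh) - \tilde{\mathbf{z}}'(qh)\|$. By assumption $e_0 = 0$, and by the definition of the forward Euler method,
\begin{equation}
\mathbf{z}'((q+1)h) = \mathbf{z}'(qh) + hF(\mathbf{z}'(qh)), \qquad \tilde{\mathbf{z}}'((q+1)h) = \tilde{\mathbf{z}}'(qh) + h\tilde{F}(\tilde{\mathbf{z}}'(qh)).
\end{equation}

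The next step is to subtract these two recursions and bound the result with the triangle inequality, splitting the mismatch in the vector field into a Lipschitz-in-argument piece and a uniform-approximation piece:
\begin{align}
e_{q+1} &\le \|\mathbf{z}'(qh) - \tilde{\mathbf{z}}'(qh)\| + h\|F(\mathbf{z}'(qh)) - F(\tilde{\mathbf{z}}'(qh))\| + h\|F(\tilde{\mathbf{z}}'(qh)) - \tilde{F}(\tilde{\mathbf{z}}'(qh))\| \\
&\le (1 + hL)\, e_q + h\epsilon_F,
\end{align}
where I use the Lipschitz hypothesis on $F$ for the middle term and the uniform bound $\|\tilde{F}(\mathbf{z}) - F(\mathbf{z})\| \le \epsilon_F$ for the last term.

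I then unroll this linear recurrence with $e_0 = 0$, which gives a geometric sum:
\begin{equation}
e_q \le h\epsilon_F \sum_{j=0}^{q-1} (1+hL)^j = h\epsilon_F \cdot \frac{(1+hL)^q - 1}{hL} = \frac{\epsilon_F\left((1+hL)^q - 1\right)}{L}.
\end{equation}
Finally, applying the elementary inequality $(1+hL)^q \le e^{Lqh}$ yields the claimed bound
\begin{equation}
\|\mathbf{z}'(qh) - \tilde{\mathbf{z}}'(qh)\| \le \frac{\epsilon_F\left(e^{Lqh} - 1\right)}{L}.
\end{equation}

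There is essentially no hard step here; the only subtlety is being careful that the Lipschitz constant bounds the exact vector field $F$ applied to the \emph{approximate} iterate $\tilde{\mathbf{z}}'(qh)$, which is why I split the increment as above rather than comparing $F(\mathbf{z}'(qh))$ directly to $\tilde{F}(\tilde{\mathbf{z}}'(qh))$. Note that the bound is independent of the discretization error coming from the Euler truncation itself (that was handled by Eq.~\eqref{eq:eulerBd}); it cleanly isolates the error incurred by working with an inexact vector field $\tilde{F}$, so it can be added to the Euler truncation bound via a second triangle inequality when needed downstream.
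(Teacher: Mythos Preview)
Your proof is correct and takes essentially the same approach as the paper: set up the one-step recursion via the triangle inequality, obtain $e_{q+1}\le(1+hL)e_q+h\epsilon_F$, and solve it with $e_0=0$. The only cosmetic difference is that the paper inserts $\tilde{F}(\mathbf{z}'(qh))$ as the intermediate term and bounds $(1+Lh)\le e^{Lh}$ before summing, whereas you insert $F(\tilde{\mathbf{z}}'(qh))$ and sum the geometric series exactly before applying $(1+hL)^q\le e^{Lqh}$; your split is arguably cleaner since it uses the Lipschitz hypothesis on $F$ directly rather than implicitly on $\tilde{F}$.
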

\begin{proof}
    Proof follows explicitly. Let us assume that we wish to find the solution at step $q+1$ for integer $q\ge 0$.  We then have from the triangle inequality that
    \begin{align}
        \|\mathbf{z}'((q+1)h) - \tilde{\mathbf{z}}'((q+1)h)\| &= \|\mathbf{z}(qh) - \tilde{\mathbf{z}}'(qh) + hF(\mathbf{z})' - h\tilde{F}(\tilde{\mathbf{z}})\|\nonumber\\
        &\le \|\mathbf{z}'(qh) - \tilde{\mathbf{z}}'(qh)\| + h\|F(\mathbf{z}'(qh)) - \tilde{F}(\mathbf{z}'(qh)) \|+h\|\tilde{F}(\mathbf{z}'(qh)) - \tilde{F}(\tilde{\mathbf{z}}'(qh)) \|\nonumber\\
        &\le \|\mathbf{z}'(qh) - \tilde{\mathbf{z}'}(qh)\|(1+Lh) + \epsilon_Fh\nonumber\\
        &\le \|\mathbf{z}'(qh) - \tilde{\mathbf{z}}'(qh)\|e^{Lh} + \epsilon_Fh.
    \end{align}
    This provides an explicit recursion relation that can be solved.  The solution, subject to the assumption that $\mathbf{z}'(0)=\mathbf{z}(0)$, is
    \begin{align}
        \|\mathbf{z}'(qh) - \tilde{\mathbf{z}}'(qh)\| \le \frac{\epsilon_F h (e^{Lqh}-1)}{e^{Lh}-1} \le \frac{\epsilon_F (e^{Lqh}-1)}{L}.
    \end{align}
\end{proof}
\begin{thm}[Hybrid Quantum/Classical Global Optimization]\label{thm:hybridGlobal}
    There exists a hybrid quantum classical algorithm that for an optimization function $f:\mathbb{R}^N \mapsto \mathbb{R}$ that under the assumptions of Theorem~\ref{thm:mainGlobal} can draw a sample that is $\Delta_z$-close to one drawn from a distribution that has total variational distance $\delta$ from the zero-temperature Gibbs distribution $\lim_{\beta \rightarrow \infty} e^{-\beta f(\bx)} /{\rm Tr}(e^{-\beta f(\bx)})$ and a spectral gap of the Liouvillian of $\gamma$ using a number of queries that scales to the phase oracle $O_f^{(p)}$ that scales (with high probability) as
    $$
     \tilde{O}\left( \frac{N^{3/2} M e^{2L/\gamma\delta^2}}{\gamma^3 \delta^6\Delta^2_z} \right)
    $$
    where $M$ is an upper bound on the second derivatives of the force function such that $$M\ge \max\lb \frac{1}{ms_{\min}^2}, \frac{2p_{\max}}{ms_{\min}^3}, \frac{3p_{\max}^2}{ms_{\min}^4},|f''_{\max}|,\frac{1}{Q}  \rb$$
    and $L$ is a Lipschitz constant such that
    $$
L\ge \sqrt{N}\left(\frac{p_{\max}}{ms_{\min}^2}+|f'_{\max}| \right) +\frac{p_{s,\max}}{Q}+ \frac{p_{\max}^2}{ms_{\min}^3}
    $$
\end{thm}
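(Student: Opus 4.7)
The plan is to analyze the hybrid algorithm that integrates Newton's equations of motion for the Nos\'e Hamiltonian classically via a forward Euler scheme, while querying the quantum computer only to estimate the gradient of the potential $f(\bx)$ at each integration step. By Liouville's theorem (applied to the deterministic flow on the $2(N+1)$-dimensional extended phase space), a single trajectory with a randomly chosen stopping time in $[0,t^*]$ yields a sample from the same time-averaged distribution $\langle \tilde\rho\rangle_{t^*}$ that appeared in the quantum global algorithm, so we may recycle Lemma~\ref{lem:equilibrium} to fix the total simulation time to $t^* \in O(1/(\gamma \delta^2))$ and guarantee that the resulting (exact-arithmetic) sample distribution lies within total variation $\delta$ of the discrete microcanonical (hence near zero-temperature Gibbs) distribution.

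I then budget the two remaining error sources independently at $\Delta_z/3$ each: the intrinsic Euler discretization error and the error propagated by imperfect gradient evaluations. First I would apply the standard Euler bound \eqref{eq:eulerBd} with $\mathrm{dim}(\mathbf z) = 2(N+1)$ and $\max |\partial_{z_k} F_j| \le M$ (verifying that every second partial derivative of the Nos\'e Hamiltonian force function is dominated by the stated $M$, using the explicit forms $\dot{x}_j = p_j/(m s^2)$, $\dot{p}_j = -\partial_{x_j}f(\bx)$, $\dot s = p_s/Q$, $\dot{p}_s = \sum_j p_j^2/(m s^3) - N/(\beta s)$). Solving for the permissible step size gives $h \le C \Delta_z L / \bigl( N M (e^{L t^*}-1) \bigr)$, so the number of Euler steps is
\begin{equation}
    S \;=\; \frac{t^*}{h} \;\in\; O\!\left( \frac{N M\, e^{L t^*}}{L\, \Delta_z\, \gamma \delta^2} \right).
\end{equation}
Second, I invoke Lemma~\ref{lem:Lipschitz} (verifying the claimed Lipschitz constant $L$ for the Nos\'e flow field $F$, which is a direct calculation bounding each component) to convert the per-step gradient error $\epsilon_F$ into the trajectory error bound $\epsilon_F (e^{L t^*}-1)/L \le \Delta_z/3$, i.e. $\epsilon_F = O(\Delta_z L\, e^{-L t^*})$.

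At each Euler step I would call the quantum gradient-estimation subroutine of \cite{gilyen2019optimizing,gilyen2019quantum}, which computes $\nabla f$ to Euclidean error $\epsilon_F$ using $O(\sqrt{N}/\epsilon_F)$ queries to the phase oracle $O_f^{(p)}$. Multiplying, the total query cost is
\begin{equation}
    S \cdot O\!\left( \frac{\sqrt N}{\epsilon_F} \right) \;\in\; \tilde O\!\left( \frac{N^{3/2}\, M\, e^{2L/(\gamma \delta^2)}}{L^2 \Delta_z^2\, \gamma \delta^2} \right),
\end{equation}
which, after absorbing the additional $1/(\gamma\delta^2)^2$ that I would collect from (i) boosting the single-sample success probability of the quantum gradient routine across all $S$ steps via the union bound (each step's failure probability must be $O(1/S)$, inflating each call by a $\log S$ factor that is polynomial in $1/(\gamma\delta)$) and (ii) the additional Markov-inequality overhead needed to drive the probability of landing in the desired target set to a constant, matches the claimed bound $\tilde O\bigl( N^{3/2} M \,e^{2L/(\gamma \delta^2)} / (\gamma^3 \delta^6 \Delta_z^2) \bigr)$.

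The main obstacle is the doubly exponential interaction between the Lipschitz error-propagation factor $e^{L t^*}$ and the required equilibration time $t^* \sim 1/(\gamma \delta^2)$: the gradient precision $\epsilon_F$ must shrink exponentially in $L/(\gamma\delta^2)$, and each such precision gain costs a linear factor in queries, yielding the $e^{2L/(\gamma \delta^2)}$ tower. One has to be careful that this blow-up is genuinely unavoidable in the forward Euler analysis and is not an artifact that a higher-order linear-multistep integrator would eliminate: while higher-order methods improve the $h$-dependence of the \emph{discretization} error, the \emph{gradient-error amplification factor} in Lemma~\ref{lem:Lipschitz} is a property of the underlying (possibly chaotic) Lipschitz flow and persists regardless of integrator order, so the exponential separation from Theorem~\ref{thm:mainGlobal} survives.
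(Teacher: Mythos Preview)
Your proposal is correct and follows essentially the same route as the paper: both integrate the Nos\'e equations of motion with forward Euler, use the quantum gradient routine of \cite{gilyen2019optimizing} at each step, fix the total time $t^*\in O(1/\gamma\delta^2)$ via Lemma~\ref{lem:equilibrium}, and then combine the standard Euler error bound~\eqref{eq:eulerBd} with the gradient-error propagation bound of Lemma~\ref{lem:Lipschitz} to solve for $h$ and $\epsilon_F$ separately, arriving at $S\cdot O(\sqrt{N}/\epsilon_F)\in\tilde O\bigl(N^{3/2}M e^{2L/\gamma\delta^2}/(L^2\Delta_z^2\gamma\delta^2)\bigr)$.

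One remark on your justification of the final exponent on $\gamma\delta^2$: the paper simply asserts the passage from $L^{-2}(\gamma\delta^2)^{-1}$ to $(\gamma\delta^2)^{-3}$ without comment, whereas you try to account for the extra $(\gamma\delta^2)^{-2}$ via a union bound over the $S$ gradient calls and a Markov step. The union bound does contribute something nontrivial here---since $S$ is exponential in $L/\gamma\delta^2$, the $\log S$ overhead for boosting each gradient call's success probability is $\Theta(L/\gamma\delta^2)$, not polylogarithmic---but that yields only one additional factor, and your ``Markov-inequality overhead'' is not part of the paper's argument and is not obviously relevant. This is a minor bookkeeping point in a bound that is already exponentially loose, and the paper's own derivation leaves the same step unexplained; your core argument is sound and matches the paper's.
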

\begin{proof}
Let us begin by taking the time prescribed from Theorem~\ref{thm:mainGlobal}, $T=1/\gamma\delta^2$ to be an integer multiple of the time spacing used in the Euler method.  That is to say we choose $1/\gamma\delta^2 = q_{\max} h$ for positive integer $q_{\max}$.  Then for any such integer $q\le q_{\max}$ we have that the error in the differential equation obeys the following from~\eqref{eq:eulerBd} 
    \begin{align}
        \|\mathbf{z}(qh) -\tilde{\mathbf{z}}'(qh)\|&\le \|\mathbf{z}(qh)-\mathbf{z}'(qh)\| +\|\mathbf{z}'(qh) -\tilde{\mathbf{z}}'(qh)\|\nonumber\\
        &\le \frac{[(N+1)h \max_{\mathbf{z},k,j}(|\partial_{z_k} F_k(\mathbf{z})|)+\epsilon_F](e^{qLh}-1)}{L}
    \end{align}
    Now assuming that we wish to find a solution within a $\Delta_z$-Euclidean ball about $\mathbf{z}(qh)$ it then suffices to take
    \begin{align}
        h &\le \frac{\Delta L}{2(N+1)\max_{\mathbf{z},k,j}(|\partial_{z_k} F_k(\mathbf{z})|)(e^{qLh}-1) }\\
        \epsilon_F &\le \frac{\Delta L}{2(e^{qLh}-1) }
    \end{align}
    Using the gradient method of~\cite{gilyen2019optimizing} we have that the number of queries needed to $O_f^{(p)}$ to compute the gradients, with high probability, is in
    \begin{equation}
         \tilde{O}\left(\frac{\sqrt{N}}{\epsilon_F} \right).
    \end{equation}
    We then have that if we choose $qh = 1/\gamma \delta^2$ that the total number of Euler steps that we need is $(h\gamma\delta^2)^{-1}$.  This implies that the total number of queries to $O_f^{(p)}$ is
    \begin{align}
        \tilde{O}\left(\frac{\sqrt{N}}{\epsilon_F h\gamma \delta^2} \right)\subseteq \tilde{O}\left( \frac{N^{3/2} \max_{\mathbf{z},k,j}(|\partial_{z_k} F_k(\mathbf{z})|) e^{2L/\gamma\delta^2}}{\gamma^3 \delta^6\Delta_z^2} \right).\label{eq:globalDerivIntEq}
    \end{align}
Next we need to relate the quantities $F$ and $L$ to the parameters of the Nos\'e Hamiltonian.  We have that
\begin{equation}
    F(\mathbf{z}) = \left[\frac{d\bx}{d t},\frac{d\bp}{d t},\frac{d s}{d t}, \frac{dp_s}{d t}\right]^T
\end{equation}
and so the the continuity of $F$ can be studied by looking at each term in the force vector independently.  From Hamilton's equations of motion we have that
\begin{equation}
    \left\|\frac{d\bx}{dt}\right\| = \left\|\frac{\partial H_{N}}{\partial \bp}\right\|\le \frac{\sqrt{N} p_{\max}}{ms_{\min}^2}
\end{equation}
Similarly, as the absolute value of the partial derivative of the objective function $|\partial_{x_k}V(\bx)|=|\partial_{x_k}f(\bx)| \le |f'_{\max}|$ for all $\bx$ we then have from standard norm inequalities that
\begin{equation}
    \left\|\frac{d \bp}{d t} \right\| = \left\|\frac{\partial H_N}{\partial \bx} \right\|\le \sqrt{N}f'_{\max}
\end{equation}
Finally we have
\begin{align}
    \left\|\frac{d s}{d t} \right\| &= \left|\frac{\partial H_n}{\partial p_s} \right| \le \frac{p_{s,\max}}{Q}\\
    \left\|\frac{d p_s}{d t} \right\| &= \left|\frac{\partial H_n}{\partial s} \right| \le \frac{p_{\max}^2}{ms_{\min}^3}.
\end{align}
Thus we have from Taylor's theorem that for any two $\mathbf{z},\mathbf{z}^*$
\begin{equation}
    \|F(\mathbf{z}) - F(\mathbf{z}^*)\| \le \left[\sqrt{N}\left(\frac{p_{\max}}{ms_{\min}^2}+|f'_{\max}| \right) +\frac{p_{s,\max}}{Q}+ \frac{p_{\max}^2}{ms_{\min}^3}\right] \| \mathbf{z} - \mathbf{z}^*\|. 
\end{equation}
Thus we may take our Lipschitz constant for the differential equation to be
\begin{equation}
    L=\sqrt{N}\left(\frac{p_{\max}}{ms_{\min}^2}+|f'_{\max}| \right) +\frac{p_{s,\max}}{Q}+ \frac{p_{\max}^2}{ms_{\min}^3}.
\end{equation}
The derivative $\max_{\mathbf{z},k,j}(|\partial_{z_k} F_k(\mathbf{z})|)$ can be similarly bounded.  Following the exact same reasoning used above we can see that
\begin{align}
    \max_{\mathbf{z},k,j}(|\partial_{z_k} F_k(\mathbf{z})|) \le \max\lb \frac{1}{ms_{\min}^2}, \frac{2p_{\max}}{ms_{\min}^3}, \frac{3p_{\max}^2}{ms_{\min}^4},|f''_{\max}|,\frac{1}{Q}  \rb
\end{align}
where $|f''_{\max}|\ge \max_{\bx,j,k} |\partial_{x_j} \partial_{x_k} f(\bx)|$.  Our theorem then follows by substitution of these results into~\eqref{eq:globalDerivIntEq}.  
\end{proof}
This shows that the upper bounds on the forward Euler method's performance are less than inspiring.  We note immediately that the complexity of this problem scales exponentially with $L/\gamma \delta^2$.  This scaling is an inevitable consequence of the non-unitarity of the underlying dynamics of Newton's equations.  Despite this issue, it is worth noting that this argument simply bounds the distance between the approximated trajectory and the correct one.  It does not necessarily show that the resulting positions of $\tilde{\mathbf{z}}'(1/\gamma\delta^2)$ is nonetheless typical of those that we would expect from samples from the phase space distribution discussed in Theorem~\ref{thm:mainGlobal}.  Regardless, this upper bound is likely the best that we can provide without making further assumptions on the underlying differential equation.

For the sake of comparison with Theorem~\ref{thm:mainGlobal}, let us assume that $|f'_{\max}|,|f''_{\max}|,|p_{\max}|, |p_{s,\max}|$, as well as $s_{\min},d,h_{\min},m,Q,\Delta_z$ are all constants and focus on the scaling with $N,\delta,\gamma,\Delta$.  In this case we have that
\begin{equation}
    L = O(\sqrt{N}/\gamma\delta^2),\quad M = O(1).
\end{equation}
In this case, the best upper bound that we can show for the method of Theorem~\ref{thm:hybridGlobal} is
\begin{equation}
    N_{\rm queries,hyb}=\tilde{O}\lb \frac{N^{3/2}e^{2\sqrt{N}/\gamma\delta^2}}{\gamma^3 \delta^6}\rb.\label{eq:hybUB}
\end{equation}
In contrast we have under these assumptions that $\alpha = O(N)$ and so the scaling that we need to find an analogous solution from Theorem~\ref{thm:mainGlobal} is
\begin{equation}
    N_{\rm queries,quant} = O\lb\frac{N^3}{\gamma \delta^2} \rb
\end{equation}
The scaling in all variables is exponentially better here than in the upper bound~\eqref{eq:hybUB}.    Further, any improvement of this exponential separation is unlikely to arise from the use of high-order linear multistep methods because the exponential scaling in the hybrid algorithm arises from Lemma~\ref{lem:Lipschitz} and such optimizations would likely only therefore improve the polynomial factors of the form $1/\gamma^3 \delta^6$ to $\tilde{O}(1/\gamma^2 \delta^4)$ at best.  This shows that we are capable of achieving an exponentially tighter upper bound for the fully quantum methods than we would be with a hybrid algorithm for global optimization.

\subsubsection{Barren Plateau Effects}
With this discussion in place, let us now change gears to discuss how these algorithms perform in the presence of barren plateaus.  The barren plateau effect is a consequence of the complicated ansatzae that emerge in variational quantum optimization problems~\cite{mcclean2018barren}.  Specifically, let us assume in this case that we have an optimization function of the form
\begin{equation}
    f(\bx) = \bra{\psi_0} \prod_{j=1}^N e^{i x_j H_j} \mathcal{F} \prod_{j=N}^1 e^{-i x_j H_j} \ket{\psi_0}
\end{equation}
for Hermitian $\mathcal{F}$ and $H_j$ and $\ket{\psi}\in \mathbb{C}^{2^n}$.  This is typical in chemistry applications and other areas.  Such models can be explicitly differentiated using the product rule.   
Under the assumption that the product of exponentials (and the unitaries in the decomposition of $\mathcal{F}$) in the derivative of $f(\bx)$ are typical of those drawn from a unitary $2$-design then we have that for any $\bx$ that the expectation value of the derivative operator then it can be shown that for any directional derivative, $\partial_{x_k}$, that 
\begin{equation}
    \mathbb{E}_{\bx}\left(|\partial_{x_k} f(\bx)|^2 \right) \in O(2^{-2n}).
\end{equation}
which suggests that for all but a negligible fraction of the parameter space that the derivative of the objective function is exponentially small.  This means that randomly guessing parameters is unlikely to provide a usable gradient for cases where the number of qubits $n$ is large.

We see in these cases for the dynamical simulation that the size of the gradients do not directly enter the discussion of the complexity of the global method.  The gradients instead indirectly enter through the spectral gap of the Liouvillian, $\gamma$, which dictates the time required for the distribution to equilibrate to the microcanonical distribution.  The influence of these small gradients on the spectral gap of the discretized Liouvillian is difficult to assess especially given that the gap itself is challenging to bound alone.  Instead, we will use a dynamical argument to bound the time needed to escape from a barren plateau.

Let us assume that we begin with an initial distribution that has with probability $1$ that $\bp=0,p_s =0$ and assume that $s\in \Theta(1)$.  Further let us assume that $\bx_0$ is drawn uniformly according to the Haar measure.  Under these circumstances we have from Markov's inequality that for $C\in \Theta(2^{2n})$
\begin{equation}
    {\rm Pr}(|f(\bx_0)|^2 \ge C\mathbb{E}_\bx(|\partial_{x_k} f(\bx)|^2))\in O(2^{-2n}).
\end{equation}
Thus with overwhelming probability the initial point drawn will have exponentially small gradients for the objective function $f$ and assume that the optimal value obeys $f(\bx^*)=0$.  Let us further assume that the function $f(\bx)$ is Lipschitz with constant $L'$.  This assumption is needed so that we can talk about the behavior of the objective function along the path taken in the optimization process.  If that is true then for all $\bx'$ such that $\|\bx' - \bx_0\|\in O(\mathbb{E}_{\bx}(f(\bx))/L')$ we must have that
\begin{equation}
    |f(\bx') - f(\bx_0)| \le L'\|\bx' - \bx_0\| \in O(\mathbb{E}_{\bx}(f(\bx))).
\end{equation}
This implies that in order for us to meaningfully reduce the expectation value of the objective function from its average we need to travel a distance of at least $\|\bx' - \bx_0\|\in O(\mathbb{E}_{\bx}(f(\bx))/L')$.

Given that we have assumed that the initial momenta are zero and that $|\partial \bp / \partial t| \in O(\sqrt{N}\max_{k} |\partial_{x_k} f(\bx))|)$ the time we find from Newton's equations that
\begin{equation}
\partial_t^2 x_j = -\partial_{x_j} f(\bx) ~\text{ subject to $\bx(0) = \bx_0$ and $\partial_t \bx_0 =0$}.
\end{equation}
Under the assumption that the objective function does not meaningfully change over this distance.
This implies that the time required to change by a fraction of $\mathbb{E}_\bx(f(\bx))/L'$ is with high probability over $\bx_0$
\begin{equation}
    t\in \Omega\left(\frac{\sqrt{\mathbb{E}_\bx(f(\bx))/L'}}{{N}^{1/4}\sqrt{\max_{j} |\partial_{x_j} f(\bx)|}} \right)=\Omega\left(\frac{\sqrt{2^n\mathbb{E}_\bx(f(\bx))/L'}}{{N}^{1/4}} \right) 
\end{equation}
Now let us assume seeking a contradiction that $1/\gamma \in o(t)$.  If that were the case then the equilibration time for the microcanonical ensemble would be shorter than the time it takes to meaningfully change the objective function.  This is a contradiction hence we must have that 
\begin{equation}
    \frac{1}{\gamma} \in \Omega\left(\frac{\sqrt{2^n\mathbb{E}_\bx(f(\bx))/L'}}{{N}^{1/4}} \right).
\end{equation}
This shows that, unless the Lipschitz constant for the derivatives of the objective function is exponentially large, the mixing time $1/\gamma\delta^2$ will be exponentially large for any constant target total variational distance $\delta$.  This shows, unsurprisingly, that barren plateau effects will typically be present in the global optimization approach of Theorem~\ref{thm:mainGlobal} just as they appear in traditional hybrid algorithms~\cite{mcclean2018barren}.

In particular we see that if we neglect discretization error for the Nos\'e Liouvillian and assume that the minimum time shown above is sufficient for equilibration for constant $N$ and $\delta$ the bounds satisfy
\begin{align}
    N_{\rm queries, quant} &\in O(2^{n/2}/\sqrt{L'})\nonumber\\
    N_{\rm queries, hybrid} &\in 2^{3n/2}e^{O(2^{n/2}/\sqrt{L'})}/{L'}^{3/2}
\end{align}
While barren plateau effects are not alleviated by the approach of Theorem~\ref{thm:mainGlobal}, the upper bounds provided in Theorem~\ref{thm:mainGlobal} are much more tolerant of small gradients than their analogous hybrid algorithms in Theorem~\ref{thm:hybridGlobal}. 
Thus we see a super-exponential separation between the upper bounds in this case which demonstrates a potentially immense advantage in performance of gradient free approaches, such as ours, to optimization relative to hybrid quantum/classical algorithms.  However, lower bounds are necessary to conclusively decide whether these differences are fundamental or artifacts of the upper bounds used in Lemma~\ref{lem:Lipschitz} and elsewhere.

\section{Numerical Experiments for Local Optimization}

Our convergence guarantees cover the quadratic and strongly convex cases for optimization tasks initialized near a local minimum. Rigorous bounds for globally non-convex objective functions are hard to come by, so we complement the theory with numerical studies in small dimensions. The experiments that follow are not intended as a comparison against classical methods; their purpose is to confirm that the predicted convergence behavior manifests in the quadratic case with the expected sensitivity to $\beta$, and to demonstrate that the underlying mechanism works for non-convex landscapes where rigorous bounds are not available.

We illustrate the behavior of the algorithm through simulations in $d=2$. All experiments evolve the wavefunction under the Hamiltonian described in~\ref{def:discretized_ham} using a Trotter-Suzuki decomposition, with hard-wall boundary conditions on $[-2, 2]^2$. The initial state is a Gaussian wave packet $\psi(\mathbf{x},0) \propto \exp(-\|\mathbf{x}-\mathbf{x}_0\|^2/2\sigma_0^2)$ with $\sigma_0 = 0.3$, and we track the expected position $\langle \mathbf{x}\rangle(t)$ as the algorithm iterates. Convergence is declared when $\|\langle\mathbf{x}\rangle(t) - \mathbf{x}^*\| < \varepsilon$ persists for three consecutive checks, where $\mathbf{x}^*$ denotes the known global minimum and with $\varepsilon=0.08$.

Since Theorem~\ref{lem:multivariate_time} provides convergence guarantees only for quadratic objectives, we begin with a detailed study of this case. We take $f(\mathbf{x}) = (\mathbf{x}-\mathbf{x}^*)^\top A(\mathbf{x}-\mathbf{x}^*)$ with $\mathbf{x}^* = (1/2, 1/2)$ and $A = R^\top\operatorname{diag}(100,1)\,R$, where $R$ is a rotation by $\pi/4$, giving condition number $\kappa = 100$. The starting point is $\mathbf{x}_0 = (-1.5, 0.5)$, which lies along the stiff eigendirection and thus probes the regime where ill-conditioning most affects convergence. Figure~\ref{fig:quad} displays the trajectory $\langle\mathbf{x}\rangle(t)$ for four values of the annealing rate $\beta$. At the optimal value $\beta^* \approx 1.11$, the algorithm converges in parameter time $t_{\mathrm{conv}} = 1.68$. Slower annealing ($\beta = 0.29$) still converges but takes roughly four times longer, while a very slow rate ($\beta = 0.05$) fails to converge within the allotted time, with $\langle\mathbf{x}\rangle$ oscillating broadly. A fast rate ($\beta = 2.00$) also fails: the kinetic energy is suppressed too rapidly, freezing the wavefunction before it can traverse the narrow valley to the minimum.

\begin{figure}[h]
    \centering
    \includegraphics[width=\linewidth]{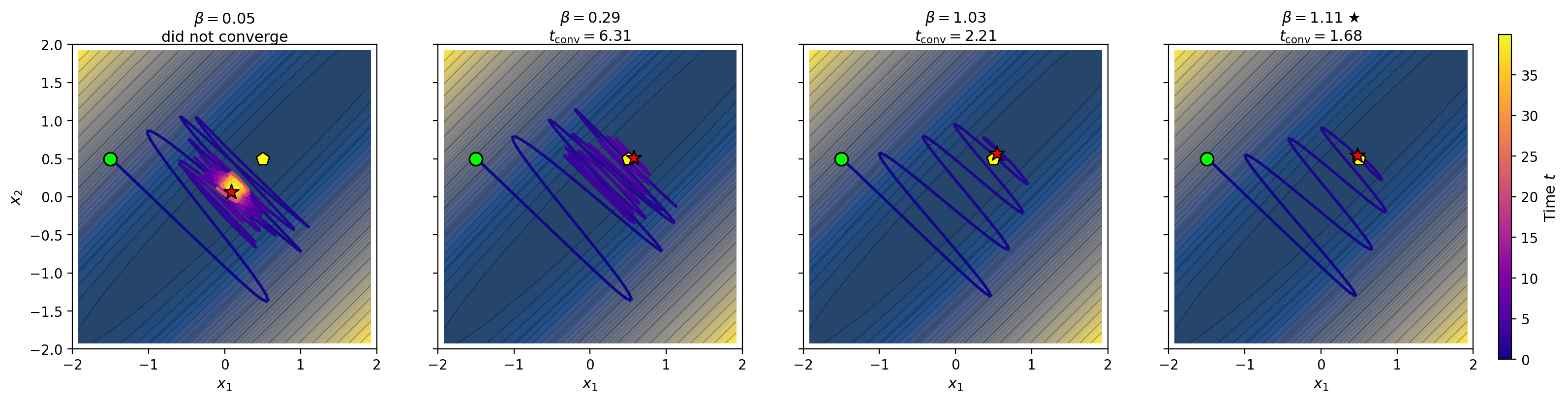}
    \caption{Trajectories for local optimization with a localized wave packet with hard wall boundaries. The target function is quadratic with the minimum at $(0.5,0.5)$ and $\kappa=100$. The green circle indicates the starting point, while the minimum is marked with a yellow pentagon and the final point of the trajectory is marked with a red star.}
    \label{fig:quad}
\end{figure}

For non-globally-convex functions, rigorous convergence bounds cannot be expected in general, so we instead provide numerical evidence that the algorithm localizes correctly on two standard test functions with qualitatively different landscapes. The first is a sum of three Gaussian wells with a quadratic confining term,
\begin{equation}
    \label{eq:numerics_gauss}
    f(\mathbf{x}) = -\sum_{i=1}^3 A_i\exp\!\left(-\frac{\|\mathbf{x}-\mathbf{c}_i\|^2}{2\sigma_i^2}\right) + 0.1\,\|\mathbf{x}\|^2,
\end{equation}
which has a global minimum at $(0.5, 0.5)$ flanked by two local minima at $(-0.6,-0.4)$ and $(-0.3,0.7)$ for $A_i=(10,6,4)$ and $\sigma_i=(0.3,0.25,0.2)$. The second is the scaled Rastrigin function
\begin{equation}
    \label{eq:numerics_rastrigin}
    f(\mathbf{x}) = 2 + \sum_{i=1}^{d}\left[x_i^2 - \cos(2\pi x_i)\right], 
\end{equation}
whose global minimum at the origin is surrounded by a regular lattice of local minima at integer coordinates.

Figure~\ref{fig:3d_1} shows the probability density $|\psi(\mathbf{x},t)|^2$ at $t=0$ and $t=T$ for the three-well potential with $\mathbf{x}_0 = (-1.5,-1.5)$, $\beta=0.3$, and $T=15$. The initial Gaussian is well-separated from all wells. By $t=15$, the wavefunction has split and localized into the three wells, with the dominant peak at the global minimum. Figure~\ref{fig:3d_2} shows the analogous result for the Rastrigin function under the same parameters. It can be seen that the final state distributes across the lattice of local minima. The concentration can be seen clearly in~\ref{fig:prob_concentrate} which shows the probability for a measurement outcome to give a result $<\epsilon=0.1$ close to some local minimum. The concentration increases until $t=15$, at which $P(\textrm{measurement outcome $\epsilon$ close to local minimum}) \approx 0.85$. It can be seen that the dynamics are largely frozen after that point.

\begin{figure}[h]
    \centering
    \includegraphics[width=0.8\linewidth]{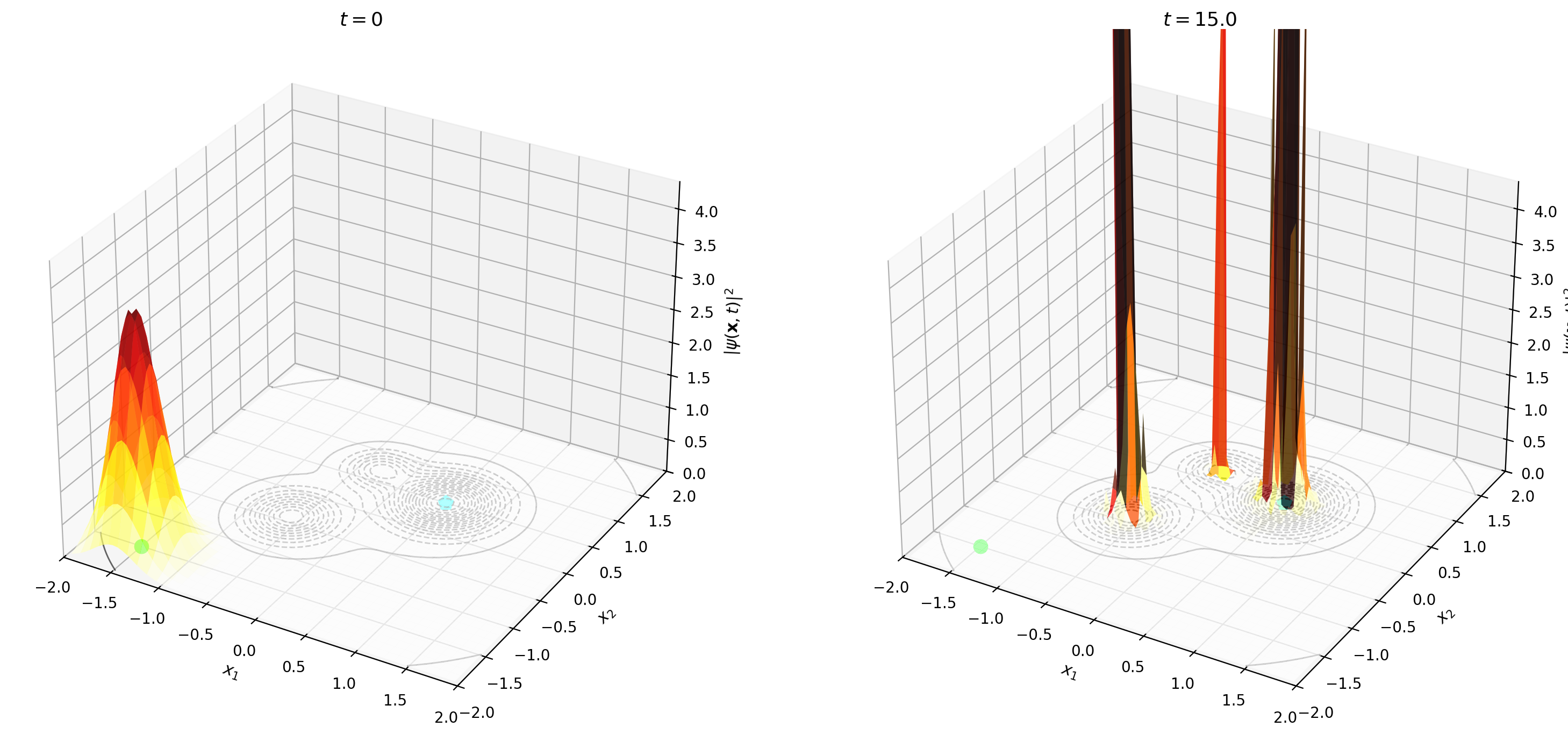}
    \caption{The evolution of probability density under our local optimization process for a target function of Gaussian wells (Equation~\ref{eq:numerics_gauss}) and hard wall boundaries. The blue point indicates the global minimum.}
    \label{fig:3d_1}
\end{figure}

\begin{figure}[h]
    \centering
    \includegraphics[width=0.8\linewidth]{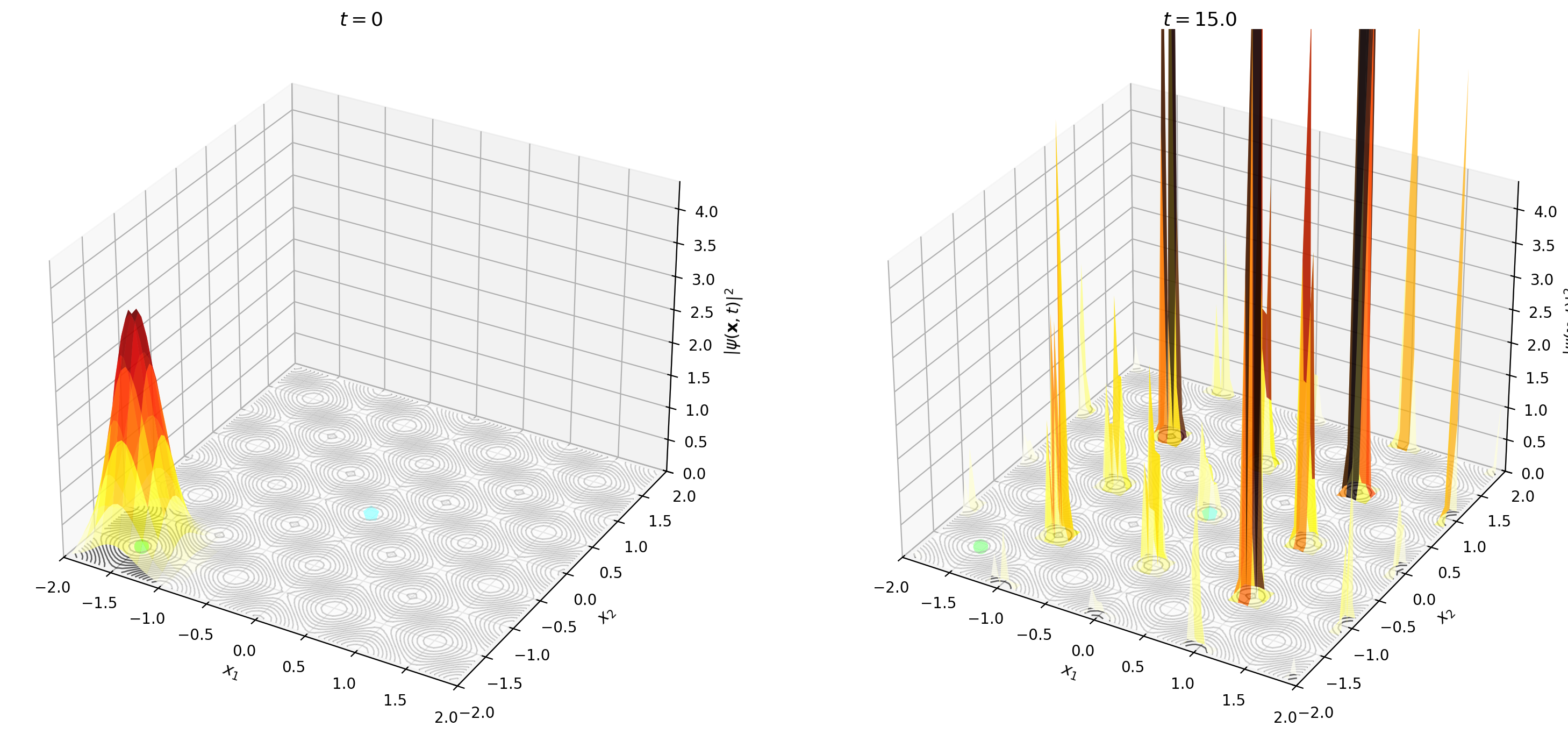}
    \caption{The evolution of probability density under our local optimization process for a Rastrigin target function (Equation~\ref{eq:numerics_rastrigin}) and hard wall boundaries. The blue point indicates the global minimum.}
    \label{fig:3d_2}
\end{figure}

\begin{figure}[h]
    \centering
    \includegraphics[width=0.5\linewidth]{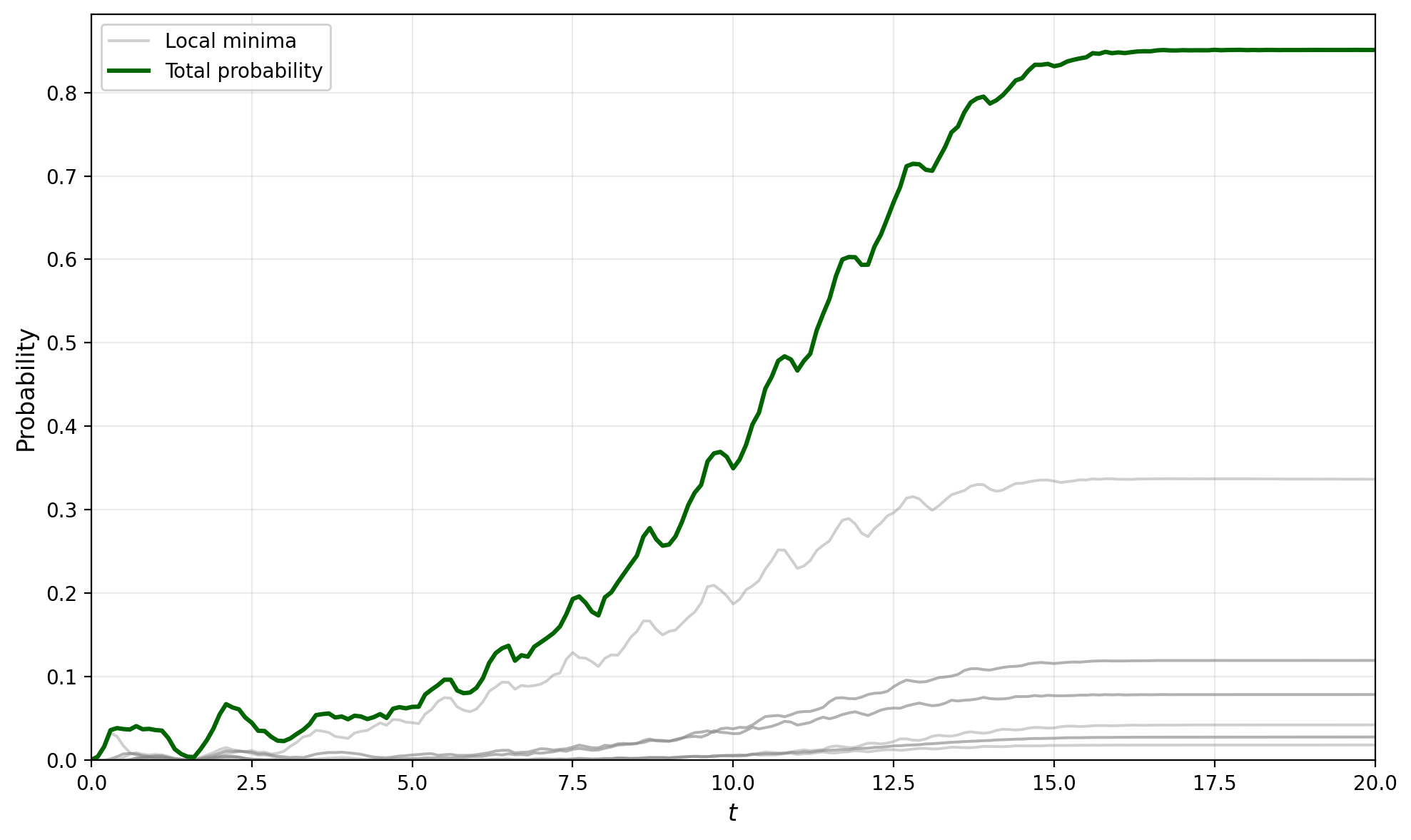}
    \caption{Probability concentration for the Rastrigin function (Equation~\ref{eq:numerics_rastrigin}) under our optimization process. The dark green plot shows the probability of a measurement outcome being $\epsilon=0.1$ close to a local minimum, while the gray plots are the probabilities for outcomes to be found $\epsilon$ close to individual local minima. It can be seen that the dynamics freeze out and plateau around t=15 to the maximum value of $P\approx0.85$.}
    \label{fig:prob_concentrate}
\end{figure}

\section{Conclusion}
\label{sec:conclusion}

We provide three quantum algorithms for continuous optimization that reduce the optimization problem to the problem of simulating a dynamical system. This allows us to leverage quantum simulation ideas to provide a quantum advantage in query complexity for solving optimization problems relative to analogous hybrid quantum-classical optimization schemes that use standard gradient descent or classical dynamical simulation to optimize the dynamics. We prove query upper bounds for several oracle settings, including bit and phase oracle access to the objective function. 
In the phase oracle scenario, the assumption is that the objective function can \emph{only} be accessed via a phase oracle.

Our first algorithm can find a local optimum of a given differentiable function $f:\mathbb{R}^N \rightarrow \mathbb{R}$ by simulating classical dynamics with friction via time-dependent Hamiltonian simulation. Our second algorithm that leverages dissipative quantum dynamics is closely related to the first algorithm but simulates quantum rather than classical dynamics. In order to compare our local optimization algorithms to existing hybrid quantum-classical strategies we consider optimization of a convex quadratic function with phase oracle access as a benchmark problem. Specifically, we show that our local optimization algorithms can find the optimum of a convex quadratic function within error $\epsilon$ using $\widetilde{O}\left(\frac{N^2
\lambda_{\max}^2 \norm{\bx_0 - \bx^*}^2}{h_x^2 \epsilon \lambda_{\min}}\right)$ queries to a phase or probability oracle, where $\lambda_{\max}$ and $\lambda_{\min}$ are the largest and smallest eigenvalues of the Hessian matrix of $f$, $h_x$ is the grid spacing for the discretized dynamics and $\|\bx_0 - \bx^*\|$ is the distance between the initial point and the optimal point. In contrast, the analogous result for a hybrid gradient descent algorithm scales as $\widetilde{O} \lb N^{3/2}\left(\frac{2\lambda_{\max}\|\bx_0 - \bx^*\|^2}{\epsilon}\right)^{(\lambda_{\max}/\lambda_{\min}) \log(3)/4} \rb$, which has exponentially worse dependence on $\lambda_{\max}$ and $\lambda_{\min}$. We also argue that similar scalings hold in a more general setting where the objective function is only promised to be strongly convex.

The fact that the $\epsilon$ scaling of the gradient-based algorithm depends on the condition number of the Hessian leads us to the conclusion that the dynamical approach will give better bounds even for moderately well conditioned Hessians. In order to quantify this, we propose the use of Wishart matrices as a random matrix ensemble to model complicated Hessian matrices that could emerge from realistic optimization problems. We find that the condition number of Wishart matrices scales like $O(N^{3/2})$ which suggests a super-exponential separation between upper bounds assuming that the discretization error is negligible. Even under worst case assumptions on the discretization error of our quantum dynamics, we provide evidence that a super-polynomial separation between the upper bounds persists.

The recently proposed quantum optimization algorithm called ``Quantum Hamiltonian Descent'' (QHD) \cite{leng2023qhd, leng2023nonconvex} is similar to our local optimization approach in the sense that it also simulates the time evolution under a carefully chosen time-dependent Hamiltonian in order to find a minimum of a given objective function encoded in the potential term of the Hamiltonian. However, QHD aims at finding the global minimum whereas we use the time-dependent Hamiltonian to find a local minimum. This results in different constructions of the time-dependent Hamiltonian with QHD being closer to an adiabatic optimization algorithm. More specifically, QHD is based on the continuous-time gradient descent approach developed in~\cite{Wibisono2016continuous_gradient_descent} and utilizes Lyapunov functions for the convergence analysis. This allows them to study the convergence rate of their algorithm for a broader class of convex objective functions. However, as their Hamiltonian does not directly recover Newton's equations of motion with friction in the classical limit, it becomes much more difficult to provide precise query upper bounds in terms of fundamental problem parameters such as the condition number $\kappa$ of the Hessian matrix of a given local optimum or the initial distance $\norm{\bx_0 - \bx^*}$ from the minimum. Indeed, the convergence rates proved in~\cite{leng2023qhd, leng2023nonconvex} depend on an unspecified parameter. 

Our approach on the other hand allows us to prove relatively tight convergence rates and query upper bounds in terms of fundamental problem parameters.
Also note that QHD is only analyzed in the setting where the objective function $f$ can be accessed via a bit oracle, whereas we also consider the case where the objective function is accessed via a phase oracle. In particular, we provide evidence that in cases where the objective function can only be accessed via a phase oracle, our fully quantum algorithms for local optimization can have significantly better asymptotic scaling than hybrid quantum-classical approaches based on gradient descent. In contrast, when having bit oracle access to the objective function, the advantage of a fully coherent quantum algorithm over classical algorithms is more subtle~\cite{chakrabarti2025optimization}.

Our third algorithm, which can find the global optimum of $f$, works differently in that it directly prepares a low temperature Boltzmann distribution to sample from low energy states using the Nos\'e Hamiltonian. We first show that the Nos\'e Hamiltonian still recovers the Boltzmann distribution from an approximation of a global microcanonical state  under discretization, and then show that the algorithm converges in time that scales with the spectral gap of the Liouvillian operator that describes the discrete diffusion in the system. We then show that using the Koopman-von Neumann formalism that let us express the classical dynamics unitarily; we can prepare the microcanonical state, and by extension the classical Boltzmann distribution using quantum simulation. This allows us to borrow ideas from quantum thermodynamics to argue about the time needed to find the global, rather than local, optima for generic systems but requires more stringent assumptions on the dynamics to prove that the runtime is finite. We then use a phase oracle to construct the Koopman-von Neumann equivalent of the Liouvillian operator, which acts like a Hamiltonian as the generator of unitary evolution, and simulate the resulting Hamiltonian using qubitization. Much of our work focuses on estimating the evolution time needed for the evolution. We make assumptions such that the eigenvectors of the Liouvillian are typical of random pure states found by applying a unitary $2$-design to an initial state, and the quantum Hamiltonian generated by the discretized classical Liouvillian is gapped in the energy shell in order to be able to utilize existing quantum thermodynamics result. This let's us find that the queries needed to get within $\delta$ total variational distance with the thermal distribution is in $\widetilde{O}\left( \frac{N^3}{\gamma \delta^2}\right)$ where $\gamma$ is the spectral gap of the Liouvillian.  To compare with a hybrid approach, we propose a hybrid quantum-classical algorithm that solves the problem using Newton's equations of motion and show that such an algorithm would run using a number of queries in $\widetilde{O}\left(\frac{N^{3/2}e^{2\sqrt{N}/\gamma\delta^2}}{\gamma^3 \delta^6}\right)$.  This shows an exponentially better upper bound for our quantum dynamical approach relative to its hybrid analogue.

The above results reveal that variational algorithms may be more promising beyond the NISQ era of computing than previously thought.  This is because the gradient evaluation, which is the major bottleneck of existing variational and quantum machine learning approaches, does not explicitly occur in our techniques.  This has the potential in some cases to exponentially improve the scaling of the algorithms because of the fact that extremely accurate gradient evaluations can be needed for some quantum simulation algorithms, which techniques such as~\cite{gilyen2019optimizing} can only estimate at cost that scales inversely with the error tolerance.  

A major question that this work leaves open involves the issue of when these methods provide a practical advantage over existing variational strategies in chemistry and machine learning. Our work shows that substantial advantages may exist for cases of local optimization in the event that the condition number is large and may provide an advantage for the global optimization too under similar circumstances.  Despite these potential asymptotic advantages, constant factors are not provided in this work.  Further, the number of qubits needed to solve a large scale optimization problem could be prohibitive.  For example, using these ideas to train a large language model would require trillions of qubits to represent the weights that our quantum algorithms would aim to optimize over.  A full and thorough analysis of the constant factors and optimized circuit decompositions would be needed  for us to understand the conditions where our approach may become practical for chemistry and machine learning.  In turn, this will alow us to understand the speed and size that a quantum computer would need to reach in order for us to have an impact on these problems.

More broadly, our work has revealed that gradient descent optimization may not be optimal for quantum optimization. This suggests that approaches to optimization may need to be rethought in quantum settings.  These insights may reveal better tailored strategies for performing optimization in quantum settings and may provide us with novel approaches to address optimization in quantum regimes that hitherto have not been considered.

\section*{Acknowledgments}

The authors would like to thank Matthew Hagan for useful comments and feedback on this work.  This material is primarily based upon work supported by the U.S. Department of Energy, Office of Science, National Quantum Information Science Research Centers, Co-design Center for Quantum Advantage (C2QA) under contract number DE- SC0012704 (PNNL FWP 76274).  This research is also supported by PNNL’s Quantum Algorithms and Architecture for Domain Science (QuAADS) Laboratory Directed Research and Development (LDRD) Initiative. The Pacific Northwest National Laboratory is operated by Battelle for the U.S. Department of Energy under Contract DE-AC05-76RL01830.  NW also acknowledges support from Google Inc. and NW and SS acknowledge support from Boehringer Ingelheim Inc. SS further acknowledges support from an Ontario Graduate Scholarship.

%\bibliography{refs}
%apsrev4-2.bst 2019-01-14 (MD) hand-edited version of apsrev4-1.bst
%Control: key (0)
%Control: author (8) initials jnrlst
%Control: editor formatted (1) identically to author
%Control: production of article title (0) allowed
%Control: page (0) single
%Control: year (1) truncated
%Control: production of eprint (0) enabled
%

\appendix

\section{Proof of Lemma~\ref{lem:intpic-t-dep-sim}}
\label{app:intpic-t-dep}

For convenience, let us restate Lemma~\ref{lem:intpic-t-dep-sim} here.

\TdepIntpic*

\begin{proof}
    We can closely follow the proof of Lemma 6 in~\cite{Low2019InteractionPic}. First, note that in order to switch back from the interaction picture to the Schrödinger picture we require a single application of $U_B(t)$, see Eq.~\eqref{intpic_to_schrodinger}.
    Next, according to Lemma~\ref{lem:time-dep-sim}, the maximum evolution time in each segment of the evolution under $A_I(s)$ is limited to $\tau \in O \lb 1/\alpha_A \rb$. Thus, there are a total of $S \in O \lb \alpha_A t \rb$  segments. By the triangle inequality, it suffices to simulate each segment within error $\epsilon/S$ for an overall simulation error of at most $\epsilon$.
    The number of queries to all $\hamt_j$ is therefore in
    \begin{equation}
        O \lb \alpha_A t \frac{\log \lb \alpha_A t/\epsilon \rb}{\log \log \lb \alpha_A t/\epsilon \rb} \rb.
    \end{equation}
    Note that
    \begin{equation}
    \begin{split}
        \frac{\mathrm{d} A_I(s)}{\mathrm{d}s} &= \frac{\mathrm{d} U_B^\dagger(s)}{\mathrm{d} s} A(s) U_B(s) + U_B^\dagger(s) \frac{\mathrm{d} A(s)}{\mathrm{d}s} U_B(s) + U_B^\dagger(s) A(s) \frac{\mathrm{d} U_B(s)}{\mathrm{d} s} \\
        &= i U_B^\dagger(s) B(s) A(s) U_B(s) + U_B^\dagger(s) \frac{\mathrm{d} A(s)}{\mathrm{d}s} U_B(s) - i U_B^\dagger(s) A(s) B(s) U_B(s).
    \end{split}
    \end{equation}
    Hence,
    \begin{equation}
    \begin{split}
        \langle \| \dot{A_I} \| \rangle &:= \frac{1}{t} \int_0^t \norm{\frac{\mathrm{d} A_I(s)}{\mathrm{d}s}} \mathrm{d} s \leq \max_s \norm{\frac{\mathrm{d} A_I(s)}{\mathrm{d}s}} \\
        &\leq  \max_s \left\{ 2 \norm{A(s)} \norm{B(s)} + \norm{\frac{\mathrm{d} A(s)}{\mathrm{d}s}} \right\} \\
        &\leq 2 \alpha_A \alpha_B + \alpha_A',
    \end{split} 
    \end{equation}
    where we used the fact that $\norm{U_B(s)} = \norm{U_B^\dagger(s)} = 1 \quad \forall s \in [0,t]$.
    Additionally, we have that
    \begin{equation}
    \begin{split}
        \max_s \norm{A_I(s)}^2  &= \max_s \norm{ U_B^\dagger(s) A(s) U_B(s)}^2 \\
        &\leq \max_s \norm{A(s)}^2 \leq \alpha_A^2.
    \end{split}
    \end{equation}
    It then follows from Lemma~\ref{lem:time-dep-sim} that the number of discretization points $M$ for each segment satisfies
    \begin{equation}
    \begin{split}
         M &\in O \lb  \frac{t}{\alpha_A \epsilon} \lb \langle \| \dot{A_I} \| \rangle + \max_s \norm{A_I(s)}^2 \rb \rb \\
         &\in O \lb  \frac{t}{\alpha_A \epsilon} \lb \alpha_A \alpha_B + \alpha_A' + \alpha_A^2 \rb \rb \\
         &\in O \lb  \frac{t}{\epsilon} \lb \alpha_A + \alpha_B + \frac{\alpha_A'}{\alpha_A} \rb \rb.
    \end{split}
    \end{equation}
    Thus, the number of qubits is
    \begin{equation}
        n + O \lb n_a + \log \lb \frac{t}{\epsilon} \lb \alpha_A + \alpha_B + \frac{\alpha_A'}{\alpha_A} \rb \rb \rb
    \end{equation}
    and the number of primitive gates is in
    \begin{equation}
        O \lb \alpha_A t \lb n_a + \log \lb \frac{t}{\epsilon} \lb \alpha_A + \alpha_B + \frac{\alpha_A'}{\alpha_A} \rb \rb \rb \frac{\log \lb \alpha_A t/\epsilon \rb}{\log \log \lb \alpha_A t/\epsilon \rb} \rb.
    \end{equation}
\end{proof}

\section{Proof of Lemma~\ref{lem:robust_intpic_sim}}
\label{app:robust_intpic}

For convenience, let us restate Lemma~\ref{lem:robust_intpic_sim} here.

\Robustintpic*

In order to prove Lemma~\ref{lem:robust_intpic_sim}, we make use of the following result proved in Ref.~\cite{Low2019InteractionPic}.
\begin{lem}[Error from truncating and discretizing the Dyson series~\cite{Low2019InteractionPic}]
    Let $H(s) : [0,t] \mapsto \mathbb{C}^{2^{n_s} \times 2^{n_s}}$  be differentiable and $\langle \| \dot{H} \| \rangle := \frac{1}{t} \int_0^t \norm{\frac{\mathrm{d} H(s)}{\mathrm{d}s}} \mathrm{d} s$. For any $\epsilon \in (0, 2^{1-e}] $, an approximation to the time ordered operator exponential of $-iH(s)$ can be constructed such that
    \begin{equation}
        \norm{\mathcal{T} \left[ e^{-i \int_0^t H(s) \mathrm{d}s} \right] - \sum_{k=0}^K \lb \frac{-it}{M} \rb^k C_k } \leq \epsilon, \quad C_0 := \mathbb{1}, \quad C_k := \sum_{0 \leq m_1 < \cdots < m_k < M} H(m_k t/M) \cdots H(m_1 t/M),
    \end{equation}
    if we take all the following are true.
    \begin{enumerate}
        \item $\max_s \norm{H(s)}t \leq \ln 2$.
        \item $K = \left\lceil -1 + \frac{2 \ln (2/\epsilon)}{\ln \ln (2/\epsilon) + 1} \right\rceil$.
        \item $M \geq \max \left\{ \frac{16 t^2}{\epsilon}  \lb \langle \| \dot{H} \| \rangle + \max_s \norm{H(s)}^2 \rb, K^2 \right\}$.
    \end{enumerate}
\label{lem:truncated_dyson_error}
\end{lem}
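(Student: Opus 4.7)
The plan is to decompose the total simulation error into two pieces: (i) the Dyson truncation and discretization error one would incur with \emph{exact} interaction-picture oracles $\hamt_j$, which is controlled by the proof of Lemma~\ref{lem:intpic-t-dep-sim} together with Lemma~\ref{lem:truncated_dyson_error}, and (ii) the additional perturbation introduced by replacing each $\hamt_j$ with $\widetilde{\hamt}_j$. Because the full evolution is compiled as a concatenation of $S \in \Theta(\alpha_A t)$ independent segments of length $\tau = t/\lceil 2e\alpha t\rceil \le 1/(2e\alpha_A)$, the triangle inequality reduces the task to bounding the per-segment error by $\epsilon/S$.

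The first contribution is handled exactly as in the proof of Lemma~\ref{lem:intpic-t-dep-sim}: allotting each segment a Dyson truncation-and-discretization budget of $\epsilon/(2S)$ and invoking Lemma~\ref{lem:truncated_dyson_error} with $\|A_I(s)\|\le \alpha_A$ (from unitarity of $U_B$) and $\langle\|\dot A_I\|\rangle\le 2\alpha_A\alpha_B+\alpha_A'$ fixes the truncation order $K\in O(\log(\alpha_A t/\epsilon)/\log\log(\alpha_A t/\epsilon))$ and the grid count $M\in O((t/\epsilon)(\alpha_A+\alpha_B+\alpha_A'/\alpha_A))$ per segment, matching the claimed asymptotics once multiplied by $S\in O(\alpha_A t)$.

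The second contribution is bounded by a telescoping replacement inside each segment. Writing the Dyson summand $C_k=\sum_{0\le m_1<\dots<m_k<M} A_I(t_j+\tau m_k/M)\cdots A_I(t_j+\tau m_1/M)$ and its counterpart $\widetilde C_k$ with each $A_I$ replaced by $\widetilde A_I$, the standard one-factor-at-a-time swap gives $\|C_k-\widetilde C_k\|\le k\binom{M}{k}\alpha^{k-1}\,\epsilon/(4S)$ with $\alpha:=\max\{\alpha_A,\widetilde\alpha_A\}\le \alpha_A+\epsilon/(4S)$. Scaling by the Dyson coefficients and using $\binom{M}{k}\le M^k/k!$ yields
\begin{equation*}
\norm{\sum_{k=0}^K (-i\tau/M)^k (C_k-\widetilde C_k)} \le \sum_{k=1}^K \frac{\tau^k}{(k-1)!}\alpha^{k-1}\frac{\epsilon}{4S} \le \frac{\tau\,e^{\alpha\tau}\,\epsilon}{4S},
\end{equation*}
and since $\alpha\tau\le 1/(2e)$ by the definition of $\tau$, the right-hand side is at most $\epsilon/(2S)$ per segment. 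Robustness of the LCU plus oblivious amplitude amplification that lifts this partial sum to a (nearly) unitary segment then promotes the bound to an operator-norm error of the same order on the realized segment unitary. Summing the two contributions over all $S$ segments yields overall error $\le\epsilon$, and the resource counts match Lemma~\ref{lem:intpic-t-dep-sim} because the inflated normalization $\widetilde\alpha_A\le \alpha_A+\epsilon$ only perturbs $\alpha$ by a lower-order amount.

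The main obstacle is rigorously verifying that the $\epsilon/(4S)$-size perturbation of the encoded matrix $A_I$ propagates only linearly through the oblivious amplitude amplification that compiles the truncated Dyson partial sum into a segment unitary; a priori, amplification could blow up the error along with the signal. The resolution is a robust-LCU style argument specialized to the Dyson partial sum, showing that the success probability of the underlying LCU stays bounded away from zero under the perturbation and that the amplified output tracks the ideal one with the same operator-norm accuracy. Once this is in hand, the remainder is a direct adaptation of the proof of Lemma~\ref{lem:intpic-t-dep-sim} to the approximate-oracle setting.
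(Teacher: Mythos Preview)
Your proposal does not address the stated lemma. The statement you were asked to prove is Lemma~\ref{lem:truncated_dyson_error}, which bounds the error between the exact time-ordered exponential $\mathcal{T}\!\left[e^{-i\int_0^t H(s)\,\mathrm{d}s}\right]$ and a \emph{single} truncated, discretized Dyson partial sum $\sum_{k=0}^K(-it/M)^k C_k$, under the three listed conditions on $t$, $K$, and $M$. This lemma involves no segmentation into $S$ pieces, no interaction picture, no approximate oracles $\widetilde{\hamt}_j$, and no oblivious amplitude amplification. In the paper it is simply quoted without proof from~\cite{Low2019InteractionPic}.

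What you have actually written is a proof sketch for Lemma~\ref{lem:robust_intpic_sim} (Robust interaction picture simulations), and you explicitly \emph{invoke} Lemma~\ref{lem:truncated_dyson_error} as a black-box ingredient rather than establishing it. Your argument---splitting into (i) exact-oracle Dyson error per segment and (ii) the perturbation from $A_I\to\widetilde A_I$, then telescoping factor-by-factor inside each $C_k$---is in fact close to the paper's own proof of Lemma~\ref{lem:robust_intpic_sim}, with the minor difference that the paper uses the cruder bound $\binom{M}{k}\le (eM/k)^k$ and sums a geometric series in $(e\alpha\tau)^k\le 2^{-k}$, whereas you use $M^k/k!$ and an exponential series. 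But none of this touches the content of Lemma~\ref{lem:truncated_dyson_error} itself, whose proof requires separately bounding the Dyson tail $\sum_{k>K}$ via a Stirling-type estimate (this is where condition~2 on $K$ comes from) and controlling the Riemann-sum discretization error of each $k$-fold time-ordered integral in terms of $\langle\|\dot H\|\rangle$ and $\max_s\|H(s)\|^2$ (this is where condition~3 on $M$ comes from).
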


\begin{proof}[Proof of Lemma~\ref{lem:robust_intpic_sim}]
    The proof is similar to the proof of Lemma~\ref{lem:intpic-t-dep-sim}. In particular, from Eq.~\eqref{intpic_to_schrodinger} we have that
    \begin{equation}
    \begin{split}
          \mathcal{T} e^{-i \int_0^t \lb A(s) + B(s) \rb \mathrm{d}s} &= \mathcal{T} \left[ e^{- i \int_0^t B(s) \mathrm{d}s} \right] \mathcal{T} \left[ e^{- i \int_0^t A_I(s) \mathrm{d}s} \right] \\
          &= U_B(t) \mathcal{T} \left[ e^{- i \int_0^t A_I(s) \mathrm{d}s} \right].
    \end{split}
    \end{equation}
    Hence, we require a single application of $U_B(t)$ in order to switch back from the interaction picture to the Schrödinger picture. Note that here we assume that we can implement $U_B(t)$ exactly. We implement an approximation to $\mathcal{T} \left[ e^{- i \int_0^t A_I(s) \mathrm{d}s} \right]$ using a truncated Dyson series as in the proof of Lemma~\ref{lem:intpic-t-dep-sim}. The main difference here is that each segment in the time evolution under $A_I(s)$ now has an additional error arising from the imperfect $\hamt_j$ oracles $\widetilde{\hamt}_j$. 
    Let $S = \frac{t}{\tau} \in \Theta \lb \alpha_A t \rb$ be the number of segments and let $K \in \Theta \lb \log \lb S/\epsilon \rb \rb$ be the truncation order of the truncated Dyson series.
    For all $j \in \{1, 2, \dots, S \}$ and $k \in \{1, \dots, K\}$ let 
    \begin{align}
        C_k^{(j)} :&= \sum_{0 \leq m_1 < \cdots < m_k < M} A_I(t_j + \tau m_k/M) \cdots A_I(t_j + \tau m_1/M) \\
        \widetilde{C}_k^{(j)} &:= \sum_{0 \leq m_1 < \cdots < m_k < M} \widetilde{A}_I(t_j + \tau m_k/M) \cdots \widetilde{A}_I(t_j + \tau m_1/M).
    \end{align}
    Further, let $C_0^{(j)} = \widetilde{C}_0^{(j)} = \mathbb{1}$ and recall that $\alpha = \max \left\{ \alpha_A, \widetilde{\alpha}_A \right\}$.
    By the triangle inequality and standard bounds on the binomial coefficient we then have that
    \begin{equation}
        \norm{\widetilde{C}_k^{(j)} - C_k^{(j)}} \leq \binom{M}{k} \alpha^k \frac{\epsilon}{4} \leq \lb \frac{eM}{k} \rb^k \alpha^k \frac{\epsilon}{4 S}.
    \end{equation}
    Thus,
    \begin{equation}
    \begin{split}
        \norm{\sum_{k=0}^K \lb \frac{-i\tau}{M} \rb^k \lb \widetilde{C}_k^{(j)} - C_k^{(j)} \rb} &\leq \sum_{k=0}^K \lb \frac{\tau}{M} \rb^k \norm{\widetilde{C}_k^{(j)} - C_k^{(j)}} \\
        &\leq \sum_{k=1}^K \lb \frac{\tau}{M} \rb^k \lb \frac{eM}{k} \rb^k \alpha^k \frac{\epsilon}{4 S} = \frac{\epsilon}{4 S} \sum_{k=1}^K \lb \frac{e \alpha \tau}{k} \rb^k  \\
        &\leq  \frac{\epsilon}{4 S} \sum_{k=1}^\infty \lb e \alpha \tau \rb^k \\
        &\leq  \frac{\epsilon}{4 S} \sum_{k=1}^\infty \lb \frac{1}{2} \rb^k \\
        &\leq \frac{\epsilon}{4 S} \frac{1}{1- 1/2} \\
        &\leq \frac{\epsilon}{2 S}.
    \end{split}
    \end{equation}
    From Lemma~\ref{lem:intpic-t-dep-sim} and Lemma~\ref{lem:truncated_dyson_error} we have that the simulation error when given access to error-free $\hamt_j$ oracles obeys
    \begin{equation}
        \norm{\mathcal{T} \left[ e^{-i \int_0^t A_I(s) \mathrm{d}s} \right] - \prod_{j=1}^S \lb \sum_{k=0}^K \lb \frac{-i\tau}{M} \rb^k C_k^{(j)} \rb} \leq \frac{\epsilon}{2},
    \end{equation}
    as long as $ S \in \Theta \lb \alpha_A t \rb$, $K \in \Theta \lb \log \lb S/\epsilon \rb \rb$, $\tau \in \Theta \lb \alpha_A^{-1} \rb$ and $M \in \Theta \lb \frac{t}{\epsilon} \lb \alpha_A + \alpha_B + \frac{\alpha_A'}{\alpha_A} \rb \rb$.
    Therefore, the overall error obeys
    \begin{equation}
    \begin{split}
        \norm{\mathcal{T} \left[ e^{-i \int_0^t A_I(s) \mathrm{d}s} \right] - \prod_{j=1}^S \lb \sum_{k=0}^K \lb \frac{-i\tau}{M} \rb^k \widetilde{C}_k^{(j)} \rb} &\leq \norm{\mathcal{T} \left[ e^{-i \int_0^t A_I(s) \mathrm{d}s} \right] - \prod_{j=1}^S \lb \sum_{k=0}^K \lb \frac{-i\tau}{M} \rb^k C_k^{(j)} \rb} \\
        &\quad + \norm{\prod_{j=1}^S \lb \sum_{k=0}^K \lb \frac{-i\tau}{M} \rb^k C_k^{(j)} \rb - \prod_{j=1}^S \lb \sum_{k=0}^K \lb \frac{-i\tau}{M} \rb^k \widetilde{C}_k^{(j)}\rb} \\
        &\leq \frac{\epsilon}{2} + S \frac{\epsilon}{2S} = \epsilon,
    \end{split}
    \end{equation}
    where we used the fact that $\norm{\lb \sum_{k=0}^K \lb \frac{-i\tau}{M} \rb^k C_k^{(j)}\rb} \leq 1$ and $\norm{\lb \sum_{k=0}^K \lb \frac{-i\tau}{M} \rb^k \widetilde{C}_k^{(j)}\rb} \leq 1$ for all $j \in [S]$.
    The required number of queries to all $\widetilde{\hamt}_j$, the number of qubits and the number of primitive gates follows then directly from Lemma~\ref{lem:intpic-t-dep-sim}.
\end{proof}

\section{Proof of Lemma~\ref{lem:hamt}}
\label{app:hamt}

For convenience, let us restate Lemma~\ref{lem:hamt} here.

\HamT*

\begin{proof}
    First, note that $B(s)$ commutes with $B(s')$ for all $s, s' \in [0,t]$. This means that
    \begin{equation}
        U_B(t) = \mathcal{T} e^{-i \int_0^t B(s) \mathrm{d}s} = \sum_{\bx} e^{-i \frac{m}{\beta} \lb e^{\beta t/m} - 1 \rb f(\bx)} \ketbra{\bx}{\bx}.
    \end{equation}
    Similarly,
    \begin{equation}
        U_B^\dagger(t) = \sum_{\bx} e^{i \frac{m}{\beta} \lb e^{\beta t/m} - 1 \rb f(\bx)} \ketbra{\bx}{\bx}.
    \end{equation}
    Then we can decompose the error-free $\hamt_j$ oracles as follows:
    \begin{equation}
    \begin{split}
        \hamt_j &= \lb \sum_{m'=0}^{M-1} \ketbra{m'}{m'} \otimes \mathbb{1}_a \otimes U_B^\dagger(t_j + \tau m'/M) \rb \times \lb \sum_{m'=0}^{M-1} \ketbra{m'}{m'} \otimes O_A(t_j + \tau m'/M) \rb \\
        &\qquad \times \lb \sum_{m'=0}^{M-1} \ketbra{m'}{m'} \otimes \mathbb{1}_a \otimes U_B(t_j + \tau m'/M) \rb \\
        &= \underbrace{\lb \sum_{m'=0}^{M-1} \ketbra{m'}{m'} \otimes \mathbb{1}_a \otimes \sum_{\bx} e^{i \frac{m}{\beta} \lb e^{\beta t_{j,m'}/m} - 1 \rb f(\bx)} \ketbra{\bx}{\bx} \rb}_{=: V_B^\dagger(j)} \times \underbrace{\lb \sum_{m'=0}^{M-1} \ketbra{m'}{m'} \otimes O_A(t_j + \tau m'/M) \rb}_{=: V_A(j)} \\
        &\qquad \times \underbrace{\lb \sum_{m'=0}^{M-1} \ketbra{m'}{m'} \otimes \mathbb{1}_a \otimes \sum_{\bx} e^{-i \frac{m}{\beta} \lb e^{\beta t_{j,m'}/m} - 1 \rb f(\bx)} \ketbra{\bx}{\bx} \rb}_{=: V_B(j)},
    \end{split}
    \end{equation}
    where $t_{j,m'} := (j-1) \tau + m' \tau/M$ and $\lb \bra{0}_a \otimes \mathbb{1}_{n_s} \rb O_A(t_j + \tau m'/M) \lb \ket{0}_a \otimes \mathbb{1}_{n_s} \rb = A_H(t_j + \tau m'/M)/\alpha_{A_H}$.
    Let us now show how to efficiently implement a unitary approximation $\widetilde{V}_B(j)$ to $V_B(j)$ using 2 queries to the bit oracle $O_f^{(b)}$. For simplicity, we ignore the ancilla register labeled $a$ used in the implementation of $V_A(j)$ as the $V_B(j)$'s act trivially on that register.
    \begin{enumerate}
        \item Query $O_f^{(b)}$: $\ket{m'}\ket{\bx}\ket{0} \xrightarrow{O_f^{(b)}} \ket{m'}\ket{\bx}\ket{\widetilde{f}(\bx)}$
        \item Compute an $\epsilon''$-precise approximation $\widetilde{g}(j,m',
        \bx)$ to $g(j,m',
        \bx) :=\lb e^{\beta t_{j,m'}/m} - 1 \rb f(\bx)$ into an ancilla register of size $Q \in O \lb \log \lb e^{\beta t/m} f_{\max}/\epsilon'' \rb \rb$ where $\epsilon'' \leq \frac{\epsilon/4S}{2 \alpha_{A_H} \frac{m}{\beta}}$:
        \begin{equation}
            \ket{m'}\ket{\bx}\ket{\widetilde{f}(\bx)} \ket{0} \rightarrow \ket{m'}\ket{\bx}\ket{\widetilde{f}(\bx)}\ket{\widetilde{g}(j,m',\bx)}.
        \end{equation}
        By the triangle inequality it then suffices for $\widetilde{f}$ to approximate $f$ within error
        \begin{equation}
            \epsilon' \leq \frac{\epsilon''}{2\lb e^{\beta t/m} - 1  \rb}
            \leq \frac{\epsilon/4S}{4\alpha_{A_H} \frac{m}{\beta} \lb e^{\beta t/m} - 1  \rb}.
        \end{equation}
        \item Controlled by the $q$-th qubit of the last ancilla register comprised of $Q$ qubits, apply $R_Z(2^q \theta)$ to an additional ancilla qubit set to $\ket{0}$ where $\theta := \frac{m}{\beta} \lb e^{\beta t/m} - 1 \rb f_{\max}/2^Q$ such that 
        \begin{equation}
            \ket{m'}\ket{\bx}\ket{\widetilde{f}(\bx)}\ket{\widetilde{g}(j,m',\bx)}\ket{0} \rightarrow e^{-i \frac{m}{\beta} \widetilde{g}(j,m',\bx)}\ket{m'}\ket{\bx}\ket{\widetilde{f}(\bx)}\ket{\widetilde{g}(j,m',\bx)}\ket{0}.
        \end{equation}        
        \item Uncompute the ancilla registers by applying all operations, except for the controlled-$R_Z$ rotations, in reverse order such that
        \begin{equation}
            e^{-i \frac{m}{\beta} \widetilde{g}(j,m',\bx)}\ket{m'}\ket{\bx}\ket{\widetilde{f}(\bx)}\ket{\widetilde{g}(j,m',\bx)}\ket{0} \rightarrow e^{-i \frac{m}{\beta} \widetilde{g}(j,m',\bx)}\ket{m'}\ket{\bx}\ket{0}\ket{0}\ket{0}.
        \end{equation}
        This requires another query to $O_f^{(b)}$.
    \end{enumerate}
    By Duhamel's principle we then have that
    \begin{equation}
        \norm{\widetilde{V}_B(j) - V_B(j)} \leq \frac{m}{\beta} \epsilon'' \leq \frac{\epsilon/4S}{2 \alpha_{A_H}}.
    \end{equation}
    The same analysis holds for $\widetilde{V}^\dagger_B(j)$ meaning
    \begin{equation}
        \norm{\widetilde{V}^\dagger_B(j) - V^\dagger_B(j)} \leq \frac{m}{\beta} \epsilon'' \leq \frac{\epsilon/4S}{2 \alpha_{A_H}}.
    \end{equation}

    Next, let us show how to implement a unitary approximation $\widetilde{V}_B(j)$ to $V_B(j)$ via a phase oracle $O_f^{(p)}$ instead of a bit oracle. For simplicity, we again ignore the ancilla register labeled $a$ used in the implementation of $V_A(j)$ as the $V_B(j)$'s act trivially on that register. Now, let
     \begin{equation}
        F := \sum_\bx \frac{f(\bx)}{2 f_{\max}} \ketbra{\bx}{\bx}
    \end{equation}
    such that $O_f^{(p)} = e^{iF}$. Note that $\norm{F} \leq \frac{1}{2}$. This allows us to apply Corollary 71 of Ref.~\cite{gilyen2019quantum} which states that we can implement a $\lb \frac{2}{\pi}, 2, \widetilde{\epsilon} \rb$-block-encoding $U_F$ of $F$ using $O \lb \log \lb 1/\widetilde{\epsilon} \rb \rb$ controlled applications of $O_f^{(p)}$ and its inverse where we demand that
    \begin{equation}
        \widetilde{\epsilon} \in \Theta \lb \frac{\epsilon}{\alpha_{A_H} S \frac{m}{\beta} e^{\beta t/m} f_{\max}} \rb.
    \end{equation}
    Once we have $U_F$ we can use Corollary 62 of Ref.~\cite{gilyen2019quantum} to construct a $ \lb 1, 4, \frac{\epsilon}{16 \alpha_{A_H} S} \rb$-block-encoding of 
    \begin{equation}
        e^{-i \frac{m}{\beta} \lb e^{\beta t_{j,m'}/m} - 1 \rb 2f_{\max} F} = e^{-i \frac{m}{\beta} \lb e^{\beta t_{j,m'}/m} - 1 \rb \sum_{\bx} f(\bx) \ketbra{\bx}{\bx}} = U_B(t_{j,m'}),
    \end{equation}
    using $O \lb \frac{m}{\beta} e^{\beta t/m} f_{\max} + \log \lb \alpha_{A_H} S/\epsilon \rb \rb$ queries to $U_F$ and controlled-$U_F$. Note though that we require controlled applications of $U_B(t_{j,m'})$ for the implementation of $V_B(j)$ since $t_{j,m'}$ is controlled by the $\ket{m'}$ register. The idea for implementing $U_B(t_{j,m'})$ in a controlled fashion is as follows:
    \begin{enumerate}
        \item Compute an $\widetilde{\widetilde{\epsilon}}$-precise approximation $\widetilde{h}(j,m')$ of $h(j, m') := \frac{m}{\beta} \lb e^{\beta t_{j,m'}/m} - 1 \rb f_{\max}$ into an ancilla register of size $\widetilde{Q} = \left\lceil \log \lb \frac{m}{\beta} e^{\beta t/m}  f_{\max}/\widetilde{\widetilde{\epsilon}} \rb \right\rceil$, i.e.
        \begin{equation}
            \ket{m'}\ket{\bx}\ket{0} \rightarrow \ket{m'}\ket{\bx}\ket{\widetilde{h}(j,m')},
        \end{equation}
        where $ \widetilde{\widetilde{\epsilon}} \leq \frac{\epsilon}{16 \alpha_{A_H} S}$.

        \item For all $q \in \ls \widetilde{Q} \rs$ do:

        Controlled by the $q$-th ancilla qubit implement
        \begin{equation}
            \sum_{\bx }e^{-i\widetilde{\widetilde{\epsilon}} \, 2^q \frac{f(\bx)}{f_{\max}}} \ketbra{\bx}{\bx}
        \end{equation}
        within error $\frac{\epsilon}{16 \alpha_{A_H} S}$ using 
        \begin{equation}
            O \lb \lb  \widetilde{\widetilde{\epsilon}} \, 2^q + \log \lb \alpha_{A_H} S /\epsilon \rb \rb \log \lb \frac{\alpha_{A_H} S \frac{m}{\beta} e^{\beta t/m} f_{\max}}{\epsilon} \rb \rb \subseteq \widetilde{O} \lb \frac{m}{\beta} e^{\beta t/m} f_{\max}  \log^2 \lb \frac{\alpha_{A_H} S}{\epsilon} \rb \rb
        \end{equation}
        queries to controlled-$O_f^{(p)}$ and its inverse.
    \end{enumerate}
    
    By the triangle inequality, we then have that
    \begin{equation}
        \norm{\widetilde{V}_B(j) - V_B(j)} \leq \frac{\epsilon}{16 \alpha_{A_H} S} + \frac{\epsilon}{16 \alpha_{A_H} S} = \frac{\epsilon/4S}{2 \alpha_{A_H}}.
    \end{equation}
    The same analysis holds for $\widetilde{V}^\dagger_B(j)$ meaning
    \begin{equation}
        \norm{\widetilde{V}^\dagger_B(j) - V^\dagger_B(j)} \leq \frac{\epsilon}{16 \alpha_{A_H} S} + \frac{\epsilon}{16 \alpha_{A_H} S} = \frac{\epsilon/4S}{2 \alpha_{A_H}}.
    \end{equation}

    So far, we have shown how to implement $V_B(j)$ within error $\frac{\epsilon/4S}{2 \alpha_{A_H}}$ using either a bit or a phase oracle of the objective function $f$. Let us now bound the overall error associated with implementing $\hamt_j$.
    By the triangle inequality we have that
    \begin{equation}
    \begin{split}
        \norm{\widetilde{\hamt}_j - \hamt_j} &= \norm{\widetilde{V}_B^\dagger(j) V_A(j)\widetilde{V}_B(j) - V_B^\dagger(j) V_A(j) V_B(j)} \\
        &\leq \norm{\widetilde{V}_B^\dagger(j) V_A(j)\widetilde{V}_B(j) - \widetilde{V}_B^\dagger(j) V_A(j)V_B(j)} + \norm{\widetilde{V}_B^\dagger(j) V_A(j)V_B(j) - V_B^\dagger(j) V_A(j) V_B(j)} \\
        &\leq \norm{\widetilde{V}_B(j) - V_B(j)} + \norm{\widetilde{V}^\dagger_B(j) - V^\dagger_B(j)} \\
        &\leq \frac{\epsilon/4S}{\alpha_{A_H}}.
    \end{split}
    \end{equation}
    This implies that for all $s \in [0,t]$,
    \begin{equation}
        \norm{\widetilde{A}_I(s) - A_I(s)} \leq \alpha_{A_H} \norm{ \lb \bra{0}_a \otimes \mathbb{1}_n \rb \widetilde{\hamt}_j \lb \ket{0}_a \otimes \mathbb{1}_n \rb -  \lb \bra{0}_a \otimes \mathbb{1}_n \rb \hamt_j \lb \ket{0}_a \otimes \mathbb{1}_n \rb}
        \leq \frac{\epsilon}{4S},
    \end{equation}
    as desired.
\end{proof}

\section{Equilibration Time of an Underdamped Harmonic Oscillator}
\label{app:underdamped}

Here we provide a bound on the $\epsilon$-equilibration time of an underdamped harmonic oscillator.

\begin{lem}[Equilibration time of an underdamped harmonic oscillator]
    Let $f(x) = \frac{a}{2} (x-x^*)^2 + c$ with $a~>~0$ and $c \in \mathbb{R}$ constant and let $\beta^2 < 4ma$. Given an initial phase space density $\rho_0(x, p)$, the $\epsilon$-equilibration time $t^*$ can be upper bounded as follows:
    \begin{equation}
        t^* \leq \max \left\{ t^{\ev{x}}_1, t^{\ev{x}}_2,  t^{\sigma}_1, t^{\sigma}_2, t^{\sigma}_3 \right\},
    \end{equation}
    where 
    \begin{align}
        t^{\ev{x}}_1 &:= \frac{1}{\gamma} \log \lb \sqrt{\frac{8a}{\epsilon}} |\ev{x}_0 - x^*| \rb \\
        t^{\ev{x}}_2 &:= -\frac{1}{\gamma} \widetilde{W} \lb - \sqrt{\frac{\epsilon}{8a}} \frac{\gamma}{|\ev{r}_0|} \rb \\
        t^{\sigma}_1 &:= \frac{1}{2 \gamma} \log \lb \frac{18 a \sigma^2_x}{\epsilon} \rb \\
        t^{\sigma}_2 &:= - \frac{1}{2 \gamma} \widetilde{W} \lb - \frac{\gamma \epsilon}{18 a | \mathrm{cov}_0 \lb x, r \rb|} \rb \\
        t^{\sigma}_3 &:= - \frac{1}{\gamma} \widetilde{W} \lb - \gamma \sqrt{\frac{\epsilon}{18 a \sigma^2_{r}}} \rb,
    \end{align}
    with $\gamma := \frac{\beta}{2m}$, $r := \frac{p}{m} + \gamma (x - x^*)$, $\sigma_x^2 := \ev{x^2}_0 - \ev{x}_0^2$, $\mathrm{cov}_0 \lb x, r \rb := \ev{x  r}_0 - \ev{x}_0\ev{r}_0$ and $\sigma^2_{r} := \ev{r^2}_0 - \ev{r}_0^2$.
\label{lem:underdamped_time_bound}
\end{lem}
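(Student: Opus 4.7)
The plan is to mirror the proof of Lemma~\ref{lem:critical_time_bound} by first writing down the closed-form solution to the underdamped equation of motion and then showing that its envelope admits the \emph{same} upper bounds on $|\ev{\xt}_t|$ and on $\sigma^2(t)$ that were used in the critically damped case. Once those bounds are in hand, the remainder of the argument (bounding the position mean via the triangle inequality, controlling the variance via Chebyshev to convert an $\epsilon'$ deviation in $x$ into an $\epsilon$ deviation in $f$, and taking maxima over the resulting time thresholds) carries over verbatim.

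First, I would solve $m\ddot{\xt} + \beta \dot{\xt} + a\xt = 0$ with initial data $\xt(0)=\xt_0$, $\dot{\xt}(0) = p_0/m$ in the underdamped regime $\beta^2 < 4ma$. Letting $\gamma := \beta/(2m)$ and $\omega := \sqrt{a/m - \gamma^2} > 0$, and using $r_0 := p_0/m + \gamma \xt_0$ exactly as in the critical case, the solution is
\begin{equation}
    \Xt_{\xt_0,p_0}(t) = e^{-\gamma t}\lb \xt_0 \cos(\omega t) + \frac{r_0}{\omega}\sin(\omega t)\rb.
\end{equation}
Applying Lemma~\ref{lem:expectation} gives
\begin{equation}
    \ev{\xt}_t = e^{-\gamma t}\lb \ev{\xt}_0 \cos(\omega t) + \frac{\ev{r}_0}{\omega}\sin(\omega t)\rb.
\end{equation}

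Next, I would invoke the elementary bounds $|\cos(\omega t)| \le 1$ and $|\sin(\omega t)/\omega| \le t$ (the latter from $|\sin(x)| \le |x|$), which yield
\begin{equation}
    |\ev{\xt}_t| \;\le\; e^{-\gamma t} |\ev{\xt}_0| + t\, e^{-\gamma t}\, |\ev{r}_0|,
\end{equation}
reproducing Eq.~\eqref{critical_upper_bound}. An analogous computation of $\ev{\xt^2}_t - \ev{\xt}_t^2$ gives
\begin{equation}
    \sigma^2(t) = e^{-2\gamma t}\lb \sigma_x^2 \cos^2(\omega t) + \frac{2\sin(\omega t)\cos(\omega t)}{\omega} \mathrm{cov}_0(x,r) + \frac{\sin^2(\omega t)}{\omega^2}\sigma_r^2 \rb.
\end{equation}
Using $\cos^2(\omega t)\le 1$, $|\sin(\omega t)\cos(\omega t)/\omega| = |\sin(2\omega t)/(2\omega)| \le t$, and $\sin^2(\omega t)/\omega^2 \le t^2$ (all consequences of $|\sin(x)|\le|x|$) again reproduces the critically damped bound
\begin{equation}
    \sigma^2(t) \;\le\; e^{-2\gamma t}\lb \sigma_x^2 + 2t\,|\mathrm{cov}_0(x,r)| + t^2 \sigma_r^2 \rb.
\end{equation}

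At this point the remainder of the proof is identical to Lemma~\ref{lem:critical_time_bound}: require each of the three terms in the mean bound and the three terms in the variance bound to fall below $\epsilon'/2$ and $\epsilon'^2/9$ respectively, solve each inequality for $t$ using either a logarithm or the $\widetilde{W}$ function as appropriate, set $\epsilon' = \sqrt{\epsilon/(2a)}$ so that an $\epsilon'$-deviation in $x$ from $x^*$ implies an $\epsilon$-deviation in $f$, and take the maximum of the five resulting thresholds to obtain the stated bound. The main (mild) subtlety to watch is verifying that the trigonometric bounds $|\sin(\omega t)/\omega|\le t$ and $|\sin(2\omega t)/(2\omega)|\le t$ are tight enough to yield the exact same envelope as in the critical case rather than one weighted by powers of $\omega$; this is the reason the bounds in the two lemmas coincide exactly, and is really the only non-bookkeeping observation required.
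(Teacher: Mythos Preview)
Your proposal is correct and follows essentially the same approach as the paper: write the underdamped trajectory in closed form, use $|\sin(\omega t)/\omega|\le t$ (equivalently $|\mathrm{sinc}(\omega t)|\le 1$) together with $|\cos(\omega t)|\le 1$ to recover exactly the envelope bounds~\eqref{critical_upper_bound} and~\eqref{var_bound} from the critically damped case, and then port over the remainder of Lemma~\ref{lem:critical_time_bound} verbatim. (Minor slip: the mean bound has two terms, not three, as your final count of five thresholds already reflects.)
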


\begin{proof}
    Let us again consider the coordinate system where all positions are shifted by $x^*$ such that the minimum is at $\xt_{\min} = 0$. Then the general solution for an underdamped harmonic oscillator with initial position $\xt_0 := x_0 - x^*$ and initial momentum $p_0$ is of the form 
    \begin{equation}
    \begin{split}
        \Xt_{\xt_0, p_0}(t) &= e^{-\frac{\beta t}{2m}} \lb \xt_0  \cos \lb \omega t \rb + \lb \frac{\beta}{2m} \xt_0 + \frac{p_0}{m} \rb \frac{\sin \lb \omega t \rb}{\omega} \rb \\
        &= e^{-\gamma t} \lb \xt_0  \cos \lb \omega t \rb + \lb \gamma \xt_0 + \frac{p_0}{m} \rb \frac{\sin \lb \omega t \rb}{\omega} \rb \\
        &= e^{-\gamma t} \lb \xt_0  \cos \lb \omega t \rb + r_0 \frac{\sin \lb \omega t \rb}{\omega} \rb,
    \end{split}
    \end{equation}
    where $\omega := \frac{\sqrt{4ma - \beta^2}}{2m}$.
    According to Lemma~\ref{lem:expectation} we thus have that
    \begin{equation}
    \begin{split}
        \ev{\xt(t)} &= \int_{\mathbb{R}^{2}} e^{-\gamma t} \lb \xt_0 \cos \lb \omega t \rb + r_0 \frac{\sin \lb \omega t \rb}{\omega} \rb \rho(\xt_0, p_0) d\xt_0 dp_0 \\
        &= e^{-\gamma t} \lb \ev{\xt}_0 \cos \lb \omega t \rb + \ev{r}_0 \frac{\sin \lb \omega t \rb}{\omega} \rb \\
        &= e^{-\gamma t} \lb \ev{\xt}_0 \cos \lb \omega t \rb + t \ev{r}_0  \frac{\sin \lb \omega t \rb}{\omega t} \rb.
    \end{split}
    \end{equation}
    Therefore,
    \begin{equation}
        |\langle \xt \rangle_t - \xt_{\min}| = |\ev{\xt}_t| \leq e^{-\gamma t} |\ev{\xt}_0| + t e^{-\gamma t} |\ev{r}_0|,
    \label{underdamped_upper_bound}
    \end{equation}
    where we used the fact that $\left| \frac{\sin \lb \omega t \rb}{\omega t} \right| \leq 1$ for all $t$.
    Note that the above inequality is exactly the same as in Eq.~\eqref{critical_upper_bound} in the proof of the equilibration time of the critically damped harmonic oscillator.
    Thus, to get $\epsilon'$-close to the minimum at $\xt_{\min} = 0$ such that
    \begin{equation}
        |\ev{\xt}_t - \xt_{\min}| = |\ev{x}_t - x^*| \leq \epsilon' \quad \forall t \geq t^*,
    \end{equation}
    it suffices to ensure that
    \begin{align}
        e^{-\gamma t} |\ev{\xt_0}| &\leq \frac{\epsilon'}{2} \; \forall t \geq t^* \implies t^* \geq \frac{1}{\gamma} \log \lb \frac{2 |\ev{\xt}_0 |}{\epsilon'} \rb = \frac{1}{\gamma} \log \lb \frac{2 |\ev{x}_0 - x^*|}{\epsilon'} \rb \label{underdamped_t1_x} \\
        e^{-\gamma t} |\ev{r}_0| t &\leq \frac{\epsilon'}{2}  \; \forall t \geq t^* \implies t^* \geq - \frac{1}{\gamma} \widetilde{W} \lb -\frac{\gamma \epsilon'}{2 |\ev{r}_0|} \rb. \label{underdamped_t2_x}
    \end{align}
    
    Showing that the average position is close to the minimum is not sufficient for our purposes as we might have to obtain a lot of samples to compute the average position. Ideally, we only need a small number of high-quality samples. This can be guaranteed by picking a large enough $t$ such that the position variance at time $t$, $\sigma^2(t) := \ev{x^2}_t - \ev{x}_t^2$, is sufficiently small. Specifically, Chebyshev's inequality states that a position sample $x'_t \sim R(x, p, t)$ from the time evolved phase space density satisfies
    \begin{equation}
         P\lb |x'_t - \ev{x}_t| \geq \epsilon' \rb \leq \frac{\sigma^2(t)}{\epsilon'^2}.
    \end{equation}
    We want this failure probability to be at most $1/3$ such that for all $t \geq t^*$ we can boost the success probability close to at least $1 - \delta$ using only $\log \lb 1/\delta \rb$ samples and taking the sample that yields the smallest value of $f$. Therefore, it suffices to ensure that
    \begin{equation}
        \sigma^2(t) = \ev{x^2}_t - \ev{x}_t^2 \leq \frac{\epsilon'^2}{3}. 
    \label{var_inequality_underdamped}
    \end{equation}
    Note that $\ev{\xt^2}_t - \ev{\xt}_t^2 = \ev{(x - b)^2}_t - \ev{x - b}_t^2 = \ev{x^2}_t - \ev{x}_t^2 = \sigma^2(t)$.
    Direct computation reveals that 
    \begin{equation}
    \begin{split}
        \ev{\xt^2(t)} - \ev{\xt(t)}^2 &= \int_{\mathbb{R}^{2}} e^{- 2 \gamma t} \lb \xt_0 \cos \lb \omega t \rb + r_0 \frac{\sin \lb \omega t \rb}{\omega} \rb^2 d\xt_0 dp_0 \\
        &\quad - \lb \int_{\mathbb{R}^{2}} e^{-\gamma t} \lb \xt_0 \cos \lb \omega t \rb + r_0 \frac{\sin \lb \omega t \rb}{\omega} \rb \rho(\xt_0, p_0) d\xt_0 dp_0 \rb^2 \\
        &= e^{-2 \gamma t} \lb \ev{\xt^2}_0 \cos^2 \lb \omega t \rb + 2 \cos \lb \omega t \rb \frac{\sin \lb \omega t \rb}{\omega} \ev{\xt r}_0 + \frac{\sin^2 \lb \omega t \rb}{\omega^2} \ev{r^2}_0 \rb \\
        &\quad - e^{-2 \gamma t} \lb \ev{\xt}^2_0 \cos^2 \lb \omega t \rb + 2 \cos \lb \omega t \rb \frac{\sin \lb \omega t \rb}{\omega} \ev{\xt}_0\ev{r}_0 + \frac{\sin^2 \lb \omega t \rb}{\omega^2} \ev{r}_0^2 \rb \\
        &= e^{-2\gamma t} \lb \sigma^2_x \cos^2 \lb \omega t \rb + 2 t \cos \lb \omega t \rb \frac{\sin \lb \omega t \rb}{\omega t} \, \mathrm{cov}_0 \lb \xt, r \rb + t^2 \frac{\sin^2 \lb \omega t \rb}{\omega^2 t^2} \sigma^2_r \rb
    \end{split}
    \end{equation}
    where $\sigma_x^2 = \sigma^2(0)$, $\mathrm{cov}_0 \lb \xt, r \rb = \ev{\xt r}_0 - \ev{\xt}_0\ev{r}_0$ and $\sigma^2_r = \ev{r^2}_0 - \ev{r}_0^2$. Note that $\mathrm{cov}_0 \lb \xt, r \rb = \mathrm{cov}_0 \lb x, r \rb$. Further, note that $\left|  \frac{\sin \lb \omega t \rb}{\omega t} \right| \leq 1$.
    Thus,
    \begin{equation}
         \sigma^2(t) \leq e^{-2 \gamma t} \sigma^2_x + 2 t e^{-2 \gamma t}  | \mathrm{cov}_0 \lb x, r \rb| + t^2 e^{-2 \gamma t} \sigma^2_r,
    \label{var_bound_underdamped}
    \end{equation}
    which is exactly the same upper bound found for the critically damped harmonic oscillator in Eq.~\eqref{var_bound}.
    Hence, the inequality in Eq.~\eqref{var_inequality_underdamped} is satisfied for all $t \geq t^*$ if
    \begin{align}
        e^{-2 \gamma t} \sigma^2_x &\leq \frac{\epsilon'^2}{9} \implies t^* \geq \frac{1}{2 \gamma} \log \lb \frac{9 \sigma^2_x}{\epsilon'^2} \rb \label{underdamped_t1} \\
        2 t e^{-2 \gamma t}  | \mathrm{cov}_0 \lb x, r \rb| &\leq \frac{\epsilon'^2}{9} \implies t^* \geq - \frac{1}{2\gamma} \widetilde{W} \lb - \frac{\gamma \epsilon'^2}{9 | \mathrm{cov}_0 \lb x, r \rb|} \rb \label{underdamped_t2}\\
        t^2 e^{-2 \gamma t} \sigma^2_r &\leq \frac{\epsilon'^2}{9} \implies t^* \geq - \frac{1}{\gamma} \widetilde{W} \lb - \gamma \sqrt{\frac{\epsilon'^2}{9 \sigma^2_r}} \rb. \label{underdamped_t3}
    \end{align}
    Lastly, let us bound $\epsilon'$ in terms of $\epsilon$. Again, note that with probability at least $2/3$
    \begin{equation}
        |x'_t - x^*| \leq |x'_t - \ev{x}_t| + |\ev{x}_t - x^*| \leq 2 \epsilon'.
    \end{equation}
    In case of success we have that
    \begin{equation}
    \begin{split}
        |f(x_t') - f\lb x^* \rb| &= \left| \frac{a}{2} \lb x'_t - x^* \rb^2 + c - c \right| \\
        &= \frac{a}{2} \left| \lb x_t' - x^* \rb^2  \right| \\
        &= \frac{a}{2} \left|x'_t - x^* \right|^2 \\
        &\leq 2 a \epsilon'^2.
    \end{split}
    \end{equation}
    To ensure that $|f(x'_t) - f\lb x^* \rb| \leq \epsilon$ with probability at least $2/3$, it then suffices to choose $\epsilon' = \sqrt{\frac{\epsilon}{2 a}}$. Plugging this back into the expressions in Eqs.~\eqref{underdamped_t1_x}, \eqref{underdamped_t2_x}, \eqref{underdamped_t1}, \eqref{underdamped_t2} and \eqref{underdamped_t3} and taking the maximum yields the final bound. 
\end{proof}

\section{Proof of Lemma~\ref{lem:multivariate_time}}
\label{app:multivariate_time}

For convenience, let us restate Lemma~\ref{lem:multivariate_time} here.

\Multivariate*

\begin{proof}
    The main idea is to reduce the $N$-variable optimization problem to $N$ independent single-variable optimization problems.
    First, note that $A$ is symmetric and thus can be diagonalized by an orthogonal matrix $Q \in \mathbb{R}^{N \times N}$ such that $Q A Q^\top = \Lambda$ for some positive diagonal matrix $\Lambda \in \mathbb{R}^{N \times N}$.
    Then
    \begin{equation}
    \begin{split}
        \lb \bx - \bx^* \rb^\top A \lb \bx - \bx^* \rb &= \lb \bx - \bx^* \rb^\top Q^\top Q A Q^\top Q \lb \bx - \bx^* \rb \\
        &= \lb Q \bx - Q \bx^* \rb^\top \Lambda \lb Q \bx - Q \bx^* \rb \\
        &= \bxt^\top \Lambda \bxt,
    \end{split}
    \end{equation}
    where we defined $\bxt := Q \bx - Q \bx^*$ such that the minimum of $f$ in the shifted eigenbasis of $A$ is at $\bxt_{\min} = \mathbf{0}$. 
    When mapping the optimization problem to the Liouvillian setting, we need to construct appropriate momentum operators for all $N$ variables. Let us now show that $\bxt$ and $\bpt := Q \bp$ obey the correct Poisson bracket relations. By linearity of the transformation we have that
    \begin{align}
        \left\{ \bxt_i, \bxt_j \right\} &= \sum_{k,l} \left\{ Q_{ik} \bx_k, Q_{jl} \bx_l \right\} = \sum_{k,l} Q_{ik} Q_{jl}  \underbrace{\left\{ \bx_k, \bx_l \right\}}_{=0} \\
        \left\{ \bpt_i, \bpt_j \right\} &= \sum_{k,l} \left\{ Q_{ik} \bp_k, Q_{jl} \bp_l \right\} = \sum_{k,l} Q_{ik} Q_{jl}  \underbrace{\left\{ \bp_k, \bp_l \right\}}_{=0} \\
        \left\{ \bxt_i, \bpt_j \right\} &= \sum_{k,l} \left\{ Q_{ik} \bx_k, Q_{jl} \bp_l \right\} = \sum_{k,l} Q_{ik} Q_{jl}  \underbrace{\left\{ \bx_k, \bp_l \right\}}_{= \delta_{k,l}} = \sum_k Q_{ik} Q^\top_{kj} = \delta_{ij},
    \end{align}
    which shows that $\bxt$ and $\bpt$ obey the correct Poisson bracket relations. 
    The time-dependent classical Hamiltonian with friction associated with $f(\bx)$ is therefore given by
    \begin{equation}
    \begin{split}
        H(t) &= e^{-\beta t/m} \frac{\bp^\top \bp}{2m} + e^{\beta t/m} \lb \frac{1}{2} \lb \bx - \bx^* \rb^\top A \lb \bx - \bx^* \rb + c \rb \\
        &= e^{-\beta t/m} \frac{\bp^\top Q^\top Q \bp}{2m} +  e^{\beta t/m} \lb \frac{1}{2} \lb \bx - \bx^* \rb^\top Q^\top Q A Q^\top Q \lb \bx - \bx^* \rb + c \rb \\
        &=  e^{-\beta t/m}\frac{\bpt^\top \bpt}{2m} + e^{\beta t/m} \lb \frac{1}{2}\bxt^\top \Lambda \bxt + c \rb \\
        &= \sum_{j=1}^N \lb  e^{-\beta t/m} \frac{\widetilde{p}_j^2}{2m} + e^{\beta t/m} \frac{1}{2} \lambda_j \widetilde{x}_j^2  \rb + e^{\beta t/m} c.
    \end{split}
    \end{equation}
    The corresponding classical Liouvillian takes the following form:
    \begin{equation}
        L(t) = -i \sum_{j=1}^N \lb e^{-\beta t/m}\frac{\widetilde{p}_j}{m} \partial_{\widetilde{x}_j} - e^{\beta t/m} \lambda_j \widetilde{x}_j \partial_{\widetilde{p}_j} \rb.
    \end{equation}
    Since the above Hamiltonian and Liouvillian describe the dynamics of $N$ independent variables, we can use our previous results for the 1-dim.~case.
    Specifically, for all $j \in [N]$ it holds that
    \begin{equation}
        |\langle \xt_j \rangle_t - \xt_{\min,j}| = |\ev{\xt_j}_t| \leq e^{-\gamma t} |\ev{\xt_j}_0| + t e^{-\gamma t} |\ev{\rt_j}_0|,
    \end{equation}
    where $\rt_j := \frac{\widetilde{p}_j}{m} + \gamma \xt_j$.
    Let $\epsilon' > 0$ be an error parameter to be bounded later. We demand that
    \begin{equation}
        |\ev{\widetilde{x}_j}_t - \widetilde{x}_{\min, j}| = |\ev{\widetilde{x}_j}_t| \leq \frac{\epsilon'}{4} \lb \frac{|\ev{\xt_{j}}_0|}{\norm{\ev{\bxt}_0}} + \frac{|\ev{\widetilde{r}_{j}}_0|}{\norm{\ev{\widetilde{\mathbf{r}}}_0}} \rb \quad \forall t \geq t^* \quad \mathrm{and} \quad \forall j \in [N].
    \label{error_single_component}
    \end{equation}
    The inequality in Eq.~\eqref{error_single_component} can be satisfied by ensuring that
    \begin{align}
        e^{-\gamma t} |\ev{\widetilde{x}_{j}}_0| &\leq \frac{\epsilon'}{4} \frac{|\ev{\xt_{j}}_0|}{\norm{\ev{\bxt}_0}} \;\; \forall t \geq t^* \implies t^* \geq \frac{1}{\gamma} \log \lb \frac{4 \norm{\ev{\bxt}_0}}{\epsilon'} \rb = \frac{1}{\gamma} \log \lb \frac{4 \norm{\ev{\bx}_0}}{\epsilon'} \rb \label{multi_t1_x} \\
        e^{-\gamma t} |\ev{\widetilde{r}_{j}}_0| t &\leq \frac{\epsilon'}{4} \frac{|\ev{\widetilde{r}_{j}}_0|}{\norm{\ev{\widetilde{\mathbf{r}}}_0}}  \;\; \forall t \geq t^* \implies t^* \geq - \frac{1}{\gamma} \widetilde{W} \lb -\frac{\gamma \epsilon'}{4 \norm{\ev{\widetilde{\mathbf{r}}}_0}} \rb = - \frac{1}{\gamma} \widetilde{W} \lb -\frac{\gamma \epsilon'}{4 \norm{\ev{\mathbf{r}}_0}} \rb. \label{multi_t2_x}
    \end{align}
    If the inequality in Eq.~\eqref{error_single_component} is satisfied then
    \begin{equation}
    \begin{split}
        \norm{\ev{\bx}_t - \bx^*} = \norm{\ev{\bxt}_t - \bxt_{\min}} = \norm{\ev{\bxt}_t} 
        &\leq \sqrt{\frac{\epsilon'^2}{16} \sum_j \lb \frac{|\ev{\xt_{j}}_0|}{\norm{\ev{\bxt}_0}} + \frac{|\ev{\widetilde{r}_{j}}_0|}{\norm{\ev{\widetilde{\mathbf{r}}_0}}} \rb^2} \\
        &\leq \frac{\epsilon'}{4} \sqrt{2 + 2 \sum_j \frac{|\ev{\xt_{j}}_0|}{\norm{\ev{\bxt}_0}}  \frac{|\ev{\widetilde{r}_{j}}_0|}{\norm{\ev{\widetilde{\mathbf{r}}}_0}}} \\
        &\leq \frac{\epsilon'}{4} \sqrt{2 + 2 \sqrt{\lb \sum_j \frac{|\ev{\xt_{j}}_0|^2}{\norm{\ev{\bxt}_0}^2} \rb \lb \sum_k \frac{|\ev{\widetilde{r}_{k}}_0|^2}{\norm{\ev{\widetilde{\mathbf{r}}}_0}^2} \rb}} \\
        &\leq \frac{\epsilon'}{4} \sqrt{2 + 2} = \frac{\epsilon'}{2},
    \end{split}
    \end{equation}
    where we used the Cauchy-Schwarz inequality in going from the second to the third line.
     
    As in the single-variable case, we need to make sure that the final phase space probability density is concentrated around the minimum such that we only need a small number of samples in order to determine the minimum within error $\epsilon$.
    Let $\bxt'_t \sim R(\bxt, \bpt, t)$ be a sample position vector from the time evolved phase space density in the shifted eigenbasis of $A$. According to the multivariate Chebyshev inequality the probability that $\bx'_t$ is far from the mean $\ev{\bx}_t$ is upper bounded as follows:
    \begin{equation}
         P\lb \norm{\bx'_t - \ev{\bx}_t} \geq \frac{\epsilon'}{2} \rb = P\lb \norm{\bxt'_t - \ev{\bxt}_t} \geq \frac{\epsilon'}{2} \rb \leq \frac{4 \sum_j\widetilde{\sigma}^2_j(t)}{\epsilon'^2},
    \end{equation}
    where $\widetilde{\sigma}^2_j(t) = \ev{\widetilde{x}_j^2}_t - \ev{\widetilde{x}_j}^2_t$ is the variance associated with the $j$-th position coordinate in the shifted eigenbasis of $A$ at time $t$. 
    We want this failure probability to be at most $1/3$ such that we can boost the success probability to at least $1 - \delta$ using only $O \lb \log \lb 1/\delta \rb \rb$ samples and then taking the sample that yields the smallest value of $f$. 
    Therefore, it suffices to ensure that
    \begin{equation}
        \sum_j \widetilde{\sigma}^2_j(t) \leq \frac{\epsilon'^2}{12}. 
    \end{equation}
    Using Lemma~\ref{lem:expectation} it can be verified that
    \begin{equation}
        \widetilde{\sigma}^2_j(t) = e^{-2\gamma t} \lb \widetilde{\sigma}^2_{x,j} \cos^2 \lb \omega t \rb + 2 t \cos \lb \omega t \rb \frac{\sin \lb \omega t \rb}{\omega t} \, \mathrm{cov}_0 \lb \xt_{j}, \widetilde{r}_{j} \rb + t^2 \frac{\sin^2 \lb \omega t \rb}{\omega^2 t^2} \widetilde{\sigma}^2_{r,j} \rb
    \end{equation}
    in the underdamped case where $\widetilde{\sigma}^2_{x,j} = \ev{\xt_j^2}_0 - \ev{\xt_j}_0^2$ and $\widetilde{\sigma}^2_{r,j} = \ev{\rt_j^2}_0 - \ev{\rt_j}_0^2$. Similarly, in the critically damped case it holds that
    \begin{equation}
        \widetilde{\sigma}^2_j(t) = e^{-2\gamma t} \lb \sigma^2_{x,j} + 2 t \, \mathrm{cov}_0 \lb \xt_{j}, \widetilde{r}_{j} \rb + t^2 \sigma^2_{r,j} \rb.
    \end{equation}
    For a more detailed derivation see the proof of Lemma~\ref{lem:critical_time_bound} or Lemma~\ref{lem:underdamped_time_bound}.
    In either case, we obtain the following upper bound:
    \begin{equation}
        \sum_j \widetilde{\sigma}^2_j(t) \leq  e^{-2\gamma t} \sum_j \widetilde{\sigma}^2_{x,j} + 2t e^{-2\gamma t} \left| \sum_j \mathrm{cov}_0 \lb \xt_{j}, \widetilde{r}_{j} \rb \right| + t^2 e^{-2\gamma t} \sum_j \widetilde{\sigma}^2_{r,j}.
    \end{equation}
    The above inequality is satisfied for all $t\geq t^*$ if for all $j$
    \begin{align}
        e^{-2\gamma t} \sum_j \widetilde{\sigma}^2_{x,j} &\leq \frac{\epsilon'^2}{36} \implies t^* \geq \frac{1}{2 \gamma} \log \lb \frac{36 \sum_j \widetilde{\sigma}^2_{x,j}}{\epsilon'^2} \rb \label{multi_t1} \\
        2 t e^{-2 \gamma t}  \left| \sum_j \mathrm{cov} \lb \xt_{0,j}, \widetilde{r}_{0,j} \rb \right| &\leq \frac{\epsilon'^2}{36} \implies t^* \geq - \frac{1}{2 \gamma} \widetilde{W} \lb - \frac{\gamma \epsilon'^2}{36 \left| \sum_j \mathrm{cov}_0 \lb \xt_{j}, \widetilde{r}_{j} \rb \right|} \rb  \label{multi_t2} \\
        t^2 e^{-2 \gamma t} \sum_j \widetilde{\sigma}^2_{r,j} &\leq \frac{\epsilon'^2}{36} \implies t^* \geq - \frac{1}{\gamma} \widetilde{W} \lb - \gamma \sqrt{\frac{\epsilon'^2}{36 \sum_j \widetilde{\sigma}^2_{r,j}}} \rb. \label{multi_t3}
    \end{align}
    Note that
    \begin{equation}
    \begin{split}
        \mathrm{cov}_0 \lb \xt_{j}, \widetilde{r}_{j} \rb &= \ev{\xt_j \rt_j }_0 - \ev{\xt_j}_0 \ev{\rt_j}_0 \\
        &= \ev{\lb \sum_kQ_{jk} \lb x_k - x^*_k \rb \rb \lb \sum_l Q_{jl} r_l \rb}_0 - \ev{\sum_k Q_{jk} \lb x_k - x^*_k \rb}_0 \ev{\sum_l Q_{jl} r_l }_0 \\
        &= \sum_k \sum_l Q_{jk} Q_{jl}   \ev{x_k r_l}_0 - \sum_k \sum_l Q_{jk} Q_{jl}   \ev{x_k}_0 \ev{r_l}_0.
    \end{split}
    \end{equation}
    Thus,
    \begin{equation}
    \begin{split}
        \sum_j \mathrm{cov}_0 \lb \xt_{j}, \widetilde{r}_{j} \rb &= \sum_j \lb \sum_k \sum_l Q_{jk} Q_{jl}   \ev{x_k r_l}_0 - \sum_k \sum_l Q_{jk} Q_{jl}   \ev{x_k}_0 \ev{r_l}_0 \rb \\
        &= \sum_j \sum_k \sum_l Q_{jk} Q_{jl}  \lb \ev{x_k r_l}_0 - \ev{x_k}_0 \ev{r_l}_0 \rb \\
        &= \sum_k \sum_l \underbrace{\sum_j Q_{kj}^\top Q_{jl}}_{= \delta_{k,l}}  \lb \ev{x_k r_l}_0 - \ev{x_k}_0 \ev{r_l}_0 \rb \\
        &= \sum_k \lb \ev{x_k r_k}_0 - \ev{x_k}_0 \ev{r_k}_0 \rb \\
        &= \sum_j \mathrm{cov}_0 \lb x_{j}, r_{j} \rb.
    \end{split}
    \end{equation}
    By the same argument, we have that
    \begin{align}
        \sum_j \widetilde{\sigma}^2_{x,j} &= \sum_j \sigma^2_{x,j} \\
        \sum_j \widetilde{\sigma}^2_{r,j} &= \sum_j \sigma^2_{r,j}.
    \end{align}
    This allows us to express the bounds in Eqs.~\eqref{multi_t1}, \eqref{multi_t2} and \eqref{multi_t3} in terms of the original coordinate system.
    
    So far, we shown that a sample position vector $\bx'_t \sim R(\bx, \bp, t)$ from the time evolved phase space density satisfies
    \begin{equation}
        \norm{\bx'_t - \bx^*} \leq \norm{\bxt'_t - \ev{\bx}_t} + \norm{\ev{\bx}_t  - \bx^*} \leq \epsilon'
    \end{equation}
    with probability at least $2/3$ as long as $t \geq t^*$.
    Let us now discuss how $\epsilon'$ needs to be chosen.
    By the Cauchy-Schwarz inequality we have that
    \begin{equation}
    \begin{split}
        \left| f(\ev{\bx}_t) - f(\bx^*) \right| &= \left| \frac{1}{2}\lb \ev{\bx}_t - \bx^* \rb^\top A \lb \ev{\bx}_t - \bx^* \rb \right| \\
        &\leq \frac{1}{2} \norm{\ev{\bx}_t - \bx^*} \norm{A \lb \ev{\bx}_t - \bx^* \rb} \\
        &\leq \frac{1}{2} \norm{A} \norm{\ev{\bx}_t - \bx^*}^2 \\ 
        &\leq \frac{1}{2} \norm{A} {\epsilon'}^2 = \frac{1}{2} \lambda_{\max} \epsilon'^2.
    \end{split}
    \label{function_error_bound}
    \end{equation}
    To ensure that $|f(\bx'_t) - f\lb \bx^* \rb| \leq \epsilon$ with probability at least $2/3$, it then suffices to choose $\epsilon' = \sqrt{\frac{2\epsilon}{\lambda_{\max}}}$. Plugging this back into the expressions in Eqs.~\eqref{multi_t1_x}, \eqref{multi_t2_x}, \eqref{multi_t1}, \eqref{multi_t2} and \eqref{multi_t3} and taking the maximum yields the final bound on the $\epsilon$-equilibration time $t^*$.
\end{proof}

\section{Proof of Corollary~\ref{cor:equilibration_time_liouvillian}}
\label{app:equilibration_time}

For convenience, let us restate Corollary~\ref{cor:equilibration_time_liouvillian} here.

\EquilibrationTime*

\begin{proof}
    First, note that
    \begin{align}
         t^{\ev{x}}_{1} &= \frac{1}{\gamma} \log \lb \sqrt{\frac{8\lambda_{\max}}{\epsilon}} \norm{\ev{\bx}_0 - \bx^*} \rb \in O \lb \frac{1}{\sqrt{\lambda_{\min}}} \log \lb \frac{\lambda_{\max}}{\lambda_{\min}} \frac{ \widetilde{\chi}_0}{\epsilon} \rb \rb, \\
          t^{\sigma}_{1} &= \frac{1}{2 \gamma} \log \lb \frac{18 \lambda_{\max} \sum_{j=1}^N \sigma^2_{x,j}}{\epsilon} \rb \in O \lb \frac{1}{\sqrt{\lambda_{\min}}} \log \lb \frac{\lambda_{\max}}{\lambda_{\min}} \frac{ \widetilde{\chi}_0}{\epsilon} \rb \rb.
    \end{align}
    For the other three bounds, we will use the following known upper bound on the $-1$ branch of the Lambert $W$ function:
    \begin{equation}
        -W_{-1} \lb e^{- u - 1} \rb \leq 1 + \sqrt{2u} + u, \quad \text{for } \; u>0.
    \end{equation}
    By the triangle inequality we have that
    \begin{equation}
        \norm{\ev{\br}_0} \leq \frac{\norm{\ev{\bp}_0}}{m} + \gamma \norm{\ev{\bx}_0 - \bx^*} \leq 2 \max \left\{ \frac{\norm{\ev{\bp}_0}}{m}, \gamma \norm{\ev{\bx}_0 - \bx^*} \right\}.
    \end{equation}
    Further, note that
    \begin{equation}
    \begin{split}
        \mathrm{cov}\lb x_j, r_j \rb_0 &= \ev{\gamma x_j \lb x_j - x_j^* \rb + \frac{x_j p_j}{m}}_0 - \ev{x_j}_0\ev{\gamma \lb x_j - x_j^* \rb + \frac{p_j}{m}}_0 \\
        &= \gamma \sigma_{x,j}^2 + \frac{1}{m} \mathrm{cov}\lb x_j, p_j \rb_0.
    \end{split}
    \end{equation}
    Similarly,
    \begin{equation}
    \begin{split}
        \sigma_{r,j}^2 &= \ev{\lb \gamma \lb x_j - x_j^* \rb + \frac{p_j}{m} \rb^2}_0 - \ev{\gamma \lb x_j - x_j^* \rb + \frac{p_j}{m}}_0^2 \\
        &= \gamma^2 \sigma_{x,j}^2 + \frac{1}{m^2} \sigma_{p,j}^2 + \frac{2\gamma}{m} \, \mathrm{cov}\lb x_j, p_j \rb_0.
    \end{split}
    \end{equation}
    Hence,
    \begin{equation}
    \begin{split}
         t^{\ev{x}}_{2} &= - \frac{1}{\gamma} \widetilde{W} \lb - \sqrt{\frac{\epsilon}{8\lambda_{\max}}} \frac{\gamma}{ \norm{\ev{\br}_0}} \rb \\
         &\leq \frac{1}{\gamma} \lb \sqrt{2 \lb \log \lb \sqrt{\frac{8\lambda_{\max}}{\epsilon}} \frac{\norm{\ev{\br}_0}}{\gamma} \rb -1 \rb} +  \log \lb \sqrt{\frac{8\lambda_{\max}}{\epsilon}} \frac{\norm{\ev{\br}_0}}{\gamma} \rb \rb \\
         &\in O \lb \frac{1}{\gamma} \log \lb \sqrt{\frac{\lambda_{\max}}{\epsilon}} \frac{\norm{\ev{\br}_0}}{\gamma} \rb \rb \subseteq  O \lb \frac{1}{\sqrt{\lambda_{\min}}} \log \lb \frac{\lambda_{\max}}{\lambda_{\min}} \frac{ \widetilde{\chi}_0}{\epsilon} \rb \rb.
    \end{split}
    \end{equation}
    Similarly,
    \begin{equation}
    \begin{split}
         t^{\sigma}_{2} &= - \frac{1}{2 \gamma} \widetilde{W} \lb - \frac{\gamma \epsilon}{18 \lambda_{\max}  \left| \sum_{j=1}^N \mathrm{cov}\lb x_j, r_j \rb_0 \right|} \rb \\
         &\leq \frac{1}{2\gamma} \lb \sqrt{2 \lb \log \lb \frac{18 \lambda_{\max}}{\epsilon} \frac{\left| \sum_{j=1}^N \mathrm{cov}\lb x_j, r_j \rb_0 \right|}{\gamma} \rb -1 \rb} + \log \lb \frac{18 \lambda_{\max}}{\epsilon} \frac{\left| \sum_{j=1}^N \mathrm{cov}\lb x_j, r_j \rb_0 \right|}{\gamma} \rb \rb \\
         &\in O \lb \frac{1}{\gamma} \log \lb \frac{\lambda_{\max}}{\epsilon} \frac{\max \lc \gamma \sum_j \sigma_{x,j}^2, \frac{1}{m} \left| \sum_j \mathrm{cov}\lb x_j, p_j \rb_0 \right| \rc}{\gamma} \rb \rb \subseteq  O \lb \frac{1}{\sqrt{\lambda_{\min}}} \log \lb \frac{\lambda_{\max}}{\lambda_{\min}} \frac{ \widetilde{\chi}_0}{\epsilon} \rb \rb.
    \end{split}
    \end{equation}
    Lastly,
    \begin{equation}
    \begin{split}
         t^{\sigma}_{3} &= - \frac{1}{\gamma} \widetilde{W} \lb - \gamma \sqrt{\frac{\epsilon}{18 \lambda_{\max} \sum_{j=1}^N \sigma^2_{r,j}}} \rb \\
         &\leq \frac{1}{\gamma} \lb \sqrt{2 \lb \log \lb \sqrt{\frac{18 \lambda_{\max}}{\epsilon} \frac{\sum_{j=1}^N \sigma^2_{r,j}}{\gamma^2}} \rb -1 \rb} + \log \lb \sqrt{\frac{18 \lambda_{\max}}{\epsilon} \frac{\sum_{j=1}^N \sigma^2_{r,j}}{\gamma^2}} \rb \rb \\
         &\in O \lb \frac{1}{\gamma} \log \lb \frac{\lambda_{\max}}{\lambda_{\min}} \frac{\max \lc \gamma^2 \sum_j \sigma_{x,j}^2, \frac{1}{m^2} \sum_j \sigma_{p,j}^2, \frac{\gamma}{m} \left| \sum_j \mathrm{cov}\lb x_j, p_j \rb_0 \right| \rc}{\epsilon} \rb \rb \\
         &\subseteq  O \lb \frac{1}{\sqrt{\lambda_{\min}}} \log \lb \frac{\lambda_{\max}}{\lambda_{\min}} \frac{ \widetilde{\chi}_0}{\epsilon} \rb \rb.
    \end{split}
    \end{equation}
    Combining all the asymptotic upper bounds, we thus find that
    \begin{equation}
        t^* \in  O \lb \frac{1}{\sqrt{\lambda_{\min}}} \log \lb \frac{\lambda_{\max}}{\lambda_{\min}} \frac{ \widetilde{\chi}_0}{\epsilon} \rb \rb.
    \end{equation}
\end{proof}

\section{Proof of Theorem~\ref{thm:main}}
\label{app:main_convex_opt}

For convenience, let us restate Theorem~\ref{thm:main} here.

\CoherentConvex*

\begin{proof}
    Lemma~\ref{lem:multivariate_quantum} provides an upper bound on the equilibration time $t^*$ of damped coupled quantum harmonic oscillators in continuous space whose potential is given by $f(\bx)$ and whose equilibrium configuration corresponds to the minimum $\bx^*$ of $f(\bx)$. The main idea is to simulate the dynamics of such damped coupled quantum harmonic oscillators, which are governed by $U(t) := \mathcal{T} e^{-i \int_0^t H(s) \mathrm{d}s}$, on a quantum computer and then use the fact that the probability distribution after time $t^*$ is strongly localized around the minimum at $\bx^*$ such that we only need to draw a small number of samples in order to determine $f(\bx^*)$ within error $\epsilon$.

    In order to simulate the dynamics of such a continuous system on a quantum computer, we need to discretize it. Let us consider the discrete quantum Hamiltonian $\widetilde{H}(t)$ as given in Definition~\ref{def:discretized_ham} and let $\widetilde{U}_H(t) = \mathcal{T} e^{-i \int_0^t \Ht(s) \mathrm{d}s}$ denote the corresponding time evolution operator. Lemma~\ref{lem:friction}  shows how to implement an $\widetilde{\epsilon}$-precise approximation $W(t)$ to $\widetilde{U}_H(t)$ using
    \begin{equation}
        O \lb \alpha_{A_H} t \frac{\log \lb \alpha_{A_H} t/\widetilde{\epsilon} \rb}{\log \log \lb \alpha_{A_H} t/\widetilde{\epsilon} \rb} \rb
    \end{equation}
    queries to an $\epsilon'$-precise bit oracle $O_f^{(b)}$ of $f$ as given in Definition~\ref{def:bit_oracle} with $1/\epsilon' \in O \lb \frac{\alpha_{A_H}^2 t \frac{m}{\beta} e^{\beta t/m}}{\widetilde{\epsilon}} \rb$. 
    Lemma~\ref{lem:friction_phase}, on the other hand, shows how to implement an $\widetilde{\epsilon}$-precise approximation $W(t)$ to $\widetilde{U}_H(t)$ using
    \begin{equation}
        \widetilde{O} \lb \frac{m}{\beta} e^{\beta t/m} f_{\max} \alpha_{A_H} t \log^3 \lb \frac{1}{\widetilde{\epsilon}} \rb \rb
    \end{equation}
    queries to a phase oracle $O_f^{(p)}$ of $f$ as given in Definition~\ref{def:phase_oracle} and its controlled version. Note that
    \begin{equation}
        \alpha_{A_H} \in \lb \frac{N}{m h_x^2} \rb,
    \end{equation}
    according the discussion after Lemma~\ref{lem:props_disc_H}. We use the above bound in Theorem~\ref{thm:main_informal} to simplify the expressions.
    The remainder of the proof is essentially the same as the proof of Theorem~\ref{thm:optimization_liouvillian}. Nonetheless, in the following, we spell out the details for completeness.

    Now, let us consider the expectation value of the position operator in the continuum and in the discrete setting according to Definition~\ref{def:expectation_values} and let
    \begin{equation}
        \ev{\widetilde{\xt}_j}_t := \bra{\widetilde{\psi}_0} W^\dagger(t) \hat{\xt}_j W(t)\ket{\widetilde{\psi}_0}
    \end{equation}
    denote the time evolved expectation value of the discretized position operator for the $j$-th variable w.r.t.~the approximate discretized evolution operator $W(t)$.
    By the triangle inequality, we then have that 
    \begin{equation}
        \norm{\ev{\widetilde{\bxt}}_{t^*} - \bx^*} \leq \underbrace{\norm{ \ev{\widetilde{\bxt}}_{t^*} - \ev{\bxt}_{t^*}}}_{=: \epsilon_{\mathrm{sim}}} + \underbrace{\norm{\ev{\bxt}_{t^*} - \ev{\bx}_{t^*}}}_{=: \epsilon_{\mathrm{dis}}} + \underbrace{\norm{\ev{\bx}_{t^*} - \bx^*}}_{=: \epsilon_{\mathrm{eq}}}.
    \end{equation}
    Let $\epsilon_x$ be an error tolerance to be bounded later.
    In order for the above error to be at most $\epsilon_x$ such that $\norm{\ev{\widetilde{\bxt}}_{t^*} - \bx^*} \leq \epsilon_x$, it suffices to ensure that
    \begin{align}
        \epsilon_{\mathrm{sim}} &\leq \epsilon_x/3 \\
        \epsilon_{\mathrm{dis}} &\leq \epsilon_x/3 \\
        \epsilon_{\mathrm{eq}} &\leq \epsilon_x/3.
    \end{align}
    We also need to ensure that the concentration bounds used in Lemma~\ref{lem:multivariate_quantum} are still valid in the discrete setting since otherwise we might have to draw a lot of samples to obtain a good estimate of $\bx^*$.
    Let $\widetilde{\bxt}'_t \sim |\widetilde{\widetilde{\psi}}_t(\bx)|^2$ be a sample position vector from the time-evolved discrete probability distribution associated with $\ket{\widetilde{\widetilde{\psi}}_t} := W(t) \ket{\widetilde{\psi}_0}$. According to the multivariate Chebyshev inequality the probability that $\widetilde{\bxt}'_t$ is far from the mean vector $\ev{\widetilde{\bxt}}_{t}$ is upper bounded as follows:
    \begin{equation}
         P\lb \norm{\widetilde{\bxt}'_t - \ev{\widetilde{\bxt}}_{t}} \geq \epsilon_x \rb \leq \frac{\sum_j \widetilde{\widetilde{\sigma}}_{j}^2 (t)}{\epsilon_x^2},
    \end{equation}
    where
    \begin{equation}
        \widetilde{\widetilde{\sigma}}_{j}^2 (t) := \ev{\widetilde{\widetilde{x}}_j^2}_t - \ev{\widetilde{\widetilde{x}}_j}^2_t
    \end{equation}
    denotes the time-evolved variance of the $j$-th position variable w.r.t.~the approximate discretized evolution operator $W(t)$.
  
    By the triangle inequality we then have that
    \begin{equation}
        \left| \sum_j \widetilde{\widetilde{\sigma}}_{j}^2 (t) - \sum_j \sigma_{j}^2(t) \right| \leq \underbrace{\left| \sum_j \widetilde{\widetilde{\sigma}}_{j}^2 (t) - \sum_j \widetilde{\sigma}_{j}^2(t) \right|}_{=: \epsilon'_{\mathrm{sim}}} + \underbrace{\left| \sum_j \widetilde{\sigma}_{j}^2 (t) - \sum_j \sigma_{j}^2(t) \right|}_{=: \epsilon'_{\mathrm{dis}}}.
    \end{equation}
    Now, as long as
    \begin{align}
        \sum_j \sigma_j^2(t) &\leq \frac{\epsilon_x^2}{6}, \\
        \epsilon'_{\mathrm{sim}} &\leq \frac{\epsilon_x^2}{12}, \\
        \epsilon'_{\mathrm{dis}} &\leq \frac{\epsilon_x^2}{12},
    \end{align}
    it holds that
    \begin{equation}
    \begin{split}
        P\lb \norm{\widetilde{\bxt}'_t - \ev{\widetilde{\bxt}}_{t}} \geq \epsilon_x \rb &\leq \frac{\sum_j \widetilde{\widetilde{\sigma}}_{j}^2 (t)}{\epsilon_x^2} \\
        &\leq \frac{1}{\epsilon_x^2} \lb \sum_j \sigma_j^2(t) + \epsilon'_{\mathrm{sim}} + \epsilon'_{\mathrm{dis}} \rb \\
        &\leq \frac{1}{3}.
    \end{split}
    \end{equation}
    This means that with probability at least $2/3$ we have that
    \begin{equation}
        \norm{\widetilde{\bxt}'_t - \bx^*} \leq \norm{\widetilde{\bxt}'_t - \ev{\widetilde{\bxt}}_{t}} + \norm{\ev{\widetilde{\bxt}}_{t} - \bx^*} \leq 2\epsilon_x.
    \end{equation}
    Drawing $O \lb \log \lb 1/\delta \rb \rb$ samples and taking the sample that leads to the smallest value of $f$ allows us to boost the success probability to at least $1 - \delta$. Specifically, the probability that not a single sample out of $v$ many samples is $\epsilon_x$-close to $\ev{\widetilde{\bxt}}_{t}$ is at most $\lb 1/3 \rb^v$. Thus, $1/{3^v} \leq \delta$ if $v \geq \log \lb 1/\delta \rb/\log \lb 3\rb$.

    By the Cauchy-Schwarz inequality we then have that
    \begin{equation}
    \begin{split}
        \left| f \lb \widetilde{\bxt}'_{t} \rb - f(\bx^*) \right| &= \left| \frac{1}{2}\lb \widetilde{\bxt}'_{t} - \bx^* \rb^\top A \lb \widetilde{\bxt}'_{t} - \bx^* \rb \right| \\
        &\leq \frac{1}{2} \norm{\widetilde{\bxt}'_{t} - \bx^*} \norm{A \lb \widetilde{\bxt}'_{t} - \bx^* \rb} \\
        &\leq \frac{1}{2} \norm{A} \norm{\widetilde{\bxt}'_{t} - \bx^*}^2 \\ 
        &\leq 2 \norm{A} {\epsilon_x}^2 = 2 \lambda_{\max} {\epsilon_x}^2.
    \end{split}
    \end{equation}
    Hence, in order to ensure that $\left| f \lb \widetilde{\bxt}'_{t} \rb - f(\bx^*) \right| \leq \epsilon$, it suffices to choose $\epsilon_x = \sqrt{\frac{\epsilon}{2\lambda_{\max}}}$.

    Let us now discuss how to achieve the various error bounds.
    First, according to Corollary~\ref{cor:equilibration_time}, we require
    \begin{equation}
        t^* \in  O \lb \frac{1}{\sqrt{\lambda_{\min}}} \log \lb \frac{\lambda_{\max}}{\lambda_{\min}} \frac{ \chi_0}{\epsilon_x} \rb \rb \subseteq O \lb \frac{1}{\sqrt{\lambda_{\min}}} \log \lb \frac{\lambda_{\max}}{\lambda_{\min}} \frac{\chi_0}{\epsilon} \rb \rb,
    \end{equation}
    in order for $\epsilon_{\mathrm{eq}} \leq \epsilon_x/3$ and $\sum_j \sigma_j^2(t) \leq \epsilon_x^2/6$. Note that in the above bound we picked the friction coefficient $\beta$ such that $\beta \in \Theta \lb \sqrt{\lambda_{\min}} \rb$. 

    The conditions on the discretization errors, $\epsilon_{\mathrm{dis}} \leq \epsilon_x/3 \leq \sqrt{\frac{\epsilon}{18\lambda_{\max}}}$ and $\epsilon_{\mathrm{dis}}' \leq \epsilon_x^2/12 \leq \frac{\epsilon}{24\lambda_{\max}}$ are true by assumption. Implicitly, this requires us to choose a sufficiently small grid spacing $h_x$ for the finite difference approximations of the discretized partial derivatives $\partial_{x,j}$.
   
    Next, let us discuss how to ensure that $\epsilon_{\mathrm{sim}} \leq \epsilon_x/3$ and $\epsilon'_{\mathrm{sim}} \leq \epsilon_x^2/12$. Let $\ket{\widetilde{\psi}_t} := \widetilde{U}_L(t)\ket{\widetilde{\psi}_0}$ denote the time evolved quantum state w.r.t.~the exact discretized evolution operator and let $\ket{\widetilde{\widetilde{\psi}}_t} = W(t)\ket{\widetilde{\psi}_0}$ again denote the time evolved quantum state w.r.t.~the $\widetilde{\epsilon}$-precise approximate discretized evolution operator. Then we have that
    \begin{equation}
        \norm{\ket{\widetilde{\widetilde{\psi}}_{t}} - \ket{\widetilde{\psi}_t}} \leq \widetilde{\epsilon}.
    \end{equation}
    This implies that
    \begin{equation}
    \begin{split}
        \norm{\ev{\widetilde{\bxt}}_{t} - \ev{\bxt}_{t}}^2 &= \sum_{j=1}^N \left| \bra{\widetilde{\widetilde{\psi}}_t}\hat{\xt}_j\ket{\widetilde{\widetilde{\psi}}_t} - \bra{\widetilde{\psi}_t}\hat{\xt}_j\ket{\widetilde{\psi}_t} \right|^2 \\
        &\leq \sum_{j=1}^N \left| \bra{\widetilde{\widetilde{\psi}}_t}\hat{\xt}_j\ket{\widetilde{\widetilde{\psi}}_t} - \bra{\widetilde{\widetilde{\psi}}_t}\hat{\xt}_j\ket{\widetilde{\psi}_t} \right|^2 + \sum_{j=1}^N \left| \bra{\widetilde{\widetilde{\psi}}_t}\hat{\xt}_j\ket{\widetilde{\psi}_t} - \bra{\widetilde{\psi}_t}\hat{\xt}_j\ket{\widetilde{\psi}_t} \right|^2 \\
        &\leq 2N x_{\max} \norm{\ket{\widetilde{\widetilde{\psi}}_{t}} - \ket{\widetilde{\psi}_t}}^2 \\
        &\leq 2N x_{\max} \widetilde{\epsilon}^2,
    \end{split}
    \end{equation}
    where we used the Cauchy-Schwarz inequality in going from the second to the third line. 
    Therefore, in order for $\epsilon_{\mathrm{sim}}$ to be at most $\epsilon_x/3$, it suffices to ensure that
    \begin{equation}
        \widetilde{\epsilon} \leq \frac{\epsilon_x/3}{\sqrt{2N x_{\max}}} \leq \frac{1}{6}\sqrt{\frac{\epsilon}{N x_{\max} \lambda_{\max}}}.
    \end{equation}
    Furthermore,
    \begin{equation}
    \begin{split}
        \left| \sum_j \lb \ev{\widetilde{\xt}_j^2}_t - \ev{\xt_j^2}_t \rb \right| &= \left| \sum_j \lb \bra{\widetilde{\widetilde{\psi}}_t}\hat{\xt}^2_j\ket{\widetilde{\widetilde{\psi}}_t} - \bra{\widetilde{\psi}_t}\hat{\xt}^2_j\ket{\widetilde{\psi}_t} \rb \right| \\
        &\leq  \left| \sum_j \lb \bra{\widetilde{\widetilde{\psi}}_t}\hat{\xt}^2_j\ket{\widetilde{\widetilde{\psi}}_t} - \bra{\widetilde{\widetilde{\psi}}_t}\hat{\xt}^2_j\ket{\widetilde{\psi}_t} \rb \right| + \left| \sum_j \lb \bra{\widetilde{\widetilde{\psi}}_t}\hat{\xt}^2_j\ket{\widetilde{\psi}_t} - \bra{\widetilde{\psi}_t}\hat{\xt}^2_j\ket{\widetilde{\psi}_t} \rb \right| \\
        &\leq \left|\bra{\widetilde{\widetilde{\psi}}_t} \sum_j \hat{\xt}^2_j\lb \ket{\widetilde{\widetilde{\psi}}_t} - \ket{\widetilde{\psi}_t} \rb \right| + \left|\lb \bra{\widetilde{\widetilde{\psi}}_t} - \bra{\widetilde{\psi}_t} \rb \sum_j \hat{\xt}^2_j \ket{\widetilde{\psi}_t} \right| \\
        &\leq 2 N x_{\max}^2 \norm{\ket{\widetilde{\widetilde{\psi}}_t} - \ket{\widetilde{\psi}_t}} \\
        &\leq 2 N x_{\max}^2 \widetilde{\epsilon}.
    \end{split}
    \end{equation}
    Similarly,
    \begin{equation}
    \begin{split}
        \left| \sum_j \lb \ev{\widetilde{\xt}_j}^2_t - \ev{\xt_j}^2_t \rb \right| &= \left| \sum_j \lb \ev{\widetilde{\xt}_j}_t - \ev{\xt_j}_t \rb \lb \ev{\widetilde{\xt}_j}_t + \ev{\xt_j}_t \rb  \right| \\
        &\leq 2 x_{\max} \left| \sum_j \lb \ev{\widetilde{\xt}_j}_t - \ev{\xt_j}_t \rb \right| \\
        &\leq 4 N x_{\max}^2 \widetilde{\epsilon}.
    \end{split}
    \end{equation}
    Therefore, 
    \begin{equation}
    \begin{split}
        \left| \sum_j \widetilde{\widetilde{\sigma}}_{j}^2 (t) - \sum_j \widetilde{\sigma}_{j}^2(t) \right| &= \left| \sum_j \lb \ev{\widetilde{\xt}_j^2}_t - \ev{\widetilde{\xt}_j}^2_t \rb - \sum_j \lb \ev{\xt_j^2}_t - \ev{\xt_j}^2_t \rb \right| \\
        &\leq \left| \sum_j \lb \ev{\widetilde{\xt}_j^2}_t - \ev{\xt_j^2}_t \rb \right| + \left| \sum_j \lb \ev{\widetilde{\xt}_j}^2_t - \ev{\xt_j}^2_t \rb \right| \\
        &\leq 6 N x_{\max}^2 \widetilde{\epsilon}.
    \end{split}
    \end{equation}
    This shows that in order for $\epsilon_{\mathrm{sim}'}$ to be at most $\epsilon_x^2/12$
    it suffices to ensure that
    \begin{equation}
        \widetilde{\epsilon} \leq \frac{\epsilon_x^2/12}{6Nx_{\max}^2} \leq \frac{\epsilon}{144 \lambda_{\max} N x_{\max}^2}.
    \end{equation}

    Lastly, we will use the following identity and upper bound on the $-1$ branch of the Lambert $W$ function in order to simplify the final complexity bounds:
    \begin{equation}
        e^{-W_{-1}(y)} = \frac{W_{-1}(y)}{y} \in O \lb \frac{\log(-y)}{y} \rb.
    \end{equation}
    Specifically, this implies that
    \begin{equation}
        e^{\beta t^*/m} \in \widetilde{O} \lb \frac{\lambda_{\max} \chi_0}{ \lambda_{\min} \epsilon} \rb.
    \end{equation}
    Additionally, note that
    \begin{equation}
        f_{\max} \in O \lb N \lambda_{\max} x^2_{\max} \rb.
    \end{equation}

    Putting everything together, we thus require either a total of
    \begin{equation}
    \begin{split}
        O \lb \alpha_{A_H} t^{*} \frac{\log \lb \alpha_{A_H} t/\widetilde{\epsilon} \rb}{\log \log \lb \alpha_{A_H} t/\widetilde{\epsilon} \rb} \log \lb 1/\delta \rb \rb &\subseteq \widetilde{O} \lb  \alpha_{A_H} t^* \log \lb 1/\widetilde{\epsilon} \rb \log \lb 1/\delta \rb \rb \\
        &\subseteq \widetilde{O} \lb \frac{\alpha_{A_H}}{\sqrt{\lambda_{\min}}} \log^2 \lb\frac{N \lambda_{\max} x_{\max} \chi_0}{\epsilon} \rb \log \lb 1/\delta \rb \rb
    \end{split}
    \end{equation}
    queries to an $\epsilon'$-precise bit oracle $O_f^{(b)}$ with
    \begin{equation}
        1/\epsilon' \in O \lb \frac{\alpha_{A_H}^2 t \frac{m}{\beta} e^{\beta t/m}}{\widetilde{\epsilon}} \rb  \subseteq \widetilde{O} \lb  \frac{\alpha_{A_H}^2 N \lambda_{\max}^{2} x_{\max}^2 \chi_0}{\epsilon^{2} \lambda_{\min}^2} \rb,
    \end{equation}
    or
    \begin{equation}
        \widetilde{O} \lb \frac{m}{\beta} e^{\beta t^*/m} f_{\max} \alpha_{A_H} t^* \log^3 \lb \frac{1}{\widetilde{\epsilon}} \rb \log \lb 1/\delta \rb \rb \subseteq \widetilde{O} \lb \frac{N}{\epsilon} \frac{\lambda_{\max}^2}{\lambda_{\min}^2} x_{\max}^2 \alpha_{A_H} \chi_0 \log \lb 1/\delta \rb \rb
    \end{equation}
    queries to the phase oracle $O_f^{(p)}$ and its controlled version where we picked $m=1$. 
    This completes the proof.
\end{proof}

\section{Proof of Theorem~\ref{thm:ch7_disc}}
\label{appendix:discproof}

Below we provide a proof of Theorem~\ref{thm:ch7_disc} which bounds the spatial discretization error of our quantum algorithm for global continuous optimization based on classical Liouvillian dynamics.

\begin{proof}[{Proof of Theorem~\ref{thm:ch7_disc}}]
We are going to consider the rectangle function as an approximation to the $\delta$-function:

\begin{align}
    \text{rect}_\Delta(x)=\begin{cases} 
      \frac{1}{\Delta} & |x|\leq \frac{\Delta}{2}, \\
      0 & \text{otherwise} .
   \end{cases}
\end{align}
Observe that this function recovers the Dirac delta distribution as $\Delta \rightarrow 0$. If we substitute the delta function in~\eqref{eq:global_rho} with the rectangle function, the $\bx$ average under the modified density function can be calculated easily as
\begin{align}
    \langle \bx \rangle_1 &:= C\, \int d\{x_n\} \int d\{p'_n\} \int ds \int dp_s s^N \text{rect}_\Delta \lb H_N \lb \{x_n\}, \{p'_n\}, s, p_s \rb - E_{\text{ext}}\rb \\
    &=\langle \bx \rangle_0 \frac{\sinh{(\frac{\Delta(N+1)}{2}g^{-1}\beta)}}{\frac{\Delta(N+1)}{2}g^{-1}\beta}\\
    &= \langle \bx \rangle_0 \, \text{sinhc}\lb\frac{g^{-1}\beta\Delta(N+1)}{2}\rb.
\end{align}
One of our key aims here is to understand how small $\Delta$ needs to be to accurately approximate the expectation value of $\bx$. Since $\text{sinhc}(0) = 1$, it can be seen easily that $\langle \bx \rangle_1 = \langle \bx \rangle_0$ as $\Delta \rightarrow 0$. We can Taylor expand to get upper bounds on $\Delta$:
\begin{equation}
    \langle \bx \rangle_1 = \langle \bx \rangle_0 + \langle \bx \rangle_0 \frac{g^{-2}(N+1)^2\beta^2\Delta^2}{24} + \langle \bx \rangle_0 O(\Delta^4\beta^4).\label{eq:global_error1}
\end{equation}
Thus we can guarantee $\|\langle \bx\rangle_1 - \langle \bx\rangle_0\|\le \epsilon$ if
\begin{equation}
    \Delta \in O \lb \sqrt{\frac{\epsilon}{\beta \norm{\ev{\bx}_0} }} \rb.
\end{equation} 
This finite energy shell substitution lets us use discretization bounds for the integrals. The integral over s is given by integrating between the turning points for the rect function which we define to be $s_1,s_2$:
\begin{align}
    \int_0^\infty ds\,s^N \text{rect}_\Delta \lb H_{\mathrm{sys}} \lb \bx, \bp \rb + \frac{p_s^2}{2Q} + g\beta^{-1}\ln{s} - E_{\text{ext}}\rb &= \frac{1}{\Delta}\int_{s_1}^{s_2}ds\,s^N\nonumber\\
    &=\frac{s_2^{N+1}-s_1^{N+1}}{\Delta(N+1)}\nonumber\\
    &= s_0^{N+1}\frac{2\,\text{sinh}(g^{-1}\beta(N+1)\frac{\Delta}{2})}{\Delta(N+1)},
\end{align}
where we define for convenience
\begin{align}
    s_0&:= e^{-g^{-1}\beta\lb H_{\mathrm{sys}}(\bx,\bp)-E_{ext}+\frac{p_s^2}{2Q}\rb}\\
    s_1&:=s_0\,e^{-g^{-1}\beta\frac{\Delta}{2}}\\
s_2&:=s_0\,e^{g^{-1}\beta\frac{\Delta}{2}}.    
\end{align} 
This integral can not be discretized directly with a method like the midpoint rule due to the discontinuous nature of the rect function. Instead, we assume an underlying grid with spacing $h_s$, which will have at least $n=\floor{\frac{s_2-s_1}{h_s}}$ points in the interval. We assume that the first $n-1$ points constitute a left Riemann sum and treat the intervals $[s_1,h_s\ceil{\frac{s_1}{h_s}})$ and $(h_s\floor{\frac{s_2}{h_s}},s_2]$ as pure error. We also assume that the upper limit of our discrete sum is greater than $s_2$. Then, assuming that $h_{s}/(s_2-s_1)\le 1/4$ such that $(1- 2h_s/(s_2-s))^{-1} \le 2$, the error, defined as

\begin{equation}
    \varepsilon_s := \int \,ds\,s^N\,\text{rect}_\Delta \lb H_N-E_e\rb -\sum_{n=0}^{n=N_s}h_s (n h_s)^N \text{rect}_\Delta \lb H_{\mathrm{sys}} \lb \{x_n\}, \{p_n\}\rb + \frac{p_s^2}{2Q} + g\beta^{-1}\ln{n h_s} - E_{\text{ext}}\rb,
\end{equation}

is upper bounded by
\begin{align}
    \varepsilon_s &\leq \frac{M_1(s_2-s_1)^2}{2(n-1)} + \frac{h_s}{\Delta} (s_2^N + (s_1+1)^N)\nonumber\\
    &\leq \frac{N\,s_2^{N-1}(s_2-s_1)^2}{2\Delta(\frac{s_2-s_1}{h_s}-2)} + \frac{h_s}{\Delta} (s_2^N + (s_1+1)^N)\nonumber\\
    &= \frac{h_s}{\Delta}\frac{N\,s_2^{N-1}(s_2-s_1)}{2} \lb1-\frac{2h_s}{s_2-s_1} \rb^{-1} + \frac{h_s}{\Delta} (s_2^N + (s_1+1)^N).
\end{align}
Here $M_1$ is an upper bound on the absolute value of the derivative of the integrand which is upper bounded by $Ns_2^{N-1}/\Delta$. Then, substituting in the above expressions for $s_1$ and $s_2$, we obtain
\begin{equation}
    \varepsilon_s \leq \frac{h_s}{\Delta}s_0^N (N+2)e^{g^{-1}\beta N\frac{\Delta}{2}},
\end{equation}
such that the discretization error is:
\begin{align}
    &\int \,ds\,s^N\,\text{rect}_\Delta \lb H_N-E_e\rb -\sum_{n=0}^{n=N_s}h_s (n h_s)^N \text{rect}_\Delta \lb H_{\mathrm{sys}} \lb \{x_n\}, \{p_n\}\rb + \frac{p_s^2}{2Q} + g\beta^{-1}\ln{n h_s} - E_{\text{ext}}\rb \nonumber \\
    &\qquad= s_0^{N+1}\frac{2\,\text{sinh}(g^{-1}\beta(N+1)\frac{\Delta}{2})}{\Delta(N+1)}-\sum_{n=0}^{n=N_s}h_s (n h_s)^N \text{rect}_\Delta \lb H_{\mathrm{sys}} \lb \{x_n\}, \{p_n\}\rb + \frac{p_s^2}{2Q} + g\beta^{-1}\ln{n h_s} - E_{\text{ext}}\rb \nonumber\\ 
    &\qquad\le \frac{h_s}{\Delta}s_0^N (N+2)e^{g^{-1}\beta N\frac{\Delta}{2}}.
\end{align}

This method consistently underestimates the integral due to throwing away portions of the support and using the left Riemann sum on a monotonically increasing function. We can then split apart the dependencies in $s_0$ since they all appear in the exponent. Let us define $\langle \bx \rangle_2$ as follows:
\begin{align}
    \langle \bx \rangle_2&:=C\, \sum_{\{n_x\}} \, \sum_{\{\bm{n_x}\}} \{h_x\} (h_x \bm{n_x})\sum_{\{\bm{n_p}\}} \{h_p\} \sum_{n_{p_s}} h_{p_s}\int \,ds\,s^N\,\text{rect}_\Delta \lb H_N\lb \{x_n\}, \{p_n\}, s, p_s \rb-E_e\rb \nonumber\\
    &= C\frac{2\,\text{sinh}(g^{-1}\beta(N+1)\frac{\Delta}{2})}{\Delta(N+1)}\, \sum_{\{\bm{n_x}\}} \{h_x\} (h_x \bm{n_x}) \sum_{\{\bm{n_p}\}} \{h_p\} \sum_{n_{p_s}} h_{p_s} s_0^{N+1}\nonumber\\
    &=C\frac{2\,\text{sinh}(g^{-1}\beta(N+1)\frac{\Delta}{2})}{\Delta(N+1)}\, \sum_{\{\bm{n_x}\}} \{h_x\} (h_x \bm{n_x}) \sum_{\{\bm{n_p}\}} \{h_p\} \sum_{n_{p_s}} h_{p_s} e^{ -g^{-1}\beta(N+1)\lb H(\bx,\bp)-E_{ext}+\frac{p_s^2}{2Q}\rb}.
\end{align}
Then we can break down the error owing to the fact that $\bx, \bp$ and $p_s$ dependencies only appear in the exponent in the integrand and thus can be distributed:
\begin{align}
    \langle \bx \rangle_2 &=C\frac{2\,\text{sinh}(g^{-1}\beta(N+1)\frac{\Delta}{2})}{\Delta(N+1)}\,e^{g^{-1}\beta (N+1)E_e}\lb \sum_{\{\bm{n_x}\}} \{h_x\} (h_x \bm{n_x})e^{ -g^{-1}\beta(N+1)f(x)}\rb\nonumber\\
    &\qquad\lb \sum_{n_p} h_pe^{ -g^{-1}\beta(N+1)\frac{p^2}{2m}}\rb^N\lb\sum_{n_{p_s}} h_{p_s} e^{ -g^{-1}\beta(N+1)\frac{p_s^2}{2Q}}\rb\\
    &=C\,\frac{2\,\text{sinh}(g^{-1}\beta(N+1)\frac{\Delta}{2})}{\Delta(N+1)}\,e^{g^{-1}\beta (N+1)E_e} \lb \int\,\{dx\}\,\bx\,e^{-g^{-1}(N+1)\beta f(x)}+\bep\rb \nonumber\\
    &\qquad\lb \int\,dp e^{-g^{-1}(N+1)\beta\frac{p^2}{2m}}+\varepsilon_p\rb^N \lb \int\,dp_s e^{-g^{-1}(N+1)\beta\frac{p_s^2}{2Q}}+\varepsilon_{p_s}\rb\\
    &= \langle \bx \rangle_1 +\bep \,C\frac{2\,\text{sinh}(g^{-1}\beta(N+1)\frac{\Delta}{2})}{\Delta(N+1)}\,e^{g^{-1}\beta (N+1)E_e}\lb \int\,dp e^{-g^{-1}(N+1)\beta\frac{p^2}{2m}}\rb^N \lb \int\,dp_s e^{-g^{-1}(N+1)\beta\frac{p_s^2}{2Q}}\rb\nonumber\\
    &+ \varepsilon_p \,NC\frac{2\,\text{sinh}(g^{-1}\beta(N+1)\frac{\Delta}{2})}{\Delta(N+1)}\,e^{g^{-1}\beta (N+1)E_e}\lb \int\,dp e^{-g^{-1}(N+1)\beta\frac{p^2}{2m}}\rb^{N-1}\nonumber\\
    &\qquad\qquad\qquad\qquad\qquad\qquad\qquad\qquad\lb \int\,\{dx\}\,\bx\,e^{-g^{-1}(N+1)\beta f(x)}\rb \lb \int\,dp_s e^{-g^{-1}(N+1)\beta\frac{p_s^2}{2Q}}\rb\nonumber\\
    &+ \varepsilon_{p_s} \,C\frac{2\,\text{sinh}(g^{-1}\beta(N+1)\frac{\Delta}{2})}{\Delta(N+1)}\,e^{g^{-1}\beta (N+1)E_e}\lb \int\,\{dx\}\,\bx\,e^{-g^{-1}(N+1)\beta f(x)}\rb \lb \int\,dp e^{-g^{-1}(N+1)\beta\frac{p^2}{2m}}\rb^N\nonumber\\
&+O(\varepsilon_p^2+\norm{\bep}\varepsilon_p+\norm{\bep}\varepsilon_{p_s}+\varepsilon_p\varepsilon_{p_s}).
\end{align}
Here we defined
\begin{align}
    \bep &:= \sum_{\{\bm{n_x}\}} \{h_x\} (h_x \bm{n_x}) e^{-\beta g^{-1}(N+1)f(h_x \bm{n_x})} - \int\,\{dx\}\,\bx\,e^{-g^{-1}(N+1)\beta f(x)}, \\
    \varepsilon_p &:= \sum_{\{\bm{n_p}\}} \{h_p\} e^{-\beta g^{-1}(N+1)\frac{n_p^2h_p^2}{2m}} - \int\,dp e^{-\beta g^{-1}(N+1)\frac{p^2}{2m}},\\
    \varepsilon_{p_s} &:=\sum_{n_{p_s}} h_{p_s} e^{-\beta g^{-1}(N+1)\frac{n_{p_s}^2h_{p_s}^2}{2Q}} - \int\,dp_s e^{-\beta g^{-1}(N+1)\frac{p_s^2}{2Q}},
\end{align}
with $\sum_{\{\bm{n_x}\}} \{h_x\} (h_x \bm{n_x})$ referring to the $N$-dimensional sum $\sum_{n_1}h_x\sum_{n_2}h_x\cdots\sum_{n_N}h_x (h_x n_1 \bm{\hat{e}_1}+h_x n_2 \bm{\hat{e}_2}+\cdots+h_x n_N \bm{\hat{e}_N})\,$ in shorthand. We can solve the integrals and substitute in $C$:
\begin{align}
    \langle \bx \rangle_2 &=  \langle \bx \rangle_1 + \frac{\bep \text{sinhc}\lb g^{-1}\beta\Delta(N+1)/2\rb}{\int\,\{dx\}\,e^{-g^{-1}(N+1)\beta f(x)}}+\frac{N\varepsilon_p\langle \bx \rangle_1}{\int\,dp e^{-g^{-1}(N+1)\beta\frac{p^2}{2m}}}+\frac{\varepsilon_{p_s}\langle \bx \rangle_1}{\int\,dp_s e^{-g^{-1}(N+1)\beta\frac{p_s^2}{2Q}}}\\
    &= \langle \bx \rangle_1 + \frac{\bep\text{sinhc}\lb g^{-1}\beta\Delta(N+1)/2\rb}{\int\,\{dx\}\,e^{-g^{-1}(N+1)\beta f(x)}}+N\varepsilon_p\langle \bx \rangle_1\sqrt{\frac{g^{-1}(N+1)\beta}{2\pi m}}+\varepsilon_{p_s}\langle \bx \rangle_1\sqrt{\frac{g^{-1}(N+1)\beta}{2\pi Q}}\\
    &\qquad+O(\varepsilon_p^2 +\norm{\bep}\varepsilon_p +\norm{\bep}\varepsilon_{p_s} +\varepsilon_p\varepsilon_{p_s}),
\end{align}
which lets us upper bound the following quantity:
\begin{align}
    \norm{\langle \bx \rangle_2 -\langle \bx \rangle_1} &\leq \frac{\norm{\bep}\text{sinhc}\lb g^{-1}\beta\Delta(N+1)/2\rb}{\int\,\{dx\}\,e^{-g^{-1}(N+1)\beta f(x)}}+\norm{\langle \bx \rangle_1}N\varepsilon_p\sqrt{\frac{g^{-1}(N+1)\beta}{2\pi m}}+\norm{\langle \bx \rangle_1}\varepsilon_{p_s}\sqrt{\frac{g^{-1}(N+1)\beta}{2\pi Q}}\nonumber\\
    &\qquad +O(\varepsilon_p^2 +\norm{\bep}\varepsilon_p +\norm{\bep}\varepsilon_{p_s} +\varepsilon_p\varepsilon_{p_s}).
\end{align}
Lastly, if we define $\langle \bx \rangle_3$ to be

\begin{equation}
\begin{split}
     \langle \bx \rangle_3 &:= C\,\sum_{\{\bm{n_x}\}} \{h_x\} (h_x \bm{n_x}) \sum_{\{\bm{n_p}\}} \{h_p\} \sum_{n_{p_s}} h_{p_s} \sum_n h_s (n h_s)^N \times \\
     &\qquad  \times \text{rect}_\Delta \lb H_{\mathrm{sys}} \lb \{x_n\}, \{p_n\}\rb + \frac{p_s^2}{2Q} + g\beta^{-1}\ln{n h_s} - E_{\text{ext}}\rb,
\end{split}
\end{equation}
we can complete the error approximation, since this expression is solely comprised of sums and is therefore amenable for numerical evaluation. Substituting in the expression for the $s$ sum, we get:
\begin{align}
    \langle \bx \rangle_3 = C\, \sum_{\{\bm{n_x}\}} \{h_x\} (h_x \bm{n_x})\sum_{\{n_p\}} \{h_p\} \sum_{n_{p_s}} h_{p_s}\lb s_0^{N+1}\frac{2\,\text{sinh}(g^{-1}\beta(N+1)\frac{\Delta}{2})}{\Delta(N+1)} + \varepsilon_s\rb,
\end{align}
which then implies
\begin{equation}
\begin{split}
    &\|  \langle \bx \rangle_3 - C\frac{2\,\text{sinh}(g^{-1}\beta(N+1)\frac{\Delta}{2})}{\Delta(N+1)}\, \sum_{\{\bm{n_x}\}} \{h_x\} (h_x \bm{n_x}) \sum_{\{n_p\}} \{h_p\} \sum_{n_{p_s}} h_{p_s} s_0^{N+1}\| \\
    &= \|C\, \sum_{\{\bm{n_x}\}} \{h_x\} (h_x \bm{n_x}) \sum_{\{n_p\}} \{h_p\} \sum_{n_{p_s}} h_{p_s} \varepsilon_s\|.
\end{split}
\end{equation}
Observe that the second term in the norm in the first line is just $\langle \bx \rangle_2$. Using this and substituting in the upper bound for $\varepsilon_s$ leads us to:
\begin{align}
    \|  \langle \bx \rangle_3 - \langle \bx \rangle_2\|&\leq \left\|C\frac{h_s}{\Delta}(N+2)e^{g^{-1}\beta N\frac{\Delta}{2}}\, \sum_{\{\bm{n_x}\}} \{h_x\} (h_x \bm{n_x}) \sum_{\{n_p\}} \{h_p\} \sum_{n_{p_s}} h_{p_s} s_0^N \right\| \\
    &= C\frac{h_s}{\Delta}(N+2)e^{g^{-1}\beta N\frac{\Delta}{2}}e^{g^{-1}N\beta E_e}\,\left\| \int\,\{dx\}\,\bx\,e^{-g^{-1}N\beta f(x)}+\varepsilon_x\right\| \nonumber\\
    &\qquad\lb \int\,dp e^{-g^{-1}(N+1)\beta\frac{p^2}{2m}}+\varepsilon_p\rb^N \lb \int\,dp_s e^{-g^{-1}N\beta\frac{p_s^2}{2Q}}+\varepsilon_{p_s}\rb  \\
    &\leq C\frac{h_s}{\Delta}(N+2)e^{g^{-1}\beta N\frac{\Delta}{2}}e^{g^{-1}N\beta E_e}\,\left\| \int\,\{dx\}\,\bx\,e^{-g^{-1}N\beta f(x)}\right\| \times \\
    &\qquad \times \lb \int\,dp e^{-g^{-1}N\beta\frac{p^2}{2m}}\rb^N \lb \int\,dp_s e^{-g^{-1}N\beta\frac{p_s^2}{2Q}}\rb\nonumber\\
    &\qquad +O(h_s(\varepsilon_x+\varepsilon_p+\varepsilon_{p_s}))\\
    &\leq C\frac{h_s}{\Delta}(N+2)e^{g^{-1}\beta N\frac{\Delta}{2}}e^{g^{-1}N\beta E_e}\,\left\| \int\,\{dx\}\,\bx\,e^{-g^{-1}N\beta f(x)}\right\|\lb \frac{2\pi m}{g^{-1}N\beta}\rb^\frac{N}{2} \sqrt{\frac{2\pi Q}{g^{-1}\beta N}}\nonumber\\
    &\qquad +O(h_s(\varepsilon_x+\varepsilon_p+\varepsilon_{p_s}))\\
    &\leq h_s\frac{e^{g^{-1}\beta N\frac{\Delta}{2}}}{\Delta g^{-1}\beta}(N+2)e^{-g^{-1}\beta E_e}\frac{\left\| \int\,\{dx\}\,\bx\,e^{-g^{-1}N\beta f(x)}\right\|}{\int\,\{dx\}\,e^{-g^{-1}(N+1)\beta f(x)}}\lb \frac{N+1}{N}\rb^\frac{N+1}{2} \nonumber\\
    &\qquad +O(h_s(\varepsilon_x+\varepsilon_p+\varepsilon_{p_s})).
\end{align}

The error in discretizing the quadratic integrals for the variables $p$ and $p_s$ may be quantified easily. For an integral such as $\int_{-\infty}^\infty dp\, e^{-\alpha p^2}$, if we truncate the integral at $P_{\max}$, we incur an error upper bounded by $E_{\text{trunc}}:=\frac{1}{\alpha P_{\max} e^{\alpha P_{\max}^2}}$ (obtained by a simple upper bound on the erfc function~\cite{erfc_uoft}), and discretizing the finite integral $\int_{-P_{\max}}^{P_{\max}} dp\, e^{-\alpha p^2}$ using the midpoint rule, the error in the quadrature is a function of the second derivative of the integrand.  Evaluating the maximum value of the second derivative of the Gaussian and substituting the result in to standard error bounds for the midpoint rule yields an error of  
\begin{equation}
    E_{\text{mp}}:=\frac{h^2 \alpha P_{\max}}{6},
\end{equation}
with $h$ being the grid spacing of the discretization. We will set $P_{\max} = \sqrt{\frac{1}{\alpha}\log\lb h^2\alpha/6\rb}$ to upper bound $E_{\text{trunc}}$ with $E_{\text{mp}}$. This results in $|\varepsilon|=|E_{\text{trunc}}+E_{\text{mp}}|\leq\frac{h^2 \alpha^{\frac{1}{2}}}{3}  \sqrt{\frac{1}{\alpha}\log\lb h^2\alpha/6\rb}$. Plugging in values of $\alpha$ and $h$ for the $p$ and $p_s$ integrals, we get an upper bound on the magnitudes of $\varepsilon_p$ and $\varepsilon_{p_s}$:
\begin{align}
    |\varepsilon_p| &\leq \frac{1}{3\sqrt{2}}h_p^2 \sqrt{\frac{g^{-1}(N+1)\beta}{m}}\sqrt{\log \lb \frac{h_p^2g^{-1}(N+1)\beta}{12m}\rb}\\
    |\varepsilon_{p_s}| &\leq \frac{1}{3\sqrt{2}}h_{p_s}^2 \sqrt{\frac{g^{-1}(N+1)\beta}{Q}}\sqrt{\log \lb \frac{h_{p_s}^2g^{-1}(N+1)\beta}{12Q}\rb}.
\end{align}
Using these bounds and the triangle inequality, the distance between $\langle \bx \rangle_3$ and $\langle \bx \rangle_1$ is:
\begin{align}
     \|\langle \bx \rangle_3-\langle \bx \rangle_1\| &\in O\biggl( \frac{\norm{\bep}\text{sinhc}\lb g^{-1}\beta\Delta(N+1)/2\rb}{\int\,\{dx\}\,e^{-g^{-1}(N+1)\beta f(x)}} \nonumber\\
    &+ \norm{\langle \bx \rangle_1} h_p^2 \frac{g^{-1}\beta(N+1)}{m}\nonumber \sqrt{\log \lb \frac{h_p^2g^{-1}(N+1)\beta}{12m}\rb}\\
    &+ \norm{\langle \bx \rangle_1} h_{p_s}^2 \frac{g^{-1}\beta(N+1)}{Q}\sqrt{\log \lb \frac{h_{p_s}^2g^{-1}(N+1)\beta}{12Q}\rb}\nonumber\\
    &+ h_s\frac{e^{g^{-1}\beta N\frac{\Delta}{2}}}{\Delta g^{-1}\beta}(N+2)e^{-g^{-1}\beta E_e}\frac{\left\| \int\,\{dx\}\,\bx\,e^{-g^{-1}N\beta f(x)}\right\|}{\int\,\{dx\}\,e^{-g^{-1}(N+1)\beta f(x)}}\lb \frac{N+1}{N}\rb^\frac{N+1}{2}\biggr).
\end{align}
Setting $g=N+1$ and $\langle \bx \rangle_1= \langle \bx \rangle_0 \,\text{sinhc}\lb\frac{g^{-1}\beta\Delta(N+1)}{2}\rb$, we obtain a complete upper bound for the discretization error:
\begin{align}
     \|\langle \bx \rangle_3-\langle \bx \rangle_1\| &\in O\biggl( \frac{\norm{\bep}\text{sinhc}\lb \beta\Delta/2\rb}{\int\,\{dx\}\,e^{-\beta f(x)}} \nonumber\\
    &+ \norm{\langle \bx \rangle_1} h_p^2 N \frac{\beta}{m} \sqrt{\log \lb \frac{h_p^2\beta}{m}\rb}\nonumber \\
    &+ \norm{\langle \bx \rangle_1} h_{p_s}^2 \frac{\beta}{Q}\sqrt{\log \lb \frac{h_{p_s}^2\beta}{Q}\rb}\nonumber\\
    &+ h_s\frac{e^{\beta \frac{N}{N+1}\frac{\Delta}{2}}}{\Delta \beta}(N+2)(N+1)e^{-\frac{\beta E_e}{N+1}}\frac{\left\| \int\,\{dx\}\,\bx\,e^{-\frac{N}{N+1}\beta f(x)}\right\|}{\int\,\{dx\}\,e^{-\beta f(x)}}\lb \frac{N+1}{N}\rb^\frac{N+1}{2}\biggr).
\end{align}
Simplifying and incorporating the upper bound on $\|\langle \bx \rangle_1-\langle \bx \rangle_0\|$ from~\eqref{eq:global_error1} by using the triangle inequality as $\norm{\langle\bx \rangle_3-\langle \bx \rangle_0}=\norm{\langle\bx \rangle_3-\langle\bx \rangle_1+\langle\bx \rangle_1-\langle \bx \rangle_0}\leq\norm{\langle\bx \rangle_3-\langle\bx \rangle_1}+\norm{\langle\bx \rangle_1-\langle \bx \rangle_0}$ one last time, we arrive at: 
\begin{align}
     \|\langle \bx \rangle_3-\langle \bx \rangle_0\| &\in O\biggl(\norm{\langle \bx \rangle_0}\beta^2 \Delta^2+ \frac{\norm{\bep}}{\int\,\{dx\}\,e^{-\beta f(x)}} \nonumber\\
    &+ \norm{\langle \bx \rangle_0} h_p^2 N \frac{\beta}{m} \sqrt{\log \lb \frac{h_p^2\beta}{m}\rb}\nonumber \\
    &+ \norm{\langle \bx \rangle_0} h_{p_s}^2 \frac{\beta}{Q}\sqrt{\log \lb \frac{h_{p_s}^2\beta}{Q}\rb}\nonumber\\
    &+ h_s\frac{e^{\beta \frac{\Delta}{2}}}{\Delta \beta}N^2 \frac{\left\| \int\,\{dx\}\,\bx\,e^{-\beta f(x)}\right\|}{\int\,\{dx\}\,e^{-\beta f(x)}}\biggr).
\end{align}

If we want each term to be $O(\epsilon)$, the asymptotic behaviors for the grid spacings are given by
\begin{align}
    \Delta &\in O\lb\beta^{-1} \sqrt{\frac{\epsilon}{\norm{\ev{\bx}_0}}}\rb,\\
    h_s &\in O\lb\frac{\epsilon}{N^2\norm{\ev{\bx}_0}^{\frac{3}{2}} e^{\epsilon/\norm{\ev{\bx}_0}}}\rb,\\
    h_p &\in \tilde{O}\lb\frac{1}{N}\sqrt{\frac{m}{\beta}}\frac{\sqrt{\epsilon}}{\sqrt{\norm{\ev{\bx}_0}}}\rb,\\
    h_{p_s} &\in \tilde{O}\lb\sqrt{\frac{Q}{\beta}}\frac{\sqrt{\epsilon}}{\sqrt{\norm{\ev{\bx}_0}}}\rb,\\
    \label{eq:ch7_bepbigo}\norm{\bep} 
    &\in O\lb\epsilon\left\| \int\,\{dx\}\,\bx\,e^{-\beta f(x)}\right\|\rb.
\end{align}

Until this point we have avoided specifying what vector norm was being used and only invoked the triangle inequality which is a common property of vector norms. For a specific norm, such as the Euclidean norm, we can refine the constraint on $\norm{\bep}$ into a constraint on the grid spacing for the discretization of $\int \{dx\}$. Without loss of generality, we are going to assume that the interval which we're looking at for each coordinate $x_i$ is $[0,x_{\max}]$, and we're going to define $V:=x_{\max}^N$ as the volume of the space we're considering for the optimization problem. Note that
\begin{align}
    \bep&=\sum_{\{\bm{n_x}\}} \{h_x\} (h_x \bm{n_x}) e^{-\beta f(h_x \bm{n_x})} - \int\,\{dx\}\,\bx\,e^{-\beta f(x)} \\
    &=\sum_{i=1}^N \hat{\bm{e}}_i \lb\sum_{\{\bm{n_x}\}} h_x^2 n_i e^{-\beta f(h_x \bm{n_x})} - \int\,\{dx\}\,x_i\,e^{-\beta f(x)}\rb \\
    &:=\sum_{i=1}^N \hat{\bm{e}}_i \varepsilon_{i}.
\end{align}
For a certain component $\varepsilon_i$, let us define $g_i(x_1,x_2\cdots,x_N):=x_ie^{-\beta f(x_1,x_2\cdots,x_N)}$ such that
\begin{align}
    \varepsilon_i &= \sum_{\{\bm{n_x}\}} h_x^2 n_i e^{-\beta f(h_x \bm{n_x})} - \int\,\{dx\}\,x_i\,e^{-\beta f(x)}\\
    &= \sum_{n_1=0}^{x_{\max}/h_x-1}\cdots  \sum_{n_N=0}^{x_{\max}/h_x-1} \int_{h_xn_1}^{h_x(n_1+1)}dx_1\cdots\int_{h_xn_N}^{h_x(n_N+1)}dx_N\lb g_i(n_1h_x,n_2h_x\cdots, n_Nh_x)-g_i(x_1,x_2\cdots,x_N)\rb.
\end{align}
Taking the absolute value of both sides:
\begin{align}\label{eq:ch7_epsiloni}
    |\varepsilon_i| &\leq \sum_{n_1=0}^{x_{\max}/h_x-1}\cdots  \sum_{n_N=0}^{x_{\max}/h_x-1} \int_{h_xn_1}^{h_x(n_1+1)}dx_1\cdots\int_{h_xn_N}^{h_x(n_N+1)}dx_N \left|g_i(x_1,x_2\cdots,x_N)-g_i(n_1h_x,n_2h_x\cdots, n_Nh_x)\right|.
\end{align}

Next, we invoke the multi-variable mean value theorem for $f:\mathbb{R}^N\rightarrow\mathbb{R}$ as stated in~\cite{mvt_multi}. Let $L$ be a line segment with endpoints $\bm{a},\bm{b} \in \mathbb{R}^N$. Then there exists a point $\bm{c} \in [\bm{a}, \bm{b}]$ such that:
\begin{equation}
    f(\bm{b})-f(\bm{a}) = \nabla f(\bm{c})\cdot (\bm{b}-\bm{a}).
\end{equation}
Invoking this in \eqref{eq:ch7_epsiloni} yields
\begin{align}
    |\varepsilon_i| &\leq \sum_{n_1=0}^{x_{\max}/h_x-1}\cdots  \sum_{n_N=0}^{x_{\max}/h_x-1} \int_{h_xn_1}^{h_x(n_1+1)}dx_1\cdots\int_{h_xn_N}^{h_x(n_N+1)}dx_N \left|\sum_{j=1}^N\frac{\partial g_i}{\partial x_j}\Biggr\rvert_c(x_j-n_jh_x)\right|\\
    &= \sum_{n_1=0}^{x_{\max}/h_x-1}\cdots  \sum_{n_N=0}^{x_{\max}/h_x-1} \sum_{j=1}^N\left|\partial_jg_i\right|_{\max}\int_{h_xn_1}^{h_x(n_1+1)}dx_1\cdots\int_{h_xn_N}^{h_x(n_N+1)}dx_N (x_j-n_jh_x)\\
    &= \sum_{n_1=0}^{x_{\max}/h_x-1}\cdots  \sum_{n_N=0}^{x_{\max}/h_x-1} \sum_{j=1}^N\left|\partial_jg_i\right|_{\max}\frac{h_x^{N+1}}{2}\\
    &= \frac{x_{\max}^N}{h_x^N}\frac{h_x^{N+1}}{2}\sum_{j=1}^N\left|\partial_jg_i\right|_{\max}\\
    &=\frac{Vh_x}{2}\sum_{j=1}^N\left|\partial_jg_i\right|_{\max}.
\end{align}
The $c$ in the first line is understood to be inside the volume unit defined by the limits of the integrals. We then upper bound that value using the maximum value of the partial derivative in the entire optimization volume. The rest follows trivially. The partial derivatives of $g_i$ are ($j\neq i$ for the second case):
\begin{align}
    \partial_ig_i(\bx) &= (1-\beta x_i \partial_if(\bx))e^{-\beta f(\bx)}\\
    \partial_jg_i(\bx) &= -\beta x_i \partial_jf(\bx)e^{-\beta f(\bx)}.
\end{align}
Using these, we can upper bound $\sum_{j=1}^N\left|\partial_jg_i\right|_{\max}$:
\begin{align}
    \sum_{j=1}^N\left|\partial_jg_i\right|_{\max} &\leq x_{\max}\beta e^{-\beta \min f(\bx)} \sum_{j\neq i} |\partial_jf(\bx)|_{\max} + (1+\beta x_{\max}|\partial_if(\bx)|_{\max})e^{-\beta \min f(\bx)}\\
    &\leq e^{-\beta \min f(\bx)} (x_{\max}\beta N|\partial f(\bx)|_{\max}+1).
\end{align}
In the second line we upper bounded each of the maximum partial derivatives with the maximum partial derivative among all coordinates to make the argument clean. Since we are treating a general function $f(\bx)$ without specific knowledge of the structure, we are leaving a lot on the table as far as the headrooms in these inequalities go. With constraints on the function $f(\bx)$, the upper bounds can be refined greatly. The above equation lets us bound $|\varepsilon_i|$:
\begin{align}
    |\varepsilon_i| \leq \frac{Vh_x}{2}e^{-\beta \min f(\bx)} (x_{\max}\beta N|\partial f(\bx)|_{\max}+1).
\end{align}
Since the upper bound does not depend on $i$, we can extend the bound to $\norm{\bep}_2$ trivially:
\begin{align}
    \norm{\bep}_2 &\leq \frac{Vh_x\sqrt{N}}{2}e^{-\beta \min f(\bx)} (x_{\max}\beta N|\partial f(\bx)|_{\max}+1).
\end{align}
Or, asymptotically:
\begin{align}
    \norm{\bep}_2 &\in O\lb h_x\beta NV N^{\frac{3}{2}}x_{\max}|\partial f(\bx)|_{\max}\rb.
\end{align}

Recalling \eqref{eq:ch7_bepbigo}, if we want to recover the scaling $O\lb\epsilon\left\| \int\,\{dx\}\,\bx\,e^{-\beta f(x)}\right\|_2\rb$, which itself is in \\
$O\lb \epsilon V x_{\max}\sqrt{N}e^{-\beta \min f(x)}\rb$, it suffices for $h_x$ to have the following asymptotic scaling:
\begin{equation}
    h_x \in O\lb \frac{1}{\beta N |\partial f(\bx)|_{\max}}\rb.
\end{equation}
\end{proof}

\end{document}